%% file: main.tex
\pdfoutput=1
\documentclass[12pt]{report}

\usepackage{amsmath, amsthm, amssymb}
\usepackage{hyperref}
\usepackage{cleveref}
\usepackage{graphicx}
\usepackage{bm}
\usepackage{geometry}
\usepackage{float}
\usepackage{placeins}
\usepackage{doi}
\usepackage{url}
\usepackage{cite}
\hypersetup{colorlinks=true, linkcolor=blue, citecolor=blue, urlcolor=blue}
\newcommand{\R}{\mathbb{R}}
\newcommand{\inside}{\mathrm{int}\,}
\newcommand{\bound}{\partial}
\newcommand{\bs}[1]{\bm{#1}}
\newcommand{\const}{\mathrm{const}}
\DeclareMathOperator{\diag}{diag}

\newtheorem{theorem}{Theorem}[section]
\newtheorem{proposition}[theorem]{Proposition}
\newtheorem{corollary}[theorem]{Corollary}
\newtheorem{conseq}[theorem]{Corollary}
\theoremstyle{definition}
\newtheorem{definition}[theorem]{Definition}
\newtheorem{assumpt}[theorem]{Assumption}
\theoremstyle{remark}
\newtheorem{remark}[theorem]{Remark}

\newenvironment{chapterabstract}
  {\begin{quotation}\small\noindent\textbf{Abstract.}\ }
  {\end{quotation}}
\makeatletter
\renewenvironment{thebibliography}[1]
     {\section*{References}%
      \addcontentsline{toc}{section}{References}%
      \list{\@biblabel{\@arabic\c@enumiv}}%
           {\settowidth\labelwidth{\@biblabel{#1}}%
            \leftmargin\labelwidth
            \advance\leftmargin\labelsep
            \@openbib@code
            \usecounter{enumiv}%
            \let\p@enumiv\@empty
            \renewcommand\theenumiv{\@arabic\c@enumiv}}%
      \sloppy
      \clubpenalty4000
      \@clubpenalty \clubpenalty
      \widowpenalty4000%
      \sfcode`\.\@m}
     {\def\@noitemerr
       {\@latex@warning{Empty `thebibliography' environment}}%
      \endlist}
\makeatother

\begin{document}

\begin{titlepage}
\centering
\vspace*{2.5cm}
{\LARGE\bfseries Mathematical Models of Evolution\\[6pt]
  and Replicator Systems Dynamics\par}
\vspace{1.2cm}
{\large Chapters 1--4\par}
\vspace{2cm}
{\large A.\,S.~Bratus$^{1}$, \quad S.~Drozhzhin$^{1}$, \quad T.~Yakushkina$^{2}$\par}
\vspace{0.8cm}
{\itshape
  $^{1}$Moscow Center for Fundamental and Applied Mathematics,\\
  Lomonosov Moscow State University, Moscow 119991, Russia\\[4pt]
  $^{2}$A.\,I.~Alikhanyan National Science Laboratory\\
  (Yerevan Physics Institute) Foundation,\\
  Alikhanian Brothers St.~2, Yerevan 375036, Armenia\par}
\vfill
\begin{minipage}{0.88\textwidth}\small
\noindent\textbf{About this document.}
This is an edited English version of the authors' monograph
\textit{Mathematical Models of Evolution and Replicator Systems Dynamics},
originally published in Russian (URSS, Moscow, 2022). The text has been
revised and expanded to make the results accessible to a wider international
audience. The book is posted on arXiv as a single document that grows as
chapters are completed: each new arXiv version extends the previous one, and
the changes are listed in the arXiv Comments field. The present version
contains Chapters~1--4. It supersedes the separately posted chapter preprints
arXiv:2604.05720 (Chapter~1), arXiv:2605.05385 (Chapter~2) and
arXiv:2607.04456 (Chapter~3). Each chapter is self-contained and has its own
reference list; equations, figures and theorems are numbered by chapter.
\end{minipage}\par
\vspace{1.5cm}
{\normalsize September 2026\par}
\end{titlepage}

\tableofcontents
\clearpage

\input{chapter1}

\input{chapter2}

\input{chapter3}

\input{chapter4}

\end{document}

%% file: chapter1.tex
\chapter{Introduction to Replicator Systems}
\label{ch:1}

\begin{chapterabstract}
	This chapter is an overview of foundational results in the
	mathematical theory of replicator systems. Its primary aim is to
	provide a unified framework for the mathematical formalisation of
	evolutionary processes in the spirit of generalised Darwinism ~---
	that is, for any system in which heredity, variability, and
	selection can be meaningfully defined, regardless of the specific
	biological substrate. Starting from the Kolmogorov equations for
	interacting populations, we derive the replicator equation and
	examine three canonical regimes: independent, autocatalytic, and
	hypercyclic replication. The hypercycle is shown to be permanent
	and to carry evolutionary variability intrinsically. We then survey
	the quasispecies framework~--- the Eigen and Crow--Kimura models ~--- covering global stability of equilibria, sequence space
	structure, and the error-threshold phenomenon. Throughout, the
	emphasis is on the mathematical structures that underlie these
	models rather than on biological detail, with the goal of making
	the framework applicable to abstract evolutionary dynamics beyond
	its original molecular biology context.

\end{chapterabstract}

\bigskip\hrule\bigskip

\section{Derivation of the Dynamical Equations}
\label{c1:section:1.1}

In evolutionary theory, replication is the process of multiplication or copying.  For the purposes of this research, we define a \emph{replicator} as an object capable of self-reproduction with hereditary stability. The exact definitions may vary depending on the context, e.g. ``a replicator is any
entity that causes certain environments to copy it'' \cite{c1:Deutsch1997} or an entity that is ``able to create copies of itself'' \cite{c1:Dawkins1976}. To formulate a mathematical description of a replicator, it is necessary to specify the law governing the replication rate and precision, as well as other relevant factors.

Let $N_{i}(t)$ denote the population size of species $M_{i}$,
$i = \overline{1, n}$, at time $t$, satisfying the general
Kolmogorov's forward equations for evolutionary dynamics:
\begin{equation}
  \frac{dN_{i}}{dt} = N_{i}g_{i}(\mathbf{N}),\quad
  \mathbf{N}(t) = \bigl(N_{1}(t),\ldots,N_{n}(t)\bigr),
  \label{c1:eq1.1}
\end{equation}
where $g_{i}(\mathbf{N})$ are sufficiently smooth functions that describe inter-species interactions, $g_{i}:\R_{+}^{n}\to\R$.
Here and below, we use the notation
$\R_{+}^{n}=\{\mathbf{x}\in\R^{n}:\mathbf{x}\geqslant 0\}$,
$\bound\R_{+}^{n}=\R_{+}^{n}\setminus\inside\R_{+}^{n}$,
$\inside\R_{+}^{n}=\{\mathbf{x}\in\R^{n}:\mathbf{x}>0\}$,
and the vector inequalities $\mathbf{x}\geqslant 0$, $\mathbf{x}>0$
are understood component-wise.

Moving from absolute population sizes to relative frequencies
\[
  u_{i}(t)=\frac{N_{i}(t)}{\sum_{k=1}^{n}N_{k}(t)},\qquad
  \sum_{k=1}^{n}u_{k}(t)=1,
\]
and substituting into \eqref{c1:eq1.1}, one obtains the following:
\begin{equation}
      \frac{du_{i}}{dt}=
  \frac{1}{\bigl(\sum_{k=1}^{n}N_{k}(t)\bigr)^{2}}
  \Bigl(N_{i}g_{i}(\mathbf{N})\sum_{k=1}^{n}N_{k}
       - N_{i}\sum_{k=1}^{n}N_{k}g_{k}(\mathbf{N})\Bigr),
  \quad i=\overline{1,n}.
  \label{c1:eq1.2}
\end{equation}
If $g_{i}(\mathbf{N})$ are homogeneous functions of order $s$,
i.e.,\ $g_{i}(\xi\mathbf{N})=\xi^{s}g_{i}(\mathbf{N})$,
$\xi\in\R$, then \eqref{c1:eq1.2} can be written as
\begin{equation}
  \frac{du_{i}}{dt}=
  \Bigl(\sum_{k=1}^{n}N_{k}(t)\Bigr)^{s}
  \Bigl(u_{i}g_{i}(\mathbf{u})
       - u_{i}\sum_{k=1}^{n}u_{k}g_{k}(\mathbf{u})\Bigr),
  \quad i=\overline{1,n}.
  \label{c1:eq1.3}
\end{equation}
Since $\sum_{k=1}^{n}N_{k}(t)>0$, system \eqref{c1:eq1.3} is
orbitally topologically equivalent \cite{c1:Arnold1978}
to
\begin{equation}
  \frac{du_{i}}{dt}=u_{i}\bigl(g_{i}(\mathbf{u})-f(t)\bigr),
  \quad f(t)=\sum_{k=1}^{n}g_{k}(\mathbf{u}(t))u_{k}(t),
  \label{c1:eq1.4}
\end{equation}
\[
  \sum_{k=1}^{n}u_{k}(t)=1,\quad u_{i}(0)=u_{i}^{0}, \quad \mathbf{u}(t) = \bigl(u_1(t), \ldots, u_n(t)\bigr),\quad
  i=\overline{1,n}.
\]

The equivalence of \eqref{c1:eq1.3} and \eqref{c1:eq1.4} means, in
particular, that these systems have the same number of equilibria
of the same type and that every closed trajectory of \eqref{c1:eq1.3}
corresponds to a closed trajectory of \eqref{c1:eq1.4}. Therefore, their qualitative behaviour coincides. These systems have identical phase portraits, differing only in the speeds of motion along the phase trajectories. Thus, when only the
asymptotic behaviour ($t\to\infty$) is of interest, either system
may be analysed without loss of generality.

Setting $g_{i}(\mathbf{u})=(\mathbf{Au})_{i}=\sum_{j=1}^{n}a_{ij}u_{j}$,
where $\mathbf{A} = \bigl( a_{ij}\bigr)_{i,j=1,\ldots,n}$, equation \eqref{c1:eq1.4}
becomes the \emph{replicator equation}:
\begin{equation}
  \frac{du_{i}}{dt}=u_{i}\bigl[(\mathbf{Au})_{i}-f(\mathbf{u})\bigr],
  \quad f(\mathbf{u})=\Bigl(\mathbf{Au},\mathbf{u}\Bigr),
  \quad u_{i}(0)=u_{i}^{0},\quad i=\overline{1,n},
  \label{c1:eq1.5}
\end{equation}
where solutions are confined to the simplex
\[
  S_{n}=\Bigl\{u_{i}(t)\geqslant 0,\;i=\overline{1,n},\;
               \sum_{i=1}^{n}u_{i}(t)=1\Bigr\}.
\]
Here and below, round brackets denote the scalar product in $\R^{n}$.

The quantity $(\mathbf{Au})_{i}$ is called the \emph{fitness} of
species $i$, and $f(t)$ is the \emph{mean fitness} of the population.
The entry $a_{ij}$ of $\mathbf{A}$ describes the effect of species $j$
on the population of species $i$; the matrix $\mathbf{A}$ itself determines the \emph{fitness landscape} of the replicator system.
System \eqref{c1:eq1.5} has a natural interpretation in terms of the
per-capita growth rate: $\dot{u}_{i}/u_{i}$ equals the excess of the
fitness of species $i$ over the mean population fitness. Throughout the chapter, we will use the notation $\dot{u}_i$ for derivative with respect to time for brevity.

Replicator systems of this form were first studied in the context of
evolutionary theory by M.~Eigen and P.~Schuster
\cite{c1:Eigen1971,c1:Eigen1977,c1:Eigen1982}, and independently by
V.\,A.~Ratner, R.\,A.~Poluektov, Yu.\,A.~Pykh, Yu.\,M.~Svirezhev,
and D.\,O.~Logofet
\cite{c1:Poluektov1974,c1:Svirezhev1978,c1:MaynardSmith1973,c1:Taylor1978}. 
Eigen and Schuster originally studied replicator systems in the
context of \emph{prebiotic evolution} — the evolutionary process by
which macromolecules capable of producing complex self-replicating structures, analogous to RNA molecules, could have arisen.  These
works attracted considerable interest both from biologists
\cite{c1:Lincoln2009,c1:Vaidya2012} and from mathematicians
\cite{c1:Hofbauer1998,c1:Hofbauer2003}.

\section{Asymptotic Behaviour of a General Class of Replicator Systems}
\label{c1:section:1.2}

We begin the study of replicator systems by analysing the dynamics
in three important special cases.

\begin{enumerate}
\item \textit{Independent replication.}
\begin{equation}
  \frac{du_{i}}{dt}=u_{i}\Bigl(k_{i}-f_{1}(u)\Bigr),\quad
  f_{1}(u)=\sum_{i=1}^{n}k_{i}u_{i}(t),\quad
  \mathbf{u}(t)\in S_{n},\quad i=\overline{1,n}.
  \label{c1:eq1.6}
\end{equation}

\item \textit{Autocatalytic replication.}
\begin{equation}
  \frac{du_{i}}{dt}=u_{i}\Bigl(k_{i}u_{i}-f_{2}(u)\Bigr),\quad
  f_{2}(u)=\sum_{i=1}^{n}k_{i}u_{i}^{2}(t),\quad
  \mathbf{u}(t)\in S_{n},\quad i=\overline{1,n}.
  \label{c1:eq1.7}
\end{equation}

\item \textit{Hypercyclic replication.}
\begin{equation}
  \frac{du_{i}}{dt}=u_{i}\Bigl(k_{i}u_{i-1}-f_{3}(u)\Bigr),\quad
  f_{3}(u)=\sum_{i=1}^{n}k_{i}u_{i}(t)u_{i-1}(t),\quad
  \mathbf{u}(t)\in S_{n},\quad i=\overline{1,n}.
  \label{c1:eq1.8}
\end{equation}
\end{enumerate}

In the last case, indices are taken modulo $n$, i.e., \ $u_{0}=u_{n}$. Throughout, $k_{i}>0$ for all $i=\overline{1,n}$.

Systems \eqref{c1:eq1.6}, \eqref{c1:eq1.7},  and  \eqref{c1:eq1.8}
represent two extreme cases of replication.  In the first two
systems, each species replicates using only itself.  In \eqref{c1:eq1.8},
replication of each species requires the preceding species in a
closed cycle (see Fig.~\ref{c1:fig1.1}).

\begin{figure}[ht]
  \centering
  \includegraphics[width=0.7\textwidth]{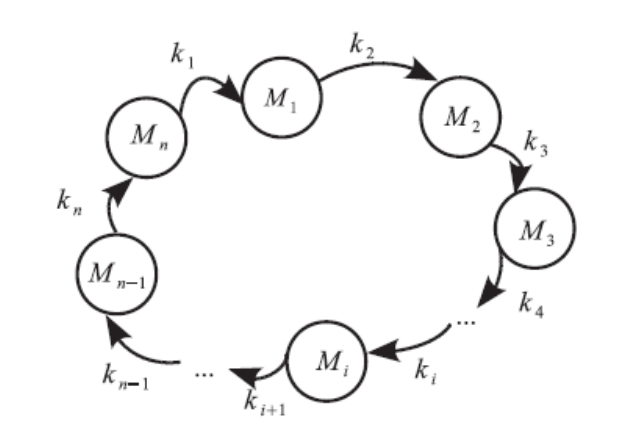}
  \caption{Graph representing hypercyclic replication.}
  \label{c1:fig1.1}
\end{figure}

If the behaviour of the first two systems can be characterised as
\emph{selfish}, then hypercyclic replication demonstrates
\emph{altruistic} behaviour: reproduction of each species constitutes
the simplest form of mutual aid, where every species — directly or
indirectly — benefits from all other species included in the cycle.

The interplay between selfish and cooperative behaviour in replicator systems arises across multiple disciplines beyond mathematical biology, including economics, game theory, and sociology; for a comprehensive review see \cite{c1:Sigmund2009}.

\medskip
\textit{Independent replication.}
The limiting behaviour is characterised by survival of the species
with the maximum Malthusian fitness coefficient $k_{i}$.

Let $k_{m}=\max\{k_{1},k_{2},\ldots,k_{n}\}$.
For any $i\neq m$,
\[
  \dot{\Bigl(\frac{u_{i}}{u_{m}}\Bigr)}
  =\Bigl(\frac{u_{i}}{u_{m}}\Bigr)(k_{i}-k_{m})<0,
\]
so $u_{i}(t)/u_{m}(t)=C_{0}e^{(k_{i}-k_{m})t}\to 0$
as $t\to+\infty, C_0 = const>0$.  Since $\mathbf{u}\in S_{n}$, this means $u_{i}(t)\to 0$ for all $i\neq m$ and $u_{m}(t)\to 1$.

The dynamics of the mean fitness $f_1$ satisfy
\[
  \frac{df_{1}}{dt}
  =\sum_{i=1}^{n}k_{i}\dot u_{i} = \sum_{i=1}^{n}k_{i}^{2}u_{i}
   -\Bigl(\sum_{i=1}^{n}k_{i}u_{i}\Bigr)^{2}\geqslant 0.
\]
Since we are working on a simplex, the right-hand side is the variance of a random variable taking values $k_1,k_2,\ldots,k_n$ with probabilities $u_1(t),u_2(t),\ldots,u_n(t)$, and is therefore 
always non-negative. Hence, the mean fitness $f_1(t)$ is monotonically non-decreasing along every trajectory of system~\eqref{c1:eq1.6}. In the simplest interpretation, this is the mathematical form of Fisher's fundamental theorem of natural selection, which asserts that ``the rate of increase in fitness of any organism at any time is equal to its genetic variance in fitness at that time'' \cite{c1:Fisher1930}.
We note that Fisher himself did not provide a rigorous mathematical formulation of this theorem, and that the precise meaning of ``genetic variance'' requires careful definition in the general case. For the purposes of this chapter, two observations suffice: in its simplest interpretation, the theorem asserts that mean fitness is non-decreasing over time — a property we have just established for independent replication; and this monotonicity generally fails for more complex replicator systems, as illustrated below.

\medskip
\textit{Autocatalytic replication.}
The equilibrium $\bar{\mathbf{u}}\in S_{n}$ is determined by
\[
  k_{1}\bar{u}_{1}=\ldots=k_{n}\bar{u}_{n}
  =\bar{f}=\sum_{i=1}^{n}k_{i}u_{i}^{2},\quad \bar{\mathbf{u}}\in S_{n}.
\]
Hence
\[
  \bar{\mathbf{u}}
  =\frac{1}{k_{i}\sum_{j=1}^{n}k_{j}^{-1}}.
\]
Introducing barycentric coordinates \cite{c1:Hofbauer1978} 
that map this equilibrium to the centroid $(n^{-1},\ldots,n^{-1})$:
\[
  v_{i}=\frac{k_{i}u_{i}}{R},\quad R=\sum_{j=1}^{n}k_{j}u_{j},
\]
system \eqref{c1:eq1.7} transforms into the equivalent system
\begin{equation}
  \frac{dv_{i}}{dt}=v_{i}\Bigl(v_{i}-\sum_{j=1}^{n}v_{j}^{2}\Bigr),
  \quad i=\overline{1,n},\quad \mathbf{v}(t)\in S_{n}.
  \label{c1:eq1.9}
\end{equation}
Since $R > 0$, we may choose $R=1$ as systems are orbitally topologically equivalent, which 
considerably simplifies the analysis of the dynamics.

All equilibria of \eqref{c1:eq1.9} are easily found.
Besides the interior equilibrium
$\bar{\mathbf{u}}_{n}^{1}=(n^{-1},\ldots,n^{-1})\in\inside S_{n}$,
the system has equilibria on the boundary of $S_{n}$ (which is a simplex of a smaller dimension $n-1$).
Writing $2^{n}-1$ fixed points explicitly:
$\bar{\mathbf{u}}_{n-1}^{j}=\bigl((n-1)^{-1},\ldots,0,\ldots,(n-1)^{-1}\bigr)$
(zero in the $j$-th position), and so on down to the vertices
$p^{s}=(0,\ldots,1,\ldots,0)$ (one in the $s$-th position).

The Jacobian matrix at the point $\bar{\mathbf{u}}^1_n$ takes the form
\[
  \mathbf{J}\!\left(\bar{\mathbf{u}}^1_n\right)
  = \frac{1}{n^2}
  \begin{pmatrix}
    n-2 & -2   & \cdots & -2   \\
    -2   & n-2 & \cdots & -2   \\
    \cdots & \cdots & \cdots & \cdots \\
    -2   & -2   & \cdots & n-2
  \end{pmatrix}.
\]
This matrix has eigenvalue $\lambda_1 = n^{-1}$ of multiplicity 
$n-1$, with eigenvectors
\[
\mathbf{e}^1 = (1,-1,0,\ldots,0),\quad
\mathbf{e}^2 = (1,0,-1,0,\ldots,0),\quad\ldots,\quad
\mathbf{e}^{n-1} = (1,0,\ldots,0,-1).
\]
The vector $\mathbf{e}^{n}$ is orthogonal to the hyperplane
$\sum u_{i}=1$ and does not belong to $S_{n}$; it corresponds to
eigenvalue $\lambda_{2}=-n^{-1}$, so the interior equilibrium is an
unstable node.

Consider  interior equilibria of the $(n-1)$-dimensional faces of $S_n:$ $\bar{\mathbf{u}}^{j}_{n-1}$, $j=\overline{1,n}$.
The stability analysis of these equilibria is entirely analogous to the
one carried out above. The Jacobian matrices have eigenvalue
$\lambda_1 = (n-1)^{-1}$ of multiplicity $n-2$ and eigenvalue
$\lambda_2 = -(n-1)^{-1}$. In contrast to the previous case, the
eigenvector corresponding to $\lambda_2$ belongs to $S_{n-1}$;
consequently, the equilibria $\bar{\mathbf{u}}^{j}_{n-1}$,
$j = \overline{1,n}$, are saddle points with a one-dimensional stable
manifold.  Similarly, the equilibria $\bar{\mathbf{u}}^{jk}_{n-2}$,
$j, k = \overline{1,n}$, are also saddle points.

The phase portrait of system~\eqref{c1:eq1.9} (which is shown in
Fig.~\ref{c1:fig1.2} for $n=3$): trajectories leave the interior equilibrium,
approach the equilibria on the boundary faces $S_{n-1}$, then
continue to those on $S_{n-2}$, and so on until reaching a vertex
$p^{j}$, $j=\overline{1,n}$. All trajectories starting in $S_n$,
except those beginning on the stable manifolds of the saddle points,
converge to one of the vertices $p^{j}$ of $S_n$. The asymptotic
behaviour of the system depends on the initial conditions: depending
on the initial state, exactly one species survives the competition,
as in the case of independent replication. This behaviour is called
\emph{adaptive} or \emph{multistable}: the initial conditions
determine which vertex is reached as $t\to\infty$.

One may say that whereas in independent replication the fittest
species always survives (i.e.\ the one with the highest fitness
coefficient), in autocatalytic replication the species with the
largest product of fitness coefficient and initial frequency
survives.

\begin{figure}[ht]
  \centering
  \includegraphics[width=0.75\textwidth]{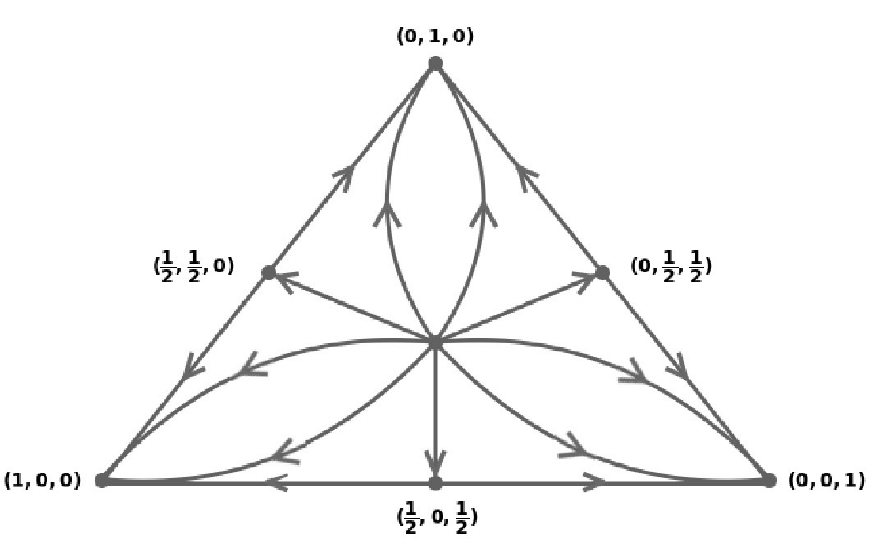}
  \caption{Phase portrait of the autocatalytic replication 
  		system~\eqref{c1:eq1.9} for $n=3$. Trajectories emanate from the 
  		unstable interior equilibrium $\bar{\mathbf{u}}^1_3 = 
  		(\tfrac{1}{3},\tfrac{1}{3},\tfrac{1}{3})$, pass through the saddle 
  		points on the boundary faces, and converge to one of the vertices 
  		$(1,0,0)$, $(0,1,0)$, or $(0,0,1)$, depending on initial conditions.}
  \label{c1:fig1.2}
\end{figure}

Consider system \eqref{c1:eq1.7}.  The mean fitness satisfies
\[
  \frac{df_{2}}{dt}
  =2\Bigl(\sum_{i=1}^{n}k_{i}^{2}u_{i}^{3}
          -\Bigl(\sum_{i=1}^{n}k_{i}u_{i}^{2}\Bigr)^{2}\Bigr)\geqslant 0.
\]
By the Cauchy--Schwarz inequality,
\[
\left(\sum_{i=1}^{n} k_i u_i^{\frac{3}{2}} u_i^{\frac{1}{2}}\right)^2
\leqslant
\sum_{i=1}^{n} k_i^2 u_i^3 \cdot \sum_{i=1}^{n} u_i
= \sum_{i=1}^{n} k_i^2 u_i^3,
\]
hence $\dot{f}_2(t) \geqslant 0$. Consequently, as in the case of
independent replication, the mean fitness of system~\eqref{c1:eq1.7}
is a monotonically non-decreasing function of time.

\medskip
\textit{Hypercyclic replication.}
The interior equilibrium of \eqref{c1:eq1.8} is
\[
  \bar{u}_{i}
  =\frac{k_{i+1}^{-1}}{\sum_{j=0}^{n-1}k_{j+1}^{-1}},\quad
  i=\overline{1,n},\quad k_{n+1}=k_{1}.
\]
As in the autocatalytic case, we introduce barycentric coordinates
\[
  v_{i}=\frac{k_{i+1}u_{i}}{R},\quad R=\sum_{j=0}^{n-1}k_{j+1}u_{j},
\]
that bring the equilibrium to $(n^{-1},\ldots,n^{-1})$, and \eqref{c1:eq1.8}
becomes the orbitally equivalent system
\begin{equation}
  \frac{dv_{k}}{dt}=v_{k}\Bigl(v_{k-1}-\sum_{j=1}^{n}v_{j}v_{j-1}\Bigr),
  \quad v_{0}(t)=v_{n}(t),\quad k=\overline{1,n},\quad
  \mathbf{v}(t)\in S_{n}.
  \label{c1:eq1.10}
\end{equation}

\begin{proposition}
The eigenvalues of the Jacobian of system \eqref{c1:eq1.10} at the
equilibrium \[\bar{\mathbf{v}}=(n^{-1},\ldots,n^{-1})\in\inside S_{n}\]
may be expressed as 
\[
  \lambda_{j}=\frac{1}{n}\exp\!\Biggl(\frac{2\pi j}{n}i\Biggr),
  \quad j=\overline{0,n-1},
\]
where $i$ is the imaginary unit.
\end{proposition}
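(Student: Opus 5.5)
The plan is to compute the Jacobian of \eqref{c1:eq1.10}, viewed as a vector field on $\R^{n}$, at $\bar{\mathbf{v}}$. Its structure turns out to be circulant, so it can be diagonalised explicitly by the discrete Fourier basis. This is the same route as the autocatalytic computation above, where the Jacobian at $\bar{\mathbf{u}}^1_n$ also had a "shift plus rank-one" structure.

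\textbf{Computing the Jacobian.} Write $F_k(\mathbf{v})=v_k\bigl(v_{k-1}-f(\mathbf{v})\bigr)$ with $f(\mathbf{v})=\sum_{j}v_jv_{j-1}$ and indices taken mod $n$. Since
\[
\partial f/\partial v_m = v_{m-1}+v_{m+1},
\]
we get
\[
\frac{\partial F_k}{\partial v_m}=\delta_{km}\bigl(v_{k-1}-f\bigr)+v_k\bigl(\delta_{m,k-1}-v_{m-1}-v_{m+1}\bigr).
\]
At $\bar{\mathbf{v}}$ we have $v_{k-1}=f=n^{-1}$, so the diagonal term vanishes and
\[
\mathbf{J}(\bar{\mathbf{v}})=\frac{1}{n}\mathbf{P}-\frac{2}{n^{2}}\mathbf{E}.
\]
Here $\mathbf{P}$ is the cyclic shift matrix, with $P_{k,k-1}=1$ and all other entries zero, and $\mathbf{E}$ is the all-ones matrix. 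Both matrices are circulant.

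\textbf{Diagonalising.} Set $\omega=\exp(2\pi i/n)$ and $\mathbf{w}^{j}=(\omega^{jk})_{k=1}^{n}$ for $j=\overline{0,n-1}$. Then $\mathbf{P}\mathbf{w}^{j}=\omega^{-j}\mathbf{w}^{j}$. Also $\mathbf{E}\mathbf{w}^{j}=0$ for $j\neq0$, because $\sum_k\omega^{jk}=0$. Consequently, for $j\neq 0$ the eigenvalue is $n^{-1}\omega^{-j}$. As $j$ runs over $1,\ldots,n-1$, so does $n-j$, so this set coincides with $\{n^{-1}\exp(2\pi j i/n)\}_{j=1}^{n-1}$.

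For $j\neq0$ the vectors $\mathbf{w}^{j}$ satisfy $\sum_k w^{j}_k=0$. They therefore span (over $\mathbb{C}$) the tangent space of the hyperplane $\sum v_k=1$. This gives the $n-1$ eigenvalues governing the dynamics on $S_n$.

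\textbf{The main obstacle: $j=0$.} This index needs care. For $\mathbf{w}^{0}=(1,\ldots,1)$ the calculation gives
\[
\mathbf{J}\mathbf{w}^0=\Bigl(\tfrac1n-\tfrac2n\Bigr)\mathbf{w}^0=-\tfrac1n\mathbf{w}^0 .
\]
This is exactly as in the autocatalytic case, where $\mathbf{e}^n$ was normal to the simplex and gave $-n^{-1}$. So the formula as stated, with $\lambda_0=n^{-1}$, does not literally hold for the ambient $\R^{n}$ Jacobian.

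I would handle this by stating explicitly that the eigenvector $(1,\ldots,1)$ is transversal to $S_n$ and is dynamically irrelevant, as was done for $\mathbf{e}^{n}$ above. The eigenvalues restricted to the invariant simplex are $n^{-1}e^{2\pi ji/n}$ for $j=\overline{1,n-1}$. The value $j=0$ in the displayed formula corresponds to the shift part $n^{-1}\mathbf{P}$ alone, before the rank-one correction $-2n^{-2}\mathbf{E}$ moves it to $-n^{-1}$.

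Finally, I would remark on the consequence. For $n\geqslant5$ some of these eigenvalues have positive real part, namely $\cos(2\pi j/n)>0$, so $\bar{\mathbf{v}}$ is unstable. For $n=3,4$ the signs must be checked separately: for $n=3$ the real part is $-1/(2n)<0$, and for $n=4$ the eigenvalues $\pm i/4$ are purely imaginary.
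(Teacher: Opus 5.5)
Your proof is correct and follows essentially the same route as the paper: compute the Jacobian at $\bar{\mathbf{v}}$, recognise it as the circulant $\frac{1}{n}\mathbf{P}-\frac{2}{n^{2}}\mathbf{E}$, and read off its eigenvalues from the Fourier basis (the paper instead quotes the standard circulant eigenvalue formula). Your caveat about $j=0$ is also right and agrees with the paper's own remark after the proof that $\lambda_{0}=-n^{-1}$ in the transversal direction $(1,\ldots,1)$, so the displayed formula is valid only for $j=\overline{1,n-1}$; your evaluation $n^{-1}\omega^{-j}$ is the correct one, not the $-\eta^{j}/n$ in the paper's final display, which would wrongly make the $n=3$ equilibrium unstable.
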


\begin{proof}
If $j\neq k-1$, system \eqref{c1:eq1.10} gives at equilibrium
$\bar{\mathbf{v}}$:
\[
  \frac{\partial\dot{v}_{k}}{\partial v_{j}}=-\frac{2}{n^{2}},
\]
\[
  \frac{\partial\dot{v}_{k}}{\partial v_{j}}
  =\frac{n-2}{n^{2}},\quad\text{if }j=k-1.
\]
The Jacobian is therefore
\[
  \mathbf{J}(\bar{\mathbf{v}})=\frac{1}{n}
  \begin{pmatrix}
    -2 & -2 & \cdots & -2 & n-2\\
    n-2 & -2 & \cdots & -2 & -2\\
    \vdots & \vdots & \ddots & \vdots & \vdots\\
    -2 & -2 & \cdots & n-2 & -2\\
  \end{pmatrix}.
\]
This is a \emph{circulant matrix}, whose eigenvalues are given by
the formula \cite{c1:Bellman1960}:
\begin{equation}
  \lambda_{j}
  =-\frac{2}{n^{2}}\sum_{k=0}^{n-1}\eta^{kj}
  +\frac{1}{n}\eta^{(n-1)j}
  =-\frac{\eta^{j}}{n},\quad j=\overline{0,n-1},
  \label{c1:eq1.11}
\end{equation}
where $\eta=\exp\!\biggl(\frac{2\pi j}{n}i\biggr)$. 

\end{proof}

When $j=0$, $\lambda_{0}=-n^{-1}$ with eigenvector $(1,1,\ldots,1)$,
which is orthogonal to the simplex $S_{n}$ and hence excluded from
the stability analysis.  From \eqref{c1:eq1.11}: the equilibrium
$\bar{\mathbf{v}}$ is asymptotically stable for $n=2,3$ and unstable
for $n\geqslant 5$, since in the latter case there are always
eigenvalues with positive real part.  For $n=4$ one has
$\lambda_{1,2}=\pm\frac{i}{4}$, $\lambda_{3}=-\frac{1}{4}$, and
linear analysis is inconclusive.  In this case we use the Lyapunov function
\[
\Phi(\mathbf{v}) = \bigl(v_1+v_2+v_3+v_4\bigr)^2 - 4f
= \bigl[(v_1+v_3)-(v_2+v_4)\bigr]^2,
\]
where $f = v_1v_4 + v_2v_1 + v_3v_2 + v_4v_3$, 
whose time derivative along trajectories of \eqref{c1:eq1.10} satisfies
$\dot{\Phi}(\mathbf{v})\leqslant 0$.  The zero set of $\dot\Phi$ lies
in $Z=\{\mathbf{v}\in S_{n}:v_{1}+v_{3}=v_{2}+v_{4}\}$.
By LaSalle's invariance principle \cite{c1:LaSalle1961}, 
every trajectory of $S_{4}$ converges to the largest invariant subset
$M$ of $Z$, which, from the additional condition
\[\frac{d}{dt}(v_{1}+v_{3})=\frac{d}{dt}(v_{2}+v_{4}).\]
It follows that
\[
v_{1}v_{4}+v_{3}v_{2}-(v_{1}+v_{3})f
= v_{2}v_{1}+v_{4}v_{3}-(v_{2}+v_{4})f,
\qquad\text{if }(v_{1}-v_{3})(v_{4}-v_{2})=0.
\]
This means that the set $M$ is contained in the set $v_{1}=v_{3}$ or $v_{2}=v_{4}$. Hence, $M$  consists only of the equilibrium $\bar{\mathbf{v}}_{4}\in S_{4}$, and the equilibrium is stable for $n=4$.
\medskip
The preceding analysis was specific to the hypercycle. We now turn to the general replicator system~\eqref{c1:eq1.5} and consider the general case of arbitrary fitness $(\mathbf{Au})_i$.
Let $\bar{\mathbf{u}}\in\operatorname{int}S_{n}$ be an equilibrium of
System~\eqref{c1:eq1.5}, whose existence we assume. Then the following
equalities hold:
\begin{equation}
	\mathbf{A}\bar{\mathbf{u}}=\bar{f}\,\mathbf{I},\quad
	\bar{f}=\bigl(\mathbf{A}\bar{\mathbf{u}},\bar{\mathbf{u}}\bigr),\quad
	\bigl(\mathbf{u},\mathbf{I}\bigr)=1,\quad
	\mathbf{I}=(1,1,\ldots,1)\in\R^{n}.
	\label{c1:eq1.12}
\end{equation}
For this general case, we introduce the Lyapunov function
\begin{equation}
  V(\mathbf{u})
  =\sum_{i=1}^{n}\Bigl[(u_{i}-\bar{u}_{i})
   -\bar{u}_{i}\ln\!\Bigl(\frac{u_{i}}{\bar{u}_{i}}\Bigr)\Bigr],
  \label{c1:eq1.13}
\end{equation}
which is positive and vanishes only at $\mathbf{u}=\bar{\mathbf{u}}$.
Its time derivative along trajectories of \eqref{c1:eq1.5} is
\begin{equation}
  \dot{V}(\mathbf{u})=(\mathbf{Au},\mathbf{u}-\bar{\mathbf{u}}),
  \quad \mathbf{u}\in S_{n}.
  \label{c1:eq1.14}
\end{equation}
Since $u_i - \bar{u}_i \geqslant \bar{u}_i \ln\!\left(\dfrac{u_i}{\bar{u}_i}\right)$
for all $u_i\bar{u}_i > 0$, the function $V(\mathbf{u})$ is positive
and goes to zero only at $\mathbf{u}=\bar{\mathbf{u}}$, and is therefore
a Lyapunov function candidate.

Denote $\xi=\mathbf{u}-\bar{\mathbf{u}}$. Decomposing
$\mathbf{A}=\mathbf{B}+\mathbf{C}$, where
$\mathbf{B}=(\mathbf{A}+\mathbf{A}^{\top})/2$ is symmetric and
$\mathbf{C}=(\mathbf{A}-\mathbf{A}^{\top})/2$ is skew-symmetric
(so $(\mathbf{C}\xi,\xi)=0$), and taking into account  $\Big({\bf u, I}\Big) = 1$, so $\Big(\xi, {\bf I}\Big) = 0$, we obtain
\[\dot{V}(\mathbf{u})=\Big(\mathbf{B}\xi,\xi\Big)\].
The stability condition for an interior equilibrium
$\bar{\mathbf{u}}\in\inside S_{n}$ therefore reduces to
\begin{equation}
  (\mathbf{B}\xi,\xi)\leqslant 0
  \label{c1:eq1.15}
\end{equation}
for all $\xi\in\R^{n}$ satisfying
\begin{equation}
  (\xi,\mathbf{I})=0,\quad \mathbf{I}=(1,1,\ldots,1)\in\R^{n}.
  \label{c1:eq1.16}
\end{equation}
That is, all eigenvalues of the symmetric matrix $\mathbf{B}$
restricted to the $(n-1)$-dimensional subspace defined by
\eqref{c1:eq1.16} must be non-positive.

If $\bar{\mathbf{u}}\in\partial S_{n}$, for example $\bar{u}_{1}=0$,
and $\bar{\mathbf{u}}'=(\bar{u}_{2},\bar{u}_{3},\ldots,\bar{u}_{n})$
is an interior point of the corresponding simplex
$S_{n-1}=\bigl\{\mathbf{u}: \sum_{i=2}^{n}u_{i}=1\bigr\}$,
then, applying the function $V$ with $i=\overline{2,n}$, one can obtain
a stability condition for this equilibrium analogous to
\eqref{c1:eq1.15}--\eqref{c1:eq1.16}.
We note that in many cases it is more important to verify that the
boundary equilibrium $\bar{\mathbf{u}}\in\partial S_{n}$ is unstable.

For circulant matrices, eigenvalue $\lambda_{1}$ always exists with
eigenvector $(1,1,\ldots,1)$, orthogonal to all eigenvectors in $S_{n}$;
hence stability is determined by the signs of the remaining eigenvalues
$\lambda_{2},\ldots,\lambda_{n}$.  The method of finding eigenvalues on
the constrained subspace \eqref{c1:eq1.16} was proposed by
M.\,G.~Krein and can be found in \cite{c1:Shilov1969}. 

As an illustration, consider
\[
  \mathbf{A}=
  \begin{pmatrix}
    0 & a_1 & a_2 & a_3\\
    a_3 & 0 & a_1 & a_2\\
    a_2 & a_3 & 0 & a_1\\
    a_1 & a_2 & a_3 & 0\\
  \end{pmatrix},\quad
  \mathbf{B}=\frac{\mathbf{A}+\mathbf{A}^{\top}}{2}=
  \begin{pmatrix}
    0 & \alpha & \beta & \alpha\\
    \alpha & 0 & \alpha & \beta\\
    \beta & \alpha & 0 & \alpha\\
    \alpha & \beta & \alpha & 0\\
  \end{pmatrix},
\]
where $\alpha=(a_1+a_3)/2$, $\beta=a_2$. 
The eigenvector corresponding to the first eigenvalue has the form
$(1,1,\ldots,1)$ and is orthogonal to the simplex $S_n$.
The eigenvalues of $\mathbf{B}$ are
$\lambda_{1}=2\alpha+\beta$, $\lambda_{2}=-\beta$,
$\lambda_{3}=\beta-2\alpha$.
If $\beta>0$ and $2\alpha>\beta$, the interior equilibrium is
asymptotically stable.

\section{Darwin's Evolutionary Postulates and Properties of the Hypercycle}
\label{c1:section:1.3}

The theory of biological evolution was proposed by Charles Darwin
\cite{c1:Darwin1859}.  In that work, the fundamental triad of the
evolutionary process was formulated: \textit{heredity —
variability — natural selection}.  Together with other supplementary
principles, these postulates form the foundation of modern evolutionary
theory, notwithstanding all the fundamental discoveries of recent
centuries.  Since we consider mathematical models of evolutionary
processes, it is necessary to specify precisely what serves as the
mathematical formalisation of heredity, variability, and natural
selection in our models.

Heredity, as should be clear from the preceding discussion, is
formalised by the general form of the replicator equation:
\[
  \frac{\dot{u}_{i}}{u_{i}}=g_{i}(\mathbf{u})-f(t),
\]
where the right-hand side is the excess fitness of species $i$ over
the mean population fitness.

The frequently used term \emph{fitness} is the mathematical
formalisation of natural selection in our models.  In the simplest
case of independent replication, fitness is constant; in the general
case, it is a complex function of the population structure.

Variability is often specified in terms of explicit parameters
describing the probability of transition from species $i$ to species $j$.
These parameters are usually called \emph{mutation parameters} or
the \emph{mutation landscape}.  In other cases, variability is an intrinsic property of the
model.  In particular, as we will show, variability is implicitly
built into the hypercycle model.

In addition to Darwin's three postulates, we will also be interested
in models that, in an evolutionary sense, we may call
\emph{permanent} (or \emph{non-degenerate}).  By this we mean
replicator systems in which no species becomes extinct over time
(in some sense, the system sustains its own complexity).
In the population dynamics literature, the terms \textit{permanent} and
\textit{persistent} are also used \cite{c1:Hofbauer1978} (where \textit{permanent} is the stronger condition). 
The mathematical formulation is as follows.

\begin{definition}\label{c1:def1.1}
A replicator system \eqref{c1:eq1.5} is called \emph{permanent}
(\emph{non-degenerate}) if for any initial data
$u_{i}^{0},\;i=\overline{1,n}$,
$\mathbf{u}_{0}\in\inside S_{n}$, there exists $\delta_{0}>0$ such that
\[
  \liminf_{t\to+\infty}u_{i}(t)\geqslant\delta_{0}>0,\quad
  i=\overline{1,n}.
\]

\end{definition}

Informally, a system is permanent if the boundary of the simplex
``repels'' all trajectories starting in the interior.

While in the course of evolution many species have gone extinct, the
permanence condition cannot be regarded as absolutely necessary for
biological systems.  On the other hand, the extinction of species
diminishes biodiversity, which is also undesirable.  For these
reasons, in many problems we will require permanence of our
replicator equations in the sense of the definition above.

A remarkable fact is that the hypercycle system is permanent.
To prove this, we use the following result \cite{c1:Hofbauer2003}.

\begin{theorem}\label{c1:t1.1}
A replicator system \eqref{c1:eq1.5} is permanent if and only if there
exists a vector $\mathbf{p}\in\inside S_{n}$ such that
\[
  (\mathbf{p},\mathbf{A}\bar{\mathbf{u}})
  >(\bar{\mathbf{u}},\mathbf{A}\bar{\mathbf{u}})
\]
for all fixed points $\bar{\mathbf{u}}\in\bound S_{n}$
of system \eqref{c1:eq1.5}.
\end{theorem}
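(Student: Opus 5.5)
The plan splits into the two directions. Sufficiency will come from an average Lyapunov function of product type. Necessity will come from a separation argument followed by a time-average argument, and that second half is where I expect the difficulty.

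\emph{Sufficiency.} Given $\mathbf{p}\in\inside S_{n}$ with $(\mathbf{p},\mathbf{A}\bar{\mathbf{u}})>(\bar{\mathbf{u}},\mathbf{A}\bar{\mathbf{u}})$ at every boundary fixed point, set
\[
P(\mathbf{u})=\prod_{i=1}^{n}u_{i}^{p_{i}},
\]
which is positive on $\inside S_{n}$ and vanishes on $\bound S_{n}$. Along \eqref{c1:eq1.5},
\[
\frac{\dot P}{P}=\sum_{i}p_{i}\bigl[(\mathbf{Au})_{i}-f(\mathbf{u})\bigr]=(\mathbf{p},\mathbf{Au})-(\mathbf{u},\mathbf{Au})=:\psi(\mathbf{u}).
\]
This function $\psi$ extends continuously to all of $S_{n}$. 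I then invoke the average Lyapunov criterion: it suffices that for every $\mathbf{u}^{0}\in\bound S_{n}$ there is $T>0$ with $\frac1T\int_{0}^{T}\psi(\mathbf{u}(t))\,dt>0$. Arguing by contradiction, if this fails on the compact invariant set $\bound S_{n}$, one obtains a boundary orbit whose empirical measures converge to an invariant probability measure $\mu$ supported on $\bound S_{n}$ with $\int\psi\,d\mu\le0$.

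The key step is to show that $\int\psi\,d\mu$ is a positive combination of the numbers $(\mathbf{p},\mathbf{A}\bar{\mathbf{u}})-(\bar{\mathbf{u}},\mathbf{A}\bar{\mathbf{u}})$. I would do this by induction on faces. Take an ergodic $\mu$ supported in the relative interior of a face with support set $I$. Since $\ln u_{i}$ is bounded on average for $i\in I$, we get $\int[(\mathbf{Au})_{i}-f]\,d\mu=0$ for $i\in I$. Because the fitnesses are linear, the barycentre $\hat{\mathbf{u}}=\int\mathbf{u}\,d\mu$ is then a fixed point of that face. Moreover $\int f\,d\mu=(\mathbf{p}',\mathbf{A}\hat{\mathbf{u}})$ for any $\mathbf{p}'$ on that face, and in particular for $\mathbf{p}'=\hat{\mathbf{u}}$. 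This reduces $\int\psi\,d\mu$ to $(\mathbf{p},\mathbf{A}\hat{\mathbf{u}})-(\hat{\mathbf{u}},\mathbf{A}\hat{\mathbf{u}})>0$, which gives the contradiction. Hence the boundary is a repellor and the system is permanent.

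\emph{Necessity.} Suppose no such $\mathbf{p}$ exists. The boundary fixed points form finitely many components, and on each component the relevant quantities are affine. A minimax (Farkas) argument over $\mathbf{p}\in S_{n}$ and convex weights on the boundary equilibria $\bar{\mathbf{u}}^{k}$ then gives weights $\mu_{k}\ge0$, $\sum\mu_{k}=1$, such that
\[
\sum_{k}\mu_{k}(\mathbf{A}\bar{\mathbf{u}}^{k})_{i}\le\sum_{k}\mu_{k}(\bar{\mathbf{u}}^{k},\mathbf{A}\bar{\mathbf{u}}^{k})\quad\text{for all }i.
\]
In words, some convex combination of boundary equilibria is ``collectively saturated''. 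Next, assume permanence, so there is a unique interior equilibrium $\hat{\mathbf{u}}$ (a standard consequence via Brouwer degree and the index theorem). Interior time averages converge to $\hat{\mathbf{u}}$. I would pair the inequality above with $\hat{\mathbf{u}}$ and try to reach a contradiction, either through the index sum over saturated boundary equilibria or by constructing interior orbits that shadow the heteroclinic network spanned by the $\bar{\mathbf{u}}^{k}$ and therefore have $\liminf u_{i}=0$.

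\emph{Main obstacle.} The main obstacle is this last step. Collective saturation by a mixture of several boundary equilibria does not obviously produce an interior orbit approaching the boundary. My first attempt would be to use the weights $\mu_{k}$ as a boundary invariant measure along a heteroclinic cycle and apply the time-average identity from the sufficiency part in reverse. I would then check carefully whether extra structure is needed, such as low dimension or boundary equilibria being connected by orbits.
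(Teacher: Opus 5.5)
\textbf{Sufficiency.} This half of your proposal is correct. It is the standard rigorous argument (Jansen; Hofbauer--Sigmund), and it is more complete than what the paper gives. The paper only offers the heuristic ``obtuse angle'' remark about $\frac{d}{dt}(\mathbf u,\mathbf p)$ and then refers to Hofbauer--Sigmund. Your three ingredients all hold: the product function $P=\prod u_i^{p_i}$ with $\dot P/P=\psi$, the average-Lyapunov criterion, and the reduction of $\int\psi\,d\mu$, for an ergodic boundary measure, to $(\mathbf p,\mathbf A\hat{\mathbf u})-(\hat{\mathbf u},\mathbf A\hat{\mathbf u})$ at its barycentre (a boundary fixed point, by linearity of the fitnesses). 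With the ergodic decomposition, no induction on faces is needed.

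\textbf{Necessity.} This is the genuine gap, and it cannot be closed, because the ``only if'' direction is false. The result in the literature is only the sufficiency statement. Neither does the paper's sketch prove necessity: failure of a single $\mathbf p$ is a statement about a convex combination of several equilibria, not about any one of them.

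Under the paper's own Definition~\ref{c1:def1.1}, where $\delta_0$ may depend on the initial point, zero-sum rock--paper--scissors is a counterexample. Take $\mathbf A=\begin{pmatrix}0&-1&1\\1&0&-1\\-1&1&0\end{pmatrix}$. Then $f\equiv 0$ and all column sums vanish, so $u_1u_2u_3$ is a first integral and every interior orbit is a closed curve with $\liminf u_i>0$. Yet the only boundary fixed points are the vertices $e_j$, and $\sum_j\bigl[(\mathbf p,\mathbf A e_j)-(e_j,\mathbf A e_j)\bigr]=0$ for every $\mathbf p$, so the three numbers cannot all be positive.

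Even for uniform permanence the converse fails. Take $\mathbf A=\begin{pmatrix}0&-1&3\\0&0&1\\-1&1&0\end{pmatrix}$.
\begin{itemize}
\item The boundary fixed points are $e_1,e_2,e_3$ and $(0,\tfrac12,\tfrac12)$.
\item The boundary flow $e_3\to e_1\to e_2\to(0,\tfrac12,\tfrac12)\leftarrow e_3$ is acyclic.
\item $e_3$, $e_2$ and the edge point are each invaded at a positive rate.
\item $e_1$ has invasion rates $0$ and $-1$, yet it still repels interior orbits: $\dot u_2=u_2\bigl(u_2-u_3+O(|\mathbf u-e_1|^2)\bigr)$ while $u_3$ decays exponentially.
\end{itemize}
The Butler--Freedman--Waltman acyclicity theorem then gives permanence. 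However, $(\mathbf p,\mathbf A e_1)-(e_1,\mathbf A e_1)=-p_3<0$ for every $\mathbf p\in\inside S_3$.

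\textbf{What is true.} Your instinct about ``extra structure'' is right. The Farkas weights $\mu_k$ witness non-permanence only when the equilibria carrying them are dynamically linked. For example, take $n=3$ with all invasion rates nonzero. There your obstruction forces a cyclic sign pattern of the vertex invasion rates. That pattern is a boundary heteroclinic cycle $e_i\to e_j\to e_k\to e_i$ whose product of contraction rates is at least the product of expansion rates, so it attracts or is neutral. In that setting necessity does hold for uniform permanence. In general, the statement should be stated and proved as a sufficient condition only, which is exactly what your first half does.
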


\begin{proof}
The proof of this theorem is based on the following reasoning.
If on the boundary of the simplex there are fixed points of the system
characterised by the presence of trajectories entering those points,
the system will be degenerate.  Consider an arbitrary point
$\mathbf{p}\in\inside S_{n}$ and the function
$v(t)=(\mathbf{u}(t),\mathbf{p})$.  One can show that 
\begin{equation*}
	\dot{\mathbf v} = \bigl(\dot{\mathbf{u}}(t), \mathbf{p}\bigr)
	= \sum_{i=1}^{n} \bigl(\mathbf{Au}\bigr)_{\!i} p_{i} - f(t)
	= \lvert\dot{\mathbf{u}}(t)\rvert\,\lvert\mathbf{p}\rvert
	\cos\Bigl(\widehat{\dot{\mathbf{u}}(t),\, \mathbf{p}}\Bigr)
	 = \bigl(\mathbf{Au}, \mathbf{u}\bigr).
\end{equation*}

If a fixed point $\bar{\mathbf{u}}\in\bound S_{n}$ satisfies
$\dot{\mathbf v}(\bar{\mathbf{u}})\leqslant 0$, then the phase trajectory
forms an obtuse angle with $\mathbf{p}\in\inside S_{n}$, and
the trajectory enters the point.  In the opposite case, the motion
proceeds into the interior of $S_{n}$, making $\bar{\mathbf{u}}$
a repeller.
The precise proof of Theorem~\ref{c1:t1.1} can be found in
\cite{c1:Hofbauer2003}. For further results on permanence and fitness optimisation in replicator systems, see \cite{c1:Drozhzhin2021}.

We now apply Theorem~\ref{c1:t1.1} to prove permanence of the hypercycle
system.  For this purpose we use the equivalent system \eqref{c1:eq1.10}.

Let $\mathbf{p}=(n^{-1},\ldots,n^{-1})$.  All equilibria of
\eqref{c1:eq1.10} on the boundary satisfy
$$\bar{u}_{i}(\bar{u}_{i-1}-\bar{f}(\bar{\mathbf{u}}))=0, \quad i = \overline{1, n}, \quad \bar{u}_{0} = \bar{u}_{n}, 
\quad \bar{f}({\bf \bar{u}}) = \sum\limits_{i = 1}^{n}\bar{u}_{i}\bar{u}_{i - 1}.
$$

If, for example, $\bar{u}_{1}=0$ but $\bar{u}_{2}\neq 0$, then
$\bar{f}(\bar{\mathbf{u}})=0$ and the equilibrium lies on
$\bound S_{n}$.  
This means that at least one component of such a vector must be
non-zero. The interaction matrix of \eqref{c1:eq1.10} is
\[
  \mathbf{A}=
  \begin{pmatrix}
    0 & 0 & \cdots & 0 & 1\\
    1 & 0 & \cdots & 0 & 0\\
    \vdots &   &  \ddots& & \vdots\\
    0 & 0 & \cdots & 1 & 0\\
  \end{pmatrix},
\]
so $\mathbf{A\bar{u}}=(\bar{u}_{n},\bar{u}_{1},\ldots,\bar{u}_{n-1})$
and $(\mathbf{A\bar{u}},\mathbf{p})>0$ while $f(\bar{\mathbf{u}})=0$,
establishing the condition of Theorem~\ref{c1:t1.1}.
\end{proof}

\begin{corollary}
Let $\mathbf{u}(t)=(u_{1}(t),\ldots,u_{n}(t))$ be a solution of
\eqref{c1:eq1.8} with $u_{i}(0)=u_{i}^{0}>0$, $i=\overline{1,n}$.
Then the time-averaged frequencies
\[
  \bar{u}_{i}
  =\lim_{T\to+\infty}\frac{1}{T}\int_{0}^{T}u_{i}(t)\,dt
\]
are the coordinates of the interior equilibrium.
\end{corollary}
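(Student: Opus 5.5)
The plan is the standard time-average argument for permanent replicator systems: combine permanence, as established above via Theorem~\ref{c1:t1.1}, with integration of the logarithmic derivatives. Write system \eqref{c1:eq1.8} as
\[
  \frac{d}{dt}\ln u_i = k_i u_{i-1} - f_3(\mathbf{u}), \quad i=\overline{1,n},
\]
which is valid because $u_i(t)>0$ for all $t$ when $u_i^0>0$. Integrating from $0$ to $T$ and dividing by $T$ gives
\[
  \frac{\ln u_i(T)-\ln u_i^0}{T} = k_i\,\frac{1}{T}\int_0^T u_{i-1}\,dt - \frac{1}{T}\int_0^T f_3\,dt .
\]

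Permanence of the hypercycle (proved above for the equivalent system \eqref{c1:eq1.10}) transfers to \eqref{c1:eq1.8}, since the change of variables $v_i=k_{i+1}u_i/R$ with bounded positive $R$ and the time rescaling preserve positive lower bounds. Hence $\delta_0\leqslant u_i(t)\leqslant 1$ for all large $t$, so $\ln u_i(T)$ is bounded and the left-hand side tends to $0$. Writing $U_j(T)=\frac1T\int_0^T u_j\,dt$ and $F(T)=\frac1T\int_0^T f_3\,dt$, we obtain
\[
  k_iU_{i-1}(T)-F(T)\to 0 \quad\text{for all } i .
\]
In addition, $\sum_j U_j(T)=1$.

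Next take any sequence $T_m\to\infty$ along which $(U_1,\dots,U_n,F)$ converges; such sequences exist by compactness. The limit $(\hat u,\hat f)$ satisfies $k_i\hat u_{i-1}=\hat f$ for every $i$ and $\sum\hat u_j=1$, with $\hat u_j\geqslant\delta_0>0$. This linear system has the unique solution $\hat u_{i}=k_{i+1}^{-1}\big/\sum_j k_{j+1}^{-1}$, which is exactly the interior equilibrium found earlier. Since every convergent subsequence has the same limit, the full limit $\lim_{T\to\infty}U_i(T)$ exists and equals $\bar u_i$.

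The main obstacle is the transfer of permanence from \eqref{c1:eq1.10} back to \eqref{c1:eq1.8}. For this, note that the orbital equivalence only reparametrises time by a factor $R$ bounded between $\min k_j$ and $\max k_j$, and that $u_i=Rv_i/k_{i+1}$ with bounded factors. Hence the $\liminf$ bounds carry over, and only qualitative boundedness of $\ln u_i$ is needed; time averages themselves are never transferred between the two systems. The rest is routine linear algebra.
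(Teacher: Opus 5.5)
Your proof is correct and follows the same route as the paper: integrate $\dot u_i/u_i$, use permanence to get $(\ln u_i(T)-\ln u_i^0)/T\to 0$, and identify the limit through the linear system $k_i\hat u_{i-1}=\hat f$, $\sum_j\hat u_j=1$. Two of your steps tighten points the paper passes over silently: the subsequence-plus-uniqueness argument proves that the limit of the averages exists rather than assuming it, and you explicitly transfer permanence from \eqref{c1:eq1.10} back to \eqref{c1:eq1.8}.
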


\begin{proof}
Write \eqref{c1:eq1.8} as
$$
\frac{\dot{u}_{i}}{u_{i}}=(\mathbf{Au})_{i}-f(t),$$
integrate from $0$ to $T$, dividing by $T$, and use permanence
to conclude that
$$\lim_{T\to+\infty}\frac{\ln u_{i}(T)-\ln u_{i}^{0}}{T}=0.$$
Therefore
$$
\Big({\bf A\bar{u}}\Big)_{i} = \lim\limits_{T \to +\infty}\frac{1}{T}\int\limits_{0}^{T}f(t)dt = \Big({\bf A\bar{u}, \bar{u}}\Big) = \bar{f}, \quad i = \overline{1, n}.
$$
Hence $(\mathbf{A\bar{u}})_{i}=\bar{f}$ for all $i$, which is
precisely the system determining the interior equilibrium.
\end{proof}

In fact, the hypercycle system possesses an even stronger property.
For $n\geqslant 5$, system \eqref{c1:eq1.8} admits a stable limit cycle
— a closed trajectory around which all other trajectories accumulate
as $t\to+\infty$ \cite{c1:Hofbauer1998,c1:Hofbauer1991}.  The proof relies on a more
general result \cite{c1:MalletParet1990} 
concerning systems of the form
$\dot{u}_{i}=f_{i}(u_{i},u_{i-1})$.
The Poincar\'e--Bendixson conditions for the existence of a limit
cycle are in general difficult to verify; for the hypercycle system
on the simplex, however, these conditions are satisfied for all
$n \geqslant 5$.

We also note that the hypercycle system possesses the property of
\emph{evolutionary variability}.

\begin{definition}
Row $i$ of matrix $\mathbf{A}$ is said to be \emph{strictly
dominated} by row $j$ if
$(\mathbf{Au})_{i}<(\mathbf{Au})_{j}$ for all $\mathbf{u}\in S_{n}$.
\end{definition}

\begin{proposition}
If row $i$ is strictly dominated by row $j$ in replicator system
\eqref{c1:eq1.5}, then $u_{i}(t)\to 0$ as $t\to+\infty$.
\end{proposition}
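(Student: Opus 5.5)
The plan is to imitate the argument used above for independent replication and study the ratio $u_i/u_j$ along a trajectory. Assume first that $u_i^0>0$ and $u_j^0>0$; otherwise the claim is trivial ($u_i\equiv 0$) or has to be handled separately (see below). The faces of $S_n$ are invariant, so both components stay positive for all $t\geqslant 0$, and from \eqref{c1:eq1.5} the mean fitness $f$ cancels:
\[
  \frac{d}{dt}\ln\frac{u_{i}}{u_{j}}
  =\frac{\dot u_i}{u_i}-\frac{\dot u_j}{u_j}
  =(\mathbf{Au})_{i}-(\mathbf{Au})_{j}.
\]

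The key step is to make the dominance uniform. The function $\mathbf{u}\mapsto(\mathbf{Au})_j-(\mathbf{Au})_i$ is linear, hence continuous, on the compact set $S_n$. By hypothesis it is strictly positive there, so it attains a positive minimum
\[
  \varepsilon=\min_{\mathbf{u}\in S_n}\bigl[(\mathbf{Au})_j-(\mathbf{Au})_i\bigr]>0 .
\]
Since it is linear, it is enough to take the minimum over the vertices: $\varepsilon=\min_k(a_{jk}-a_{ik})$. Integrating the identity from $0$ to $t$ gives
\[
  \frac{u_i(t)}{u_j(t)}\leqslant \frac{u_i^0}{u_j^0}\,e^{-\varepsilon t}.
\]
Because $u_j(t)\leqslant 1$ on the simplex, we get $u_i(t)\leqslant (u_i^0/u_j^0)e^{-\varepsilon t}\to 0$, and in fact exponentially fast.

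The only delicate point, and the one I expect to be the main obstacle, is the degenerate case $u_j^0=0<u_i^0$. Here the ratio argument fails, and the statement is genuinely false in general. For example, if $n=2$ and $u_2^0=0$, then $u_1\equiv 1$. I would therefore state explicitly that the proposition is meant for interior initial data $\mathbf{u}_0\in\inside S_n$, as in Definition~\ref{c1:def1.1}, or at least for $u_j^0>0$. With that convention the argument above is complete.
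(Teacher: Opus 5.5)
Your proof is correct and takes the same approach as the paper: it also forms $\frac{d}{dt}(u_i/u_j)=(u_i/u_j)\bigl((\mathbf{Au})_i-(\mathbf{Au})_j\bigr)$ and concludes that $u_i/u_j\to 0$, hence $u_i\to 0$. Two points the paper leaves implicit are filled in by your additions: the uniform rate $\varepsilon=\min_k(a_{jk}-a_{ik})>0$ obtained from compactness, and your caveat that the statement needs $u_j^0>0$ (it fails otherwise, as your example shows).
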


\begin{proof}
Multiply the $i$-th equation of \eqref{c1:eq1.5} by $u_{j}$ and
subtract the $j$-th equation multiplied by $u_{i}$:
\[
\frac{d}{dt}\!\left(\frac{u_i}{u_j}\right)
= \left(\frac{u_i}{u_j}\right)\bigl((\mathbf{Au})_i - (\mathbf{Au})_j\bigr).
\]
If $(\mathbf{Au})_{i}<(\mathbf{Au})_{j}$ for all $\mathbf{u}\in S_{n}$,
then $u_{i}(t)/u_{j}(t)\to 0$, hence $u_{i}(t)\to 0$.
\end{proof}
\medskip
Consider a hypercycle whose interaction graph is shown in 
Fig.~\ref{c1:fig1.3}. Unlike the standard hypercycle of 
Fig.~\ref{c1:fig1.1}, this system contains $n+1$ species: species~$1$ 
is catalysed by both species~$n$ and species~$n+1$, with rate 
coefficients $k_1,\,k_n$ and $\bar{k}_1,\,\bar{k}_n$ respectively.

\begin{figure}[ht]
  \centering
  \includegraphics[width=0.65\textwidth]{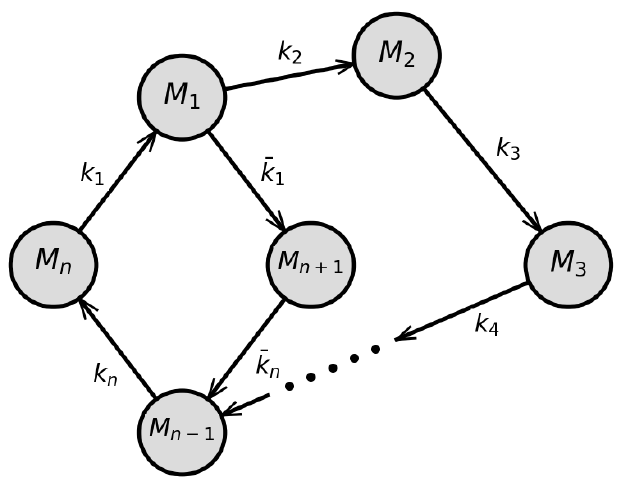}
  \caption{Graph of the modified hypercyclic replication with matrix $\mathbf{A}$
           \eqref{c1:eq1.17} with $n+1$ species.
           	Species~$1$ receives catalytic input from both species~$n$ 
           	and the additional species~$n+1$.}
  \label{c1:fig1.3}
\end{figure}

The interaction matrix is
\begin{equation}
  \mathbf{A}=
  \begin{pmatrix}
    0 & 0 & \cdots & 0 & k_1 & \bar{k}_1\\
    k_2 & 0 & \cdots & 0 & 0 & 0\\
    0 & k_3 & \cdots & 0 & 0 & 0\\
    \vdots & & \ddots & & & \vdots\\
    0 & 0 & \cdots & k_n & 0 & 0\\
    0 & 0 & \cdots & \bar{k}_n & 0 & 0\\
  \end{pmatrix}.
  \label{c1:eq1.17}
\end{equation}

Depending on which of the coefficients $k_{n}$ or $\bar{k}_{n}$ is
larger, one of the last two rows of matrix $\mathbf{A}$ will be
dominated. Species~$n$ goes extinct if $k_{n}<\bar{k}_{n}$;
conversely, species~$n+1$ goes extinct and species~$n$ survives if
$k_{n}>\bar{k}_{n}$. In either case, survival is independent of the
ratio $k_{1}/\bar{k}_{1}$.

Thus, if in the course of evolution a species with ``better''
properties appears ($\bar{k}_{n}>k_{n}$), the hypercycle with two
catalytic branches selects exactly one of them. This reflects a
capacity for evolutionary change: species with ``better'' properties
can be incorporated into the hypercycle while those with ``worse''
properties are eliminated. It is readily seen that such a process can
only increase the mean fitness of the system.
\medskip

For two competing hypercycles sharing common vertices
(Fig.~\ref{c1:fig1.4}), with matrix
\begin{equation}
  \mathbf{A}=
  \begin{pmatrix}
    0 & 0 & k_1 & 0 & 0\\
    k_2 & 0 & 0 & 0 & \bar{k}_2\\
    0 & k_3 & 0 & 0 & 0\\
    0 & k_4 & 0 & 0 & 0\\
    0 & 0 & 0 & k_5 & 0\\
  \end{pmatrix},
  \label{c1:eq1.18}
\end{equation}
The row-dominance proposition shows that the behaviour of the system
depends on the ratio of $k_{3}$ and $k_{4}$. If $k_{3}>k_{4}$, the
fourth species goes extinct, which in turn causes the extinction of
all remaining species of hypercycle $2$--$4$--$5$. In the opposite
case, hypercycle $1$--$2$--$3$ perishes.
Thus, of the two competing hypercycles, only one survives.

\begin{figure}[ht]
  \centering
  \includegraphics[width=0.65\textwidth]{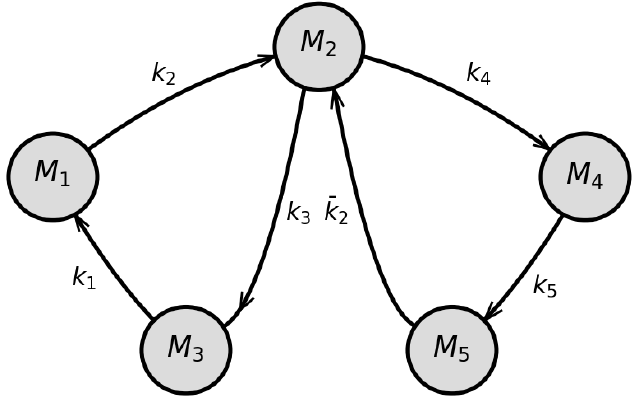}
  \caption{Graph of hypercyclic replication with matrix $\mathbf{A}$
           \eqref{c1:eq1.18}.}
  \label{c1:fig1.4}
\end{figure}

This conclusion extends to any number of hypercycles without common
species and with the same mean fitness. The argument rests on the
observation that interactions among several such hypercycles can be
described by an autocatalytic replicator system, with each species
representing one hypercycle. Consequently, depending on initial
conditions, at most one hypercycle survives \cite{c1:Hofbauer2002}: 
two distinct hypercycles cannot stably coexist. One hypercycle may,
however, supersede another if they share species in common,
inheriting those species from its predecessor. This unbranched mode
of evolution is consistent with the hypothesis of prebiotic
evolution, in which a common ancestral molecule could have developed
sequentially into a complex self-replicating system such as an RNA
molecule. For a detailed analysis of hypercycle evolution in this framework, see also \cite{c1:Bratus2018}.

An essential shortcoming of the hypercycle system, however, is its
vulnerability to parasitic species.  If a species is introduced that
exploits the resources of the system but contributes nothing in
return (an egoist), the system typically collapses
(Fig.~\ref{c1:fig1.5}).

\begin{figure}[ht]
  \centering
  \includegraphics[width=0.5\textwidth]{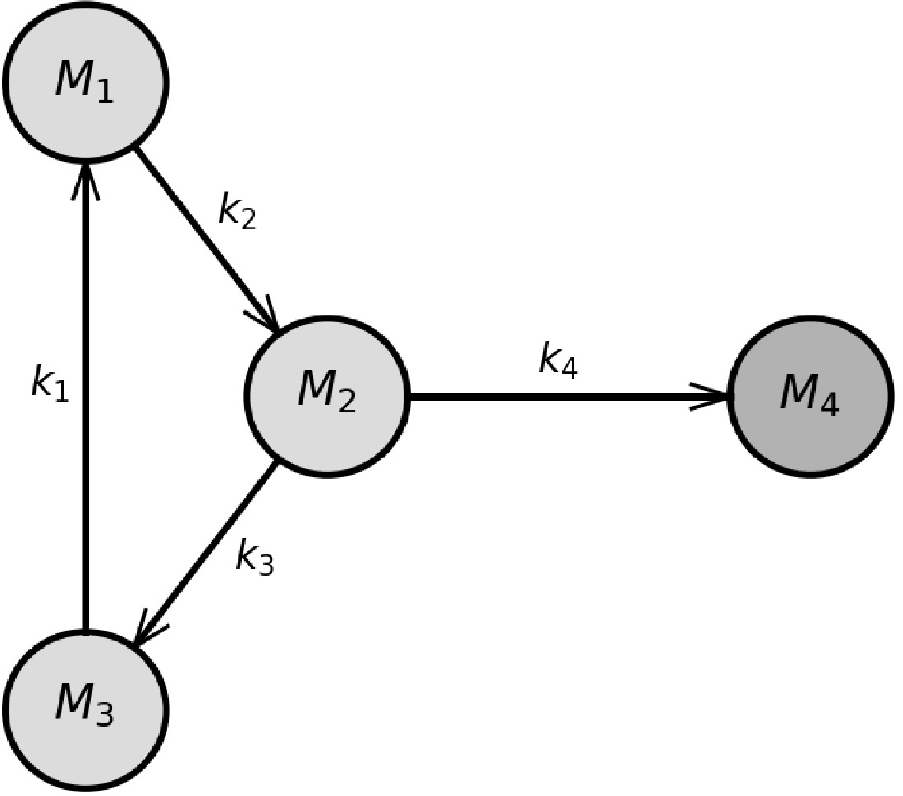}
\caption{Interaction graph of a hypercycle $1$--$2$--$3$ with a 
	parasitic species~$4$. }
  \label{c1:fig1.5}
\end{figure}

The matrix $\mathbf{A}$ is singular, which precludes the existence 
of an interior equilibrium $\bar{\mathbf{u}}\in\inside S_{n}$. 
Consequently, all four species cannot stably coexist. Species~$1$ 
is catalysed by species~$3$ (rate $k_1$), species~$2$ by 
species~$1$ (rate $k_2$), and species~$3$ by species~$2$ (rate 
$k_3$); species~$4$ is catalysed by species~$2$ (rate $k_4$) but 
contributes nothing to the cycle. If $k_3>k_4$, the parasite goes 
extinct and hypercycle $1$--$2$--$3$ survives; if $k_3<k_4$, 
species~$3$ is eliminated, bringing down the entire hypercycle 
with it.

This vulnerability can be remedied by allowing evolutionary 
modification of the entries of matrix $\mathbf{A}$, as shown in 
later chapters.

\section{Hypercycles of Higher Order and Other Replicator Systems}
\label{c1:section:1.4}

In hypercyclic systems of order $s$, the catalysis of species $i$
is carried out by species $i-1, i-2, \ldots, i-s$. Such systems
generalise the standard hypercycle, which corresponds to $s=1$.
We refer to them as \emph{hypercycles of higher order}; the term
reflecting the niche nature of this generalisation. The study of
such systems is motivated by real biochemical processes.

Consider a hypercyclic system of order two.  The dynamical equation is
\begin{equation}
  \dot{u}_{i}=u_{i}(k_{i}k_{i-1}u_{i-1}u_{i-2}-f(t)),
  \quad \mathbf{u}\in S_{n},
  \label{c1:eq1.19}
\end{equation}
\[
  f(t)=\sum_{i=1}^{n}k_{i}k_{i-1}u_{i}u_{i-1}u_{i-2},\quad
  u_{0}=u_{n},\quad u_{-1}=u_{n-1},\quad k_{0}=k_{n},\quad
  i=\overline{1,n}.
\]

Introducing barycentric coordinates analogously to the standard
hypercycle, system \eqref{c1:eq1.19} reduces to the equivalent system
\begin{equation}
  \dot{v}_{i}=v_{i}(v_{i-1}v_{i-2}-f(t)),\quad \mathbf{v}\in S_{n},
  \label{c1:eq1.20}
\end{equation}
\[
  f(t)=\sum_{i=1}^{n}v_{i}v_{i-1}v_{i-2},\quad v_{0}=v_{n},\quad
  v_{-1}=v_{n-1},\quad i=\overline{1,n}.
\]

\begin{proposition}
For odd $n\geqslant 5$, the second-order hypercycle system has a unique
equilibrium $u_{i}=\frac{1}{n}$, $i=\overline{1,n}$, which is
asymptotically stable for $n=5$ and unstable for $n>5$.
\end{proposition}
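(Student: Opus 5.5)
The plan is to work with the equivalent system \eqref{c1:eq1.20} in barycentric coordinates. It is orbitally equivalent to \eqref{c1:eq1.19}, so equilibria and their stability type carry over. The point $u_i=\frac1n$ corresponds to the centroid. There are three steps: locate the interior equilibria, linearise at the centroid, and reduce stability to a sign condition on a trigonometric sum.

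\emph{Uniqueness.} At an interior equilibrium all $v_i>0$, so the bracket in \eqref{c1:eq1.20} must vanish for every $i$:
\[
v_{i-1}v_{i-2}=f, \qquad i=\overline{1,n}.
\]
Comparing consecutive indices gives $v_{i-1}v_{i-2}=v_iv_{i-1}$. Dividing by $v_{i-1}>0$ yields $v_{i-2}=v_i$, i.e.\ the sequence $(v_i)$ is $2$-periodic modulo $n$. When $n$ is odd, $2$ generates $\mathbb{Z}_n$, so all $v_i$ coincide. The constraint $\sum v_i=1$ then forces $v_i=\frac1n$. I read "unique" as unique in $\inside S_n$; boundary equilibria exist but are not claimed to be absent.

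\emph{Linearisation.} At $\bar{\mathbf v}=(\frac1n,\dots,\frac1n)$ I will compute the Jacobian entries. Each $v_j$ appears in exactly three monomials of $f$, and each such monomial contributes $\frac1{n^2}$ to the derivative, so $\partial f/\partial v_j=\frac{3}{n^2}$. This gives
\[
\mathbf J=\frac{1}{n^2}\Bigl(\mathbf P+\mathbf P^2-\tfrac{3}{n}\,\mathbf I\mathbf I^{\top}\Bigr),
\]
where $\mathbf P$ is the cyclic shift $(\mathbf P\mathbf v)_k=v_{k-1}$. Here $\mathbf I\mathbf I^{\top}$ denotes the all-ones matrix; in the written proof I will spell out this matrix explicitly, since $\mathbf I$ is the paper's notation for the vector of ones.

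$\mathbf J$ is circulant, as in the proof for \eqref{c1:eq1.10}. It is diagonalised by the Fourier vectors $(\eta^{jk})_k$ with $\eta=e^{2\pi i/n}$. For $j=0$ the eigenvector is $(1,\dots,1)$, which is transversal to $S_n$ and is discarded as before. For $j=\overline{1,n-1}$ the rank-one term annihilates the eigenvector, so
\[
\lambda_j=\frac1{n^2}\bigl(\eta^{-j}+\eta^{-2j}\bigr),\qquad
\operatorname{Re}\lambda_j=\frac1{n^2}\bigl(\cos\theta_j+\cos2\theta_j\bigr)
=\frac{2}{n^2}\cos\frac{3\theta_j}{2}\cos\frac{\theta_j}{2},
\qquad \theta_j=\frac{2\pi j}{n}.
\]

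\emph{Sign analysis (the main step).}

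For $n=5$, the angles are $\theta=72^\circ$ and $144^\circ$, the others being conjugates. Then $\cos72^\circ+\cos144^\circ=\cos144^\circ+\cos288^\circ=\cos72^\circ-\cos36^\circ<0$. All tangential eigenvalues therefore have negative real part, and asymptotic stability follows from linearisation; no LaSalle argument is needed here, unlike the $n=4$ case for \eqref{c1:eq1.10}.

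For odd $n\ge7$, take $j=1$, so $\theta=2\pi/n<\pi/3$. Then both $\frac{3\theta}{2}<\frac{\pi}{2}$ and $\frac{\theta}{2}<\frac{\pi}{2}$, so both cosine factors are positive and $\operatorname{Re}\lambda_1>0$. This gives instability.

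The only delicate point is getting the Jacobian right: the $\partial f/\partial v_j=3/n^2$ count, and the indexing that fixes whether $\mathbf P$ or $\mathbf P^{\top}$ appears. The latter only conjugates the eigenvalues and does not affect their real parts, so the conclusion is robust to that choice.
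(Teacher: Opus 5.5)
Your proof is correct and takes the same route as the paper. Uniqueness of the interior equilibrium comes from $v_{i-1}v_{i-2}=f$ with $n$ odd, and stability comes from the circulant Jacobian at the centroid, with eigenvalues computed from the $n$-th roots of unity. Your $2$-periodicity argument and the factorisation $\cos\theta+\cos 2\theta=2\cos\frac{3\theta}{2}\cos\frac{\theta}{2}$ just make explicit the steps the paper states without detail as ``direct calculations''.
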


\begin{proof}
Interior equilibria are determined by
$\bar{u}_{i-1}\bar{u}_{i-2}=\bar{f}$.
For odd $n$, the unique solution is
$\bar{u}_{1}=\ldots=\bar{u}_{n}=\frac{1}{n}$.
The Jacobian at this point is
\[
  \mathbf{J}(\bar{\mathbf{u}})=-\frac{1}{n^{3}}
  \begin{pmatrix}
    3 & 3 & \cdots & 3-n & 3-n\\
    3-n & 3 & \cdots & 3 & 3-n\\
    \vdots & & \ddots & & \vdots\\
    3 & 3 & \cdots & 3-n & 3\\
  \end{pmatrix},
\]
which is again a circulant.  Its eigenvalues are
\[
  \lambda_{k}=-\frac{1}{n^{3}}
  \bigl(3+3r_{k}+\cdots+(3-n)r_{k}^{n-2}+(3-n)r_{k}^{n-1}\bigr),
\]
where $r_{k}=\exp\!\Bigl(\dfrac{2\pi}{n}ki\Bigr)$, 
$k=\overline{0,n-1}$, and $i$ is the imaginary unit.
Direct calculations show that for $n=5$ all eigenvalues have strictly
negative real parts, while for $n>5$ there are always eigenvalues with
positive real parts.  
\end{proof}

\begin{remark}
	For the standard hypercycle it has been proved that a stable limit 
	cycle exists for $n>4$ \cite{c1:Hofbauer1998}. This result does not 
	directly apply to second-order hypercycles; however, numerical 
	simulations suggest that a stable limit cycle may also exist for 
	$n>5$ in that case.
\end{remark}

The second-order hypercycle possesses an evolutionary variability
property, proven by the row-dominance theorem as for the standard
hypercycle.

We next consider the replicator system that may be described 
figuratively as an ``anthill'' or ``beehive'', in reference to 
the character of species interactions. Species~$0, 1, \ldots, n-1$ 
form a hypercycle, each additionally catalysed by species~$n$, 
which plays the role of the \emph{queen}. In turn, species~$n$ 
is catalysed by all the remaining members of the hypercycle.
The interaction graph is shown in Fig.~\ref{c1:fig1.6}.

\begin{figure}[ht]
  \centering
  \includegraphics[width=0.6\textwidth]{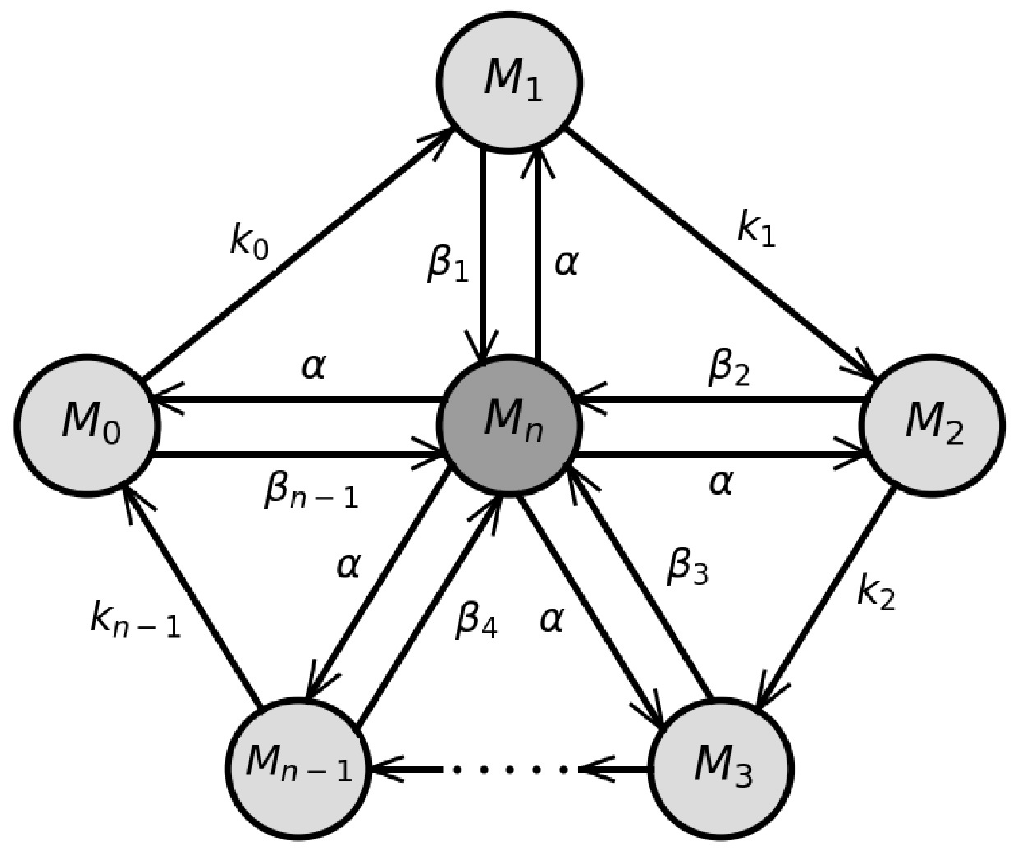}
\caption{Interaction graph of the ``anthill'' replicator system.}
  \label{c1:fig1.6}
\end{figure}

The state equations are
\begin{equation}
  \dot{u}_{i}=u_{i}\bigl(\alpha u_{n}+k_{i}u_{i-1}-f(t)\bigr),
  \quad i=\overline{0,n-1},  \label{c1:eq1.21}
\end{equation}
\begin{equation*}
  \dot{u}_{n}=u_{n}\Bigl(\sum_{i=0}^{n-1}\beta_{i}u_{i}-f(t)\Bigr),
  \quad \mathbf{u}\in S_{n+1},
\end{equation*}
\[
  f(t)=\alpha u_{n}\sum_{i=0}^{n-1}u_{i}
       +\sum_{i=0}^{n-1}k_{i}u_{i}u_{i-1}
       +u_{n}\sum_{i=0}^{n-1}\beta_{i}u_{i},\]
       \[
  \alpha,\beta_{i},k_{i}>0,\quad k_{0}=k_{n}, u_{-1} = u_{n-1}.
\]

The system has a unique interior equilibrium
$\bar{\mathbf{u}}\in\inside S_{n}$, a necessary condition for
permanence \cite{c1:Hofbauer2003} (where $n>2$).  

From the equilibrium equations \eqref{c1:eq1.21} it follows that
\[
k_{i}\bar{u}_{i-1} = \bar{f} - \alpha\bar{u}_{n},
\quad i=\overline{0,n-1}.
\]
Therefore
\[
k_{1}\bar{u}_{0}=k_{2}\bar{u}_{1}=\cdots=k_{0}\bar{u}_{n-1},
\]

\[
  \bar{u}_{i}=\frac{k_{1}}{k_{i+1}}\bar{u}_{0},\quad
  i=\overline{1,n-1}, k_n = k_0.
\]
Hence,
$$
\alpha \bar{u}_{n} + k_{1}\bar{u}_{0} = \beta_{0}\bar{u}_{0} + \bar{u}_{0}\sum\limits_{j = 1}^{n - 1}\beta_{j}\dfrac{k_{1}}{k_{j + 1}}, \quad k_{n} = k_{0}.
$$
Since $\bar{u}_{n} = 1 - \sum\limits_{j = 0}^{n - 1}\bar{u}_{j}$, we have
\[
  \bar{u}_{0}=\alpha\Biggl[
    \Bigl((\alpha+1)\sum_{j=1}^{n-1}\frac{1}{k_{j+1}}-1\Bigr)k_{1}
    +\alpha+\beta_{0}\Biggr]^{-1}>0, \quad i=\overline{1,n-1}, k_n = k_0.
\]

\begin{proposition}
Let $k_{m}=\min\{k_{0},\ldots,k_{n-1}\}$,
$k_{M}=\max\{k_{0},\ldots,k_{n-1}\}$,
$\beta_{m}=\min\{\beta_{0},\ldots,\beta_{n-1}\}$,
$\beta_{M}=\max\{\beta_{0},\ldots,\beta_{n-1}\}$.
If the conditions
\begin{equation}
  k_{M}<\beta_{m},\quad
  \alpha+\beta_{m}>\frac{k_{m}}{n}>\beta_{M},\quad
  n=3,4,\ldots,N
  \label{c1:eq1.22}
\end{equation}
are satisfied, then system \eqref{c1:eq1.21} is permanent.
\end{proposition}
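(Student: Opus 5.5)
We will apply Theorem~\ref{c1:t1.1} with the barycentre $\mathbf{p}$ of $S_{n+1}$, so that $p_i=(n+1)^{-1}$ for all $i$. It then suffices to show that
\[
\sum_{i}(\mathbf{A}\bar{\mathbf{u}})_i>(n+1)\bar f
\]
at every fixed point $\bar{\mathbf{u}}\in\bound S_{n+1}$. The matrix of \eqref{c1:eq1.21} has rows $i=0,\dots,n-1$ equal to $\alpha$ in column $n$ and $k_i$ in column $i-1$. Row $n$ is $(\beta_0,\dots,\beta_{n-1},0)$. Hence for any $\mathbf{u}\in S_{n+1}$,
\[
\sum_{i=0}^{n}(\mathbf{A}\mathbf{u})_i=n\alpha u_n+\sum_{i=0}^{n-1}(k_{i+1}+\beta_i)u_i .
\]
Our plan is to classify the boundary equilibria and, for each class, bound $\bar f$ from above. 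We will also bound this sum from below using $k_m,\beta_m$.

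\emph{Case 1: $\bar u_n=0$.} Then $\bar{\mathbf{u}}$ is an equilibrium of the pure hypercycle on the face $\{u_n=0\}$, with some $u_i=0$. The argument used for the hypercycle shows that the support cannot contain two consecutive indices $i-1,i$ carried around the whole cycle. Consequently $\bar f=\sum k_i\bar u_i\bar u_{i-1}$ must vanish, or, more carefully, at most reduces to configurations with $\bar f=0$. The only exception is a vertex, where also $\bar f=0$. On the other hand, the sum above equals $\sum(k_{i+1}+\beta_i)\bar u_i\geqslant \beta_m>0$, so the inequality holds. We will verify the claim $\bar f=0$ by the chain argument: if $\bar u_j>0$ and $\bar u_{j-1}=0$, then $\bar f=k_j\bar u_{j-1}$ forces $\bar f=0$. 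If no such $j$ exists, every index is in the support, which is excluded on the boundary.

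\emph{Case 2: $\bar u_n>0$.} Then $\bar f=\sum_i\beta_i\bar u_i\leqslant \beta_M(1-\bar u_n)$. For supported $i<n$ we have $\bar f=\alpha\bar u_n+k_i\bar u_{i-1}$, which gives $\bar f\geqslant\alpha\bar u_n$. Support patterns range from the vertex $e_n$, with $\bar f=0$ and sum $n\alpha>0$, to faces where the queen survives together with part of the cycle. Here we substitute the lower bound
\[
n\alpha\bar u_n+(k_m+\beta_m)(1-\bar u_n)
\]
for the sum and the upper bounds on $\bar f$ coming from both the $\beta$-equation and the hypercycle equations. The resulting inequality
\[
n\alpha\bar u_n+(k_m+\beta_m)(1-\bar u_n)>(n+1)\bar f
\]
must then follow from $k_M<\beta_m$ and $\alpha+\beta_m>k_m/n>\beta_M$. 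Roughly, $(n+1)\bar f$ is controlled by $(n+1)\beta_M(1-\bar u_n)$ and also by $(n+1)\alpha \bar u_n$ plus a term involving $k_M$. A convex combination of these two bounds, with weights chosen according to $\bar u_n$, is what we will compare against the left side.

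The main obstacle is Case~2. A crude bound $\bar f\leqslant \beta_M(1-\bar u_n)$ gives the needed inequality only when $\bar u_n$ is small, because $(n+1)\beta_M<k_m+\beta_m$ requires something like $k_m/n>\beta_M$. When $\bar u_n$ is large we must instead use $\bar f=\alpha\bar u_n+k_i\bar u_{i-1}$ with $k_i\leqslant k_M<\beta_m$, together with $\alpha+\beta_m>k_m/n$. We would first try splitting at the threshold where these two bounds cross. If that fails, we would fall back on the stronger freedom in Theorem~\ref{c1:t1.1} and choose a non-uniform $\mathbf{p}$ that puts extra weight on $u_n$.
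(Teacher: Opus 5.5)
Using Theorem~\ref{c1:t1.1} with the barycentre is a legitimate route, and it differs from the paper's. The paper uses $\Phi=\frac1n\sum_{i<n}\ln u_i-\ln u_n$ together with a comparison estimate for $S=1-u_n$. In your version the test reduces to $\sum_{j\notin\operatorname{supp}\bar{\mathbf u}}\bigl((\mathbf A\bar{\mathbf u})_j-\bar f\bigr)>0$, because supported species have fitness exactly $\bar f$. As written, however, the proposal has two genuine gaps.

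\emph{Case~1.} Your claim that full support on the cycle ``is excluded on the boundary'' is false. A point with $\bar u_n=0$ and $\bar u_0,\dots,\bar u_{n-1}>0$ lies on $\partial S_{n+1}$. It is the hypercycle's interior rest point on the face $\{u_n=0\}$, namely $\bar u_j=\bar f/k_{j+1}$ with $\bar f=\bigl(\sum_i k_i^{-1}\bigr)^{-1}>0$, so $\bar f=0$ fails there. This is the decisive rest point. Your inequality becomes $n\bar f+\bar f\sum_j\beta_j/k_{j+1}>(n+1)\bar f$, i.e.\ $\sum_{j=0}^{n-1}\beta_j/k_{j+1}>1$: the queen must be able to invade the hypercycle. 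The hypothesis $k_M<\beta_m$ guarantees this.

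\emph{Case~2.} This case is not proved, and the bounding strategy you sketch aims in the wrong direction. The missing observation is structural. Let $\bar u_n>0$ and suppose the cycle support is nonempty; it is proper, so some supported cycle index $i$ has an absent predecessor. Then $\bar f=\alpha\bar u_n$. Hence $i+1$ cannot be supported, since that would give $\bar f=\alpha\bar u_n+k_{i+1}\bar u_i>\alpha\bar u_n$. It follows that supported cycle indices are pairwise non-adjacent and $\bar f=\alpha\bar u_n$ at every such rest point. Each absent cycle species $j$ then has $(\mathbf A\bar{\mathbf u})_j-\bar f=k_j\bar u_{j-1}\geqslant 0$, strictly positive for $j=i+1$. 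At the vertex $u_n=1$ all these excesses equal $\alpha$. So Case~2 holds with no hypothesis at all, and you need neither a threshold split nor a non-uniform $\mathbf p$.

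With these repairs, your route proves permanence whenever $\sum_j\beta_j/k_{j+1}>1$, in particular under $k_M<\beta_m$ alone. This also explains why you could not see how $\alpha+\beta_m>k_m/n>\beta_M$ should enter. The hypotheses of the proposition are mutually inconsistent: $\beta_M<k_m/n<k_m\leqslant k_M<\beta_m\leqslant\beta_M$, using $n\geqslant 3$ and $k_i>0$. The statement therefore holds only vacuously. In fact $k_m/n>\beta_M$ by itself gives, via the paper's computation, $\dot\Phi\geqslant\alpha u_n+(k_m/n-\beta_M)(1-u_n)\geqslant\min\{\alpha,k_m/n-\beta_M\}>0$, hence $u_n(t)\to 0$, the opposite of permanence. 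In the paper's proof this shows up as \eqref{c1:eq1.24} forcing $u_n\to 0$, while the comparison step bounds $u_n$ away from zero.
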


\begin{proof}
Consider the function
\[
  \Phi(\mathbf{u})
  =\ln\!\prod_{i=0}^{n-1}\Bigl(u_{i}(t)\Bigr)^{\frac{1}{n}}-\ln u_{n}(t).
\]
Then
\[
  \dot{\Phi}(\mathbf{u})
  =\alpha u_{n}+\frac{1}{n}\sum_{i=0}^{n-1}k_{i}u_{i-1}
   -\sum_{i=0}^{n-1}\beta_{i}u_{i}.
\]
Using the bounds
\begin{equation}\label{c1:eq1.23}
	\sum\limits_{i = 0}^{n - 1}k_{i}u_{i - 1} \geqslant k_{m}\sum\limits_{i = 0}^{n - 1}u_{i} = k_{m}(1 - u_{n}),
\end{equation}
\begin{equation*}
	\sum\limits_{i = 0}^{n - 1}\beta_{i}u_{i} \leqslant \beta_{M}\sum\limits_{i = 0}^{n - 1}u_{i} = \beta_{M}(1 - u_{n}). \\
\end{equation*}
If \eqref{c1:eq1.22} holds then
$\dot{\Phi}(\mathbf{u})\geqslant\delta_{0}>0$
where $\delta_{0}=k_{m}/n-\beta_{M}>0$, so
\begin{equation}
  \prod_{i=0}^{n-1}\Bigl(u_{i}(t)\Bigr)^{\frac{1}{n}}
  \geqslant Ce^{\delta_{0}t}u_{n}(t).
  \label{c1:eq1.24}
\end{equation}
Consider the function $S(t)=\sum_{i=0}^{n-1}u_{i}(t)=1-u_{n}(t)$.
Its time derivative satisfies
\begin{multline*}
	\dot{S}(t) = \alpha S(1-S) - \alpha(1-S)S^{2}
	+ (1-S)\sum_{i=0}^{n-1}k_{i}u_{i}u_{i-1}
	- S(1-S)\sum_{i=0}^{n-1}\beta_{i}u_{i}.
\end{multline*}
Using the bounds \eqref{c1:eq1.23} together with
$\sum_{i=0}^{n-1}k_{i}u_{i}u_{i-1}\leqslant k_{M}S$, we obtain
\[
\dot{S}(t)\leqslant(\alpha+\beta_{m})S(S-1)(S-r^{2}),
\]
where $r^{2}=(k_{M}+\alpha)/(\beta_{m}+\alpha)<1$ by condition
\eqref{c1:eq1.22}. By the comparison theorem \cite{c1:Tihonov1985},
\[
S(t)\leqslant\max\{r^{2},\phi^{2}\},
\]
where $\phi^{2}=S(0)=\sum_{i=0}^{n-1}u_{i}(0)<1$, so
$u_{n}(0)=1-\phi^{2}>\varepsilon_{0}>0$.
Therefore
\[
u_{n}(t)=1-S(t)\geqslant\min\{1-r^{2},\,1-\phi^{2}\}>0.
\]

\end{proof}

Replicator systems are studied not only by theoreticians --- 
mathematicians and biologists --- but have also been realised 
in laboratory experiments with genuine biochemical reactions. 
In \cite{c1:Lincoln2009}, a two-element hypercyclic reaction was 
constructed experimentally. In \cite{c1:Vaidya2012}, a replicator 
system of six RNA macromolecule species was demonstrated; its 
interaction graph is shown in Fig.~\ref{c1:fig1.7}.

\begin{figure}[ht]
  \centering
  \includegraphics[width=0.7\textwidth]{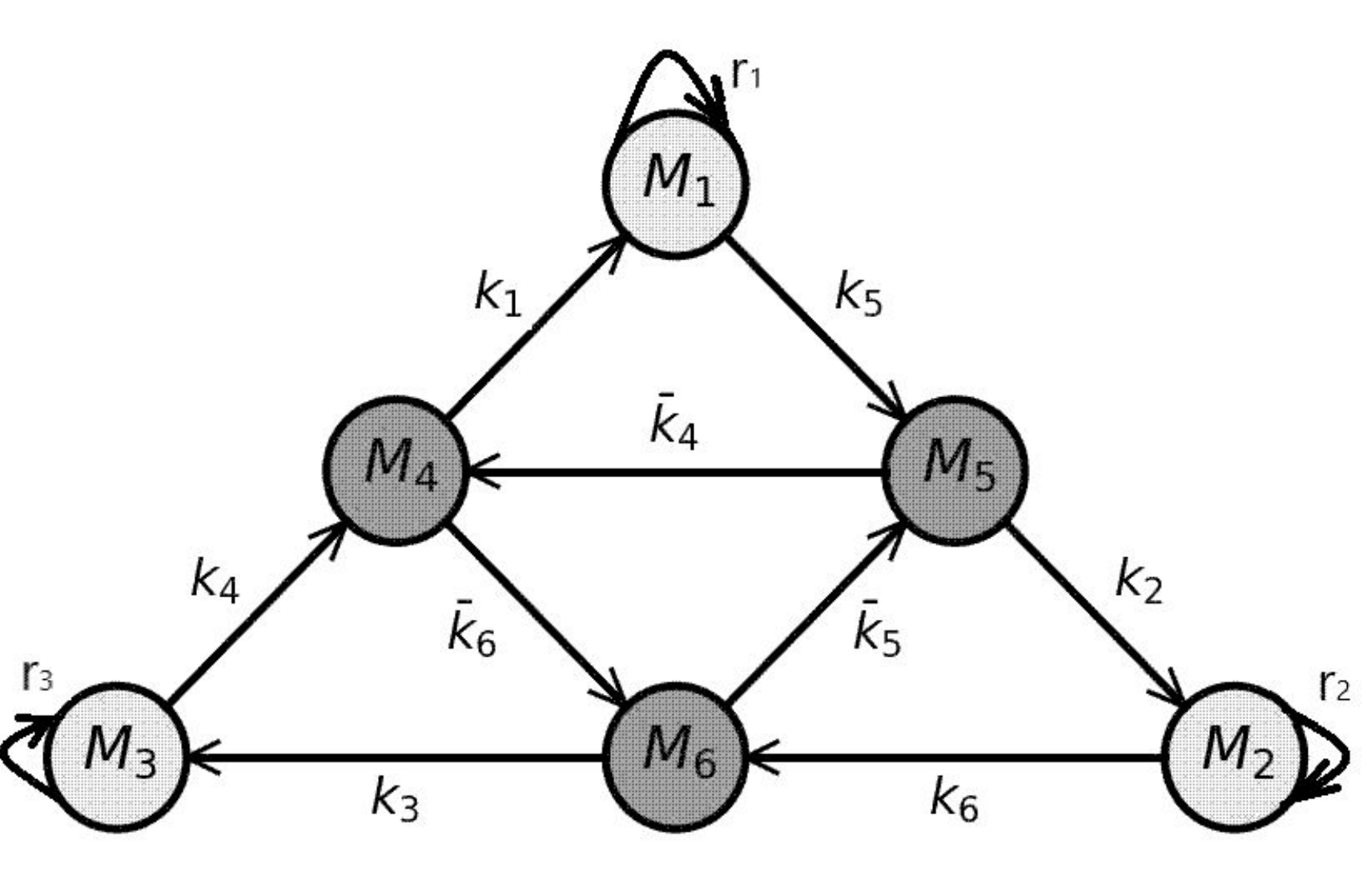}
  \caption{Graph of interactions among six RNA molecules.}
  \label{c1:fig1.7}
\end{figure}

Here elements $4$–$5$–$6$ form a hypercycle, while elements $1$–$2$–$3$,
in addition to participating in the hypercycle, also possess
autocatalytic replication properties.  The state equations are
\begin{equation}
  \begin{aligned}
    &\dot{u}_{1}=u_{1}(r_1 u_1+k_1 u_4-f(t)),\\
    &\dot{u}_{2}=u_{2}(r_2 u_2+k_2 u_5-f(t)),\\
    &\dot{u}_{3}=u_{3}(r_3 u_3+k_3 u_6-f(t)),\\
    &\dot{u}_{4}=u_{4}(k_4 u_3+\bar{k}_4 u_5-f(t)),\\
    &\dot{u}_{5}=u_{5}(k_5 u_1+\bar{k}_5 u_6-f(t)),\\
    &\dot{u}_{6}=u_{6}(k_6 u_2+\bar{k}_6 u_4-f(t)),\\
    &\mathbf{u}\in S_{6},\quad r_{i},k_{i},\bar{k}_{i}>0.
  \end{aligned}
  \label{c1:eq1.25}
\end{equation}

The mean fitness of the system is
\[
  f(t)=\sum_{i=1}^{3}r_i u_i^2
       +k_1 u_1 u_4+k_2 u_2 u_5+k_3 u_3 u_6
       +\bar{k}_4 u_4 u_5+\bar{k}_5 u_5 u_6+\bar{k}_6 u_6 u_4.
\]

Experiments reported in \cite{c1:Vaidya2012} confirmed that the dynamics
of this system are analogous to those of a permanent replicator system.
The phase portrait of system \eqref{c1:eq1.25} is shown in
Fig.~\ref{c1:fig1.8} for parameter values
$r_1=r_2=r_3=-0.3$, $k_1=k_2=k_3=0.1$,
$k_4=k_5=k_6=0.4$, $\bar{k}_4=\bar{k}_5=\bar{k}_6=0.05$.

\begin{figure}[ht]
  \centering
  \includegraphics[width=1.0\textwidth]{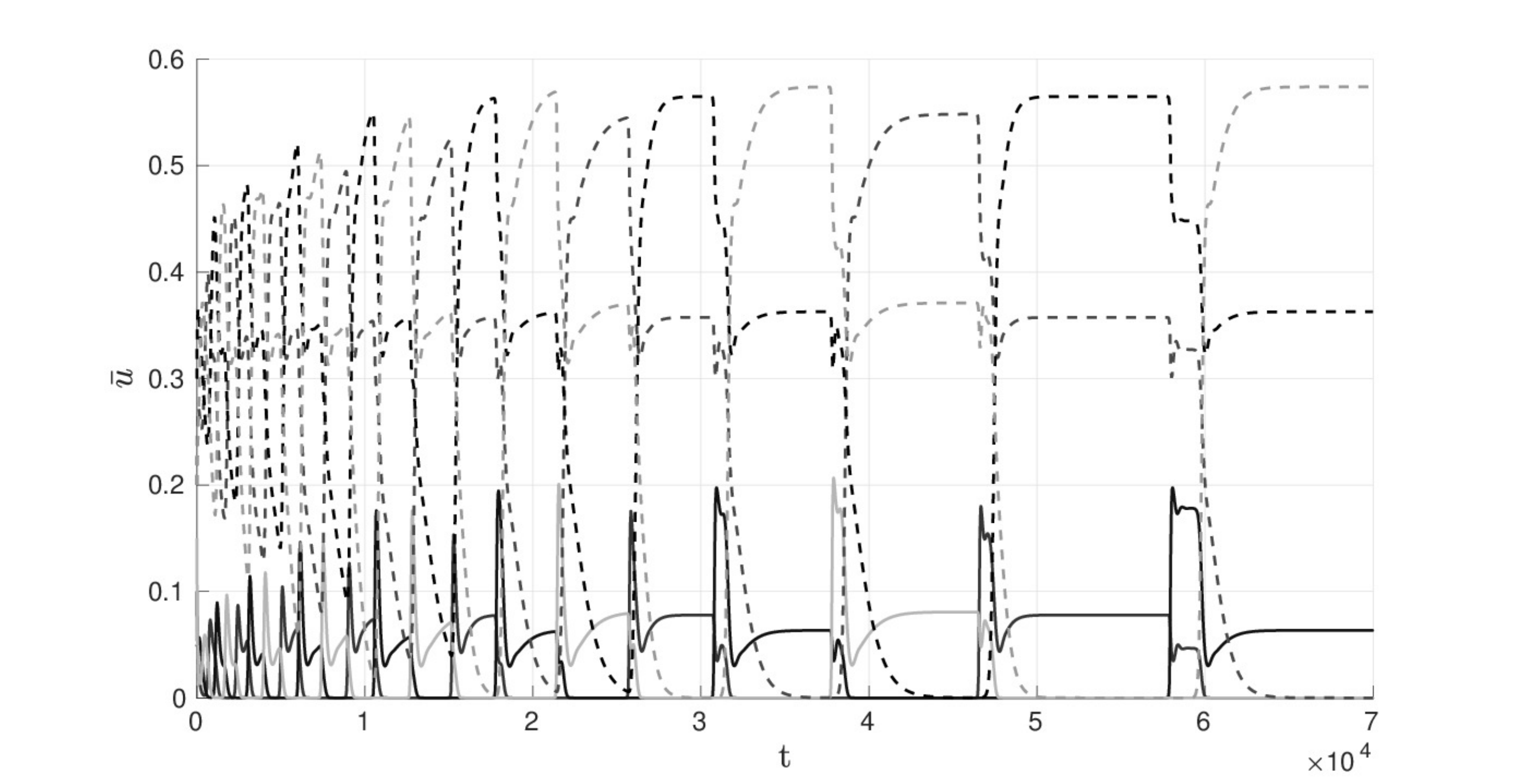}
\caption{Phase portrait of system~\eqref{c1:eq1.25} (six interacting 
	RNA molecules: species~$1$--$3$ undergo both autocatalytic 
	replication and hypercyclic catalysis by species~$4$--$6$, which 
	form a pure hypercycle).}
  \label{c1:fig1.8}
\end{figure}

\section{Eigen and Crow--Kimura Replicator Models}
\label{c1:section:1.5}

The Darwinian property of natural selection is described
mathematically through \emph{fitness coefficients}
(recall that the entire collection of such coefficients is called
the \emph{fitness landscape}).  Heredity is represented explicitly
by the general replicator equations.  It was shown that the hypercycle
equation inherently satisfies the variability property.  On the other
hand, it is important to incorporate variability explicitly in
mathematical models.  This is usually done by means of
\emph{mutation probabilities} or \emph{mutation intensities}.
The general framework yields the so-called \emph{quasispecies model}
(the origin of this term will be explained below), which appears in
two closely related but formally distinct forms.

When both natural selection and mutation occur simultaneously, the
sequence of events must be described carefully.  In the simplest case,
time is assumed to be discrete and generations non-overlapping —
that is, all individuals in the population reproduce simultaneously
and die immediately afterwards.

Suppose our population has $l$ distinct \emph{types} of individuals.
Denote the count of each type at time $t$ by $n_i(t)$,
$i=1,\ldots,l$, and the fitness of each type by $w_i\geqslant 0$.
These fitnesses are called \emph{Wrightian fitnesses}.

Since time is discrete and generations non-overlapping, population
growth follows the linear equations
\begin{equation}
  n_i(t+1)=w_i n_i(t),\quad i=1,\ldots,l.
  \label{c1:eq1.26}
\end{equation}
These are precisely the independent-replication equations in discrete
time.  Denoting
$\bs{n}(t)=(n_1(t),\ldots,n_l(t))^\top$,
$\bs{w}=(w_1,\ldots,w_l)^\top$,
$\bs{W}=\mathrm{diag}(w_1,\ldots,w_l)$, and passing to frequencies

$$\bs{p}(t)=\frac{\bs{n}(t)}{\sum_{i=1}^l n_i(t)},$$ one obtains
\begin{equation}
  \bs{p}(t+1)=\frac{\bs{W}\bs{p}(t)}{\bar{w}(t)},
  \label{c1:eq1.27}
\end{equation}
where $\bar{w}(t)=(\bs{w},\bs{p}(t))$ is the \emph{mean fitness}:
$$
\bar{w}(t)=\sum\limits_{i=1}^l w_ip_i(t),
$$ 

The dynamics of \eqref{c1:eq1.27} is simple: the species with the highest
fitness tends to frequency 1 while all others die out, following
\[
  \frac{p_i(t+1)}{p_j(t+1)}
  =\frac{w_i}{w_j}\frac{p_i(t)}{p_j(t)}.
\]
The mean fitness of system~\eqref{c1:eq1.27} is also non-decreasing:
\[
\bar{w}(t+1)-\bar{w}(t)
=\frac{\sum_{i=1}^{l}(w_i-\bar{w}(t))^2p_i(t)}{\bar{w}(t)}
=\frac{\mathrm{Var}_t(\bs{w})}{\bar{w}(t)}\geqslant 0,
\]
where $\mathrm{Var}_t(\bs{w})$ is the variance of a random variable
taking values $w_i$ with probabilities $p_i(t)$.

Now suppose replication occurs with errors.  Let $q_{ij}\in[0,1]$
denote the probability that an individual of type $j$ produces an
offspring of type $i$.  Then, $q_{ii}=1-\sum_{\substack{j=1\\j\neq i}}^{l}q_{ij}$ is the probability of error-free replication. Accounting for possible
replication errors, the equations for absolute population counts
become 
\[
  n_i(t+1)=\sum_{j=1}^{l}w_j q_{ij}n_j(t),
\]
or in matrix form $$\bs{n}(t+1)=\bs{QW}\bs{n}(t),$$ 
where
$\bs{Q}=(q_{ij})$ is a stochastic mutation matrix.  Moving to frequencies:
\begin{equation}
  \bs{p}(t+1)=\frac{\bs{QW}\bs{p}(t)}{\bar{w}(t)},
  \label{c1:eq1.28}
\end{equation}
where $\bar{w}(t)$ is the mean fitness.
Equation \eqref{c1:eq1.28} is the \emph{quasispecies model in discrete time}.

In most real populations generations overlap, so a continuous-time 
analogue of~\eqref{c1:eq1.28} is needed. Deriving it correctly is 
less straightforward than it might appear: a direct attempt to 
describe replication with mutations in continuous time quickly runs 
into difficulties, since obtaining ordinary differential equations 
requires assuming that at most one elementary event occurs in any 
sufficiently short time interval.

To circumvent this difficulty, we separate
replication (treated as error-free) from mutation (occurring at random
moments during an individual's lifetime).  For absolute population counts:
\[
  \bs{\dot{n}}(t)=(\bs{M}+\bs{\mathcal{M}})\bs{n}(t),
\]
where $\bs{M}=\mathrm{diag}(m_1,\ldots,m_l)$ is the
\emph{Malthusian fitness landscape} (each $m_i$ is a replication rate, not
an absolute quantity), and $\bs{\mathcal{M}}=(\mu_{ij})$ is the
matrix of mutation rates with $\mu_{ii}=-\sum_{j\neq i}\mu_{ij}$.

Indeed, assuming that the probability of producing offspring in time 
$\Delta t$ equals $m_j\Delta t$, the probability of mutation to 
type $i$ equals $\mu_{ij}\Delta t$, and that at most one elementary 
event occurs in the interval $\Delta t$, we obtain
\begin{multline*}
	n_j(t+\Delta t) = m_j\Delta t\, n_j(t)
	+ \sum_{\substack{i=1\\i\neq j}}^{l}\mu_{ji}\Delta t\, n_i(t)
	+ \left(1-\sum_{\substack{i=1\\i\neq j}}^{l}
	\mu_{ij}\Delta t\right)n_j(t)
	+ o(\Delta t^{2}).
\end{multline*}
Dividing by $\Delta t$ and passing to the limit $\Delta t\to 0$
yields the required equation.

\medskip
Analogously to the frequency equation in discrete time, in
continuous time we obtain
\begin{equation}
  \dot{\bs{p}}(t)
  =\bigl(\bs{M}-\bar{m}(t)\bs{E}\bigr)\bs{p}(t)+\bs{\mathcal{M}}\bs{p}(t),
  \label{c1:eq1.29}
\end{equation}
where $\bar{m}(t)=(\bs{m},\bs{p}(t))$ is the mean Malthusian fitness, and  $\bs{E}$ is the 
identity matrix. 
 Model \eqref{c1:eq1.29} is
called the \emph{Crow--Kimura model}, since it was thoroughly analysed
in the textbook on theoretical population genetics by Crow and Kimura
\cite{c1:Crow1970}.
Models \eqref{c1:eq1.28} and \eqref{c1:eq1.29} are related. In 
particular, the mutation probabilities and intensities are connected 
by
\[
q_{ij}=\delta_{ij}+\mu_{ij}\Delta t, \quad \Delta t\to 0,
\]
where $\delta_{ij}$ is the Kronecker delta. Moreover, 
system~\eqref{c1:eq1.29} can be obtained as the limit of 
system~\eqref{c1:eq1.28} under the assumption of short non-overlapping 
generation times and weak mutations \cite{c1:Hofbauer1985}, with 
Wrightian and Malthusian fitnesses related by
\[
w_i=e^{m_i\Delta t}\approx 1+m_i\Delta t, \quad \Delta t\to 0.
\]

Both \eqref{c1:eq1.28} and \eqref{c1:eq1.29} are referred to in the 
modern literature as \textit{quasispecies models}. Note, however, 
that Eigen's original quasispecies model \cite{c1:Eigen1971} was 
written in a different form. Eigen considered a system of ordinary 
differential equations of the form
\begin{equation}
  \dot{\bs{p}}(t)=\bs{QW}\bs{p}(t)-\bar{w}(t)\bs{p}(t),
  \label{c1:eq1.30}
\end{equation}
a form that is difficult to derive rigorously from first principles;
to our knowledge, no such derivation from elementary processes
exists in the literature.
Its equilibrium $\hat{\bs{p}}$ satisfies
\[
  \bs{QW}\hat{\bs{p}}=\bar{w}\hat{\bs{p}}.
\]
The quantity $\bar{w}(t)$ is determined from the condition
$\bigl(\bs{\dot{p}}(t),\bs{I}\bigr)=0$, where
$\bs{I}=(1,1,\ldots,1)\in\R^{l}$. Using $q_{ii}=1-\sum_{\substack{j=1\\j\neq i}}^{l}q_{ij}$
we obtain
\[
\bar{w}(t)=\sum_{i=1}^{l}w_i p_i(t)=\bigl(\bs{w},\bs{p}(t)\bigr).
\]
For the Crow--Kimura case, the equilibrium satisfies
\begin{equation}
	(\bs{M}+\bs{\mathcal{M}})\hat{\bs{p}}=\bar{m}\hat{\bs{p}}.
	\label{c1:eq1.31}
\end{equation}

Before stating the main result of this chapter, we note a simple but
useful property of quasispecies systems. System~\eqref{c1:eq1.28} is
unchanged if all Wrightian fitnesses are multiplied by the same
positive constant, and system~\eqref{c1:eq1.29} is unchanged if the
same constant is added to all Malthusian fitnesses. This property
usually allows one to normalise the fitnesses in the most convenient
form for analysis. For instance, if in~\eqref{c1:eq1.29} the fitness
landscape vector $\bs{m}$ has the form
$(m_1,m_0,m_0,\ldots,m_0)^\top$, it is convenient to replace it by
$(m_1-m_0,0,0,\ldots,0)^\top$. Similarly, in~\eqref{c1:eq1.28} the
Wrightian fitnesses are usually scaled so that either the largest or
the smallest equals one.

An elementary yet fundamental fact is that the equilibria of these
systems almost always exist (that is, belong to the simplex $S_l$)
and are globally stable.

Recall that a matrix $\bs{A}$ is called \textit{positive} if all its
entries are positive, \textit{non-negative} if all entries are
non-negative, \textit{irreducible} if the corresponding directed
graph (with a directed edge from vertex $i$ to vertex $j$ whenever
$a_{ij}>0$) is strongly connected (i.e.\ for any pair of vertices
there exists a path connecting them), and \textit{primitive} if
there exists an integer $k$ such that $\bs{A}^k$ is positive. Every
positive matrix is primitive, every primitive matrix is irreducible,
and every irreducible matrix is non-negative; the converses do not
hold. The key property of primitive matrices is given by the
well-known \textit{Perron--Frobenius theorem}, which asserts, in
particular, that every primitive matrix has a dominant eigenvalue
$\lambda>0$ satisfying $\lambda>|\lambda_j|$ for all other
eigenvalues $\lambda_j$. Moreover, the algebraic and geometric
multiplicity of the dominant eigenvalue is one, and the
corresponding eigenvector can be chosen with all positive
components. Furthermore, for a primitive matrix $\bs{A}$,
\[
\lim_{k\to\infty}\frac{1}{\lambda^k}\bs{A}^k=\bs{v}\bs{w}^\top,
\]
where $\bs{v}$ and $\bs{w}$ are the right and left eigenvectors of
$\bs{A}$ corresponding to the dominant eigenvalue.

\begin{theorem}\label{c1:t1.2}
Suppose the matrices $\bs{QW}$ and $\bs{M}+\bs{\mathcal{M}}$ are
primitive (the latter possibly after adding the same positive constant
to all diagonal elements).  Then quasispecies systems \eqref{c1:eq1.28},
\eqref{c1:eq1.29}, \eqref{c1:eq1.30} always have a unique strictly positive
equilibrium $\hat{\bs{p}}$, which is globally stable in the simplex
$S_l$.  This equilibrium is the normalised positive eigenvector of
$\bs{QW}$ and $\bs{M}+\bs{\mathcal{M}}$ corresponding to the dominant
eigenvalue; the mean fitness at equilibrium equals this dominant
eigenvalue.
\end{theorem}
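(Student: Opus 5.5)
The key observation is that each of the three quasispecies systems is the projection onto the simplex of a \emph{linear} system, so its trajectories can be written explicitly. The asymptotics then follow from the Perron--Frobenius theorem quoted just before Theorem~\ref{c1:t1.2}.

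\textbf{Discrete model \eqref{c1:eq1.28}.} Put $\bs{A}=\bs{QW}$. By induction,
\[
\bs{p}(t)=\frac{\bs{A}^t\bs{p}(0)}{(\bs{I},\bs{A}^t\bs{p}(0))}.
\]
This holds because $\bs{Q}$ is column-stochastic, so $(\bs{I},\bs{QW}\bs{p})=(\bs{w},\bs{p})=\bar{w}$, and the normalisation at each step is exactly division by the sum of entries. Let $\lambda$ be the dominant eigenvalue of $\bs{A}$, with right eigenvector $\bs{v}>0$ and left eigenvector $\bs{u}>0$. The limit $\lambda^{-t}\bs{A}^t\to\bs{v}\bs{u}^\top$ gives
\[
\lambda^{-t}\bs{A}^t\bs{p}(0)\to(\bs{u},\bs{p}(0))\,\bs{v}.
\]
Since $\bs{u}>0$ and $\bs{p}(0)\in S_l$ is nonzero and nonnegative, $(\bs{u},\bs{p}(0))>0$. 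Dividing numerator and denominator of $\bs{p}(t)$ by $\lambda^t$ gives $\bs{p}(t)\to\hat{\bs{p}}=\bs{v}/(\bs{I},\bs{v})$ for every initial point of $S_l$. This is global attraction.

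\textbf{Equilibria, uniqueness and mean fitness.} An equilibrium is a nonnegative eigenvector of $\bs{A}$ with eigenvalue $\bar{w}$. Perron--Frobenius says the only nonnegative eigenvector of an irreducible matrix is the Perron vector; equivalently, pair with $\bs{u}$ and use positivity. Hence the equilibrium is unique, strictly positive, and has $\bar{w}=\lambda$. Lyapunov stability comes from the same explicit formula: the map $\bs{p}(0)\mapsto\bs{p}(t)$ is continuous, and the convergence is uniform on the compact set $S_l$ because $(\bs{u},\bs{p}(0))\geqslant\min_i u_i>0$ there. Together with attraction this gives global asymptotic stability.

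\textbf{Continuous models.} For \eqref{c1:eq1.29}, set $\bs{B}=\bs{M}+\bs{\mathcal{M}}$ and let $\bs{n}(t)=e^{\bs{B}t}\bs{p}(0)$. A direct check shows that $\bs{p}(t)=\bs{n}(t)/(\bs{I},\bs{n}(t))$ solves \eqref{c1:eq1.29}. The check uses $(\bs{I},\bs{\mathcal{M}}\bs{p})=0$, which holds by the row-sum/diagonal convention $\mu_{ii}=-\sum_{j\ne i}\mu_{ij}$, and gives $(\bs{I},\bs{B}\bs{p})=\bar{m}$. Uniqueness of solutions then identifies this with the trajectory. Eigen's model \eqref{c1:eq1.30} is handled identically with $\bs{B}=\bs{QW}$.

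Now choose $c>0$ so that $\bs{B}+c\bs{E}$ is primitive. Then $e^{\bs{B}t}=e^{-ct}e^{(\bs{B}+c\bs{E})t}$. A power series argument shows that $e^{(\bs{B}+c\bs{E})t}$ is a positive matrix for $t>0$, and its spectrum is $e^{(\mu+c)t}$ over the eigenvalues $\mu$ of $\bs{B}$. The Perron eigenvalue $\lambda$ of $\bs{B}$, which is real and strictly dominant in real part, therefore gives $e^{-\lambda t}e^{\bs{B}t}\to\bs{v}\bs{u}^\top$. The conclusion follows exactly as in the discrete case, and the equilibrium mean fitness equals $\lambda$.

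\textbf{Main obstacle.} The delicate step is translating the Perron--Frobenius statement for primitive matrices into strict real-part dominance for $\bs{B}$, and from there into the exponential limit $e^{-\lambda t}e^{\bs{B}t}\to\bs{v}\bs{u}^\top$. I would argue this through the Jordan decomposition. The spectral projector onto the simple eigenvalue $\lambda$ is $\bs{v}\bs{u}^\top/(\bs{u},\bs{v})$. All other Jordan blocks contribute terms of order $t^k e^{(\operatorname{Re}\mu-\lambda)t}\to0$.
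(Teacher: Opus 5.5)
Your argument is correct and follows the paper's route. In both, each quasispecies flow is the normalisation of the linear absolute-count dynamics, and the Perron--Frobenius limit $\lambda^{-t}\bs{A}^t\to\bs{v}\bs{u}^\top$ gives the result; you also supply uniqueness, Lyapunov stability, and the continuous-time case via the shift $\bs{B}+c\bs{E}$ and the matrix exponential, which the paper only gestures at.

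One small correction: $(\bs{I},\bs{\mathcal{M}}\bs{p})=0$ requires zero \emph{column} sums, $\sum_i\mu_{ij}=0$, not the row-sum form $\mu_{ii}=-\sum_{j\neq i}\mu_{ij}$ you cite. Zero column sums are what the paper's derivation of \eqref{c1:eq1.29} and the matrix \eqref{c1:eq1.38} actually have, exactly parallel to your correct use of a column-stochastic $\bs{Q}$.
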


\begin{proof}
For \eqref{c1:eq1.28}, the absolute-count equation is linear:
$$\bs{n}(t)=(\bs{QW})^t\bs{n}(0).$$
  Primitivity of $\bs{QW}$
with dominant eigenvalue $\lambda$ and corresponding right/left
positive eigenvectors $\hat{\bs{p}}$, $\hat{\bs{q}}$ gives
\[
  \bs{p}(t)=\frac{(\bs{QW})^t\bs{n}(0)}{|(\bs{QW})^t\bs{n}(0)|_1}
  \to\frac{\lambda^t\hat{\bs{p}}\hat{\bs{q}}^{\top}\bs{n}(0)}
         {|\lambda^t\hat{\bs{p}}\hat{\bs{q}}^{\top}\bs{n}(0)|_1}
  =\hat{\bs{p}},
\]
where $\hat{\bs{q}}^{\top}\bs{n}(0)>0$.
Hence $\hat{\bs{p}}$ satisfies $\bs{QW}\hat{\bs{p}}=\bar{w}\hat{\bs{p}}$,
so $\bar{w}=\lambda$ at equilibrium.
For \eqref{c1:eq1.30}, primitivity of $\bs{M}+\bs{\mathcal{M}}$
is required. 
\end{proof}

\medskip
We now explain the origin of the term \emph{quasispecies}.  Let $\bs{A}=\bs{QW}$ and
suppose $\bs{A}$ has only simple real eigenvalues, so there exists
$\bs{T}$ with $\bs{\Lambda}=\bs{TAT}^{-1}$ diagonal.
Substituting $\bs{q}=\bs{Tp}$ into \eqref{c1:eq1.30} gives
\[
  \dot{q}_i=(\lambda_i-\bar{w}(t))q_i,\quad i=1,\ldots,l, 
\]
where $\bar{w}(t)$ remained unchanged as $\bs T^{-1}\bar{w}(t)\bs I\bs T=\bar{w}(t)\bs I$ and constant sum of $q_i$: $\bar{w}(t)=\sum\limits_{j=1}^l\lambda_jq_j(t)$.

This is formally the independent-replication equation, in which only
the ``species'' $q_i$ with the largest $\lambda_i$ survives.
However, $q_i$ here is not a species but a \emph{linear combination
of frequencies} $p_i$.  Such a cloud of representatives of different
individual types was termed a \emph{quasispecies} by Eigen.
In the mathematical model of independent replication with mutations,
selection therefore acts not between individual types but between
different quasispecies; the unit of selection is not a unique type
but rather their ensemble.

We note that the mere existence of a globally stable equilibrium in
the quasispecies model gives no quantitative information that could
be used to compare model predictions with real data.  For applications,
one needs methods to find, for given matrices $\bs{QW}$ and
$\bs{M}+\bs{\mathcal{M}}$, the leading eigenvalue and the
corresponding eigenvector.  The difficulty, however, is that these
matrices typically have very large dimension, which prevents effective
numerical computation even on modern computers.

\section{Sequence Space and the Error Threshold}
\label{c1:section:1.6}

In \S\ref{c1:section:1.5}, the quasispecies models were formulated in
full generality; their analysis reduces to finding the leading
eigenvalue and eigenvector of the problems
\begin{equation}
  \bs{QWp}=\bar{w}\bs{p},\quad \bar{w}=\sum_{i=1}^{l}w_ip_i,
  \label{c1:eq1.32}
\end{equation}
for discrete time with Wrightian fitnesses (the classical Eigen model),
or
\begin{equation}
  (\bs{M}+\bs{Q}_N)\bs{p}=\bar{m}\bs{p},\quad
  \bar{m}=\sum_{i=1}^{l}m_ip_i,
  \label{c1:eq1.33}
\end{equation}
for continuous time with Malthusian fitnesses (the Crow--Kimura model).
Here $\bs{W}$ and $\bs{M}$ are diagonal with entries
$w_1,\ldots,w_l$ and $m_1,\ldots,m_l$ respectively (these matrices
define the fitness landscapes, which we also denote $\bs{w}$ and
$\bs{m}$); $\bs{Q}$ is stochastic with row sums equal to one; the
off-diagonal entries of $\bs{Q}_N$ are the mutation intensities,
and each row also sums to zero.

In full generality, the problem is formulated as follows: given
matrices $\bs{Q}$ and $\bs{W}$ (or $\bs{M}$ and $\bs{Q}_N$), find
$\bar{w}$ and $\bs{p}$ (or $\bar{m}$ and $\bs{p}$). In this
generality the problem is too unconstrained to yield concrete
results; progress requires specifying the structure of these
matrices. Eigen's key contribution was to propose a specific
structure for $\bs{Q}$ and $\bs{Q}_N$, determined by the
biological observation that the individuals of the population are
\emph{sequences} of fixed length $N$.

The biological motivation is straightforward.
Eigen originally formulated the quasispecies model in the context of
the origin of life.  One of the most plausible molecules that could
have driven this process is RNA, which is essentially a sequence
(chain) of four nucleotides.  For simplicity we assume a two-letter
alphabet ($0$ and $1$), though all subsequent results generalise to
an arbitrary alphabet size.

We begin with the Eigen model \eqref{c1:eq1.32}.
Suppose individuals are binary sequences of fixed length $N$,
so the number of distinct types is $l=2^N$.
For $N=3$, for example, the population consists of eight types:
\begin{align*}
  \sigma_0=[000],\;\sigma_1=[001],\;\sigma_2=[010],\;\sigma_3=[011],\;
  \sigma_4=[100],\;\sigma_5=[101],\;\sigma_6=[110],\;\sigma_7=[111].
\end{align*}

The set of all sequences of length $N$ can be endowed with a metric
structure by defining the Hamming distance $H(\sigma_i,\sigma_j)=H_{ij}$:
\[
  H(\sigma_i,\sigma_j)=\sum_{k=1}^{N}|\sigma_i(k)-\sigma_j(k)|.
\]
Geometrically, the set of binary sequences of length $N$ equipped with
the Hamming distance forms an $N$-dimensional hypercube
(Fig.~\ref{c1:fig1.9}).

\begin{figure}[ht]
  \centering
  \includegraphics[width=1.0\textwidth]{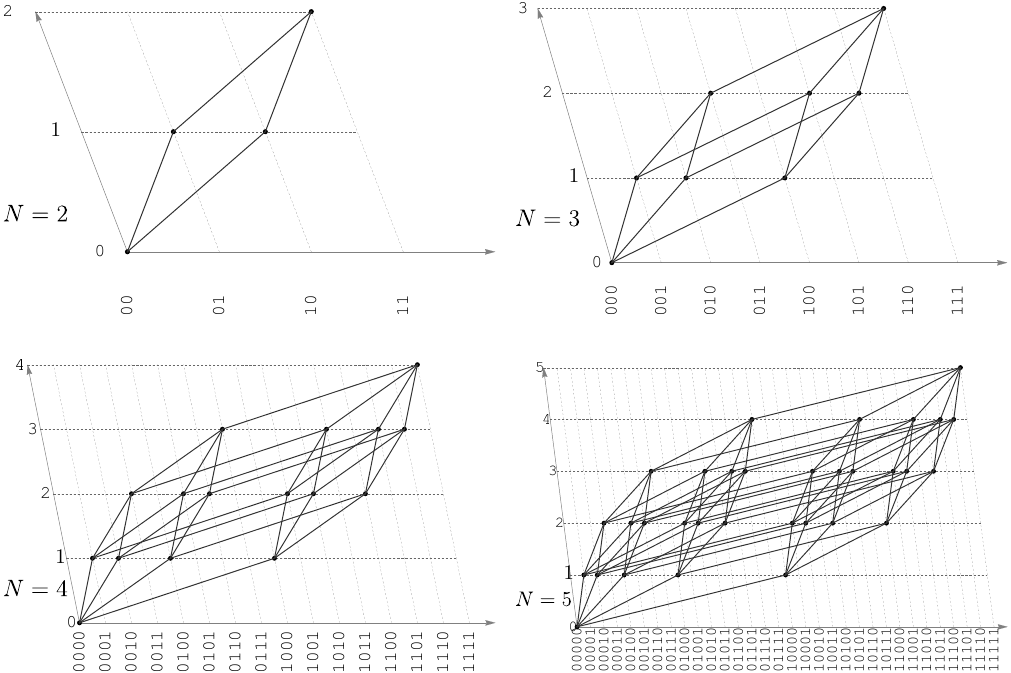}
  \caption{Sequence spaces for binary sequences of length $N=2,3,4,5$.
           Vertices of the hypercubes are the sequences, enumerated
           along the horizontal axis.  Hamming distance between two
           sequences equals the minimum number of edges connecting them.}
  \label{c1:fig1.9}
\end{figure}

Assuming mutations at each position occur independently and with equal
probability $q\in[0,1]$, the entry $q_{ij}$ of the mutation matrix
$\bs{Q}$ is
\begin{equation}
  q_{ij}=(1-q)^{N-H_{ij}}q^{H_{ij}},\quad i,j=0,\ldots,2^{N-1}.
  \label{c1:eq1.34}
\end{equation}
Indeed, for $\sigma_j$ to mutate into $\sigma_i$, exactly $H_{ij}$
positions must mutate (probability $q^{H_{ij}}$) and the remaining
$N-H_{ij}$ must not (probability $(1-q)^{N-H_{ij}}$).
One checks that $\bs{Q}$ is stochastic.

With this structure, $\bs{Q}$ is a function of a single scalar
parameter $q$, making analysis of \eqref{c1:eq1.32} considerably more
tractable: for a given fitness landscape $\bs{W}$, one seeks the
functions $q\mapsto\bar{w}(q)$ and $q\mapsto\bs{p}(q)$.

The dependence of $\bar{w}$ on $q$ for system \eqref{c1:eq1.32} was
investigated in \cite{c1:bratus2014}.
A typical graph is shown in Fig.~\ref{c1:fig1.10} for $N=3$,
$\bs{W}=\mathrm{diag}(10,3,3,2,3,2,2,1)$.

\begin{figure}[ht]
  \centering
  \includegraphics[width=0.6\textwidth]{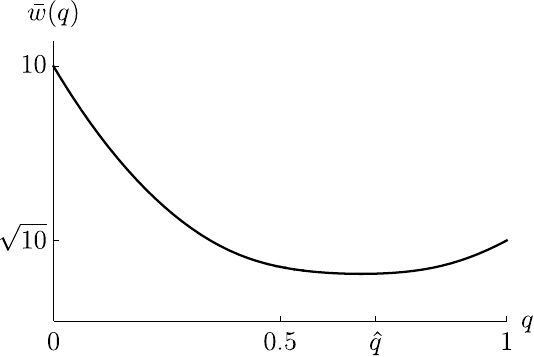}
  \caption{Typical dependence of mean fitness $\bar{w}$ on the
           single-site mutation probability $q$ in system \eqref{c1:eq1.32},
           for $N=3$, $\bs{W}=\mathrm{diag}(10,3,3,2,3,2,2,1)$.
           Here $\bar{w}(0)=10$, $\bar{w}(1)=\sqrt{10}$.}
  \label{c1:fig1.10}
\end{figure}

An analogous sequence space is introduced for the Crow--Kimura model.
Since time is continuous, only one elementary event can occur in a
sufficiently short time interval; hence only single-position mutations
(transitions to neighbouring vertices of the hypercube) are possible.
This gives
\begin{equation}
  \mu_{ij}=\begin{cases}
    \mu, & H_{ij}=1,\\
    -N\mu, & H_{ij}=0,\\
    0, & H_{ij}>1,  \end{cases}
  \label{c1:eq1.35}
\end{equation}
where $\mu$ is the mutation intensity per position per unit time.
Analogous results hold for the Crow--Kimura model \cite{c1:bratus2014}.

A natural question arises: can one find exact solutions of
\eqref{c1:eq1.32} and \eqref{c1:eq1.33} at least for some fixed fitness
landscapes on the given sequence space?
  The answer is positive:
the only known non-trivial example (i.e.\ with a fitness landscape
distinct from a constant) was obtained in \cite{c1:bratus2014}.
In most cases one must rely on numerical computations, which are
hampered by the exponentially large dimension: even the simplest
viruses have sequences of several thousand nucleotides, so for
$N=1000$ the problems \eqref{c1:eq1.32}–\eqref{c1:eq1.33} have dimension
$l=2^{1000}$, precluding any numerical approach.

A partial solution was proposed in \cite{c1:Swetina1982} through the
concept of \emph{single-peak fitness landscapes} (also called
\emph{permutation-invariant} landscapes): the fitness of a sequence
depends only on the Hamming distance from a reference sequence
$\sigma_0$ (the ``master sequence''), not on its precise composition.
Sequences can thus be grouped into \emph{classes}, where class $k$
consists of all sequences at Hamming distance $k$ from $\sigma_0$.
There are $C_N^k=\binom{N}{k}$ types in class $k$ and $N+1$ classes
in total, reducing the problem dimension from $2^N$ to $N+1$.

Another approach is the transition to a continuum of
types \cite{c1:Saakian2004,c1:Saakian2006}.

Under the permutation-invariance assumption, the mutation matrix for
the Crow--Kimura model takes the tridiagonal form
\begin{equation}
  \nu_{ij}=\begin{cases}
    (N-j)\mu, & i=j+1,\\
    j\mu, & i=j-1,\\
    -N\mu, & i=j,\\
    0, & |i-j|>1,
  \end{cases}
  \label{c1:eq1.36}
\end{equation}
and the Crow--Kimura problem \eqref{c1:eq1.33} reduces to
\begin{equation}
  (\bs{M}+\bs{Q}_N)\bs{p}=\bar{m}\bs{p},
  \label{c1:eq1.37}
\end{equation}
where $\bs{M}=\mathrm{diag}(m_0,\ldots,m_N)$ and
\begin{equation}
  \bs{Q}_N=\mu
  \begin{bmatrix}
    -N & 1 & 0 & \cdots & 0\\
    N & -N & 2 & \cdots & 0\\
    0 & N-1 & -N & \cdots & 0\\
    \vdots & & \ddots & \ddots & \vdots\\
    0 & \cdots & 2 & -N & N\\
    0 & \cdots & 0 & 1 & -N\\
  \end{bmatrix}.
  \label{c1:eq1.38}
\end{equation}

For the Eigen model \eqref{c1:eq1.32} the reduction to classes is
analogous, and the problem becomes
\begin{equation}
  \bs{WR}\bs{p}=\bar{w}\bs{p},
  \label{c1:eq1.39}
\end{equation}
where $\bs{W}=\mathrm{diag}(w_0,\ldots,w_N)$ and
$\bs{R}=(r_{ij})$ is the transition-probability matrix between classes
\cite{c1:Nowak1989}:
\begin{equation}
  r_{ij}=\sum_{a=j+i-N}^{\min\{i,j\}}
  \binom{j}{a}\binom{N-j}{i-a}q^N\!\left(\frac{1-q}{q}\right)^{i+j-2a},
  \quad i,j=0,\ldots,N.
  \label{c1:eq1.40}
\end{equation}

For numerical experiments, the \emph{single-peak} fitness landscape
\[
  \bs{W}=\mathrm{diag}(1+s,\,1,\ldots,1),\quad s>0
\]
is the simplest nontrivial case.
Figure~\ref{c1:fig1.11} shows numerical results for the Eigen model with
$\bs{W}=\mathrm{diag}(10,1,\ldots,1)$ and $N=5,10,50,100$.

\begin{figure}[ht]
  \centering
  \includegraphics[width=1.0\textwidth]{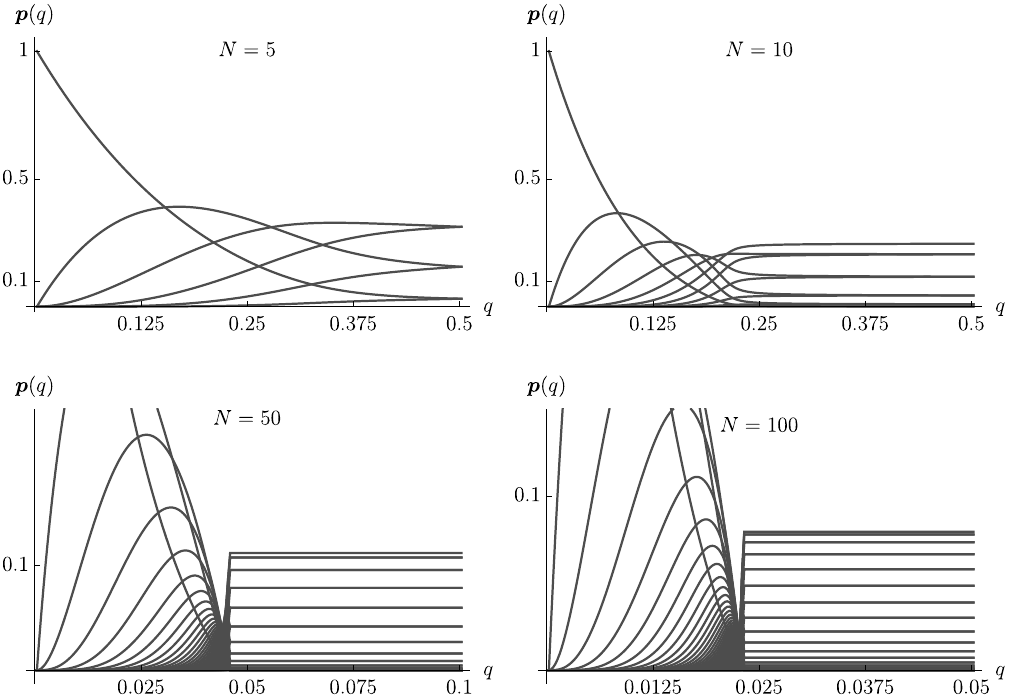}
  \caption{Equilibrium quasispecies distribution (class frequencies)
           as a function of single-site mutation probability $q$ in
           model \eqref{c1:eq1.32} for various sequence lengths.
           Fitness landscape: $\bs{W}=\mathrm{diag}(10,1,\ldots,1)$.
           Note the different scales on both axes for different $N$.}
  \label{c1:fig1.11}
\end{figure}

As $N$ increases, a striking qualitative phenomenon is observed:
the quasispecies distribution ceases to change after some critical
value of $q$.  Moreover, this fixed distribution closely approximates
the binomial distribution, implying that the type distribution becomes
nearly uniform — the population ``loses memory'' of the master sequence.
This phenomenon was named the \emph{error catastrophe} (or
\emph{error threshold}) by Eigen and Schuster, attracting enormous
interest in the quasispecies literature.  On the graphs the error
threshold appears as a sharp transition, especially pronounced for
large $N$.

The term ``error threshold'' reflects the fact that the stationary
quasispecies, after exceeding a critical value of $q$, ceases to carry
information about the fittest type and is no longer subject to natural
selection; thus the system effectively stops evolving.  This
corresponds to the practical cessation of evolution of the system
and may also be called the \emph{evolutionary threshold}.

An analogous phenomenon in physics is known as a \emph{phase
transition}: above a critical parameter value, the system undergoes
a change from chaotic to ordered behaviour, or vice versa.

\section{Stabilisation of the Leading Eigenvalue in the Crow--Kimura Model}
\label{c1:section:1.7}

To analyse this phenomenon we need additional information on the
spectrum and eigenvectors of the mutation matrix $\bs{Q}_N$
defined by \eqref{c1:eq1.38} \cite{c1:bratus2014_2}.

\begin{enumerate}
\item The eigenvalues of $\bs{Q}_N(\mu=1)$ (in decreasing order) are:
\[
  0;\;-2;\;-4;\;\ldots;\;-2N.
\]

\item Let $\bs{v}^k=(v_{0,k},v_{1,k},\ldots,v_{N,k})^{\top}$ be the
eigenvector corresponding to eigenvalue $q_k=-2k$, normalised so
that $v_{0,k}=1$, $k=0,1,\ldots,N$.

The entries of $\bs{v}^k$ have the following properties:
\begin{enumerate}
\item All $v_{i,k}$ ($i=0,1,\ldots,N$) are integers.

\item Symmetry:
\[
  v_{N-i,k}=(-1)^k v_{i,k},\quad
  v_{i,N-k}=(-1)^i v_{i,k},\quad
  i,k=0,1,\ldots,N.
\]

\item The first column of the matrix $\bs{V}$ formed from the
vectors $\bs{v}^k$, $k=0,1,\ldots,N$, consists of binomial
coefficients: $v_{i,0}=\binom{N}{i}$, $i=0,\ldots,N$.
The generating function for the $k$-th column is
\[
  p_k(t)=\sum_{i=0}^{N}v_{i,k}t^i=(1-t)^k(1+t)^{N-k},\quad
  k=0,\ldots,N.
\]
For example, for $N=6$:
\[
  \bs{V}=
  \begin{bmatrix}
    1 & 1 & 1 & 1 & 1 & 1 & 1\\
    6 & 4 & 2 & 0 & -2 & -4 & -6\\
    15 & 5 & -1 & -3 & -1 & 5 & 15\\
    20 & 0 & -4 & 0 & 4 & 0 & -20\\
    15 & -5 & -1 & 3 & -1 & -5 & 15\\
    6 & -4 & 2 & 0 & -2 & 4 & -6\\
    1 & -1 & 1 & -1 & 1 & -1 & 1\\
  \end{bmatrix}.
\]

\item The determinant and inverse of $\bs{V}$ are
\[
  \det\bs{V}=(-2)^{N(N+1)/2},\quad
  \bs{V}^{-1}=2^{-N}\bs{V},\quad
  \bs{VV}^{-1}=2^N\bs{E}.
\]
\end{enumerate}
\end{enumerate}

Consider equation \eqref{c1:eq1.37} and substitute $\bs{p}=\bs{Vx}$,
$\bs{x}=(x_0,x_1,\ldots,x_N)\in\R^{N+1}$.
Multiplying by $\bs{V}^{-1}$ and using
$\bs{V}^{-1}\bs{Q}_N\bs{V}=-2\bs{D}_N$,
$\bs{D}_N=\mathrm{diag}(0,1,2,\ldots,N)$, one obtains
\begin{equation}
  2^{-N}\bigl(\bs{VMV}-2\mu\bs{D}_N\bigr)\bs{x}=\bar{m}\bs{x}.
  \label{c1:eq1.41}
\end{equation}

This equation has a non-trivial solution if and only if
\begin{equation}
  \det\!\Bigl(2^{-N}(\bs{VMV})-2\mu\bs{D}_N-\bar{m}\bs{E}\Bigr)=0.
  \label{c1:eq1.42}
\end{equation}

As $\mu$ varies, the components of $\bs{x}$ and the eigenvalue
$\bar{m}$ are smooth functions of $\mu$ (by perturbation theory
for simple eigenvalues \cite{c1:Kato1976}).  Equation \eqref{c1:eq1.42}
defines a curve in the $(\mu,\bar{m})$ plane called the
\emph{characteristic curve}.

\begin{definition}
The smooth characteristic curve $\bar{m}(\mu)$ defined by
\eqref{c1:eq1.41} is said to admit \emph{limiting stabilisation} as
$\mu\to+\infty$ if there exists a constant $\bar{m}^*$ such that
\[
  \lim_{\mu\to+\infty}\bar{m}(\mu)=\bar{m}^*,\qquad
  \lim_{\mu\to+\infty}\bar{m}'(\mu)=0.
\]
\end{definition}

\begin{theorem}
If $\bar{m}(\mu)$ is a simple eigenvalue of \eqref{c1:eq1.37} for
$\mu>0$, then
\begin{equation}
  \bar{m}'(\mu)=-2(\bs{D}_N\bs{x}(\mu),\bs{y}(\mu)),
  \label{c1:eq1.43}
\end{equation}
where $\bs{x}(\mu)$ is the eigenvector of \eqref{c1:eq1.41} and
$\bs{y}(\mu)$ is the eigenvector of the adjoint problem
\begin{equation}
  \Bigl(2^{-N}(\bs{VMV}^{\top})-2\mu\bs{D}_N\Bigr)\bs{y}(\mu)
  =\bar{m}(\mu)\bs{y}(\mu),
  \label{c1:eq1.44}
\end{equation}
normalised by $(\bs{x}(\mu),\bs{y}(\mu))=1$.
\end{theorem}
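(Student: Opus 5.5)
This is the standard first-order perturbation (Hellmann--Feynman type) formula for a simple eigenvalue, and we prove it directly. Write
\[
\bs{L}(\mu)=2^{-N}\bs{VMV}-2\mu\bs{D}_N ,
\]
so that \eqref{c1:eq1.41} reads $\bs{L}(\mu)\bs{x}(\mu)=\bar{m}(\mu)\bs{x}(\mu)$. The family $\bs{L}(\mu)$ is affine in $\mu$, with $\bs{L}'(\mu)=-2\bs{D}_N$.

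\textbf{Step 1: smoothness.} Because $\bar{m}(\mu)$ is simple, perturbation theory for simple eigenvalues (as already invoked above, \cite{c1:Kato1976}) lets us choose $\bar{m}(\mu)$, $\bs{x}(\mu)$ and $\bs{y}(\mu)$ differentiably in $\mu$. For the left vector, $\bs{y}(\mu)$ is an eigenvector of the transposed operator $\bs{L}(\mu)^{\top}$ for the same eigenvalue. Since $\bs{D}_N$ is diagonal, $\bs{L}^\top = 2^{-N}(\bs{VMV})^{\top}-2\mu\bs{D}_N$; this is the adjoint problem \eqref{c1:eq1.44}, and I will read the notation $\bs{VMV}^{\top}$ there as this transpose. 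Simplicity also guarantees $(\bs{x},\bs{y})\neq 0$: otherwise $\bs{x}$ would be orthogonal to the left eigenvector, and one could build a Jordan chain, contradicting algebraic simplicity. So the normalisation $(\bs{x},\bs{y})=1$ is admissible.

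\textbf{Step 2: differentiate.} Differentiating the eigen-relation in $\mu$ gives
\[
\bs{L}'\bs{x}+\bs{L}\bs{x}'=\bar{m}'\bs{x}+\bar{m}\bs{x}' .
\]
Take the scalar product with $\bs{y}$ and use $(\bs{L}\bs{x}',\bs{y})=(\bs{x}',\bs{L}^{\top}\bs{y})=\bar{m}(\bs{x}',\bs{y})$. The terms containing $\bs{x}'$ then cancel, leaving
\[
(\bs{L}'\bs{x},\bs{y})=\bar{m}'(\bs{x},\bs{y})=\bar{m}' .
\]
Substituting $\bs{L}'=-2\bs{D}_N$ yields $\bar{m}'(\mu)=-2(\bs{D}_N\bs{x},\bs{y})$, which is \eqref{c1:eq1.43}.

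\textbf{Step 3: consistency with the original problem.} The eigenvalue $\bar{m}$ of \eqref{c1:eq1.41} coincides with that of \eqref{c1:eq1.37}. This follows from the similarity $\bs{V}^{-1}(\bs{M}+\bs{Q}_N)\bs{V}$, using $\bs{V}^{-1}=2^{-N}\bs{V}$ and $\bs{V}^{-1}\bs{Q}_N\bs{V}=-2\mu\bs{D}_N$. Simplicity is preserved under similarity, so the hypothesis on \eqref{c1:eq1.37} transfers to \eqref{c1:eq1.41}.

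The only delicate point is Step 1, namely the differentiability of the eigenvectors and the non-vanishing of $(\bs{x},\bs{y})$. I would handle both by citing analytic perturbation theory for an algebraically simple eigenvalue, or directly by applying the implicit function theorem to the system $\bs{L}(\mu)\bs{x}=\bar{m}\bs{x}$, $(\bs{x},\bs{c})=1$ for a fixed vector $\bs{c}$. Its Jacobian is invertible precisely because the eigenvalue is simple.
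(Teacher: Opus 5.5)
Your proof is correct and follows the paper's argument. You differentiate the eigen-relation for $\bs{L}(\mu)=2^{-N}\bs{VMV}-2\mu\bs{D}_N$ (the paper instead isolates the first-order terms of an expansion in $\mu$), pair the result with $\bs{y}$, and the $\bs{x}'$ terms cancel under $(\bs{x},\bs{y})=1$. Your added care is warranted: reading the adjoint operator as $(\bs{VMV})^{\top}$ rather than literally $\bs{V}\bs{M}\bs{V}^{\top}$ (these differ, since $\bs{V}$ is not symmetric) makes $\bs{y}$ a genuine left eigenvector for $\bar{m}(\mu)$, and putting $2^{-N}$ on $\bs{VMV}$ alone, as in \eqref{c1:eq1.42} rather than as parenthesised in \eqref{c1:eq1.41}, is what gives the coefficient $-2$ in \eqref{c1:eq1.43}.
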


\begin{proof}
Differentiability of $\bar{m}(\mu)$ follows from the simplicity of
the eigenvalue and perturbation theory \cite{c1:Kato1976}.
Perturbing $\mu\to\mu+\varepsilon\Delta\mu$ and expanding
\begin{equation}
  \bar{m}(\mu_\varepsilon)=\bar{m}(\mu)+\varepsilon\bar{m}'(\mu)\Delta\mu+o(\varepsilon),\quad
  \bs{x}(\mu_\varepsilon)=\bs{x}(\mu)+\varepsilon\bs{x}'(\mu)\Delta\mu+o(\varepsilon),
  \label{c1:eq1.46}
\end{equation}
substituting into \eqref{c1:eq1.41}, isolating linear terms, and
multiplying scalarly by $\bs{y}(\mu)$ (using
$(\bs{x}(\mu),\bs{y}(\mu))=1$) yields \eqref{c1:eq1.43}.
\hfill$\blacksquare$
\end{proof}

\begin{theorem}
For any matrix $\bs{M}=\mathrm{diag}(m_0,\ldots,m_N)$,
\begin{equation}
  \lim_{\mu\to+\infty}\bs{x}(\mu)=\bs{x}_\infty=2^{-N}(1,0,0,\ldots,0).
  \label{c1:eq1.48}
\end{equation}
\end{theorem}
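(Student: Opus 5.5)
The plan is to exploit the normalisation $\bs{p}=\bs{Vx}\in S_{N+1}$ together with the structure of the columns of $\bs{V}$. The limit splits into two parts: an exact identity for the component $x_0$, and an estimate of order $O(\mu^{-1})$ for the remaining components $x_k$, $k\geqslant 1$. Throughout, $\bs{x}(\mu)$ is the eigenvector of \eqref{c1:eq1.41} obtained from the Perron eigenvector $\bs{p}(\mu)$ of \eqref{c1:eq1.37}. This vector is strictly positive and normalised to the simplex by Theorem~\ref{c1:t1.2}, since $\bs{M}+\bs{Q}_N$ is irreducible, hence primitive after a diagonal shift, for every $\mu>0$.

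\emph{Step 1: $x_0\equiv 2^{-N}$.} Summing the components of $\bs{p}=\bs{Vx}$ gives $1=\sum_i p_i=\sum_k x_k\sum_i v_{i,k}$. By the generating function $p_k(t)=(1-t)^k(1+t)^{N-k}$, the column sums are $p_k(1)=0$ for $k\geqslant1$ and $p_0(1)=2^N$. Hence $x_0(\mu)=2^{-N}$ for all $\mu>0$, independently of $\bs{M}$.

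\emph{Step 2: uniform bounds.} Since $\bs{V}^{-1}=2^{-N}\bs{V}$, we have $\bs{x}=2^{-N}\bs{Vp}$. Because $\bs{p}\in S_{N+1}$, this gives $|x_k|\leqslant 2^{-N}\max_{i,j}|v_{i,j}|=:C_1$ uniformly in $\mu$. Likewise, $\bar m=\sum m_ip_i\in[\min_i m_i,\max_i m_i]$ is bounded uniformly in $\mu$. The matrix $2^{-N}\bs{VMV}$ does not depend on $\mu$, so $\|2^{-N}\bs{VMV}\bs{x}\|_\infty\leqslant C_2$.

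\emph{Step 3: decay of $x_k$, $k\geqslant1$.} The $k$-th row of \eqref{c1:eq1.41} reads
\[
2\mu k\,x_k=2^{-N}(\bs{VMV}\bs{x})_k-\bar m\,x_k .
\]
The right-hand side is bounded by $C_2+C_1\max_i|m_i|$, hence $|x_k(\mu)|\leqslant C/(2\mu k)\to0$ as $\mu\to+\infty$. Combined with Step 1, this yields \eqref{c1:eq1.48}, and in fact the rate $\bs{x}(\mu)-\bs{x}_\infty=O(\mu^{-1})$.

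The main point needing care is Step 2. All the bounds rely on tracking the correct eigenvector, namely the one with $\bs{p}$ in the simplex, rather than an arbitrary normalisation such as $(\bs{x},\bs{y})=1$. One must also check that $\bar m$ stays bounded as $\mu\to+\infty$. Both facts follow from the Perron--Frobenius argument of Theorem~\ref{c1:t1.2}. Once they are in place, the remaining steps are elementary.
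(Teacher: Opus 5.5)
Your proof is correct and follows the same mechanism the paper relies on: the paper defers the full proof to \cite{c1:bratus2014_2}, and in the corollary it simply passes to the limit in \eqref{c1:eq1.41} to get $\bs{D}_N\bs{x}_\infty=0$ and then normalises. Your version actually justifies that limit rather than assuming it: the uniform bounds on $\bs{x}(\mu)$ and $\bar m(\mu)$ from $\bs{p}\in S_{N+1}$ and $\bs{V}^{-1}=2^{-N}\bs{V}$ show the limit exists, and the column-sum identity gives $x_0\equiv 2^{-N}$ exactly together with the rate $x_k=O(\mu^{-1})$ for $k\geqslant 1$.
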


A full proof is given in \cite{c1:bratus2014_2}.
We establish the following corollary.

\begin{corollary}
The limiting equilibrium frequency distribution in \eqref{c1:eq1.33} is
\begin{equation}
  \lim_{\mu\to+\infty}\bs{p}(\mu)
  =2^{-N}\bigl(C_0^N,C_1^N,\ldots,C_N^N\bigr),
  \label{c1:eq1.49}
\end{equation}
and the limiting mean fitness is
\begin{equation}
  \lim_{\mu\to+\infty}\bar{m}(\mu)
  =\bar{m}^*=2^{-N}\sum_{k=0}^{N}C_N^k m_k.
  \label{c1:eq1.50}
\end{equation}
\end{corollary}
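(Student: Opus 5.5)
The plan is to transfer the limit \eqref{c1:eq1.48} from the transformed coordinates $\bs{x}$ back to the frequencies $\bs{p}=\bs{V}\bs{x}$, and then read off the mean fitness from the equilibrium relation $\bar{m}=\sum_i m_ip_i$.

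First I will fix the normalisation of $\bs{x}(\mu)$. Summing the components of $\bs{p}=\bs{V}\bs{x}$ and using the generating functions $p_k(t)=(1-t)^k(1+t)^{N-k}$ evaluated at $t=1$ gives
\[
\sum_{i=0}^{N}p_i=\sum_{k=0}^{N}x_k\,p_k(1)=2^N x_0 ,
\]
because $p_k(1)=0$ for $k\geqslant 1$. Since $\bs{p}(\mu)\in S_{N+1}$, this forces $x_0(\mu)=2^{-N}$ for every $\mu>0$. This is consistent with \eqref{c1:eq1.48}, so the eigenvector $\bs{x}(\mu)$ in that theorem is exactly the one producing the normalised equilibrium $\bs{p}(\mu)$.

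With the normalisation settled, the frequency statement follows from linearity and continuity. Because $\bs{V}$ is a fixed matrix,
\[
\bs{p}(\mu)=\bs{V}\bs{x}(\mu)\to\bs{V}\bs{x}_\infty=2^{-N}\bs{V}(1,0,\ldots,0)^{\top}.
\]
The right-hand side is $2^{-N}$ times the first column of $\bs{V}$, which consists of the binomial coefficients $v_{i,0}=\binom{N}{i}$. This gives \eqref{c1:eq1.49}.

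For \eqref{c1:eq1.50}, at equilibrium $\bar{m}(\mu)=\sum_{k}m_kp_k(\mu)$ is a continuous (linear) function of $\bs{p}$, so the limit passes directly to $2^{-N}\sum_k\binom{N}{k}m_k$. As a cross-check that does not rely on this identity, I would take the zeroth row of \eqref{c1:eq1.41}. Since the zeroth diagonal entry of $\bs{D}_N$ vanishes, the $\mu$-dependent term drops out of that row. Then $\bar{m}x_0=2^{-N}(\bs{VMV}\bs{x})_0$, and because the zeroth row of $\bs{V}$ is all ones, $(\bs{VMV}\bs{x})_0=\sum_i m_ip_i$. Substituting $x_0=2^{-N}$ recovers $\bar{m}=\sum_i m_ip_i$ identically in $\mu$, and the limit follows.

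The only delicate point I expect is the first step: the limit in \eqref{c1:eq1.48} must refer to the same normalisation of $\bs{x}(\mu)$ that yields $\bs{p}(\mu)\in S_{N+1}$, rather than some other scaling of the eigenvector. The computation of $\sum_i p_i$ via $p_k(1)$ resolves this, after which everything is continuity of linear maps.
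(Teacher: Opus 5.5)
Your proposal is correct and takes essentially the same route as the paper: start from the limit $\bs{x}_\infty=2^{-N}(1,0,\ldots,0)$ and map it back through $\bs{p}=\bs{V}\bs{x}$, so the limit is $2^{-N}$ times the binomial first column of $\bs{V}$. Your column-sum check $x_0(\mu)\equiv 2^{-N}$ and your derivation of \eqref{c1:eq1.50} from $\bar{m}=\sum_i m_ip_i$ (or from the zeroth row of \eqref{c1:eq1.41}) make explicit steps the paper leaves implicit, and your $\bs{V}\bs{x}_\infty$ is the form consistent with $\bs{p}=\bs{V}\bs{x}$, whereas the paper's proof writes $\bs{V}^{-1}\bs{x}_\infty$.
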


\begin{proof}
From the normalisation condition $(\bs{x}(\mu),\bs{y}(\mu))=1$
it follows that
$\lim_{\mu\to+\infty}\bs{y}(\mu)=\bs{y}_\infty=2^N(1,0,\ldots,0)$.
Together with \eqref{c1:eq1.43}:
\[
  \lim_{\mu\to+\infty}\bar{m}'(\mu)=0.
\]
From \eqref{c1:eq1.41} one deduces $\bs{D}_N\bs{x}_\infty=0$, so
$\bs{x}_\infty$ is the eigenvector of $\bs{Q}_N$ for the zero
eigenvalue, giving $\bs{x}_\infty=(1,0,\ldots,0)\cdot 2^{-N}$.
Using property (c):
\[
  \bs{p}_\infty=\bs{V}^{-1}\bs{x}_\infty
  =2^{-N}\bs{V}\bs{x}_\infty
  =2^{-N}(C_N^0,C_N^1,\ldots,C_N^N)\in S_{N+1}.
  \qquad\blacksquare
\]
\end{proof}

\section{\texorpdfstring{$\varepsilon$}{epsilon}-Stabilisation and the Error Threshold}
\label{c1:section:1.8}

The results of \S\ref{c1:section:1.7} show that limiting stabilisation of
\eqref{c1:eq1.33} as $\mu\to+\infty$ always occurs.  On the other hand,
numerical experiments (Fig.~\ref{c1:fig1.11}) show that the stabilisation
of the leading eigenvalue is observable even for relatively small
values of $\mu$.  This prompts a mathematical description of the
stabilisation at \emph{finite} values of $\mu$.  To this end, we
introduce the following definition.

\begin{definition}
The leading eigenvalue $\bar{m}(\mu)$ of \eqref{c1:eq1.33} is said to
admit \emph{$\varepsilon$-stabilisation} if for every $\varepsilon>0$
there exist constants $\bar{m}^*_\varepsilon$ and $\mu^*_\varepsilon$
such that for all $\mu>\mu^*_\varepsilon$:
\[
  |\bar{m}(\mu)-\bar{m}^*_\varepsilon|<\varepsilon,\qquad
  |\bar{m}'(\mu)|<\varepsilon.
\]
\end{definition}

We say the system exhibits an \emph{error-catastrophe phenomenon}
if $\varepsilon$-stabilisation occurs for finite $\bar{m}^*_\varepsilon$
and $\mu^*_\varepsilon$.

Our goal is an approximate determination of the critical value
$\mu_\varepsilon$, beyond which the error threshold can be observed for
system \eqref{c1:eq1.33}.  Assume $m_0\geqslant m_1\geqslant\ldots\geqslant m_N$.
Consider the behaviour of the characteristic curve near
$\mu=0$ using the expansion in powers of $\mu$.

At $\mu=0$ the leading eigenvalue is $m_0$ with eigenvector
$\bs{p}(0)=(1,0,\ldots,0)$.  Computing $\bar{m}'(0)$ directly from
\eqref{c1:eq1.33} and applying standard matrix perturbation theory
\cite{c1:Kato1976}:
\begin{equation}
  \bar{m}'(0)=-N.
  \label{c1:eq1.51}
\end{equation}

Including second-order terms in \eqref{c1:eq1.46} and applying the
eigenvector perturbation technique \cite{c1:Kato1976} yields
\cite{c1:bratus2014_2}:
\begin{equation}
  \bar{m}''(0)=\frac{2N}{m_0-m_1}.
  \label{c1:eq1.52}
\end{equation}

The characteristic curve behaves for large $\mu$ as
$\lim_{\mu\to+\infty}\bar{m}(\mu)=\bar{m}^*$ (determined by
\eqref{c1:eq1.50}).  Writing $\bar{m}^*=1+\delta_N$ where
$\delta_N=2^{-N}\sum_{k=0}^N(m_k-1)\binom{N}{k}\sim O(N^{-2})$
is negligible for large $N$, the approximate critical mutation
parameter $\tilde{\mu}_\varepsilon$ is found by intersecting the
parabolic approximation
$f(\mu)=m_0+\bar{m}'(0)\mu+\frac{1}{2}\bar{m}''(0)\mu^2$
with the asymptote $\bar{m}^*$:
\[
  \tilde{\mu}_\varepsilon
  =\frac{m_0-m_1}{2}
   \Biggl(1-\sqrt{1-\frac{4(m_0-\bar{m}^*)}{(m_0-m_1)N}}\Biggr).
\]

For sufficiently large $N$:
\[
  \tilde{\mu}_\varepsilon
  =\frac{m_0-\bar{m}^*}{N}+O\!\Bigl(\frac{1}{N^2}\Bigr),
\]
and, accounting for the order of $\delta_N$:
\begin{equation}
  \tilde{\mu}_\varepsilon=\frac{m_0-1}{N}+O\!\Bigl(\frac{1}{N^2}\Bigr).
  \label{c1:eq1.53}
\end{equation}

Formula \eqref{c1:eq1.53} gives a guaranteed overestimate for the critical
mutation parameter $\mu^*_\varepsilon$, since the true value
$\bar{m}(\mu^*_\varepsilon)$ is unknown and lies in the interval
$(m_0,\bar{m}^*)$.

For a concrete example, take $N=30$, $m_0=20$,
$m_1=\ldots=m_{30}=1$: then $\tilde{\mu}_\varepsilon=0.633$,
in close agreement with the numerically computed value
(Fig.~\ref{c1:fig1.12}).

\begin{figure}[ht]
  \centering
  \includegraphics[width=1.0\textwidth]{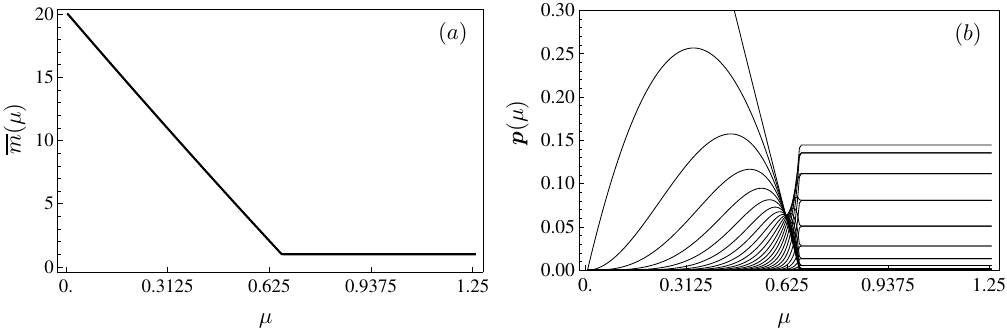}
  \caption{\small (a)~Leading eigenvalue of \eqref{c1:eq1.37} as a
           function of $\mu$.
           (b)~Coordinates of the eigenvector as a function of $\mu$.}
  \label{c1:fig1.12}
\end{figure}

As was shown, limiting stabilisation always exists.  However,
$\varepsilon$-stabilisation at finite $\mu$ does not always occur.
The formula \eqref{c1:eq1.53}, and also the formula
\[
  \mu^*_\varepsilon=\frac{m_0-m_1}{N}
\]
proposed in \cite{c1:Hofbauer1985},
do not always give the correct result.
Figure~\ref{c1:fig1.13} shows an example in which no stabilisation occurs
for $0\leqslant\mu\leqslant 0.3$, for the fitness landscape
$m_i=(30-i)\ln(1-s)$, $i=0,\ldots,30$, $N=30$, $s=0.1$.
The reasons for such behaviour of system \eqref{c1:eq1.33} constitute
an active stimulus for further
research \cite{c1:Nowak1989,c1:Eigen2002,c1:Takeuchi2007,c1:Schuster2011}.

\begin{figure}[ht]
  \centering
  \includegraphics[width=1.0\textwidth]{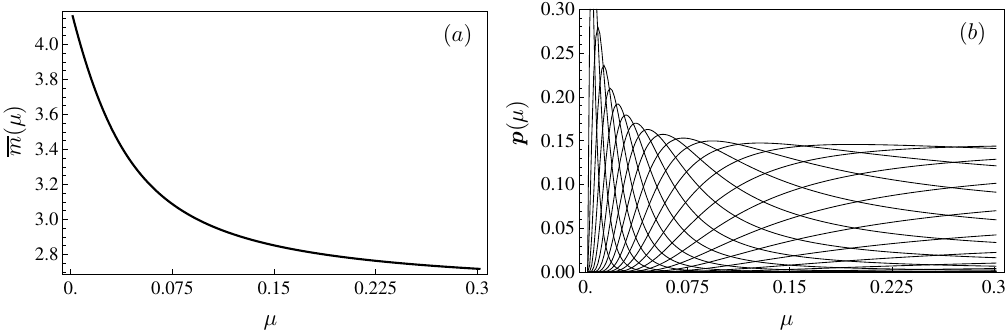}
  \caption{(a)~Leading eigenvalue for $0\leqslant\mu\leqslant 0.3$.
           (b)~Eigenvector coordinates for $0\leqslant\mu\leqslant 0.3$.}
  \label{c1:fig1.13}
\end{figure}

Finally, a fundamental question remains open: is this
$\varepsilon$-stabilisation phenomenon intrinsic to the mathematical
problem of finding the leading eigenvalue, or does it reflect some
deeper biological law governing living systems?

\section*{Summary}\addcontentsline{toc}{section}{Summary}

This chapter has developed the mathematical foundations of replicator
systems. The key results are:

\begin{enumerate}
	\item The replicator equation~\eqref{c1:eq1.5} arises from the general
	Kolmogorov growth equations by passing to relative frequencies,
	under a homogeneity assumption on the growth functions.
	
	\item Independent and autocatalytic replication both exhibit
	survival of a single species, with mean fitness non-decreasing in
	both cases (Fisher's theorem). Hypercyclic replication is qualitatively
	different: the system is permanent, admits a stable limit cycle for
	$n\geqslant 5$, and supports evolutionary variability via the
	row-dominance mechanism.
	
	\item Competing hypercycles obey once-for-ever selection: at most one
	survives, depending on initial conditions.
	
	\item The quasispecies models (Eigen~\eqref{c1:eq1.30} and
	Crow--Kimura~\eqref{c1:eq1.29}) describe replication with mutation.
	Under a primitivity condition, both have a unique globally stable
	equilibrium given by the dominant eigenvector of the respective
	matrix.
	
	\item On sequence space with permutation-invariant fitness, the
	problem reduces from dimension $2^N$ to $N+1$. The
	error-threshold phenomenon — the sharp transition in the quasispecies
	distribution at a critical mutation rate — is characterised
	mathematically by $\varepsilon$-stabilisation of the leading
	eigenvalue.
\end{enumerate}

%% file: chapter2.tex
\chapter{Geometry of the Fitness Surface and Trajectory Dynamics of Replicator Systems}
\label{ch:2}

\begin{chapterabstract}
This chapter studies the geometry of the fitness surface $\Sigma$ of replicator 
systems and its relationship to trajectory dynamics. Starting from the 
symmetric--antisymmetric decomposition ${\bf A} = {\bf B} + {\bf C}$ of the 
fitness landscape matrix, we derive a formula for the rate of change of mean 
fitness along trajectories and establish a necessary condition for its 
monotonicity. We show that in general, trajectories do not reach the maximum of 
$\Sigma$ even when a unique asymptotically stable equilibrium exists, and 
identify the precise algebraic conditions --- expressed in terms of ${\bf B}$ 
and ${\bf C}$ --- under which an equilibrium coincides with a local extremum of 
the fitness surface. A nontrivial class of matrices satisfying these conditions 
is provided by circulant matrices. We then establish the connection between local 
maxima of the fitness surface and evolutionarily stable states: an evolutionarily 
stable equilibrium always furnishes a local maximum, and under the identified 
conditions the converse holds as well. If the unique asymptotically stable 
equilibrium coincides with a local maximum, it is evolutionarily stable and 
realizes the global maximum of $\Sigma$; instability of the equilibrium forces 
the global maximum to the boundary of the simplex. The same geometric framework 
is extended to the general Lotka--Volterra system, where an analogue of mean 
fitness is identified and shown to share the same extremal properties. The 
results are illustrated through six examples including autocatalytic and 
hypercyclic replication, a parametric system exhibiting Andronov--Hopf 
bifurcation and heteroclinic cycles, and the Eigen quasispecies model.
\end{chapterabstract}

\bigskip\hrule\bigskip

The present chapter builds directly on the replicator framework introduced
in~Chapter~\ref{ch:1}. Recall that the dynamics of $n$ competing species,
represented by their relative abundances ${\bf u}(t) \in S_n$, is governed by
\begin{equation}\label{c2:eq1.5}
    \frac{du_{i}}{dt} = u_{i}\Big[\Big({\bf Au}\Big)_{i} - f({\bf u})\Big],
    \quad f({\bf u}) = \Big({\bf Au, u}\Big), \quad i = \overline{1,n},
\end{equation}
where $\Big({\bf Au}\Big)_i$ is the fitness of species $i$, $f({\bf u})$ is the
mean population fitness, and the matrix ${\bf A}$ defines the fitness landscape.
Three replication regimes were examined in~Chapter~\ref{ch:1}: independent,
autocatalytic, and hypercyclic. The last two appear
as running examples below. The Eigen quasispecies model,
also treated in~Chapter~\ref{ch:1}, is revisited in Section~\ref{c2:section:2.4}
in the context of fitness surface geometry.

Fisher's fundamental theorem of natural selection~\cite{c2:Fisher1930,c2:Crow2002} asserts that
mean fitness must not decrease along evolutionary trajectories. This holds when
${\bf A}$ is symmetric, but fails in general --- the hypercycle being the
paradigmatic case. The central question of this chapter is geometric: what is
the structure of the \textit{fitness surface} (or the \textit{fitness landscape})
$$
    \Sigma = \left\{z = f({\bf u}) :\,
    \sum\limits_{i,j=1}^{n} a_{ij}u_i u_j,\;\; {\bf u}(t) \in S_n \right\},
$$
and under what conditions do replicator trajectories converge to its maximum?
The analysis is organised around the decomposition ${\bf A} = {\bf B} + {\bf C}$,
where ${\bf B}$ is the symmetric part and ${\bf C}$ the antisymmetric part,
which separates the fitness surface geometry from the rotational component of the
flow.

\section{Trajectory Dynamics on the Fitness Surface}\label{c2:section:2.1}

Each trajectory $\gamma_{t}$ of system \eqref{c2:eq1.5} corresponds to a curve $\Gamma_{t}$ on the surface $\Sigma$. The relationship between the geometry of the fitness surface $\Sigma$ and the nature of the trajectory dynamics $\gamma_{t}$ of replicator systems is interesting in its own right, and also in connection with Fisher's fundamental theorem of natural selection~\cite{c2:Fisher1930}, which asserts that the mean fitness of a species does not decrease over the course of evolution.

The term ``theorem'' should not mislead the reader: Fisher's assertion is not tied to any mathematical proof for a specific biological model but rather has the character of an additional postulate supplementing Darwin's evolutionary triad.

Write the fitness landscape matrix ${\bf A}$ of the replicator system \eqref{c2:eq1.5} in the form
\begin{equation}
	{\bf A} = {\bf B} + {\bf C}, \quad {\bf B} = \frac{{\bf A} + {\bf A}^{T}}{2}, \quad {\bf C} = \frac{{\bf A} - {\bf A}^{T}}{2}.
	\label{c2:eq2.1}
\end{equation}

If ${\bf A}$ is symmetric, then ${\bf C} = 0$. On the other hand, since $\Big({\bf Cu, u}\Big) = 0$, the mean fitness of the system is completely determined by the matrix ${\bf B}$:
$$
f({\bf u}(t)) = \sum\limits_{i, j = 1}^{n}a_{ij}u_{i}u_{j} = \Big({\bf Au, u}\Big) = \Big({\bf Bu, u}\Big).
$$

The following formula holds:
\begin{equation}
	\dot{f}({\bf u}(t)) = 2\left(\sum\limits_{i = 1}^{n}\Big({\bf Bu}\Big)_{i}^{2}u_{i} - \sum\limits_{i = 1}^{n}\Big({\bf Bu}\Big)_{i}\Big({\bf Cu}\Big)_{i}u_{i} - f^{2}({\bf u})\right),
	\label{c2:eq2.2}
\end{equation}
\begin{equation*}
	\Big({\bf Au}\Big)_{i} = \sum\limits_{j = 1}^{n}a_{ij}u_{j}, \,\, \Big({\bf Bu}\Big)_{i} = \sum\limits_{j = 1}^{n}b_{ij}u_{j}, \,\, \Big({\bf Cu}\Big)_{i} = \sum\limits_{i = 1}^{n}\Big({\bf Cu}\Big)_{i}u_{i},
\end{equation*}
representing the rate of change of mean fitness along the system's trajectories.

When ${\bf A}$ is symmetric,
\begin{equation*}
	\dot{f}({\bf u}(t)) = 2\left(\sum\limits_{i = 1}^{n}\Big({\bf Au}\Big)_{i}^{2}u_{i} - \Bigg(\sum\limits_{i = 1}^{n}\Big({\bf Au}\Big)_{i}u_{i}\Bigg)^{2}\right) \geqslant 0,
\end{equation*}
since in this case the expression gives the variance of the quantity $\Big({\bf Au}\Big)_{i}$, where each value is realized with probability $u_{i}$~\cite{c2:Kimura1958}.

Rewrite \eqref{c2:eq2.2} in the form
\begin{equation*}
	\dot{f}({\bf u}(t)) = f({\bf u}) \left(\sum\limits_{i = 1}^{n}\Big({\bf Bu}\Big)_{i}v_{i} - f({\bf u})\right),
\end{equation*}
where
\begin{equation*}
	\begin{aligned}
		&v_i = \dfrac{\Big({\bf Au}\Big)_{i}u_i}{f({\bf u})}, \quad i=1,\,2,\, \ldots,\, n, \\
		&f({\bf u}) = \sum\limits_{i = 1}^{n} \Big({\bf Au}\Big)_{i}u_i = \Big({\bf Au, u}\Big) = \Big({\bf Bu, u}\Big).
	\end{aligned}
\end{equation*}
If ${\bf A}$ is non-negative, then $0 \leqslant v_i \leqslant 1$ for $i=1,\,2,\, \ldots,\, n$, and $\sum\limits_{i = 1}^{n} v_i = 1$.

This allows the quantities $v_i$ to be interpreted as a new tilted probability measure for the values $\Big({\bf Bu}\Big)_i$. The expression
\begin{equation*}
	\sum\limits_{i = 1}^{n}\Big({\bf Bu}\Big)_iv_i = \mathbb{E}_{{\bf v}(t)}^{\bf B}
\end{equation*}
then represents the expectation under that tilted measure. We denote by $\mathbb{E}_{{\bf u}(t)}^{\bf A}$ the expectation of $\Big({\bf Au}\Big)_i$ under the original measure, ${\bf u} \in S_n$.

A necessary condition for the mean fitness to be non-decreasing along all trajectories of the system is
\begin{equation*}
	\mathbb{E}_{{\bf v}(t)}^{\bf B} \geqslant \mathbb{E}_{{\bf u}(t)}^{\bf A}.
\end{equation*}
When this condition holds, it follows by LaSalle's theorem~\cite{c2:LaSalle1961,c2:Bratus2010} that the trajectories of the replicator system converge to the largest invariant subset of
\begin{equation*}
	K = \left\{\dot{f}({\bf u}) = 0, \quad {\bf u} \in S_n \right\}.
\end{equation*}
One checks directly that every rest point of the replicator system lies in this set.

Suppose that the replicator system has a limit cycle $\gamma_l \in \inside S_n$ with period $T$, and the above inequality holds. Then at every point of the limit cycle
\begin{equation*}
	\mathbb{E}_{{\bf v}(t)}^{\bf B} = \mathbb{E}_{{\bf u}(t)}^{\bf A}.
\end{equation*}
This equality allows the equation of the limit cycle $\gamma_l$ to be obtained in explicit form.

For a hypercycle, the cubic form in the components of ${\bf u} \in \gamma_l$ is indefinite in sign. This means that along the cycle, mean fitness may both increase and decrease, while its time-averaged value remains constant.

We consider the following three examples.

\medskip
\textbf{Example 1. Autocatalytic replication.}
Consider autocatalytic replication, described by the system
\begin{equation}  
\dot{u}_{i} = u_{i}(u_{i} - f(t)), \quad i = \overline{1, n}, \quad f(t) = \sum\limits_{i = 1}^{n}u_{i}^{2}, \quad {\bf u} \in S_{n}.
\label{c2:eq1.7}
\end{equation}
The maximum fitness value ($f = 1$) is attained at the vertices of the simplex $S_{n}$,
$$
e^{k} = (0,\, \ldots,\, 0,\, 1,\, 0,\, \ldots,\, 0), \quad k = \overline{1, n},
$$ 
where the unit occupies the $k$-th position, and the minimum is attained at the equilibrium $u_{i} = \dfrac{1}{n}$, $i = \overline{1, n}$.

The phase trajectories of the system \eqref{c2:eq1.7}, depending on the initial conditions, converge to one of the vertices, which acts as an attractor. Thus, the maximum fitness ultimately attained depends on which initial data are realized. This phenomenon is known as the \textit{adaptive fitness landscape} and was first studied by S.~Wright~\cite{c2:Maynard1982,c2:Wright1932}.
\hfill$\square$

\medskip
\textbf{Example 2. Hypercyclic replication.}

Let us now focus on replication defined by the matrix
$$
{\bf A} = 
\begin{pmatrix}
	0 & 0 & k_{1}\\
	k_{2} & 0 & 0\\
	0 & k_{3} & 0\\
\end{pmatrix}\!.
$$
The coordinates of the interior equilibrium are given by
$$
\bar{u}_{i} = \frac{\bar{m}}{k_{i + 1}}, \quad i = \overline{1, 3}, \quad k_{4} = k_{1}, \quad \bar{m} = \Bigg(\sum\limits_{i = 1}^{3}\frac{1}{k_{i}}\Bigg)^{-1}\!\!.
$$
The matrices ${\bf B}$ and ${\bf C}$ are computed directly from:
$$
{\bf B} = \frac{1}{2} 
\begin{pmatrix}
	0 & k_{2} & k_{1}\\
	k_{2} & 0 & k_{3}\\
	k_{1} & k_{3} & 0\\
\end{pmatrix}\!, \quad 
{\bf C} = \frac{1}{2} 
\begin{pmatrix}
	0 & -k_{2} & k_{1}\\
	k_{2} & 0 & -k_{3}\\
	-k_{1} & k_{3} & 0\\
\end{pmatrix}\!.
$$

The maximum of the fitness surface is attained at the point ${\bf u}_m$ with
\begin{align*}
    u_{1} &= \frac{k_{3}(k_{1} + k_{2}) - k_{3}^{2}}{\sigma}, \quad
    u_{2} = \frac{k_{1}(k_{2} + k_{3}) - k_{1}^{2}}{\sigma}, \\
    u_{3} &= \frac{k_{2}(k_{1} + k_{3}) - k_{2}^{2}}{\sigma},
\end{align*}
where $\sigma = 2(k_{1}k_{2} + k_{2}k_{3} + k_{1}k_{3}) - (k_{1}^{2} + k_{2}^{2} + k_{3}^{2})$.
For $k_{1} = 0.25$, $k_{2} = 0.3$, $k_{3} = 0.35$, the unique stable equilibrium is
$$
{\bf \bar{u}} = (0.3272,\, 0.2803,\, 0.3925),
$$
while the maximum of the fitness surface is attained at
$$
{\bf u}_{m} = (0.2692,\, 0.3846,\, 0.3462),
$$
with
$$
f({\bf \bar{u}}) < f({\bf u}_{m}).
$$
This inequality shows that the unique stable equilibrium does not provide the maximum of the surface $\Sigma$. Consequently, the curve $\Gamma_{t}$ does not reach the peak of $\Sigma$.
\hfill$\square$
\begin{figure}[ht]
	\begin{minipage}[ht]{0.48\linewidth}
		\center{\includegraphics[width=0.97\linewidth]{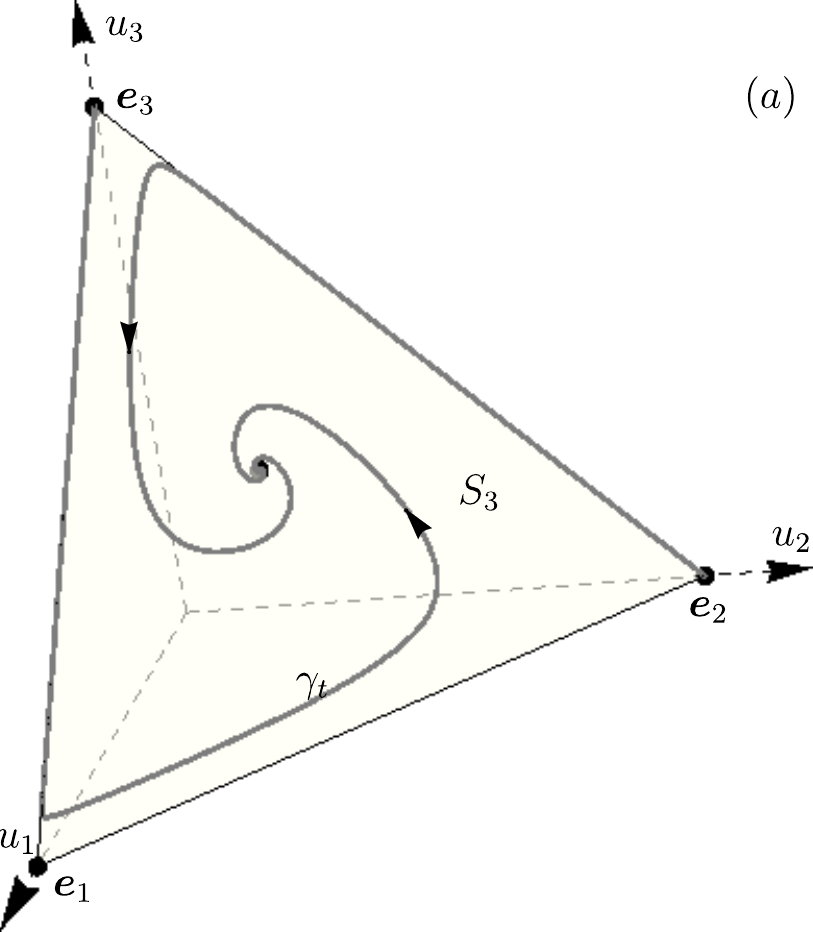}}
	\end{minipage}
	\hfill
	\begin{minipage}[ht]{0.48\linewidth}
		\center{\includegraphics[width=0.97\linewidth]{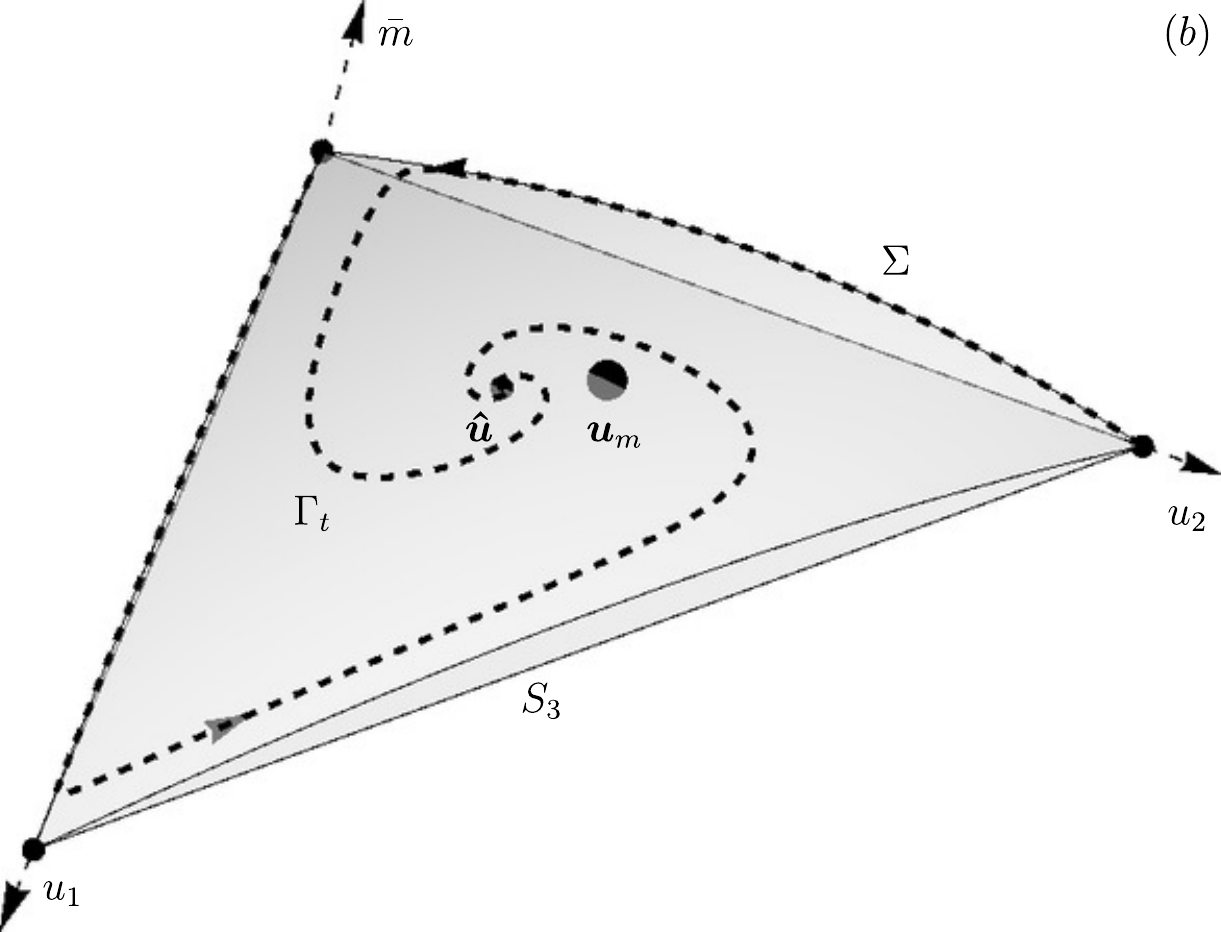}}
	\end{minipage}
\caption{Hypercyclic replication system of Example~2. 
(a) Phase trajectory $\gamma_t$ on the simplex $S_3$, converging to the 
stable equilibrium $\bar{\bf u}$. 
(b) The fitness surface $\Sigma$ over $S_3$: the curve $\Gamma_t$ is the 
lift of $\gamma_t$ onto $\Sigma$, with $\bar{\bf u}$ denoting the stable 
equilibrium and ${\bf u}_m$ the maximum of $\Sigma$. The two points are 
distinct: the trajectory does not reach the surface maximum.}
\end{figure}

\medskip
\textbf{Example 3. A fitness landscape with multiple local extrema.}

Consider the replicator system defined by the matrix
$$
{\bf A} = 
\begin{pmatrix}
	0 & 0 & -\mu & 1\\
	1 & 0 & 0 & -\mu\\
	-\mu & 1 & 0 & 0\\
	0 & -\mu & 1 & 0\\
\end{pmatrix}\!.
$$
For $|\mu| < 1$, the system is permanent, meaning its trajectories remain bounded away from the boundary of the simplex. At $\mu = 0$, a Poincar\'{e}--Andronov--Hopf bifurcation occurs, giving rise to a stable limit cycle for $\mu > 0$. For $|\mu| > 1$ the system ceases to be permanent~\cite{c2:Hofbauer2003,c2:Hofbauer1978}.

The matrix ${\bf B}$ in this case is
$$
{\bf B} = 
\begin{pmatrix}
	0 & 1 & -2\mu & 1\\
	1 & 0 & 1 & -2\mu\\
	-2\mu & 1 & 0 & 1\\
	1 & -2\mu & 1 & 0\\
\end{pmatrix}\!,
$$
with eigenvalues
$$
\lambda_{1} = 1 - \mu, \quad \lambda_{2, 3} = \mu, \quad \lambda_{4} = -(1 + \mu).
$$ 

The orthogonal transformation reducing ${\bf B}$ to diagonal form is given by
$$
{\bf U} = 
\begin{pmatrix}
	1/2 & -\sqrt{2}/2 & 0 & 1/2\\
	1/2 & 0 & -\sqrt{2}/2 & -1/2\\
	1/2 & \sqrt{2}/2 & 0 & 1/2\\
	1/2 & 0 & \sqrt{2}/2 & -1/2\\
\end{pmatrix}\!.
$$ 
In the new variables, the mean fitness is
$$
f({\bf w}) = (1 - \mu)w_{1}^{2} + \mu(w_{2}^{2} + w_{3}^{2}) - (1 + \mu)w_{4}^{2}.
$$
The constraint $\Big({\bf w, U^{T}I}\Big) = 1$ gives $w_{1} = \dfrac{1}{2}$. The simplex $S_{4}$ maps to the convex set
\begin{equation*}
    W_{4} = \left\{{\bf w}: 
    \begin{array}{l}
        -\dfrac{1}{4} - \dfrac{\sqrt{2}}{2}w_{2} + \dfrac{1}{2}w_{4} \geqslant 0,\\[6pt]
        \dfrac{1}{4} - \dfrac{\sqrt{2}}{2}w_{3} - \dfrac{1}{2}w_{4} \geqslant 0,\\[6pt]
        \dfrac{1}{4} + \dfrac{\sqrt{2}}{2}w_{2} - \dfrac{1}{2}w_{4} \geqslant 0
    \end{array}
    \right\}\!.
\end{equation*}

For $|\mu| < 1$, the maximum of the fitness surface is attained when $w_{4} = 0$. The set $W_{4}$ is then a square in the $(w_{2}, w_{3})$-plane with vertices at $p_{i} = \{\pm\sqrt{2}/4,\, \pm\sqrt{2}/4\}$, and at each of these four vertices the fitness landscape surface attains the same maximum value:
$$
f(p_{i}) = \frac{1}{4}(1 - \mu) + \frac{1}{4}\mu = 0.25.
$$ 

In the new coordinates, the interior equilibrium ${\bf \bar{u}} = (1/4, 1/4, 1/4, 1/4)$ corresponds to ${\bf \bar{w}} = (1/2, 0, 0, 0)$, and the fitness surface at this point equals
$$
f({\bf \bar{w}}) = \frac{1}{4} - \frac{\mu}{4}.
$$

For $-1 < \mu < 0$ this equilibrium is asymptotically stable, and the fitness surface attains its global maximum there since $f({\bf \bar{w}}) > f(p_{i})$. In general, the fitness surface has five peaks; the central one corresponds to the equilibrium. Orbits $\gamma_{t}$ converge to ${\bf \bar{w}}$, while the curve $\Gamma_{t}$ on $\Sigma$ approaches the central peak.

For $0 < \mu < 1$, an Andronov--Hopf bifurcation produces a limit cycle, and $\gamma_{t}$ converges to a closed orbit surrounding ${\bf \bar{w}}$. In this regime, mean fitness at the vertices $p_{i}$ exceeds mean fitness at the central peak. On the simplex $S_{4}$, the points $p_{i}$ correspond to $P_{1} = (0,\, 0,\, 1/2,\, 1/2)$, $P_{2} = (1/2,\, 0,\, 0,\, 1/2)$, $P_{3} = (1/2,\, 1/2,\, 0,\, 0)$, $P_{4} = (0,\, 1/2,\, 1/2,\, 0)$.

For $\mu > 1$, the points $P_{i}$ still furnish the maximum of $f({\bf u})$, but at the central point $f({\bf \bar{w}}) < 0$, so the global minimum of the fitness landscape is attained there. This case corresponds to a heteroclinic cycle.

Finally, for $\mu < -1$, the equilibrium ${\bf \bar{w}}$ is unstable and the maximum mean fitness $f({\bf w})$ is achieved at $w_{2} = w_{3} = 0$, corresponding to the points $Q_{1,2} = (1/2,\, 0,\, 0,\, \pm 1/2)$, where $f(Q_{1,2}) = -\dfrac{\mu}{2}$ and consequently $f(Q_{1,2}) > f({\bf \bar{w}})$. A heteroclinic cycle again arises in this case.

The system dynamics and fitness landscape are depicted schematically in Fig.~\ref{c2:fig2.2}.
\hfill$\square$
\begin{figure}[ht]
	\begin{minipage}[ht]{0.5\linewidth}
		\center{\includegraphics[width=1.0\linewidth]{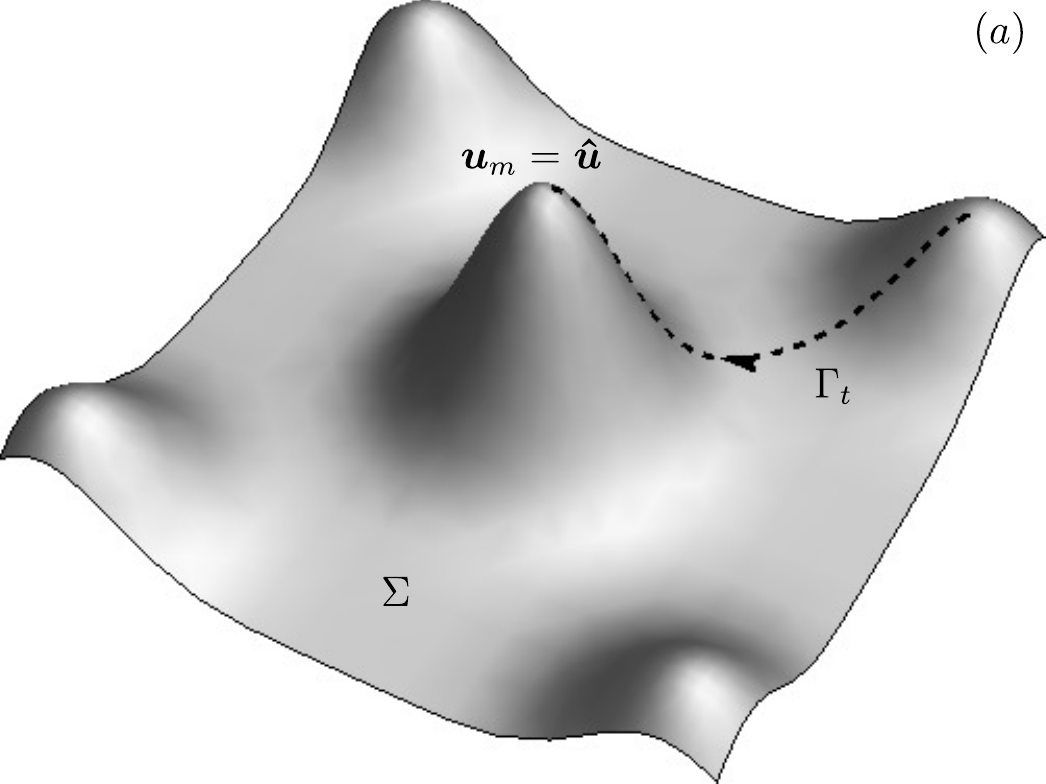}}
	\end{minipage}
	\hfill
	\begin{minipage}[ht]{0.5\linewidth}
		\center{\includegraphics[width=1.0\linewidth]{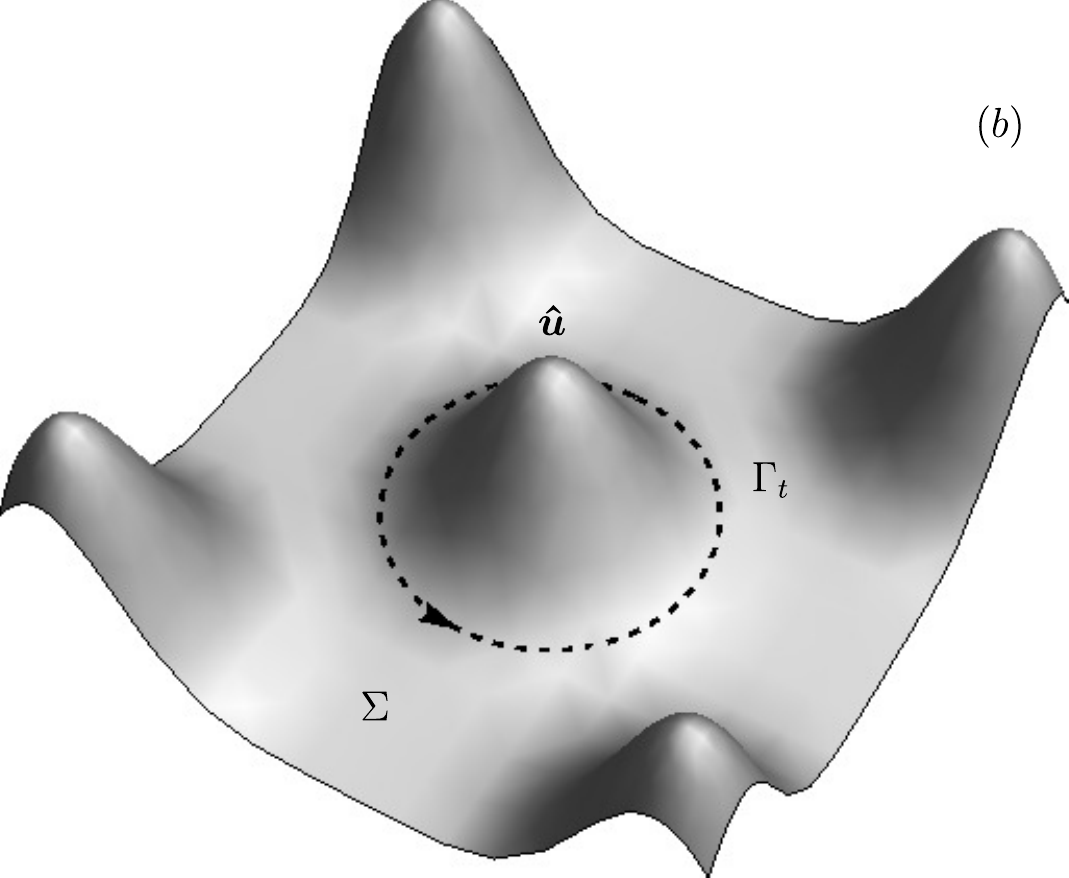}}
	\end{minipage}
	\caption{Dynamics of the system of Example~3. (a) The interior equilibrium coincides with the global maximum of the fitness landscape and is globally asymptotically stable. (b) The interior equilibrium is not the global maximum; solutions are attracted to a limit cycle. Note that the time-averaged mean fitness on the limit cycle coincides with the local maximum ${\bf \bar{u}}$.}
	\label{c2:fig2.2}
\end{figure}

These examples illustrate the diversity of mean fitness dynamics in replicator systems, ranging from monotone (adaptive) to periodic behavior.

\clearpage

\section{Necessary Conditions for a Local Extremum of the Fitness Surface}\label{c2:section:2.2}
Mathematically, the mean fitness function is a quadratic form defined by the symmetric matrix ${\bf B}$:
$$
f({\bf u}(t)) = \Big({\bf Au, u}\Big) = \Big({\bf Bu, u}\Big). 
$$ 
Therefore, there exists an orthogonal transformation ${\bf U}$ such that
$$
{\bf U^{T}BU} = {\bf \Lambda} = \diag(\lambda_{1},\, \lambda_{2},\, \ldots,\, \lambda_{n}),
$$
where $\lambda_{i}$ are the real eigenvalues of ${\bf B}$.

The change of variables ${\bf u = Uv}$ brings the quadratic form to its canonical (diagonal) form
$$
f({\bf v}) = \sum\limits_{i = 1}^{k}\lambda_{i}^{+}w_{i}^{2} + \sum\limits_{j = k + 1}^{n}\lambda_{j}^{-}w_{j}^{2},
$$
where $\lambda_{i}^{+}$ and $\lambda_{j}^{-}$ denote the positive and negative eigenvalues of ${\bf B}$, under the assumption that $\det{\bf B} \neq 0$.

Following standard geometric terminology, we call the fitness surface \textit{elliptic} if all $\lambda_{i} > 0$ or all $\lambda_{i} < 0$ (${i = \overline{1, n}}$), and \textit{hyperbolic} if there exist $i \neq j$ such that $\lambda_{i}\lambda_{j} < 0$.

In Example~1 (autocatalytic replication), the fitness surface is elliptic, whereas in Example~3 it may be either elliptic or hyperbolic depending on the parameter $\mu$.

The transformation ${\bf U}$ maps the simplex $S_{n}$ to the convex set $W_{n}$:
\begin{align*}
	W_{n} = \left\{{\bf v} \in \mathbb{R}^{n}: \Big({\bf v, U^{T}I}\Big) = 1,\,\, {\bf Uv} \geqslant 0, \right. \\
    \left. {\bf I} = (1,\, 1,\, \ldots,\, 1) \in \R^{n}\right\}.
\end{align*}

Finding the extrema of the fitness surface, therefore, reduces to the mathematical programming problem
$$
f({\bf v}) \to {\rm extr}, \quad {\bf v} \in W_{n}.
$$
A solution always exists, but it does not need to be unique~\cite{c2:Ioffe1974}.

An alternative approach to locating extrema of the fitness surface uses classical Lagrange multipliers. Consider the Lagrangian:
$$
\pounds({\bf u}, \mu) = \Big({\bf Bu, u}\Big) - \mu_{0}\Bigg(\Big({\bf u, I}\Big) - 1\Bigg) + \sum\limits_{j = 1}^{n}\mu_{j}u_{j},
$$ 
where the vector $\mu = (\mu_{0},\, \mu_{1},\, \ldots,\, \mu_{n})$, $\mu_{j} \geqslant 0$, $j = \overline{1, n}$, represents the Lagrange multipliers arising from the constraint ${\bf u} \in S_{n}$.

If ${\bf u} \in S_{n}$ is a local maximum of the quadratic form $\Big({\bf Bu, u}\Big)$, there exists a nonzero vector of Lagrange multipliers $\mu$ such that
$$
\frac{\partial\pounds}{\partial u_{j}}({\bf u}, \mu) \leqslant 0, \quad u_{j}\frac{\partial\pounds}{\partial u_{j}}({\bf u}, \mu) = 0, \quad j = \overline{1, n}.
$$

When ${\bf u} \in \inside S_{n}$, the necessary conditions for a local extremum take the simpler form
$$
\frac{\partial\pounds}{\partial u_{j}}({\bf u}, \mu_{0}) = 0, \quad j = \overline{1, n},
$$
which amounts to
\begin{equation}
	{\bf Bu} = \mu_{0}{\bf I}, \quad {\bf I} = (1,\, 1,\, \ldots,\, 1) \in \R^{n}.
	\label{c2:eq2.3}
\end{equation}

If ${\bf B}$ is non-singular, the constraint $\Big({\bf u, I}\Big) = 1$ gives $\mu_{0} = \Big({\bf B^{-1}I, I}\Big)$. If ${\bf B}$ is singular, any local extremum lies on the boundary of the simplex rather than in its interior.

Suppose ${\bf u} \in \inside S_{n}$ is a strict local maximum of the fitness surface. Then for sufficiently small $\varepsilon \neq 0$,
$$
\Big({\bf B}({\bf u} + \varepsilon{\bf w}), {\bf u} + \varepsilon{\bf w}\Big) - \Big({\bf Bu, u}\Big) < 0, 
$$
where ${\bf w} \in \mathbb{R}^{n}$ satisfies ${\bf u} + \varepsilon{\bf w} \in S_{n}$, i.e.,\ $\Big(({\bf u} + \varepsilon{\bf w}), {\bf I}\Big) = 1$. The latter condition gives $\Big({\bf w, I}\Big) = 0$. For any $\varepsilon \neq 0$ we then have
$$
2\varepsilon\Big({\bf Bu, w}\Big) + \varepsilon^{2}\Big({\bf Bw, w}\Big) < 0,
$$
which is possible only if
\begin{equation}
	\Big({\bf Bu, w}\Big) = 0 \quad \text{and} \quad \Big({\bf Bw, w}\Big) < 0.
	\label{c2:eq2.4}
\end{equation}

Consider a basis for the $(n-1)$-dimensional subspace of $\R^{n}$ orthogonal to ${\bf I} = (1,\, 1,\, \ldots,\, 1)$:
\begin{align*}
	r_{1} = (1,\, -1,\, 0,\, \ldots,\, 0), \quad r_{2} = (1,\, 0,\, -1,\, \ldots,\, 0), \quad \ldots, \\ r_{n - 1} = (1,\, 0,\, 0,\, \ldots,\, -1).
\end{align*} 

Any ${\bf w}$ may be written as
$$
{\bf w} = \sum\limits_{k = 1}^{n - 1}\xi_{k}r_{k}, \quad \xi_{k} \in \R.
$$
To verify conditions \eqref{c2:eq2.4} it therefore suffices to analyze the sign of the quadratic form
$$
\Big({\bf Bw, w}\Big) = \sum\limits_{i, j = 1}^{n - 1}\beta_{ij}\xi_{i}\xi_{j}, \quad \beta_{ij} = \Big({\bf Br_{i}, r_{j}}\Big)
$$
on the linear manifold
$$
\Big({\bf Bu, w}\Big) = \sum\limits_{i = 1}^{n - 1}\alpha_{i}\xi_{i} = 0, \quad \alpha_{i} = \Big({\bf Bu, r_{i}}\Big).
$$

\clearpage

\section{Coincidence of a Local Extremum of the Fitness Surface with an Equilibrium}
\label{c2:section:2.3}
A natural question arises: under what conditions does a stationary equilibrium coincide with an extremum of the fitness surface?

The preceding examples make clear that such a coincidence is not generic. In hypercyclic replication (Example~2), the maximum of the fitness surface does not coincide with the unique equilibrium, whereas in Example~3, coincidence occurs under certain conditions.

\begin{theorem}\label{c2:t2.1}
	Let ${\bf B}$ and ${\bf C}$ be the matrices in decomposition \eqref{c2:eq2.1}. An isolated equilibrium ${\bf \bar{u}} \in S_{n}$ of the system \eqref{c2:eq1.5} coincides with an extremal point of the fitness surface if and only if there exists a non-singular matrix ${\bf M} = \Big(\mu_{i,k}\Big)_{i, k = \overline{1, n}}$ such that
	\begin{equation}
		{\bf C = MB},
		\label{c2:eq2.5}
	\end{equation}
	together with
	\begin{equation}
		{\bf MI} = 0, \quad {\bf I} = (1,\, 1,\, \ldots,\, 1) \in \R^{n}.
		\label{c2:eq2.6}
	\end{equation}
\end{theorem}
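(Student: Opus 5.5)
The plan is to reduce both the equilibrium condition and the extremum condition to linear equations at $\bar{\mathbf u}$, and then compare them. I treat the case $\bar{\mathbf u}\in\inside S_n$. A boundary equilibrium is handled identically on the face containing it, replacing $\mathbf B,\mathbf C$ by the corresponding principal submatrices, as the paper does after \eqref{c1:eq1.16}.

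\textbf{Step 1 (reduction).} By \eqref{c1:eq1.12}, $\bar{\mathbf u}$ is an interior equilibrium iff $(\mathbf B+\mathbf C)\bar{\mathbf u}=\bar f\,\mathbf I$ with $\bar f=(\mathbf B\bar{\mathbf u},\bar{\mathbf u})$. By \eqref{c2:eq2.3}, $\bar{\mathbf u}$ is a critical point of $\Sigma$ iff $\mathbf B\bar{\mathbf u}=\mu_0\mathbf I$. If both hold, then $\mathbf C\bar{\mathbf u}=(\bar f-\mu_0)\mathbf I$. Taking the scalar product with $\bar{\mathbf u}$ and using $(\mathbf C\bar{\mathbf u},\bar{\mathbf u})=0$ and $(\bar{\mathbf u},\mathbf I)=1$ gives $\mu_0=\bar f$. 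The key lemma is therefore:
\[
\text{coincidence}\iff \mathbf B\bar{\mathbf u}=\bar f\,\mathbf I \ \text{and}\ \mathbf C\bar{\mathbf u}=0 .
\]

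\textbf{Step 2 (sufficiency).} Assume $\mathbf C=\mathbf M\mathbf B$ and $\mathbf M\mathbf I=0$, so that $\mathbf A=(\mathbf E+\mathbf M)\mathbf B$. Let $\mathbf v=\mathbf B^{-1}\mathbf I/(\mathbf B^{-1}\mathbf I,\mathbf I)$ be the interior critical point of $\Sigma$; this requires $\mathbf B$ nonsingular, as in \S\ref{c2:section:2.2}. Then $\mathbf C\mathbf v\propto\mathbf M\mathbf I=0$, so $\mathbf A\mathbf v\propto\mathbf I$ and $\mathbf v$ is an equilibrium. Conversely, for the given equilibrium, $(\mathbf E+\mathbf M)(\mathbf B\bar{\mathbf u}-\bar f\,\mathbf I)=0$, using $\mathbf M\mathbf I=0$. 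If $\mathbf E+\mathbf M$ is nonsingular (equivalently $\mathbf A$ nonsingular, which is what isolatedness of interior equilibria amounts to), this forces $\mathbf B\bar{\mathbf u}=\bar f\,\mathbf I$, and hence $\bar{\mathbf u}=\mathbf v$. This is where the hypothesis ``isolated'' enters.

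\textbf{Step 3 (necessity).} Given the two relations from Step 1, set $\mathbf M=\mathbf C\mathbf B^{-1}$. Then $\mathbf C=\mathbf M\mathbf B$ holds trivially. Since $\bar{\mathbf u}=\bar f\,\mathbf B^{-1}\mathbf I$, we get $\mathbf M\mathbf I=\mathbf C\bar{\mathbf u}/\bar f=0$, provided $\bar f\neq0$. If $\bar f=0$, then $\mathbf B\bar{\mathbf u}=0$ contradicts nonsingularity of $\mathbf B$, so this case does not arise.

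\textbf{Main obstacle.} The difficulty is the word ``non-singular''. Condition \eqref{c2:eq2.6} makes $\mathbf M$ itself singular whenever $n\ge1$, so the requirement must be read as nonsingularity of $\mathbf E+\mathbf M$ (equivalently of $\mathbf A$, given $\mathbf B$ nonsingular). I would state this reading explicitly. Under it, Step 3 also checks that $\mathbf E+\mathbf M=\mathbf A\mathbf B^{-1}$ is nonsingular: isolatedness of $\bar{\mathbf u}$ excludes a kernel of $\mathbf A$ inside $\{(\xi,\mathbf I)=0\}$, and a kernel vector $\xi$ with $(\xi,\mathbf I)\neq0$ would give a second equilibrium line through a multiple of $\mathbf B^{-1}\mathbf I$, which is also excluded. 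The degenerate case of singular $\mathbf B$ is disposed of by the remark in \S\ref{c2:section:2.2} that interior extrema then cannot occur.
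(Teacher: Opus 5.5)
Your argument is correct and is essentially the paper's: both reduce coincidence to $\mathbf B\bar{\mathbf u}=\bar f\,\mathbf I$ together with $\mathbf C\bar{\mathbf u}=0$, and both take $\mathbf M=\mathbf C\mathbf B^{-1}$, so that $\mathbf M\mathbf I=\mathbf C\bar{\mathbf u}/\bar f$. Your additions tighten the paper's argument rather than change its route: the paper's sufficiency line $\mathbf C\bar{\mathbf u}=\mathbf M\mathbf B\bar{\mathbf u}=\bar f\,\mathbf M\mathbf I=0$ tacitly assumes that $\bar{\mathbf u}$ already solves \eqref{c2:eq2.8}, and your identity $(\mathbf E+\mathbf M)(\mathbf B\bar{\mathbf u}-\bar f\,\mathbf I)=0$, with $\mathbf E+\mathbf M=\mathbf A\mathbf B^{-1}$ invertible, supplies exactly that missing step; you are also right that no non-singular $\mathbf M$ can satisfy $\mathbf M\mathbf I=0$ (one caveat: isolatedness forces $\mathbf A$ to be invertible here only because $\mathbf A\mathbf B^{-1}\mathbf I=\mathbf I$, whereas for a general $\mathbf A$ an isolated equilibrium with $\bar f=0$ can span $\ker\mathbf A$).
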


\begin{proof}
	From decomposition \eqref{c2:eq2.1}, at any stationary equilibrium of the replicator system, the following equality must hold:
	\begin{equation}
		{\bf A\bar{u} = B\bar{u} + C\bar{u}} = f({\bf \bar{u}}){\bf I}, \quad {\bf I} = (1,\, 1,\, \ldots,\, 1) \in \R^{n},
		\label{c2:eq2.7}
	\end{equation}
	where $f({\bf \bar{u}})$ is the stationary fitness value.
	
	The solution of \eqref{c2:eq2.7} satisfies \eqref{c2:eq2.3} if and only if ${\bf C\bar{u}} = 0$ and $\mu_{0} = f({\bf \bar{u}})$.
	
	We prove the necessity of \eqref{c2:eq2.5} and \eqref{c2:eq2.6}. Taking the inner product of \eqref{c2:eq2.7} with ${\bf \bar{u}}$ and using $\Big({\bf C\bar{u}, \bar{u}}\Big) = 0$ gives $\Big({\bf B\bar{u}, \bar{u}}\Big) = f({\bf \bar{u}}) = \mu_{0}$.
	
	We now show that for a solution of
	\begin{equation}
		{\bf B\bar{u}} = f({\bf \bar{u}}){\bf I}
		\label{c2:eq2.8}
	\end{equation}
	to satisfy ${\bf C\bar{u}} = 0$, conditions \eqref{c2:eq2.5} and \eqref{c2:eq2.6} are both necessary and sufficient.
	
	If these conditions hold, then
	$$
	{\bf C\bar{u} = MB\bar{u}} = f({\bf \bar{u}}){\bf MI} = 0.
	$$
	
	Conversely, by the linear independence of the columns of ${\bf B}$, a representation \eqref{c2:eq2.5} exists for some matrix ${\bf M}$. If ${\bf \bar{u}}$ solves \eqref{c2:eq2.8}, then ${\bf C\bar{u}} = 0$ and consequently ${\bf MI} = 0$. 
\end{proof}  

\medskip
\textbf{Example 4. Hypercycle replication and extremum conditions.}

Returning to Example~2, we show that conditions \eqref{c2:eq2.5} and \eqref{c2:eq2.6} imply ${\bf \bar{u}_{max}} = {\bf \bar{u}}$.

We have
$$
c_{1} = \frac{k_{1}}{k_{3}}b_{1} - \frac{k_{2}}{k_{3}}b_{3}, \quad c_{2} = -\frac{k_{3}}{k_{1}}b_{1} + \frac{k_{2}}{k_{1}}b_{3}, \quad c_{3} = \frac{k_{3}}{k_{2}}b_{1} - \frac{k_{1}}{k_{2}}b_{2}.
$$

Conditions \eqref{c2:eq2.5} and \eqref{c2:eq2.6} are satisfied if and only if $k_{1} = k_{2} = k_{3} = k$. Taking $k = 1$, the eigenvalues of ${\bf B}$ are $\lambda_{1} = 1$, $\lambda_{2} = \lambda_{3} = 1/2$. The orthogonal change of the variables ${\bf u = Uw}$ that brings the mean fitness to the canonical form $w_{1}^{2} - \dfrac{1}{2}(w_{2}^{2} + w_{3}^{2})$ is given by
$$
{\bf U} = 
\begin{pmatrix}
	\sqrt{3}/3 & 0 & 2/\sqrt{6}\\
	\sqrt{3}/3 & -\sqrt{2}/2 & -1/\sqrt{6}\\
	\sqrt{3}/3 & \sqrt{2}/2 & -1/\sqrt{6}\\
\end{pmatrix}.
$$ 

From $\Big({\bf u, I}\Big) = \Big({\bf w, U^{T}I}\Big) = 1$ it follows that $w_{1} = \sqrt{3}/3$, and the simplex $S_{3}$ maps to the convex set
\begin{equation*}
    W_{3} = \left\{{\bf w}: 
    \begin{array}{l}
        w_{1} = \dfrac{\sqrt{3}}{3},\\[6pt]
        \dfrac{\sqrt{2}}{2}w_{2} - \dfrac{1}{\sqrt{6}}w_{3} + \dfrac{1}{3} \geqslant 0,\\[6pt]
        -\dfrac{\sqrt{2}}{2}w_{2} - \dfrac{1}{\sqrt{6}}w_{3} + \dfrac{1}{3} \geqslant 0,\\[6pt]
        w_{3} \geqslant -\dfrac{1}{\sqrt{6}}
    \end{array}
    \right\}\!.
\end{equation*}

In the new coordinates, the fitness surface is
$$
f({\bf w}) = \frac{1}{3} - \frac{1}{2}(w_{2}^{2} + w_{3}^{2}),
$$
a paraboloid with apex at $w_{2} = w_{3} = 0$. In this special case, the equilibrium coincides with the maximum of the fitness surface.

A nontrivial class of matrices satisfying all conditions of Theorem~\ref{c2:t2.1} is provided by the so-called circulant matrices.
$$
{\bf A} = 
\begin{pmatrix}
	a_{1} & a_{2} & \ldots & a_{n - 1} & a_{n}\\
	a_{n} & a_{1} & \ldots & a_{n - 2} & a_{n - 1}\\
	\ldots & \ldots & \ldots & \ldots & \ldots\\
	a_{2} & a_{3} & \ldots & a_{n} & a_{1}\\
\end{pmatrix}\!.
$$ 

Indeed, equation ${\bf B\bar{u}} = f({\bf \bar{u}}){\bf I}$ then has the solution ${\bf \bar{u}}_{1} = {\bf \bar{u}}_{2} = \ldots = {\bf \bar{u}}_{n} = \dfrac{1}{n}$, $f({\bf \bar{u}}) = \dfrac{1}{n}\sum\limits_{i = 1}^{n}a_{i}$, and one directly verifies that this solution lies in the kernel of ${\bf C}$. Thus, the interior rest point of a replicator system with a circulant matrix is always a local extremum of the fitness surface.
\hfill$\square$\hfill$\square$

It is of interest to find sufficient conditions for the mean fitness to be monotone along trajectories of a general replicator system.
\begin{theorem}\label{c2:t2.2}
	Suppose ${\bf B}$ is non-singular and the matrix ${\bf Q = CB^{-1}}$ is positive semi-definite, i.e.\ $\Big({\bf Q\xi, \xi}\Big) \geqslant 0$ for all $\xi \in \R^{n}$. Then the mean fitness of the replicator system \eqref{c2:eq1.5} is a monotone function, and the system has no closed trajectories in $\inside S_{n}$.
\end{theorem}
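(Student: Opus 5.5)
The plan is to work directly with the formula \eqref{c2:eq2.2} for $\dot f$ and to rewrite the antisymmetric contribution through $\mathbf{Q}$. Since $\mathbf{B}$ is non-singular, $\mathbf{C}=\mathbf{Q}\mathbf{B}$, so along a trajectory I will set $\mathbf{z}=\mathbf{B}\mathbf{u}$, which gives $\mathbf{C}\mathbf{u}=\mathbf{Q}\mathbf{z}$. I will first recompute $\dot f=2(\mathbf{B}\mathbf{u},\dot{\mathbf{u}})$ from $\dot u_i=u_i[(\mathbf{Au})_i-f]$, to fix the sign of the cross term that \eqref{c2:eq2.2} records. This gives
\[
\dot f = 2\Bigl(\sum_{i=1}^n u_i z_i^2 - f^2\Bigr) + 2\sum_{i=1}^n u_i z_i(\mathbf{Q}\mathbf{z})_i ,\qquad f=(\mathbf{z},\mathbf{u})=\sum_{i=1}^n u_i z_i .
\]
The first bracket is the variance of the values $z_i$ under the probability vector $\mathbf{u}\in S_n$. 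Exactly as in the symmetric case, it is non-negative, and it vanishes only if $z_i$ is constant on the support of $\mathbf{u}$.

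The second step is to show that the cross term has a fixed sign under the hypothesis $(\mathbf{Q}\xi,\xi)\geqslant 0$. I would write it as $(\mathbf{D}_u\mathbf{z},\mathbf{Q}\mathbf{z})$ with $\mathbf{D}_u=\diag(u_1,\ldots,u_n)$. The natural attempt is to substitute $\xi=\mathbf{D}_u^{1/2}\mathbf{z}$ and compare with $(\mathbf{Q}\xi,\xi)$. Alternatively, I would centre $\mathbf{z}$ by subtracting $f\mathbf{I}$, using that $(\mathbf{C}\mathbf{u},\mathbf{u})=0$ makes $\sum u_i(\mathbf{Q}\mathbf{z})_i=0$, so that $f\mathbf{I}$ contributes nothing to the weighted sum. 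This step is the main obstacle. Positive semi-definiteness of $\mathbf{Q}$ controls the unweighted form $(\mathbf{Q}\mathbf{z},\mathbf{z})=(\mathbf{C}\mathbf{u},\mathbf{B}\mathbf{u})$, whereas $\dot f$ involves the $\mathbf{u}$-weighted form, so a weighting argument is needed. If no such argument closes, I would fall back on reading the hypothesis as semi-definiteness of the weighted form $\mathbf{D}_u\mathbf{Q}$ (equivalently of $\mathbf{Q}$ in the inner product weighted by $\mathbf{u}$) on $S_n$, and then say so explicitly. Either way, the conclusion is $\dot f\geqslant 0$, so the mean fitness is monotone.

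For the absence of closed orbits, suppose $\gamma\subset\inside S_n$ is a periodic trajectory. Because $f$ is monotone and returns to its initial value after one period, $f$ is constant on $\gamma$ and hence $\dot f\equiv 0$ there. Both non-negative terms must then vanish on $\gamma$. Vanishing of the variance with $u_i>0$ for all $i$ forces $\mathbf{B}\mathbf{u}=f\mathbf{I}$ at every point of $\gamma$. This is the extremum condition \eqref{c2:eq2.3}, whose solution is unique because $\mathbf{B}$ is non-singular. So $\gamma$ reduces to the single point $\mathbf{B}^{-1}\mathbf{I}/(\mathbf{B}^{-1}\mathbf{I},\mathbf{I})$, which contradicts $\gamma$ being a non-trivial closed orbit. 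The same reasoning, combined with LaSalle's principle applied to the set $K=\{\dot f=0\}$ introduced in Section~\ref{c2:section:2.1}, also shows that interior trajectories accumulate only on rest points.
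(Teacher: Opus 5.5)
\textbf{Verdict: genuine gap, and it cannot be closed because the statement is false as written.} You are right to flag the weighted cross term as the main obstacle. No weighting or centring argument can get past it, because under the stated hypothesis the required inequality fails.

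Why the hypothesis carries no sign information:
\begin{itemize}
\item The symmetric part of $\mathbf{Q}=\mathbf{C}\mathbf{B}^{-1}$ has trace $\mathrm{tr}(\mathbf{C}\mathbf{B}^{-1})=\sum_{i,j}c_{ij}(\mathbf{B}^{-1})_{ij}=0$, since a skew matrix is paired with a symmetric one.
\item So $(\mathbf{Q}\xi,\xi)\geqslant0$ for all $\xi$ forces that symmetric part to vanish. Hence $\mathbf{Q}$ is skew-symmetric, and $(\mathbf{Q}\xi,\xi)\equiv0$.
\item Equivalently, the hypothesis says exactly $\mathbf{BC}=\mathbf{CB}$, i.e.\ $\mathbf{A}$ is normal.
\item Your cross term then reduces to $\sum_{i<j}q_{ij}z_iz_j(u_i-u_j)$, which has no definite sign.
\end{itemize}

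Counterexamples. Every circulant $\mathbf{A}$ with non-singular $\mathbf{B}$ satisfies the hypothesis, in particular the equal-rate hypercycle \eqref{c1:eq1.10}.
\begin{itemize}
\item For $n=3$, $\mathbf{B}$ has eigenvalues $1,-\tfrac12,-\tfrac12$. On the face $u_2=0$ one has $f=u_1u_3$ and $\dot f=u_1u_3^2(1-2u_1)<0$ whenever $u_1>\tfrac12$. By continuity, $\dot f<0$ also at nearby interior points. Yet every interior orbit converges to the centre, where $f$ attains its maximum $\tfrac13$, because $\frac{d}{dt}\ln(u_1u_2u_3)=1-3f\geqslant0$. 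So $f$ is not monotone along such orbits.
\item For $n=5$, the eigenvalues $\cos(2\pi k/5)$ of $\mathbf{B}$ are all non-zero. The system has a stable limit cycle in $\inside S_5$ (Chapter~\ref{ch:1}), which contradicts the second claim.
\end{itemize}

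The paper's own proof does not supply the missing idea. It simply asserts that $\sum q_{ij}\xi_i\xi_ju_i\geqslant0$ ``since $\mathbf{u}\in\inside S_n$'', which is exactly your unproved step. It also writes the cross term in \eqref{c2:eq2.2} with the opposite sign; your recomputed sign is the correct one.

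Your fallback does not give a meaningful replacement either.
\begin{itemize}
\item Suppose you require $\sum_iu_i\xi_i(\mathbf{Q}\xi)_i\geqslant0$ for all $\xi$ and all $\mathbf{u}\in\inside S_n$. By continuity in $\mathbf{u}$ this holds at each vertex $e^k$, i.e.\ $\xi_k(\mathbf{Q}\xi)_k\geqslant0$ for all $\xi$.
\item That forces $q_{kj}=0$ for $j\neq k$ and $q_{kk}\geqslant0$.
\item Skew-symmetry of $\mathbf{C}=\mathbf{QB}$ then gives $(q_{ii}+q_{jj})b_{ij}=0$, hence $\mathbf{C}=0$.
\item So the reinterpreted theorem is just the symmetric (Fisher) case already treated in Section~\ref{c2:section:2.1}.
\item If you instead assume only that the weighted term is non-negative at $\mathbf{z}=\mathbf{Bu}$, you are assuming the conclusion.
\end{itemize}

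Your closed-orbit argument is sound conditionally: $f$ constant on $\gamma$, vanishing variance, and the unique solution of $\mathbf{Bu}=f\mathbf{I}$ in $S_n$. But its premise, that $\dot f$ is a sum of two non-negative terms, is precisely what the stated hypothesis fails to provide.
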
 

\begin{proof}
	Consider formula \eqref{c2:eq2.2} for the rate of change of mean fitness along trajectories. By the stated hypothesis,
	\begin{equation}
		\label{c2:eq2.9}
		\sum\limits_{i = 1}^{n}\Big({\bf Bu}\Big)_{i}\Big({\bf Cu}\Big)_{i} u_i = \sum\limits_{\substack{i,j = 1 \\ i \neq j}}^{n}q_{ij}\xi_{i}(t)\xi_{j}(t)u_{i}(t),
	\end{equation}
	where $q_{ij}$ are the entries of ${\bf Q}$ and $\xi_{i}(t) = \Big({\bf Bu}\Big)_{i}$, $i = \overline{1, n}$.
	
	Since ${\bf u} \in \inside S_{n}$, this expression is non-negative, so $\dot{f}({\bf u}(t)) > 0$.
	
	If the system had a closed trajectory, we would need
	$$
	\int\limits_{t}^{T + t}\dot{f}(\tau)\,d\tau = 0,
	$$  
	which is impossible when the form \eqref{c2:eq2.9} is positive semi-definite. 
\end{proof} 

\clearpage

\section{Evolutionarily Stable State as a Maximum of the Fitness Surface}\label{c2:section:2.4}
Another notable property of local maxima of the fitness surface $\Sigma$ is their connection to the concept of an evolutionarily stable state (ESS), introduced in~\cite{c2:Maynard1982} in the context of analyzing replicator dynamics from a game-theoretic perspective. The payoff matrix is taken to be the matrix ${\bf A}$ of the replicator system \eqref{c2:eq1.5}. Each entry $a_{ij}$ represents the payoff to species $i$ when interacting with species $j$, and this payoff is assumed to affect the species' replication rate. Only pure strategies are used: each species employs a fixed strategy (the row $a_{i1},\, a_{i2},\, \ldots,\, a_{in}$) and offspring inherit the parental strategy. The entries $a_{ij}$ are thus the payoffs when strategy $i$ is played against strategy $j$. The fitness of species $i$ against the entire population is
$$
f_{i}({\bf u}) = \sum\limits_{j = 1}^{n}a_{ij}u_{j} = \Big({\bf Au}\Big)_{i}, \quad {\bf u} \in S_{n},
$$    
and the mean population payoff is
$$
\sum\limits_{i = 1}^{n}\Big({\bf Au}\Big)_{i}u_{i} = \Big({\bf Au, u}\Big) = f({\bf u}),
$$ 
which is the mean fitness.

The advantage of strategy $i$ is
$$
\Big({\bf Au}\Big)_{i} - \Big({\bf Au, u}\Big) = \Big({\bf Au}\Big)_{i} - f.
$$

The growth rate of species $i$ is assumed proportional to this advantage:
$$
\dot{u}_{i} = u_{i}\Bigg(\Big({\bf Au}\Big)_{i} - f\Bigg), \quad i = \overline{1, n}, \quad {\bf u} \in S_{n},
$$
which is precisely the general replicator system \eqref{c2:eq1.5}.

This game-theoretic framework allows the methods of~\cite{c2:Maynard1982} to be applied to replicator systems, in particular the concepts of Nash equilibrium and evolutionarily stable state.

We recall the following definitions~\cite{c2:Hofbauer2003}.
\begin{definition}
	A state ${\bf \bar{u}} \in S_{n}$ is a \emph{symmetric Nash equilibrium} of the replicator system \eqref{c2:eq1.5} if
	\begin{equation}
		\label{c2:eq2.10}
		\Big({\bf u, A\bar{u}}\Big) \leqslant \Big({\bf \bar{u}, A\bar{u}}\Big)
	\end{equation}
	for all ${\bf u} \in S_{n}$.
\end{definition} 

\begin{definition}
	A state ${\bf \bar{u}} \in S_{n}$ is \emph{evolutionarily stable} if
	\begin{equation}
		\label{c2:eq2.11}
		\Big({\bf \bar{u}, Au}\Big) > \Big({\bf u, Au}\Big)
	\end{equation}
	for all ${\bf u} \neq {\bf \bar{u}}$ in some neighborhood of ${\bf \bar{u}}$ in $S_{n}$.
\end{definition}  

We are interested in the relationship between states ${\bf \bar{u}} \in S_{n}$ satisfying \eqref{c2:eq2.10} or \eqref{c2:eq2.11} and the extremal properties of the fitness surface $\Sigma$.

\begin{theorem}\label{c2:t2.3}
	If an equilibrium ${\bf \bar{u}} \in S_{n}$ is evolutionarily stable, then it is a strict local maximum of the fitness surface $\Sigma$.
\end{theorem}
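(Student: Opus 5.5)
We want: ESS $\bar{\bf u}$ implies $f({\bf u})<f(\bar{\bf u})$ for ${\bf u}\neq\bar{\bf u}$ near $\bar{\bf u}$ in $S_n$. The natural route is to write ${\bf u}=\bar{\bf u}+\xi$ with $(\xi,{\bf I})=0$ and expand both the ESS inequality \eqref{c2:eq2.11} and the difference $f({\bf u})-f(\bar{\bf u})$ in powers of $\xi$, using that $\bar{\bf u}$ is an equilibrium.

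\textbf{Step 1: a Nash-type inequality.} First I record the standard fact that an ESS is a symmetric Nash equilibrium, i.e. \eqref{c2:eq2.10} holds. Take ${\bf u}_\varepsilon=(1-\varepsilon)\bar{\bf u}+\varepsilon{\bf w}$ for any ${\bf w}\in S_n$. Then \eqref{c2:eq2.11} gives $(\bar{\bf u}-{\bf u}_\varepsilon,{\bf A}{\bf u}_\varepsilon)>0$. This equals $\varepsilon(\bar{\bf u}-{\bf w},{\bf A}\bar{\bf u})+\varepsilon^2(\bar{\bf u}-{\bf w},{\bf A}({\bf w}-\bar{\bf u}))$. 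Dividing by $\varepsilon$ and letting $\varepsilon\to0$ yields $({\bf w},{\bf A}\bar{\bf u})\le(\bar{\bf u},{\bf A}\bar{\bf u})$.

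\textbf{Step 2: expand $f$.} With ${\bf u}=\bar{\bf u}+\xi$ we have
\[
f({\bf u})-f(\bar{\bf u})=(\xi,{\bf A}\bar{\bf u})+(\bar{\bf u},{\bf A}\xi)+(\xi,{\bf A}\xi).
\]
The ESS inequality \eqref{c2:eq2.11} is equivalent to $(\bar{\bf u}-{\bf u},{\bf A}{\bf u})>0$, that is,
\[
-(\xi,{\bf A}\bar{\bf u})-(\xi,{\bf A}\xi)>0 .
\]
Substituting this into the expansion gives
\[
f({\bf u})-f(\bar{\bf u})<(\bar{\bf u},{\bf A}\xi)=(\bar{\bf u},{\bf A}{\bf u})-(\bar{\bf u},{\bf A}\bar{\bf u}).
\]
So it suffices to show that $(\bar{\bf u},{\bf A}\xi)\le 0$ near $\bar{\bf u}$.

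\textbf{Step 3: bound $(\bar{\bf u},{\bf A}\xi)$.} Write this term as $({\bf A}^{T}\bar{\bf u},\xi)$. Here the equilibrium property alone, $({\bf A}\bar{\bf u})_i=\bar f$ on the support, controls $(\xi,{\bf A}\bar{\bf u})$ but not $(\bar{\bf u},{\bf A}\xi)$, and this is the main obstacle.

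The fix is to combine the two inequalities symmetrically. Adding the ESS inequality to Step 2 gives
\[
f({\bf u})-f(\bar{\bf u})=(\xi,{\bf A}\bar{\bf u})+(\bar{\bf u},{\bf A}\xi)+(\xi,{\bf A}\xi)<(\bar{\bf u},{\bf A}\xi).
\]
For an interior $\bar{\bf u}$ I would use Step 1 to get $(\xi,{\bf A}\bar{\bf u})=0$. The ESS condition then gives $(\xi,{\bf A}\xi)=(\xi,{\bf B}\xi)<0$ for $\xi\ne0$, $(\xi,{\bf I})=0$, by scaling $\xi$ small. By the second-order criterion \eqref{c2:eq2.4} of Section~\ref{c2:section:2.2}, it then remains only to check the first-order condition $({\bf B}\bar{\bf u},\xi)=0$. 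This holds exactly when $({\bf C}\bar{\bf u},\xi)=0$, i.e. ${\bf B}\bar{\bf u}\parallel{\bf I}$, which is the coincidence condition of Theorem~\ref{c2:t2.1}.

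I expect the paper to treat this through \eqref{c2:eq2.3} and \eqref{c2:eq2.7} together with Theorem~\ref{c2:t2.1}. I would therefore state the result under that hypothesis, or argue that the ESS forces ${\bf C}\bar{\bf u}=0$, which I suspect fails in general. Verifying or circumventing this first-order term is the crux.

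For a boundary $\bar{\bf u}$ I would restrict to the face containing its support and use the Nash inequality with strict sign for the off-support directions to handle them.
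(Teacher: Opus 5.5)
Your diagnosis is right, and it is exactly where the paper's own proof is incomplete. The paper's argument is your interior computation. It substitutes $\bar{\bf u}+\varepsilon{\bf w}$ into \eqref{c2:eq2.11}, removes $({\bf w},{\bf A}\bar{\bf u})$ using ${\bf A}\bar{\bf u}=f(\bar{\bf u}){\bf I}$ (so it is implicitly interior), obtains $({\bf w},{\bf A}{\bf w})<0$ for $({\bf w},{\bf I})=0$, ${\bf w}\neq 0$, and stops. That gives only the second half of \eqref{c2:eq2.4}. The linear term of $f(\bar{\bf u}+\varepsilon{\bf w})-f(\bar{\bf u})$ is $2\varepsilon({\bf B}\bar{\bf u},{\bf w})=\varepsilon(\bar{\bf u},{\bf A}{\bf w})=-2\varepsilon({\bf C}\bar{\bf u},{\bf w})$. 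It vanishes for all tangent ${\bf w}$ iff ${\bf C}\bar{\bf u}\parallel{\bf I}$, and since $({\bf C}\bar{\bf u},\bar{\bf u})=0$ this means ${\bf C}\bar{\bf u}=0$. The ESS condition does not force this, so the statement is false as written.

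A minimal counterexample: ${\bf A}=\begin{pmatrix}0&2\\1&0\end{pmatrix}$, $\bar{\bf u}=(2/3,1/3)$, ${\bf A}\bar{\bf u}=(2/3,2/3)$. Writing ${\bf u}=\bar{\bf u}+s(1,-1)$ gives $(\bar{\bf u}-{\bf u},{\bf A}{\bf u})=3s^2>0$, so $\bar{\bf u}$ is an ESS, even globally. Yet $f({\bf u})=3u_1u_2=2/3-s-3s^2$ increases for small $s<0$; the maximum of $\Sigma$ is $3/4$ at $(1/2,1/2)$. The paper's own Example~2 in Section~\ref{c2:section:2.1} is another counterexample. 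On $w_3=-w_1-w_2$ one has $({\bf w},{\bf A}{\bf w})=-k_1w_1^2-k_3w_2^2+(k_2-k_1-k_3)w_1w_2$, which is negative definite since $4k_1k_3=0.35>0.09=(k_2-k_1-k_3)^2$. So the interior equilibrium is an ESS, while the paper itself reports $f(\bar{\bf u})<f({\bf u}_m)$. Because $f$ is strictly concave on the plane, ${\bf u}_m$ is its only critical point, so $\bar{\bf u}$ is not even a local maximum.

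What your argument actually proves is the corrected statement: $\bar{\bf u}\in\inside S_n$ together with ${\bf C}\bar{\bf u}=0$, for example via \eqref{c2:eq2.5}--\eqref{c2:eq2.6} of Theorem~\ref{c2:t2.1}. Then ${\bf B}\bar{\bf u}=f(\bar{\bf u}){\bf I}$, and your Step~2 becomes the identity $f(\bar{\bf u}+\xi)-f(\bar{\bf u})=({\bf A}\xi,\xi)$. The ESS inequality gives $({\bf A}\xi,\xi)<0$ for small tangent $\xi\neq 0$, and homogeneity extends this to all tangent $\xi$. So $\bar{\bf u}$ is then even the strict global maximum of $\Sigma$ on $S_n$. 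State it that way, and replace ``I suspect it fails'' by one of the counterexamples.

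Drop the boundary sketch. A Nash inequality, even a strict one, controls $({\bf w},{\bf A}\bar{\bf u})$ but says nothing about $(\bar{\bf u},{\bf A}{\bf w})=({\bf A}^{T}\bar{\bf u},{\bf w})$. For ${\bf A}=\begin{pmatrix}1&10\\0&0\end{pmatrix}$ the vertex $(1,0)$ is a strict Nash equilibrium and an ESS. Nevertheless $f(1-\varepsilon,\varepsilon)=1+8\varepsilon-9\varepsilon^2>1=f(1,0)$ for small $\varepsilon>0$.
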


\begin{proof}
	Let ${\bf \bar{u}}$ be evolutionarily stable and consider ${\bf u} = {\bf \bar{u}} + \varepsilon {\bf w} \in S_{n}$. Since ${\bf u} \in S_{n}$, we have $\Big({\bf w, I}\Big) = 0$. Condition \eqref{c2:eq2.11} yields
	$$
	\varepsilon\Big({\bf w, A\bar{u}}\Big) + \varepsilon^{2}\Big({\bf w, Aw}\Big) < 0. 
	$$
	
	Since ${\bf \bar{u}}$ is an equilibrium, ${\bf A\bar{u}} = f({\bf \bar{u}}){\bf I}$, so the first term vanishes and
	$$
	\Big({\bf w, Aw}\Big) < 0, \quad {\bf w} \neq 0. \quad 
	$$
\end{proof}

The converse holds only under the additional conditions \eqref{c2:eq2.5} and \eqref{c2:eq2.6}.
\begin{theorem}\label{c2:t2.4}
	Suppose conditions \eqref{c2:eq2.5} and \eqref{c2:eq2.6} of Theorem~\ref{c2:t2.1} hold and ${\bf \bar{u}} \in \inside S_{n}$ is a strict local maximum of the fitness surface $\Sigma$. Then ${\bf \bar{u}}$ is evolutionarily stable.
\end{theorem}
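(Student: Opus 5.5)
The argument runs the proof of Theorem~\ref{c2:t2.3} backwards. The conditions of Theorem~\ref{c2:t2.1} make $\bar{\bf u}$ simultaneously an equilibrium and an interior critical point of the mean fitness. Then the ESS inequality \eqref{c2:eq2.11} reduces to the second-order condition $({\bf Bw},{\bf w})<0$ on the subspace $({\bf w},{\bf I})=0$, and that condition is exactly what a strict local maximum gives.

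\textbf{Step 1: $\bar{\bf u}$ is an equilibrium with ${\bf A}\bar{\bf u}=f(\bar{\bf u}){\bf I}$.} Since $\bar{\bf u}\in\inside S_n$ is a local maximum of $({\bf Bu},{\bf u})$ on $S_n$, the Lagrange condition \eqref{c2:eq2.3} gives ${\bf B}\bar{\bf u}=\mu_0{\bf I}$. Taking the inner product with $\bar{\bf u}$ identifies $\mu_0=({\bf B}\bar{\bf u},\bar{\bf u})=f(\bar{\bf u})$. Conditions \eqref{c2:eq2.5}--\eqref{c2:eq2.6} then give ${\bf C}\bar{\bf u}={\bf MB}\bar{\bf u}=\mu_0{\bf MI}=0$. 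Hence ${\bf A}\bar{\bf u}={\bf B}\bar{\bf u}+{\bf C}\bar{\bf u}=f(\bar{\bf u}){\bf I}$, so $\bar{\bf u}$ is an interior equilibrium of \eqref{c2:eq1.5}.

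\textbf{Step 2: second-order information from the strict maximum.} Write any ${\bf u}\in S_n$ as ${\bf u}=\bar{\bf u}+\varepsilon{\bf w}$ with $({\bf w},{\bf I})=0$. Because $({\bf B}\bar{\bf u},{\bf w})=\mu_0({\bf I},{\bf w})=0$, we have
\[
f(\bar{\bf u}+\varepsilon{\bf w})-f(\bar{\bf u})=\varepsilon^2({\bf Bw},{\bf w}).
\]
Strictness of the maximum therefore means exactly $({\bf Bw},{\bf w})<0$ for all ${\bf w}\neq0$ with $({\bf w},{\bf I})=0$; this is \eqref{c2:eq2.4}. One point needs care: for a quadratic function the increment is exactly quadratic, so strictness does not merely give $\leqslant 0$. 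It gives strict negativity for every nonzero direction in the subspace, because each direction can be realised with small $\varepsilon$ while staying inside $S_n$, since $\bar{\bf u}$ is interior.

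\textbf{Step 3: verify \eqref{c2:eq2.11}.} Using ${\bf A}\bar{\bf u}=f(\bar{\bf u}){\bf I}$ and $({\bf w},{\bf I})=0$,
\[
(\bar{\bf u},{\bf Au})-({\bf u},{\bf Au})=-\varepsilon({\bf w},{\bf A}\bar{\bf u})-\varepsilon^2({\bf w},{\bf Aw})=-\varepsilon^2({\bf w},{\bf Aw}).
\]
Since $({\bf Cw},{\bf w})=0$, we have $({\bf w},{\bf Aw})=({\bf Bw},{\bf w})<0$ by Step 2. So the difference is positive for all ${\bf u}\neq\bar{\bf u}$ in $S_n$, in fact globally and not only in a neighbourhood. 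This proves evolutionary stability.

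The only delicate point is Step 1. The hypothesis "strict local maximum" by itself says nothing about equilibria, and it is precisely \eqref{c2:eq2.5}--\eqref{c2:eq2.6} that convert the Lagrange condition ${\bf B}\bar{\bf u}=\mu_0{\bf I}$ into the equilibrium condition ${\bf A}\bar{\bf u}=f(\bar{\bf u}){\bf I}$. Once that is in place, the rest is the same computation as in Theorem~\ref{c2:t2.3}, read in reverse.
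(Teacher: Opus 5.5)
Your proof is correct and follows essentially the same route as the paper: both use \eqref{c2:eq2.5}--\eqref{c2:eq2.6} to get ${\bf C}\bar{\bf u}=0$ and then reduce the ESS inequality \eqref{c2:eq2.11} to the strict-maximum inequality $f(\bar{\bf u})>f({\bf u})$. The paper does this via the identity $(\bar{\bf u},{\bf Au})=({\bf B}\bar{\bf u},{\bf u})=f(\bar{\bf u})$, while you use the expansion ${\bf u}=\bar{\bf u}+\varepsilon{\bf w}$. Your version also makes explicit two points the paper leaves implicit: the interior maximum yields ${\bf B}\bar{\bf u}=f(\bar{\bf u}){\bf I}$ through the Lagrange condition, and $({\bf Bw},{\bf w})<0$ on $({\bf w},{\bf I})=0$ makes the inequality hold on all of $S_n$, which is what justifies the paper's ``for all ${\bf u}\in S_n$''.
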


\begin{proof}
	For any ${\bf u} \in S_{n}$,
	$$
	\Big({\bf \bar{u}, Au}\Big) = \Big({\bf \bar{u}, Bu}\Big) + \Big({\bf \bar{u}, Cu}\Big) = \Big({\bf \bar{u}, Bu}\Big) - \Big({\bf C\bar{u}, u}\Big) = \Big({\bf B\bar{u}, u}\Big).
	$$
	
	On the other hand,
	$$
	f({\bf \bar{u}}) = \Big({\bf B\bar{u}, u}\Big) = \Big({\bf A\bar{u}, \bar{u}}\Big).
	$$
	
	Since ${\bf \bar{u}} \in \inside S_{n}$ is a strict local maximum,
	$$
	\Big({\bf A\bar{u}, \bar{u}}\Big) > \Big({\bf u, Au}\Big) \quad \forall\, {\bf u} \in S_{n}.
	$$
	
	Therefore $\Big({\bf \bar{u}, Au}\Big) > \Big({\bf u, Au}\Big)$. 
\end{proof}

Regarding Nash equilibria: if conditions of Theorem~\ref{c2:t2.1} hold and the equilibrium ${\bf \bar{u}} \in \inside S_{n}$ is a local extremum of the fitness surface, then it is a Nash equilibrium. Indeed, condition \eqref{c2:eq2.3} holds and ${\bf A\bar{u} = B\bar{u}} = f({\bf \bar{u}}){\bf I}$, so \eqref{c2:eq2.10} is satisfied with equality for all ${\bf u} \in S_{n}$.

\begin{conseq}\label{c2:con2.1}
	If the equilibrium ${\bf \bar{u}} \in \inside S_{n}$ is a strict local maximum of the fitness surface for a replicator system with a circulant matrix, then ${\bf \bar{u}}$ is a globally stable equilibrium.
\end{conseq}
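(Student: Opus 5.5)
For a circulant $\mathbf{A}$ the interior rest point is the barycentre $\bar{\mathbf{u}}=(1/n,\ldots,1/n)$, and the plan is to show that the function $V$ of \eqref{c1:eq1.13} from Chapter~\ref{ch:1} is a strict Lyapunov function on all of $\inside S_n$. First, by the discussion following Theorem~\ref{c2:t2.1}, $\mathbf{B}\bar{\mathbf{u}}=f(\bar{\mathbf{u}})\mathbf{I}$ and $\mathbf{C}\bar{\mathbf{u}}=0$. This holds because every row of a circulant has the same sum $\sum a_i$, so $\mathbf{A}\mathbf{I}=\mathbf{A}^{\top}\mathbf{I}=(\sum a_i)\mathbf{I}$ and hence $\mathbf{C}\mathbf{I}=0$. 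Conditions \eqref{c2:eq2.5}--\eqref{c2:eq2.6} therefore hold, and by Theorem~\ref{c2:t2.4} the strict local maximum $\bar{\mathbf{u}}$ is evolutionarily stable. Independently of that route, the local-maximum condition \eqref{c2:eq2.4} gives $(\mathbf{B}\mathbf{w},\mathbf{w})<0$ for all $\mathbf{w}\neq0$ with $(\mathbf{w},\mathbf{I})=0$.

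The key step is to upgrade this local information to a global statement, which is possible because the fitness is quadratic. Put $\xi=\mathbf{u}-\bar{\mathbf{u}}$, so that $(\xi,\mathbf{I})=0$. Formula \eqref{c1:eq1.14} gives $\dot V=(\mathbf{A}\mathbf{u},\mathbf{u}-\bar{\mathbf{u}})$; I would recheck the sign convention, since the correct form is $\dot V=-(\mathbf{A}\mathbf{u},\mathbf{u}-\bar{\mathbf{u}})+\ldots$ up to terms in $f$ that vanish on the simplex. Expanding $\mathbf{A}\mathbf{u}=f(\bar{\mathbf{u}})\mathbf{I}+\mathbf{A}\xi$ kills the constant term because $(\mathbf{I},\xi)=0$. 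What remains is $(\mathbf{A}\xi,\xi)=(\mathbf{B}\xi,\xi)$, since $\mathbf{C}$ is skew. Negative definiteness of $\mathbf{B}$ on the hyperplane $\{(\xi,\mathbf{I})=0\}$ is a property of a quadratic form on a linear subspace, so it holds for every $\xi$ in that hyperplane, not only small ones. Hence, with the correct sign, $\dot V=(\mathbf{B}\xi,\xi)<0$ for every $\mathbf{u}\in\inside S_n$ with $\mathbf{u}\neq\bar{\mathbf{u}}$. This is precisely criterion \eqref{c1:eq1.15}--\eqref{c1:eq1.16} of Chapter~\ref{ch:1}, now in its strict form.

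To conclude, note that $V\to+\infty$ as $\mathbf{u}$ approaches $\bound S_n$ because all $\bar u_i>0$. Consequently the sublevel sets of $V$ are compact subsets of $\inside S_n$, and every interior trajectory stays in one of them. Since $\dot V<0$ away from $\bar{\mathbf{u}}$, LaSalle's principle (or a direct Lyapunov argument) yields $\mathbf{u}(t)\to\bar{\mathbf{u}}$ for every initial point in $\inside S_n$. Together with Lyapunov stability this gives global asymptotic stability on the interior. Boundary points are invariant faces and cannot be attracted, so "globally stable" should be understood relative to $\inside S_n$.

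I expect the main obstacle to be conceptual rather than computational: one must justify that a strict local maximum on the simplex gives strict negativity of $(\mathbf{B}\mathbf{w},\mathbf{w})$ on the whole tangent hyperplane. The second-order argument preceding \eqref{c2:eq2.4} naturally gives only $\leqslant 0$. I would handle this by observing that if $(\mathbf{B}\mathbf{w},\mathbf{w})=0$ for some admissible $\mathbf{w}\neq0$, then $f$ would be constant along the segment $\bar{\mathbf{u}}+\varepsilon\mathbf{w}$, because the linear term vanishes since $\mathbf{B}\bar{\mathbf{u}}\parallel\mathbf{I}$. That would contradict strictness of the maximum. Quadratic forms have no higher-order terms, so this closes the gap exactly.
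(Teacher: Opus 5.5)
Your proof is correct, but it takes a different route from the paper. The paper's proof is two citations long. For circulants, conditions \eqref{c2:eq2.5}--\eqref{c2:eq2.6} are asserted to hold, so Theorem~\ref{c2:t2.4} turns the strict local maximum into an evolutionarily stable state. The paper then quotes the known result that an interior ESS is globally stable.

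You instead prove global stability directly:
\begin{itemize}
\item with $\xi=\mathbf{u}-\bar{\mathbf{u}}$ you obtain $\dot V=(\mathbf{B}\xi,\xi)$ on all of $\inside S_n$;
\item you globalise the negativity of this form using the fact that $f$ is exactly quadratic, including the clean argument that strictness of the maximum upgrades $\leqslant 0$ to $<0$;
\item you finish with properness of $V$ and LaSalle.
\end{itemize}
On $S_n$, $V=-\ln\prod_i u_i^{\bar u_i}+\mathrm{const}$, so this is in effect an inline proof of the cited ESS theorem in this setting.

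Your route has three advantages. First, it is self-contained. Second, it bypasses Theorem~\ref{c2:t2.4} and the matrix $\mathbf{M}$ of \eqref{c2:eq2.5}--\eqref{c2:eq2.6} entirely. A non-singular $\mathbf{M}$ with $\mathbf{M}\mathbf{I}=0$ cannot literally exist, so your aside that these conditions ``hold'' should be read as $\mathbf{C}\bar{\mathbf{u}}=0$, which is all that is actually used. Third, it shows the circulant hypothesis is unnecessary. At any interior strict local maximum the Lagrange condition gives $\mathbf{B}\bar{\mathbf{u}}=\mu_0\mathbf{I}$, and the equilibrium condition gives $\mathbf{A}\bar{\mathbf{u}}=\bar f\,\mathbf{I}$. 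Pairing $\mathbf{C}\bar{\mathbf{u}}=(\bar f-\mu_0)\mathbf{I}$ with $\bar{\mathbf{u}}$ forces $\mu_0=\bar f$, hence $\mathbf{C}\bar{\mathbf{u}}=0$. Your argument therefore proves the statement for an arbitrary payoff matrix. The paper's route buys brevity and places the corollary inside the ESS framework of the section.

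One correction: there is no sign problem in \eqref{c1:eq1.14}. Indeed $\dot V=\sum_i(1-\bar u_i/u_i)\dot u_i=\sum_i(u_i-\bar u_i)\bigl((\mathbf{A}\mathbf{u})_i-f\bigr)=(\mathbf{A}\mathbf{u},\mathbf{u}-\bar{\mathbf{u}})$ on $S_n$, exactly as the paper states. The ``correct form'' with a minus sign that you suggest would give $\dot V=-(\mathbf{B}\xi,\xi)>0$ and break the argument. Your subsequent line already uses the right sign, so simply drop that hedge.
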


\begin{proof}
	As shown above, conditions \eqref{c2:eq2.5} and \eqref{c2:eq2.6} are always satisfied for circulant systems. It is known~\cite{c2:Hofbauer2003} that an evolutionarily stable state ${\bf \bar{u}} \in \inside S_{n}$ is a globally stable equilibrium. 
\end{proof}

\begin{theorem}\label{c2:t2.5}
	Suppose the unique equilibrium ${\bf \bar{u}} \in \inside S_{n}$ of system \eqref{c2:eq1.5} coincides with a local maximum of the fitness surface $\Sigma$. If this equilibrium is asymptotically stable, it is evolutionarily stable and furnishes the global maximum of $\Sigma$ on $S_{n}$. If ${\bf \bar{u}}$ is unstable, the global maximum of the fitness surface is attained on the boundary $\bound S_{n}$.
\end{theorem}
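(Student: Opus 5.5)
The argument rests on one observation. At an interior point $\bar{\mathbf u}$, the mean fitness restricted to the simplex is the quadratic function
$$
f(\bar{\mathbf u}+\xi)=f(\bar{\mathbf u})+2\bigl(\mathbf B\bar{\mathbf u},\xi\bigr)+\bigl(\mathbf B\xi,\xi\bigr),\qquad (\xi,\mathbf I)=0 .
$$
So a local maximum at an interior point forces $(\mathbf B\bar{\mathbf u},\xi)=0$ for all admissible $\xi$, which is condition \eqref{c2:eq2.3}. It also forces $(\mathbf B\xi,\xi)\leqslant 0$ on the whole hyperplane $(\xi,\mathbf I)=0$. Because $f$ is quadratic, this local condition is global: for every $\mathbf u\in S_n$ we get $f(\mathbf u)=f(\bar{\mathbf u})+(\mathbf B\xi,\xi)\leqslant f(\bar{\mathbf u})$ with $\xi=\mathbf u-\bar{\mathbf u}$. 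Hence any interior local maximum of $\Sigma$ is automatically a global maximum on $S_n$. I would prove this first, since it handles the "global maximum" part of the first claim independently of stability.

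For evolutionary stability, the plan is as follows. Since $\bar{\mathbf u}$ is both an equilibrium and a critical point of $f$, we have $\mathbf A\bar{\mathbf u}=f(\bar{\mathbf u})\mathbf I$ and $\mathbf B\bar{\mathbf u}=f(\bar{\mathbf u})\mathbf I$. Subtracting gives $\mathbf C\bar{\mathbf u}=0$, which is exactly the situation of Theorem~\ref{c2:t2.1}. Then $(\bar{\mathbf u},\mathbf A\mathbf u)=(\mathbf B\bar{\mathbf u},\mathbf u)-(\mathbf C\bar{\mathbf u},\mathbf u)=f(\bar{\mathbf u})$, and the ESS inequality \eqref{c2:eq2.11} reduces to $f(\mathbf u)<f(\bar{\mathbf u})$ for $\mathbf u\ne\bar{\mathbf u}$ near $\bar{\mathbf u}$, i.e.\ to $(\mathbf B\xi,\xi)<0$ for $\xi\ne0$ with $(\xi,\mathbf I)=0$. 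This is the argument of Theorem~\ref{c2:t2.4}. The remaining task is to upgrade semidefiniteness to strict negativity using asymptotic stability.

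I expect this upgrade to be the main obstacle, and I would use the Lyapunov function $V$ of \eqref{c1:eq1.13}. Its derivative is $\dot V=(\mathbf A\mathbf u,\mathbf u-\bar{\mathbf u})$. With $\mathbf C\bar{\mathbf u}=0$ and $(\mathbf C\xi,\xi)=0$, this equals $(\mathbf B\xi,\xi)\leqslant0$, so $\bar{\mathbf u}$ is already Lyapunov stable. Suppose some $\xi_0\ne0$ in $\{(\xi,\mathbf I)=0\}$ has $(\mathbf B\xi_0,\xi_0)=0$. Then $\xi_0$ lies in the kernel of the restricted form, and in fact $\mathbf B\xi_0\parallel\mathbf I$. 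Along the segment $\bar{\mathbf u}+s\xi_0$, the mean fitness is then constant and $\dot V=0$. I would then show that near $\bar{\mathbf u}$ this degenerate direction produces a continuum of equilibria, or at least an invariant set on which $V$ is constant. Either outcome contradicts asymptotic stability, or contradicts uniqueness of the interior equilibrium. If a full invariance argument proves awkward, I would fall back on the weaker but sufficient statement that a strict local maximum is what is meant in the hypothesis; strictness of the maximum directly gives $(\mathbf B\xi,\xi)<0$.

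For the second claim I would argue by contraposition. If the global maximum of $\Sigma$ were attained at an interior point, the first paragraph shows that point satisfies $\mathbf B\mathbf u=\mu_0\mathbf I$ with $(\mathbf B\xi,\xi)\leqslant0$ on the hyperplane. By the uniqueness hypothesis this point is $\bar{\mathbf u}$ itself, since the local maximum of $\Sigma$ coincides with the unique equilibrium. The Lyapunov computation then gives $\dot V\leqslant0$, so $\bar{\mathbf u}$ is stable, contradicting instability. Hence the maximum of the continuous function $f$ on the compact simplex must lie on $\partial S_n$.
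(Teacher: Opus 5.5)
\noindent Your plan to upgrade $(\mathbf B\xi,\xi)\leqslant 0$ to strict negativity on $\{(\xi,\mathbf I)=0\}$ using asymptotic stability is the step that fails. A degenerate direction $\xi_0$ gives neither a continuum of equilibria nor an invariant level set of $V$. From $\mathbf B\xi_0=c\,\mathbf I$ you get $c=(\mathbf B\xi_0,\bar{\mathbf u})=(\xi_0,\mathbf B\bar{\mathbf u})=0$. So on the line $\bar{\mathbf u}+s\xi_0$ the vector field is $s\,\diag(\mathbf u)\,\mathbf C\xi_0$, and the skew part typically pushes trajectories off that line. LaSalle then gives asymptotic stability even though the form is degenerate. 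In fact the ESS claim is false for a non-strict maximum. Take $n=3$ and
\[
\mathbf A=\begin{pmatrix}-1&0&1\\-2&-1&3\\3&1&-4\end{pmatrix}=\mathbf B+\mathbf C,\qquad
\mathbf B=-\eta\eta^{\top},\quad \eta=(1,1,-2)^{\top},\qquad
\mathbf C=\begin{pmatrix}0&1&-1\\-1&0&1\\1&-1&0\end{pmatrix}.
\]
\begin{itemize}
\item On $S_3$ one has $f(\mathbf u)=-(1-3u_3)^2$. Hence $\bar{\mathbf u}=(\tfrac13,\tfrac13,\tfrac13)$ is a local (even global) maximum of $\Sigma$, but not a strict one, since $f\equiv 0$ on the segment $u_3=\tfrac13$.
\item All row and column sums of $\mathbf A$ vanish and $\operatorname{rank}\mathbf A=2$. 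So $\mathbf A\mathbf u=\lambda\mathbf I$ forces $3\lambda=(\mathbf A^{\top}\mathbf I,\mathbf u)=0$ and $\mathbf u\in\operatorname{span}\mathbf I$. Thus $\bar{\mathbf u}$ is the unique interior equilibrium.
\item Since $(\mathbf A+\mathbf A^{\top})\bar{\mathbf u}=0$, the Jacobian at $\bar{\mathbf u}$ is $\tfrac13\mathbf A$. Its eigenvalues on the invariant tangent plane are $(-3\pm\sqrt{6})/3<0$, so $\bar{\mathbf u}$ is a hyperbolic sink. Here $\mathbf C(1,-1,0)^{\top}=(-1,-1,2)^{\top}$ is exactly what drives orbits off the segment.
\item Yet $(\bar{\mathbf u},\mathbf A\mathbf u)=(\mathbf A^{\top}\bar{\mathbf u},\mathbf u)=0=(\mathbf u,\mathbf A\mathbf u)$ along the segment, so $\bar{\mathbf u}$ is not evolutionarily stable.
\end{itemize}

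So your fallback is not a convenience; it is required, because the statement holds only for a \emph{strict} local maximum. With strictness, looking along the lines $\bar{\mathbf u}+s\xi$ gives $(\mathbf B\xi,\xi)<0$. The identity $(\bar{\mathbf u},\mathbf A\mathbf u)-(\mathbf u,\mathbf A\mathbf u)=-(\mathbf B\xi,\xi)$ holds at any interior equilibrium, so evolutionary stability follows without using asymptotic stability at all. The paper's proof makes the same tacit assumption when it simply asserts $\dot V<0$. Your quadratic argument that an interior local maximum is a global maximum on $S_n$ is correct, and cleaner than the paper's route through $\dot V$.

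In the unstable case, the step ``by uniqueness this point is $\bar{\mathbf u}$'' is unjustified: interior maximisers of $f$ need not be equilibria, and in the example the whole segment maximises $f$. It is also unnecessary. Since $\bar{\mathbf u}$ is a local maximum, $\dot V=(\mathbf B\xi,\xi)\leqslant 0$, so $\bar{\mathbf u}$ is Lyapunov stable and the unstable alternative cannot occur. State plainly that the second assertion is vacuous. By your own first paragraph, $\bar{\mathbf u}$ itself always realises the global maximum, so the paper's derivation of $f(\bar{\mathbf u})<\max f$ in that case cannot be taken at face value.
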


\begin{proof}
	Suppose the asymptotically stable equilibrium coincides with a local maximum. Consider the Lyapunov function
	$$
	V({\bf u}) = \sum\limits_{i = 1}^{n}\Big[(u_{i} - \bar{u}_{i}) - \bar{u}_{i}\ln{\frac{u_{i}}{\bar{u}_{i}}}\Big] > 0, \quad {\bf u} \neq {\bf \bar{u}}, \quad V({\bf \bar{u}}) = 0.
	$$
	Then
	\begin{align*}
		\dot{V}({\bf u}) = \sum\limits_{i = 1}^{n}\dot{u}_{i}\Big(1 - \frac{\bar{u}_{i}}{u_{i}}\Big) = \sum\limits_{i = 1}^{n}\Big[\Big({\bf Au}\Big)_{i} - \Big({\bf Au, u}\Big)\Big](u_{i} - \bar{u}_{i}) = \\ = \Big({\bf Au, u - \bar{u}}\Big) < 0, \quad {\bf u} \neq {\bf \bar{u}}.
	\end{align*}
	This inequality implies that ${\bf \bar{u}} \in \inside S_{n}$ is evolutionarily stable and hence globally stable in $\inside S_{n}$ by Corollary~\ref{c2:con2.1}.
	
	Set ${\bf u} = {\bf \bar{u} + \xi}$. Since ${\bf \bar{u}, u} \in S_{n}$,
	$$
	\Big({\bf \xi, I}\Big) = \Big({\bf u, I}\Big) - \Big({\bf \bar{u}, I}\Big) = 0, \quad {\bf I} = (1, 1, \ldots, 1) \in \R^{n}.
	$$
    Using ${\bf A\bar{u}} = {\bf I}\bar{f}$, we obtain
	$$
	\Big({\bf Au, u - \bar{u}}\Big) = \Bigg({\bf A}\Big({\bf u - \bar{u}}\Big), {\bf u - \bar{u}}\Bigg) = \Big({\bf A\xi, \xi}\Big).
	$$
	
	Global stability therefore gives
	$$
	\Big({\bf A\xi, \xi}\Big) = \Big({\bf B\xi, \xi}\Big) < 0
	$$ for all ${\bf \xi} = {\bf u - \bar{u}}$ with ${\bf \bar{u}, u} \in \inside S_{n}$. This shows that the fitness surface attains its global maximum in $\inside S_{n}$, and by continuity and boundedness of $f({\bf u})$ this property extends to all of $S_{n}$.
	
	If the unique equilibrium ${\bf \bar{u}} \in \inside S_{n}$ is unstable, then
	$$
	\Big({\bf \bar{u}, Au}\Big) < \Big({\bf Au, u}\Big)
	$$
	in some neighborhood $U_{{\bf \bar{u}}}$. Since
	$$
	\Big({\bf Au, \bar{u}}\Big) \leqslant \max\limits_{{\bf u} \in S_{n}}\Big({\bf Au, u}\Big), \quad {\bf u} \in U_{{\bf \bar{u}}},
	$$
	and by the conditions of the theorem 
    $${\bf C} = {\bf MB},\quad {\bf MI} = 0, \quad{\bf B\bar{u}} = f({\bf \bar{u}}){\bf I},$$ we have
	\begin{align*}
		\Big({\bf \bar{u}, Au}\Big) = \Big({\bf \bar{u}, (B + C)u}\Big) = \Big({\bf \bar{u}, Bu}\Big) + \Big({\bf \bar{u}, Cu}\Big) = \\ = \Big({\bf B\bar{u}, u}\Big) - \Big({\bf C\bar{u}, u}\Big) = f({\bf \bar{u}})\Big({\bf I, u}\Big) - \Big({\bf MB\bar{u}, u}\Big) = f({\bf \bar{u}}).
	\end{align*}
	Therefore,
	$$
	f({\bf \bar{u}}) < \max\limits_{{\bf u} \in S_{n}}\Big({\bf Au, u}\Big).
	$$
	
	The maximum of the fitness surface is thus not attained at ${\bf \bar{u}} \in \inside S_{n}$. By Theorem~\ref{c2:t2.3}, it cannot be attained at any interior local maximum, so the global maximum of $\Sigma$ is attained on $\bound S_{n}$. 
\end{proof}

\begin{conseq}\label{c2:conseq2.2} \cite{c2:Drozhzhin2021} 
	If the equilibrium of a replicator system with a circulant matrix is asymptotically stable, it is evolutionarily stable and furnishes the global maximum of the fitness surface $\Sigma$ on $S_{n}$. If it is unstable, the global maximum is attained on the boundary of the simplex.
\end{conseq}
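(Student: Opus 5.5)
The corollary is a specialisation of Theorem~\ref{c2:t2.5} to circulant fitness landscapes. The plan is to check that every hypothesis of that theorem holds automatically when ${\bf A}$ is circulant, and then read off both of its conclusions.

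\textbf{Structure of the interior equilibrium.} Since ${\bf A}$ is circulant, so are ${\bf A}^{T}$, ${\bf B}$ and ${\bf C}$, and every circulant matrix has ${\bf I}$ as an eigenvector. Hence
\[
{\bf A}{\bf I}=\Bigl(\sum_{i=1}^{n} a_i\Bigr){\bf I},
\qquad
{\bf C}{\bf I}=\tfrac12\bigl({\bf A}-{\bf A}^{T}\bigr){\bf I}=0,
\]
because ${\bf A}$ and ${\bf A}^{T}$ have the same row sums. It follows that $\bar{\bf u}=n^{-1}{\bf I}$ is an interior equilibrium satisfying ${\bf B}\bar{\bf u}=f(\bar{\bf u}){\bf I}$ and ${\bf C}\bar{\bf u}=0$. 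This is the circulant computation already given after Theorem~\ref{c2:t2.1}, and it supplies conditions \eqref{c2:eq2.5}--\eqref{c2:eq2.6}. I take it to be the equilibrium meant in the statement; uniqueness is assumed, or holds generically when ${\bf A}$ is non-singular.

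\textbf{Coincidence with a critical point of $\Sigma$.} Because ${\bf B}\bar{\bf u}=\mu_0{\bf I}$ with $\mu_0=f(\bar{\bf u})$, the point $\bar{\bf u}$ satisfies the Lagrange condition \eqref{c2:eq2.3}, so it is a critical point of $f$ restricted to $S_n$. Whether it is a maximum is governed by the sign of $({\bf B}\xi,\xi)$ on the subspace $(\xi,{\bf I})=0$.

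\textbf{Stable case.} This step is the main obstacle. Asymptotic stability of $\bar{\bf u}$ has to be turned into the statement that it is a local maximum, i.e.\ $({\bf B}\xi,\xi)<0$ on ${\bf I}^{\perp}$. I would argue through the linearisation. At $\bar{\bf u}=n^{-1}{\bf I}$ the Jacobian restricted to ${\bf I}^{\perp}$ is $n^{-1}{\bf A}$; the projection term vanishes because ${\bf A}^{T}{\bf I}$ is parallel to ${\bf I}$. All circulants are simultaneously diagonalised by the Fourier basis. The eigenvalues of ${\bf B}$ on ${\bf I}^{\perp}$ are exactly the real parts of the corresponding eigenvalues of ${\bf A}$, while ${\bf C}$ contributes the imaginary parts. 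Asymptotic stability therefore means that all these real parts are nonpositive. Apart from the degenerate zero-real-part case, which must be excluded or handled separately, this gives strict negativity, so ${\bf B}$ is negative definite on ${\bf I}^{\perp}$.

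Negative definiteness on ${\bf I}^{\perp}$ shows that $\bar{\bf u}$ is a strict local maximum. Theorem~\ref{c2:t2.5}, or equivalently Theorem~\ref{c2:t2.4} together with Corollary~\ref{c2:con2.1}, then gives evolutionary stability. Finally, for any ${\bf u}\in S_n$ write $\xi={\bf u}-\bar{\bf u}$. Using ${\bf B}\bar{\bf u}=f(\bar{\bf u}){\bf I}$ and $(\xi,{\bf I})=0$,
\[
f({\bf u})=f(\bar{\bf u})+({\bf B}\xi,\xi)\leqslant f(\bar{\bf u}),
\]
so $\bar{\bf u}$ is the global maximum of $\Sigma$.

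\textbf{Unstable case.} Suppose $\bar{\bf u}$ is unstable. By the same Fourier argument, some eigenvalue of ${\bf B}$ on ${\bf I}^{\perp}$ is positive, so $\bar{\bf u}$ is not a maximum of $f$ on $S_n$. Any interior local maximum would have to satisfy \eqref{c2:eq2.3}, which for non-singular ${\bf B}$ has the unique solution $\bar{\bf u}$. If ${\bf B}$ is singular, the paper's remark places all extrema on the boundary. In either case the global maximum lies on $\bound S_n$, in agreement with the second part of Theorem~\ref{c2:t2.5}.
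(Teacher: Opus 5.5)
The non-degenerate case of your proposal is correct. It is also sharper than the paper's route. The paper obtains the corollary by inserting the circulant identities ${\bf C}\bar{\bf u}=0$, ${\bf B}\bar{\bf u}=f(\bar{\bf u}){\bf I}$ into Theorem~\ref{c2:t2.5}, whose proof simply asserts $\dot V=({\bf Au},{\bf u}-\bar{\bf u})<0$ from asymptotic stability. Your Fourier computation shows what stability actually gives: since $\dot V=({\bf B}\xi,\xi)$ and the eigenvalues of ${\bf B}$ on ${\bf I}^{\perp}$ are $\operatorname{Re}\lambda_k({\bf A})$, $k=1,\dots,n-1$, one only gets $({\bf B}\xi,\xi)\leqslant 0$.

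\textbf{The gap.} The genuine gap is the degenerate case you set aside, and it cannot be handled separately, because there the evolutionary-stability claim is false. Take the $n=4$ hypercycle, the circulant with first row $(0,0,0,1)$.
\begin{itemize}
\item Its Jacobian on ${\bf I}^{\perp}$ has eigenvalues $\pm i/4$ and $-1/4$.
\item Yet $\bar{\bf u}=\tfrac14{\bf I}$ is asymptotically stable. With $s=u_1+u_3$ one has $f=s(1-s)$ and $\dot V=-\tfrac14(1-2s)^2$, and the largest invariant subset of $\{s=\tfrac12\}$ is $\{\bar{\bf u}\}$ (cf.\ Chapter~\ref{ch:1}).
\item However, $(\bar{\bf u},{\bf Au})=\tfrac14=({\bf u},{\bf Au})$ on the whole two-dimensional slice $\{u_1+u_3=\tfrac12\}\cap S_4$ through $\bar{\bf u}$. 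So $\bar{\bf u}$ is neither evolutionarily stable in the sense of \eqref{c2:eq2.11} nor a strict local maximum.
\end{itemize}
The first sentence of the corollary therefore needs the extra hypothesis $\operatorname{Re}\lambda_k({\bf A})<0$ for $k=1,\dots,n-1$. Without it, your identity $f({\bf u})=f(\bar{\bf u})+({\bf B}\xi,\xi)$ still shows that $\bar{\bf u}$ is a non-strict global maximum, with only the weak inequality $(\bar{\bf u},{\bf Au})\geqslant({\bf u},{\bf Au})$.

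In the non-degenerate case you do not need Theorems~\ref{c2:t2.4}--\ref{c2:t2.5} at all. The chain $(\bar{\bf u},{\bf Au})=n^{-1}({\bf A}^{T}{\bf I},{\bf u})=f(\bar{\bf u})>f({\bf u})$ for every ${\bf u}\neq\bar{\bf u}$ gives evolutionary stability directly.

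\textbf{The unstable case.} Two smaller repairs are needed here.
\begin{itemize}
\item Linearisation alone does not give a positive eigenvalue of ${\bf B}$ on ${\bf I}^{\perp}$, since instability can come from centre directions. Argue with the function $V$ of \eqref{c1:eq1.13} instead. If all $\operatorname{Re}\lambda_k\leqslant 0$, then $\dot V=({\bf B}\xi,\xi)\leqslant 0$ and $\bar{\bf u}$ is Lyapunov stable. So instability forces $({\bf B}\eta,\eta)>0$ for some $\eta\in{\bf I}^{\perp}$.
\item Do not rely on the remark that a singular ${\bf B}$ pushes all extrema to the boundary. It is false in general: the $n=4$ hypercycle above has singular ${\bf B}$ and a whole slice of interior maxima. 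You do not need it. Since $f$ is quadratic, $t\mapsto f({\bf u}^{*}+t\eta)$ has second derivative $2({\bf B}\eta,\eta)>0$ at every interior point ${\bf u}^{*}$. Hence no interior point is a local maximum, and the maximum of $f$ over the compact set $S_n$ lies on $\bound S_n$.
\end{itemize}
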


\medskip
\textbf{Example 5. A circulant replicator system.}

Consider the matrices ${\bf A}$ and ${\bf B}$:

\begin{equation*}
	{\bf A} = 
	\begin{pmatrix}
		0 & a_{1} & a_{2} & a_{3}\\
		a_{3} & 0 & a_{1} & a_{2}\\
		a_{2} & a_{3} & 0 & a_{1}\\
		a_{1} & a_{2} & a_{3} & 0\\
	\end{pmatrix}\!,
	\quad
	{\bf B} = 
	\begin{pmatrix}
		0 & \alpha & \beta & \alpha\\
		\alpha & 0 & \alpha & \beta\\
		\beta & \alpha & 0 & \alpha\\
		\alpha & \beta & \alpha & 0\\
	\end{pmatrix}\!, 
\end{equation*} 
$$	\alpha = \frac{a_{1} + a_{3}}{2}, \quad \beta = a_{2}.
$$
The eigenvalues of ${\bf B}$ are
$$
\lambda_{1} = 2\alpha + \beta, \quad \lambda_{2, 3} = -\beta, \quad \lambda_{4} = \beta - 2\alpha.
$$
The first eigenvalue corresponds to the eigenvector ${\bf u}^{1} = (1,\, 1,\, 1,\, 1)$, which is orthogonal to the simplex $S_{4}$ and is thus excluded from consideration.

For $\beta > 0$ and $\beta < 2\alpha$, the equilibrium ${\bf \bar{u}} = (1/4,\, 1/4,\, 1/4,\, 1/4)$ is asymptotically stable, and the maximum of the fitness surface is
$$
f({\bf \bar{u}}) = \frac{2\alpha + \beta}{4}.
$$

For $\beta < 0$, the maximum of $\Sigma$ is attained at the boundary points of the simplex $S_{4}$:\, $p_{1} = (1/2,\, 1/2,\, 0,\, 0)$, $p_{2} = (0,\, 1/2,\, 1/2,\, 0)$, $p_{3} = (1/2,\, 0,\, 0,\, 1/2)$, $p_{4} = (0,\, 0,\, 1/2,\, 1/2)$:
$$
f(p_{i}) = \frac{\alpha}{2} > \frac{2\alpha + \beta}{4}, \quad \beta < 0.
$$
\hfill$\square$

\bigskip
We now turn to Eigen's quasispecies model. In this case, the fitness surface is a linear function defined on the convex set $S_{l}$:
$$
{\bf \bar{w}}({\bf p}(t)) = \sum\limits_{i = 1}^{n}w_{i}p_{i}(t), \quad {\bf p}(t) \in S_{l}.
$$
Consequently, both the maximum and minimum of ${\bf \bar{w}}(t)$ are attained at the vertices of $S_{l}$.

Assuming $w_{1} > w_{2} \geqslant \ldots \geqslant w_{l}$,
$$
\max_{{\bf p} \in S_{l}}\sum\limits_{i = 1}^{l}w_{i}p_{i} = w_{1}, \quad \min_{{\bf p} \in S_{l}}\sum\limits_{i = 1}^{l}w_{i}p_{i} = w_{l}.
$$
Here, ${\bf p}^{1} = (1,\, 0,\, \ldots,\, 0)$ and ${\bf p}^{l} = (0,\, \ldots,\, 0,\, 1)$ are the vertices of the simplex at which the maximum and minimum of the fitness surface are respectively attained.

From the Eigen model \cite{c2:Eigen1971}, one derives the following expression for the rate of change of the fitness surface along system trajectories:
\begin{equation}
	{\bf \dot{\bar{w}}}({\bf p}(t)) = \sum\limits_{i = 1}^{l}w_{i}\dot{p}_{i}(t) = \Big({\bf QWp}(t), {\bf w}\Big) - {\bf \bar{w}}^{2}(t).
	\label{c2:eq2.12}
\end{equation}
From this expression, we can determine the sign of $\dot{\bar{w}}$ along trajectories emanating from the vertices ${\bf p}^{1}$ and ${\bf p}^{l}$ of $S_{l}$:
\begin{align*}
    {\bf \dot{\bar{w}}}({\bf p}^{1}) 
        &= w_{1}\sum\limits_{j = 1}^{l}q_{1j}w_{j} - w_{1}^{2} 
        < w_{1}^{2}\sum\limits_{j = 1}^{l}q_{1j} - w_{1}^{2} = 0,\\[6pt]
    {\bf \dot{\bar{w}}}({\bf p}^{l}) 
        &= w_{l}\sum\limits_{j = 1}^{l}q_{lj}w_{j} - w_{l}^{2} 
        > w_{l}^{2}\sum\limits_{j = 1}^{l}q_{lj} - w_{l}^{2} = 0.
\end{align*}

By continuity, these inequalities remain in force for trajectories starting at points ${\bf p} \in \inside S_{l}$ in a neighborhood of ${\bf p}^{1}$ or ${\bf p}^{l}$. Consequently, along trajectories $\gamma_{t}$ of the quasispecies system, mean fitness may either increase or decrease. This does not, however, mean that the equilibrium ${\bf \bar{p}} \in \inside S_{l}$ fails to be extremal for trajectories $\gamma_{t}$ as $t \to +\infty$. In particular, from formula \eqref{c2:eq2.12} it follows that
\begin{align*}
	\lim\limits_{t \to +\infty}{\bf \dot{\bar{w}}}({\bf p}(t)) = \Big({\bf QW\bar{p}, w}\Big) - {\bf \bar{w}}^{2} = 0, \\
    {\bf \bar{w}} = \lim\limits_{t \to +\infty}{\bf \bar{w}(p}(t)),\quad {\bf \bar{p}} = \lim\limits_{t \to +\infty}{\bf p}(t).
\end{align*} 
Direct computation shows that
$$
\lim\limits_{t \to +\infty}\frac{d^s}{dt^{s}}{\bf \bar{w}(p}(t)) = 0, \quad s = 2,\, 3,\, \ldots,
$$
so ${\bf \bar{w}(p}(t)) \to {\bf \bar{w}}$ exponentially as $t \to +\infty$.

\clearpage

\section{The Lotka--Volterra System}\label{c2:section:2.5}
Consider the general Lotka--Volterra system in $\R_{+}^{n}$:
\begin{align}
	\frac{du_{i}}{dt} = u_{i}\Big(r_{i} - \Big({\bf Au}\Big)_{i}\Big), \quad i = \overline{1, n}, \\
    \Big({\bf Au}\Big)_{i} = \sum\limits_{j = 1}^{n}a_{ij}u_{j}, \quad {\bf r} = (r_{1},\, r_{2},\, \ldots,\, r_{n}).
	\label{c2:eq2.13}
\end{align}
The natural question is what function plays the role of mean fitness for this system. This function was proposed in~\cite{c2:Svirezhev1978}:
$$
F({\bf u}) = \sum\limits_{i = 1}^{n}r_{i}u_{i} - \frac{1}{2}\Big({\bf Au, u}\Big).
$$
A direct calculation gives
\begin{equation}
	\dot{F}({\bf u}(t)) = \sum\limits_{i = 1}^{n}u_{i}\Big(r_{i} - \Big({\bf Bu}\Big)_{i}\Big)^{2} + \sum\limits_{i = 1}^{n}\Big({\bf Cu}\Big)_{i}\Big(\Big({\bf Bu}\Big)_{i} - r_{i}\Big)u_{i},
	\label{c2:eq2.14}
\end{equation}
using decomposition \eqref{c2:eq2.1} of ${\bf A}$.

Formula \eqref{c2:eq2.14} is the analogue of \eqref{c2:eq2.2}. In particular, when ${\bf A}$ is symmetric, $\dot{F}({\bf u}(t)) \geqslant 0$, just as for replicator systems.

\begin{theorem}\label{c2:t2.6}
	Let ${\bf \bar{u}} \in \inside \mathbb{R}_+^n$ be the unique stable equilibrium of system \eqref{c2:eq2.13}. This equilibrium furnishes the global maximum of the surface $z=F({\bf u})$ on $\mathbb{R}_+^n$ if and only if the matrix ${\bf C}$ in decomposition \eqref{c2:eq2.1} admits the representation
	\begin{equation}\label{c2:eq2.15}
		{\bf C=MB}, \quad |{\bf M}| \neq 0, \quad {\bf Mr=0}.
	\end{equation}
\end{theorem}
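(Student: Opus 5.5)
I will mirror the proof of Theorem~\ref{c2:t2.1}, with the vector $\mathbf{r}$ playing the role of $f(\bar{\mathbf{u}})\mathbf{I}$, and then use the global stability hypothesis to pass from ``critical point'' to ``global maximum'' in the spirit of Theorem~\ref{c2:t2.5}.

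\emph{Equilibrium versus critical point.} An interior equilibrium of \eqref{c2:eq2.13} satisfies $\mathbf{A}\bar{\mathbf{u}}=\mathbf{r}$, that is, $\mathbf{B}\bar{\mathbf{u}}+\mathbf{C}\bar{\mathbf{u}}=\mathbf{r}$. Since $(\mathbf{C}\mathbf{u},\mathbf{u})=0$, we have $F(\mathbf{u})=(\mathbf{r},\mathbf{u})-\tfrac12(\mathbf{B}\mathbf{u},\mathbf{u})$. Hence $\nabla F(\mathbf{u})=\mathbf{r}-\mathbf{B}\mathbf{u}$, and an interior critical point of $F$ solves $\mathbf{B}\mathbf{u}=\mathbf{r}$. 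So $\bar{\mathbf{u}}$ is a critical point of $F$ if and only if $\mathbf{C}\bar{\mathbf{u}}=0$.

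\emph{Sufficiency.} Assume \eqref{c2:eq2.15}. I will show directly that $\bar{\mathbf{u}}:=\mathbf{B}^{-1}\mathbf{r}$ (with $\mathbf{B}$ nonsingular) satisfies $\mathbf{C}\bar{\mathbf{u}}=\mathbf{M}\mathbf{B}\bar{\mathbf{u}}=\mathbf{M}\mathbf{r}=0$. It is therefore both an equilibrium and a critical point, and by uniqueness it is the given stable equilibrium. To get a global maximum I will use the Lyapunov function $V(\mathbf{u})=\sum_i[(u_i-\bar u_i)-\bar u_i\ln(u_i/\bar u_i)]$ as in Theorem~\ref{c2:t2.5}. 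With $\boldsymbol{\xi}=\mathbf{u}-\bar{\mathbf{u}}$ and $\mathbf{r}=\mathbf{A}\bar{\mathbf{u}}$ one computes
\[
\dot V=\sum_i(u_i-\bar u_i)\bigl(r_i-(\mathbf{A}\mathbf{u})_i\bigr)=-(\mathbf{A}\boldsymbol{\xi},\boldsymbol{\xi})=-(\mathbf{B}\boldsymbol{\xi},\boldsymbol{\xi}).
\]
Since $\mathbf{C}\bar{\mathbf{u}}=0$, the exact Taylor expansion of the quadratic $F$ gives
\[
F(\bar{\mathbf{u}}+\boldsymbol{\xi})-F(\bar{\mathbf{u}})=-\tfrac12(\mathbf{B}\boldsymbol{\xi},\boldsymbol{\xi}).
\]
The main obstacle is to turn stability into $(\mathbf{B}\boldsymbol{\xi},\boldsymbol{\xi})\geqslant0$ for all $\boldsymbol{\xi}$. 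My plan is the following. Stability of the unique interior equilibrium, read through the Lyapunov argument in the same way the paper does in Theorem~\ref{c2:t2.5}, yields $\dot V<0$ off $\bar{\mathbf{u}}$. Equivalently, $\mathbf{B}$ restricted to the cone $\{\boldsymbol{\xi}:\bar{\mathbf{u}}+\boldsymbol{\xi}\in\R_+^n\}$ is positive. Because $\bar{\mathbf{u}}$ is interior, this cone spans all of $\R^n$ and contains a neighbourhood of $0$. Homogeneity of the quadratic form then gives positive definiteness of $\mathbf{B}$ on all of $\R^n$. From this, $F(\mathbf{u})<F(\bar{\mathbf{u}})$ for every $\mathbf{u}\in\R_+^n$ with $\mathbf{u}\neq\bar{\mathbf{u}}$, so the maximum is global. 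This is the step I would check most carefully. If necessary I would also bring in the linearisation $-\mathrm{diag}(\bar{\mathbf{u}})\mathbf{A}$ at $\bar{\mathbf{u}}$.

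\emph{Necessity.} Suppose $\bar{\mathbf{u}}$ is the global maximum of $F$ on $\R_+^n$. Because it is interior, it is a critical point, so $\mathbf{B}\bar{\mathbf{u}}=\mathbf{r}$ and, from the first step, $\mathbf{C}\bar{\mathbf{u}}=0$. Following Theorem~\ref{c2:t2.1}, nondegeneracy of the maximum (which I will take as $\mathbf{B}$ nonsingular, so $\det\mathbf{B}\neq0$) lets me define $\mathbf{M}:=\mathbf{C}\mathbf{B}^{-1}$, which gives $\mathbf{C}=\mathbf{M}\mathbf{B}$. Then $\mathbf{M}\mathbf{r}=\mathbf{C}\mathbf{B}^{-1}\mathbf{r}=\mathbf{C}\bar{\mathbf{u}}=0$. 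The condition $|\mathbf{M}|\neq0$ is inherited in the same loose sense as in Theorem~\ref{c2:t2.1}. I will note this explicitly: strictly, $\mathbf{M}\mathbf{r}=0$ with $\mathbf{r}\neq0$ forces $\mathbf{M}$ to be singular. So I will interpret the determinant condition as nonsingularity of $\mathbf{B}$, which is what the argument actually uses.
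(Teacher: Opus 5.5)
Your skeleton matches the paper's. You show $\bar{\mathbf u}$ is a critical point of $F$ iff $\mathbf C\bar{\mathbf u}=0$, the Hessian of $F$ is $-\mathbf B$, and necessity goes through $\mathbf M=\mathbf C\mathbf B^{-1}$ as in Theorem~\ref{c2:t2.1}. You are also right that $|\mathbf M|\neq0$ contradicts $\mathbf M\mathbf r=0$ with $\mathbf r\neq0$, so reading the condition as $|\mathbf B|\neq0$ is the only workable option. A minor point: identify $\mathbf B^{-1}\mathbf r$ with $\bar{\mathbf u}$ through $\mathbf A\mathbf B^{-1}\mathbf r=\mathbf r=\mathbf A\bar{\mathbf u}$ and $\det\mathbf A\neq0$. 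Uniqueness of the \emph{stable} equilibrium does not rule out other equilibria, so it cannot do this job.

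The genuine gap is the step you flagged, ``stability yields $\dot V<0$ off $\bar{\mathbf u}$''. Lyapunov's theorem runs only one way: $\dot V<0$ gives stability. Stability tells you nothing about the sign of $\dot V=-(\mathbf B\boldsymbol{\xi},\boldsymbol{\xi})$ for this particular $V$. Nor does the linearisation help: $-\diag(\bar{\mathbf u})\mathbf A$ controls eigenvalues of a non-symmetric matrix, not the sign of its symmetric part. The paper does not derive this step either. It imports ``$\bar{\mathbf u}$ stable $\Rightarrow$ $\mathbf B$ positive definite'' from Svirezhev and then concludes by strict concavity, which is what your Taylor identity does.

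The step is in fact false, and with it the sufficiency direction under your reading. Take $n=3$, $\mathbf r=(18,18,18)$ and
\[
\mathbf A=\begin{pmatrix}11&11&-4\\-13&11&20\\20&-4&2\end{pmatrix},\quad
\mathbf B=\begin{pmatrix}11&-1&8\\-1&11&8\\8&8&2\end{pmatrix},\quad
\mathbf C=12\begin{pmatrix}0&1&-1\\-1&0&1\\1&-1&0\end{pmatrix}.
\]
\begin{itemize}
\item The point $\bar{\mathbf u}=(1,1,1)$ satisfies $\mathbf C\bar{\mathbf u}=0$ and $\mathbf B\bar{\mathbf u}=\mathbf r$. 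Hence $\mathbf M=\mathbf C\mathbf B^{-1}$ gives $\mathbf C=\mathbf M\mathbf B$ and $\mathbf M\mathbf r=0$.
\item The spectrum of $\mathbf A$ is $\{18,\ 3\pm i\sqrt{351}\}$, so the Jacobian $-\mathbf A$ at $\bar{\mathbf u}$ is Hurwitz.
\item The other equilibria in $\R^3_+$ are the origin (a source) and $(18/11,0,0)$, $(0,18/11,0)$, $(0,0,9)$. These three can be invaded by species $2$, $3$, $1$ respectively. So $\bar{\mathbf u}$ is the unique stable equilibrium.
\item Yet $\mathbf B$ has eigenvalues $18,12,-6$. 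For $\boldsymbol{\xi}=(1,1,-2)$ one gets $F(\bar{\mathbf u}+t\boldsymbol{\xi})=F(\bar{\mathbf u})+18t^2$, so $\bar{\mathbf u}$ is not even a local maximum of $F$.
\end{itemize}

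No argument from stability can close your gap. The implication the paper quotes cannot hold for mere stability; the true statement is Volterra's converse ($\mathbf B$ positive definite $\Rightarrow$ $\dot V<0$ $\Rightarrow$ global stability). The repair is to assume $\mathbf B$ positive definite. Then your identity $F(\bar{\mathbf u}+\boldsymbol{\xi})-F(\bar{\mathbf u})=-\tfrac12(\mathbf B\boldsymbol{\xi},\boldsymbol{\xi})$ finishes sufficiency at once, and stability becomes a consequence rather than a hypothesis.
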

\begin{proof}
	Assuming \eqref{c2:eq2.15} and ${\bf \bar{u}} \in \mathbb{R}_+^n$, we have 
    $${\bf A\bar{u}} = {\bf r} \text{ with } |{\bf A}| = 0.$$
	Since ${\bf \bar{u}} \in \inside \R_{+}^{n}$ is a stable equilibrium, the matrix ${\bf B}$ in \eqref{c2:eq2.1} is positive definite~\cite{c2:Svirezhev1978}.
	
	Condition \eqref{c2:eq2.15} is sufficient to ensure
	\begin{equation}
		{\bf B\bar{u}} = {\bf r}.
		\label{c2:eq2.16}
	\end{equation}
	The necessary condition for ${\bf \bar{u}}$ to be a critical point of $z = F({\bf u})$ is satisfied:
	$$
	\frac{\partial F({\bf u})}{\partial u_{k}}\Bigg|_{{\bf u} = {\bf \bar{u}}} = r_{k} - \Big({\bf B\bar{u}}\Big)_{k} = 0, \quad k = \overline{1, n}.
	$$
	The sufficient condition for a maximum is the negative definiteness of the Hessian $\Big(z_{kj}\Big)_{k, j = 1}^{n}$:
	$$
	z_{kj} = \frac{\partial^{2}F({\bf \bar{u}})}{\partial u_{k}\partial u_{j}} = -b_{kj}, \quad k, j = \overline{1, n}.
	$$
	
	This holds since ${\bf B}$ is positive definite. The surface $z = F({\bf u})$ is therefore strictly concave on $\R_{+}^{n}$, and the equilibrium ${\bf \bar{u}} \in \inside \R_{+}^{n}$ furnishes its global maximum.
	The necessity of \eqref{c2:eq2.15} is proved by the same argument as the analogous statement of Theorem~\ref{c2:t2.1} for replicator systems. 
\end{proof}

\begin{conseq}\label{c2:con2.4}
	Suppose condition \eqref{c2:eq2.15} holds. If the unique equilibrium ${\bf \bar{u}} \in \inside \R_{+}^{n}$ is unstable, the maximum of the surface $z = F({\bf u})$ is not attained at the equilibrium.
\end{conseq}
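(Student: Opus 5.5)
Under condition \eqref{c2:eq2.15}, Theorem~\ref{c2:t2.6} shows that an interior equilibrium $\bar{\bf u}$ is also a critical point of $F$. I will use the Hessian of $F$ and the stability analysis to show that this critical point cannot be a maximum. First, I show $\bar{\bf u}$ is critical. The equilibrium satisfies ${\bf A}\bar{\bf u}={\bf r}$, i.e.\ ${\bf B}\bar{\bf u}+{\bf C}\bar{\bf u}={\bf r}$. With ${\bf C}={\bf MB}$ this reads $({\bf E}+{\bf M}){\bf B}\bar{\bf u}={\bf r}$. As in the sufficiency part of Theorem~\ref{c2:t2.6}, the condition ${\bf Mr}=0$ then forces ${\bf C}\bar{\bf u}={\bf M}{\bf B}\bar{\bf u}=0$, so ${\bf B}\bar{\bf u}={\bf r}$. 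Hence $\nabla F(\bar{\bf u})={\bf r}-{\bf B}\bar{\bf u}=0$, and $\bar{\bf u}$ is a critical point of $F$ with constant Hessian $-{\bf B}$.

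Next I argue by contradiction. Suppose the maximum of $F$ over $\R_+^n$ is attained at $\bar{\bf u}$. Since $F$ is quadratic and $\bar{\bf u}$ is an interior critical point, expanding around it gives exactly
\[
F(\bar{\bf u}+\xi)=F(\bar{\bf u})-\tfrac12({\bf B}\xi,\xi).
\]
A maximum at $\bar{\bf u}$ therefore requires $({\bf B}\xi,\xi)\geqslant 0$ for all small $\xi$, that is, ${\bf B}$ positive semidefinite. If the maximum is strict, which is the case relevant to the uniqueness framing, ${\bf B}$ is positive definite.

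Then I use the function $V({\bf u})=\sum_i[(u_i-\bar u_i)-\bar u_i\ln(u_i/\bar u_i)]$ from Chapter~\ref{ch:1}. Along trajectories of \eqref{c2:eq2.13}, using ${\bf A}\bar{\bf u}={\bf r}$, one gets $\dot V=-({\bf A}({\bf u}-\bar{\bf u}),{\bf u}-\bar{\bf u})=-({\bf B}\xi,\xi)$, because the skew part ${\bf C}$ contributes nothing. If ${\bf B}$ is positive definite, $\dot V<0$ for $\xi\neq 0$, so $\bar{\bf u}$ is asymptotically (indeed globally) stable. This contradicts the hypothesis that $\bar{\bf u}$ is unstable. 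The same identity is the basis of the claim in Theorem~\ref{c2:t2.6}, citing \cite{c2:Svirezhev1978}, that stability corresponds to positivity of ${\bf B}$.

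The main obstacle is the borderline case where ${\bf B}$ is only positive semidefinite and singular. There $\dot V\leqslant 0$ gives only Lyapunov stability, still contradicting instability, but I must confirm that Lyapunov stability is exactly what ``unstable'' negates. I would invoke the convention that ``unstable'' means not Lyapunov stable; then semidefiniteness already suffices, since $V$ is a Lyapunov function with $\dot V\leqslant 0$. I would also note that $V$ is positive definite near $\bar{\bf u}$ in $\inside\R_+^n$, so no further condition is needed. Hence instability forces ${\bf B}$ to have a negative direction, $F$ increases along that direction from $\bar{\bf u}$, and the maximum is not attained at the equilibrium.
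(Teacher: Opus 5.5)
Your proof is correct, and it is more careful than what the paper provides. The corollary is stated without proof, as a by-product of Theorem~\ref{c2:t2.6}. That theorem's proof uses \eqref{c2:eq2.15} to make $\bar{\bf u}$ a critical point of $F$ with Hessian $-{\bf B}$, and then invokes the cited fact that \emph{stability} of $\bar{\bf u}$ forces ${\bf B}$ to be positive definite. That direction cannot yield the corollary. Its contrapositive needs the converse chain: maximum at $\bar{\bf u}$ $\Rightarrow$ ${\bf B}$ positive semidefinite $\Rightarrow$ $\bar{\bf u}$ stable.

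You prove exactly this chain, the last link directly through $V$ and the identity $\dot V=-({\bf B}\xi,\xi)$. So that identity underlies the converse of the paper's cited claim, not the claim itself; your closing remark on this has the direction reversed, but nothing in your argument depends on it. Your route is self-contained and handles the semidefinite borderline through plain Lyapunov stability. It also gives the slightly stronger conclusion that an unstable $\bar{\bf u}$ is not even a local maximum of $F$.

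Two minor points about your first step. First, ${\bf Mr}=0$ alone does not force ${\bf MB}\bar{\bf u}=0$. From $({\bf E}+{\bf M}){\bf B}\bar{\bf u}={\bf r}$ you only get $({\bf E}+{\bf M}){\bf MB}\bar{\bf u}={\bf M}({\bf E}+{\bf M}){\bf B}\bar{\bf u}={\bf Mr}=0$. To conclude you need ${\bf E}+{\bf M}$ invertible. This holds because ${\bf A}=({\bf E}+{\bf M}){\bf B}$ is nonsingular: if ${\bf A}$ were singular, the solution set of ${\bf Au}={\bf r}$ through $\bar{\bf u}$ would contain other interior points, contradicting uniqueness.

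Second, that step is superfluous anyway. If $F$ attains its maximum at the interior point $\bar{\bf u}$, then $\nabla F(\bar{\bf u})={\bf r}-{\bf B}\bar{\bf u}=0$ automatically. The rest of your argument (the exact quadratic expansion, semidefiniteness of ${\bf B}$, $\dot V\leqslant 0$) uses only ${\bf A}\bar{\bf u}={\bf r}$. So your proof shows the conclusion holds without \eqref{c2:eq2.15} at all. This is reassuring, since read literally, $|{\bf M}|\neq 0$ together with ${\bf Mr}=0$ would force ${\bf r}=0$. The aside about strict maxima can be dropped.
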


\medskip
\textbf{Example 6. The Lotka--Volterra with the circulant matrix.}

Consider the Lotka--Volterra system in $\R_{+}^{3}$ with the circulant matrix
$$
{\bf A} = 
\begin{pmatrix}
	1 & \alpha & \beta\\
	\beta & 1 & \alpha\\
	\alpha & \beta & 1\\
\end{pmatrix}\!,
\quad
{\bf r} = (1, 1, 1).
$$
In this case,
$$
{\bf B} = 
\begin{pmatrix}
	1 & \mu & \mu\\
	\mu & 1 & \mu\\
	\mu & \mu & 1\\
\end{pmatrix}\!,
\quad
\mu = \frac{\alpha + \beta}{2}.
$$

It is known~\cite{c2:Hofbauer2003} that for $\mu < 1$ the equilibrium ${\bf \bar{u}} = \left(\dfrac{1}{1 + 2\mu},\, \dfrac{1}{1 + 2\mu},\, \dfrac{1}{1 + 2\mu}\right)$ is stable and loses stability when $\mu > 1$. Condition \eqref{c2:eq2.15} is satisfied, since ${\bf C\bar{u}} = 0$.
The maximum of the surface $z = u_{1} + u_{2} + u_{3} - \dfrac{1}{2}(u_{1}^{2} + u_{2}^{2} + u_{3}^{2}) + 2\mu (u_{1}u_{2} + u_{1}u_{3} + u_{2}u_{3})$ is attained at the equilibrium ${\bf \bar{u}}$:
$$
F_{\max} = \frac{3}{2}\frac{1}{(1 + 2\mu)}.
$$ 
The maximum of $F$ on the boundary faces $u_{1} = 0$, $u_{2} = 0$, or $u_{3} = 0$ equals
$$
F_{\bound} = \frac{1}{1 + \mu}.
$$

The inequality $F_{\max} > F_{\bound}$ holds for $\mu < 1$, i.e.\ precisely when the equilibrium is stable. For $\mu > 1$ the equilibrium becomes unstable and the maximum of $F({\bf u})$ is attained at the boundary points
$$
\Big(\frac{1}{1 + \mu},\, \frac{1}{1 + \mu},\, 0\Big), \quad \Big(0,\, \frac{1}{1 + \mu},\, \frac{1}{1 + \mu}\Big),\quad \Big(\frac{1}{1 + \mu},\, 0,\, \frac{1}{1 + \mu}\Big).
$$

A heteroclinic orbit then emerges, tending toward the limit set consisting of the saddle separatrices lying in the coordinate planes and successively passing through the points $(1, 0, 0)$, $(0, 1, 0)$, and $(0, 0, 1)$.
\hfill$\square$

\clearpage

\section{Conclusion}\label{c2:section:2.6}
In general, trajectories of replicator systems do not reach the maximum of the fitness surface, even when a unique asymptotically stable equilibrium exists (Example~2). Coincidence is possible only in special cases where the dynamics of the replicator system are aligned in a specific way with the geometry of the fitness surface. Additional conditions are required for this to occur. A nontrivial class of matrices satisfying these conditions is furnished by circulant and symmetric matrices.

There is a clear connection between evolutionarily stable equilibria and states furnishing local and global maxima of the fitness surface. An evolutionarily stable equilibrium furnishes a local maximum of the fitness surface. If the unique asymptotically stable equilibrium is a local maximum of the fitness surface, then it is evolutionarily stable and realizes the global maximum. If instead the unique equilibrium is unstable, the global maximum is attained on the boundary. The fitness surface of the general Lotka--Volterra system shares these properties. When the fitness surface has multiple local maxima, closed and heteroclinic trajectories may appear, arising when the unique equilibrium of the system is unstable (Example~3).


%% file: chapter3.tex
\chapter{Mathematical Model of Evolution of Non-Degenerate Replicator Systems}
\label{ch:3}

\begin{chapterabstract}
We propose and analyse a mathematical model of evolutionary adaptation for
non-degenerate (permanent) replicator systems, in which the fitness landscape
matrix ${\bf A}(\tau)$ evolves on a slow timescale $\tau$ --- the
\textit{evolutionary time} --- while the species dynamics unfold on a fast
timescale. Under a two-timescale separation justified by Tikhonov's theorem,
the adaptation problem reduces to maximising the mean fitness $\bar{f}(\tau)$
at steady state over a convex admissible set $\mathcal{M}$ of fitness landscape
matrices. We derive a fitness variation formula and establish necessary and
sufficient conditions for a fitness maximum (Theorem~\ref{c3:theorem:3.1}), showing that the
optimisation reduces at each step to a linear programming problem. The algorithm
is applied to four canonical replicator systems: the hypercycle, the
bi-hypercycle, the anthill system, and the RNA molecule network. In all cases
the evolutionary process follows a universal three-phase pattern: an initial
phase of fitness growth without equilibrium shift, during which purely altruistic
replication gives way to mixed altruistic--selfish behaviour; a second phase of
dominant species emergence; and a stabilisation phase analogous to the error
catastrophe threshold in quasispecies models. A key consequence is that all
evolved systems acquire resistance to parasitic species that would destroy the
original system. We further prove that without non-degeneracy constraints the
process leads to sequential species annihilation, with a provable spectral lower
bound on how much fitness increases by dimension reduction.
\end{chapterabstract}

\bigskip\hrule\bigskip

\section*{Introduction}\addcontentsline{toc}{section}{Introduction}

This chapter is an edited English version of Chapter~3 of the monograph
\textit{Mathematical Models of Evolution and Replicator Systems
Dynamics}~\cite{c3:BratusBook2022}. It builds on the replicator framework of
Chapter~\ref{ch:1} and on the geometry of the fitness surface studied in
Chapter~\ref{ch:2}, but can be read independently of either: the necessary
background is summarised below.

\subsection*{Replicator systems: a brief summary}

A \textit{replicator} is any entity capable of self-reproduction, heritable
variation, and competition for resources~\cite{c3:Eigen1971,c3:Eigen1979}. The
dynamics of $n$ competing species, represented by their relative frequencies
${\bf u}(t) = (u_1(t), \ldots, u_n(t)) \in S_n$, where
$$
	S_n = \left\{ {\bf u} \geqslant 0 : \sum_{i=1}^n u_i = 1 \right\}
$$
is the standard simplex, is governed by the \textit{replicator equation}
\begin{equation}
	\frac{du_i}{dt} = u_i\Big[\Big({\bf Au}\Big)_i - f({\bf u})\Big],
	\quad f({\bf u}) = \Big({\bf Au}, {\bf u}\Big), \quad i = \overline{1,n}.
	\label{c3:eq1.5}
\end{equation}
Here $\Big({\bf Au}\Big)_i = \sum_{j=1}^n a_{ij} u_j$ is the \textit{fitness}
of species $i$, $f({\bf u})$ is the \textit{mean fitness} of the population,
and the matrix ${\bf A} = (a_{ij})$ defines the \textit{fitness landscape}.
System~\eqref{c3:eq1.5} is studied on $S_n$; the scalar product
$(\cdot,\cdot)$ denotes the standard inner product in $\mathbb{R}^n$.

A replicator system is called \textit{non-degenerate} (or \textit{permanent}) if
there exists a compact set $K \subset \inside S_n$ such that every trajectory
starting in $\inside S_n$ eventually enters and remains in $K$. Permanence
guarantees long-run coexistence of all species; its characterisation for specific
system types is treated in~Chapter~\ref{ch:1}.

Three canonical replication regimes were analysed in~Chapter~\ref{ch:1}: independent
(linear) replication with $a_{ij} = k_i \delta_{ij}$; autocatalytic replication;
and the \textit{hypercycle}
\begin{equation}
	\frac{du_i}{dt} = u_i(u_{i-1} - f({\bf u})), \quad u_0 = u_n,
	\quad f({\bf u}) = \sum_{i=1}^n u_i u_{i-1},
	\label{c3:eq1.8}
\end{equation}
which is permanent for all $n \geqslant 2$ and possesses a unique interior
equilibrium. The geometry of the mean fitness surface
$\Sigma = \{ z = f({\bf u}) : {\bf u} \in S_n \}$ and its relationship to
trajectory dynamics and evolutionarily stable states were studied
in~Chapter~\ref{ch:2}; in particular, it was shown there that Fisher's
fundamental theorem of natural selection holds for system~\eqref{c3:eq1.5}
only when ${\bf A}$ is symmetric.

Two further special systems introduced in~Chapter~\ref{ch:1} are used as
illustrations in the present chapter:
\begin{itemize}
\item the \textit{``anthill'' system}, in which species $0, \ldots, n-1$ form a
hypercycle and are additionally catalysed by a dominant ``queen'' species $n$:
\begin{equation}
	\begin{aligned}
		&\dot{u}_i = u_i\Big(\alpha u_n + k_i u_{i-1} - f(t)\Big),
		\quad i = \overline{0, n-1}, \\
		&\dot{u}_n = u_n\!\left(\sum_{i=0}^{n-1}\beta_i u_i - f(t)\right), \quad
		{\bf u} \in S_{n+1},
	\end{aligned}
	\label{c3:eq1.21}
\end{equation}
with $\alpha, \beta_i, k_i > 0$ and $u_{-1} = u_{n-1}$;
\item the \textit{RNA network}~\cite{c3:Vaidya2012}, describing six RNA macromolecules
split into two groups: species $4$--$5$--$6$ forming a hypercycle, and species
$1$--$2$--$3$ exhibiting both hypercyclic and autocatalytic replication:
\begin{equation}
	\begin{aligned}
		&\dot{u}_1 = u_1(r_1 u_1 + k_1 u_4 - f), \quad
		 \dot{u}_2 = u_2(r_2 u_2 + k_2 u_5 - f), \quad
		 \dot{u}_3 = u_3(r_3 u_3 + k_3 u_6 - f), \\
		&\dot{u}_4 = u_4(k_4 u_3 + \bar{k}_4 u_5 - f), \quad
		 \dot{u}_5 = u_5(k_5 u_1 + \bar{k}_5 u_6 - f), \quad
		 \dot{u}_6 = u_6(k_6 u_2 + \bar{k}_6 u_4 - f),
	\end{aligned}
	\label{c3:eq1.25}
\end{equation}
with ${\bf u} \in S_6$ and $r_i, k_i, \bar{k}_i > 0$.
\end{itemize}

\subsection*{Contribution of the present chapter}

The central question addressed here is: \textit{how does the fitness landscape
itself evolve?} The standard assumption in replicator theory --- that the
landscape matrix ${\bf A}$ is fixed --- is relaxed. We propose a mathematical
model of \textit{evolutionary adaptation} in which ${\bf A}$ changes on a slow
timescale $\tau$ (the \textit{evolutionary time}), much slower than the fast
timescale of the active replicator dynamics. Under this separation of scales, the
adaptation problem reduces, via Tikhonov's theorem~\cite{c3:Tihonov1948}, to
maximising the mean fitness $\bar{f}(\tau)$ at steady state over an admissible
convex set $\mathcal{M}$ of fitness landscape matrices.

The chapter derives necessary and sufficient conditions for a fitness maximum in
terms of a fitness variation formula (Theorem~\ref{c3:theorem:3.1}), shows that the
resulting optimisation reduces to a sequence of linear programming problems
(Section~\ref{c3:section:3.3}), and applies the algorithm to four replicator
systems: the hypercycle, the bi-hypercycle, the anthill system, and the RNA
network (Sections~\ref{c3:section:3.4}--\ref{c3:section:3.7}). The main findings,
summarised in Section~\ref{c3:section:3.8}, are: (i) evolutionary adaptation
universally increases mean fitness and induces a structural transition from
altruistic to mixed altruistic--selfish replication; (ii) the evolved systems
acquire resistance to parasitic species; (iii) without non-degeneracy
constraints, the process leads to sequential species annihilation, with a
provable lower bound on how much fitness can increase by dimension reduction.

These results extend and complement the earlier paper~\cite{c3:Drozhzhin2021}, which
established fitness optimisation and permanence for a related class of systems.

\section{Model of Evolutionary Adaptation}\label{c3:section:3.1}
The ideas of S.~Wright~\cite{c3:Wright1930} and Fisher's fundamental theorem of natural selection --- \textit{``The rate of increase in fitness of any organism at any time is equal to its genetic variance in fitness at the time''}~\cite{c3:Fisher1930} --- initiated the application of extremal principles in the theory of biological evolution. Fisher did not tie his theorem to any particular replicator system, nor did he ever give a precise meaning to the notion of ``genetic variance in fitness''. Accordingly, despite the established use of the word ``theorem'' in the scientific literature, Fisher's statement lacks a rigorous mathematical proof and is better regarded as an additional postulate supplementing the general theory of evolution. Nevertheless, Fisher's fundamental theorem is widely employed in theoretical biology as well as in numerous studies in mathematical biology, including~\cite{c3:Crow2002,c3:Dieckmann2004,c3:Ewens2004,c3:Grafen2003}. For replicator models of the form~\eqref{c3:eq1.5}, identifying the mean fitness with the function $f({\bf u})$, it was shown in~Chapter~\ref{ch:2} that Fisher's theorem holds only when the interaction matrix is symmetric, that is, in the case of a diploid population. The notion of ``genetic variance in fitness'' is commonly identified with the mean fitness function. Many authors, however, consider these two concepts to be non-equivalent in the general case~\cite{c3:Ao2005}.

In theoretical biology, the mean fitness landscape is often depicted as a stationary geometric object consisting of peaks and valleys, while the evolutionary trajectory of a species is represented as a path that, despite occasional descents, ultimately climbs through saddle points toward one of the peaks~\cite{c3:Poelwijk2007} (see Fig.~\ref{c3:fig3.1}). From the mathematical standpoint, Fisher's theorem postulates the existence of a Lyapunov-type function that increases monotonically along the system trajectories during evolutionary adaptation. Such a function, however, can be constructed only in special cases, for instance when the replicator system possesses a unique stable equilibrium. The use of the fitness function in the theory of biological organisation has a long history~\cite{c3:Fontana1994}. As noted above, the function $f({\bf u})$ defining the mean fitness of the replicator system~\eqref{c3:eq1.5} satisfies Fisher's theorem only when the matrix ${\bf A}$ is symmetric. A detailed account of the history of fitness maximisation in population genetics can be found in~\cite{c3:Grodwohl2017}. The question of whether general laws of biology can be expressed in the language of the exact sciences (mathematics, physics, chemistry) lies at the heart of many important investigations~\cite{c3:Burger2000,c3:Ewens2004,c3:Gavrilets2004,c3:Kimura1983,c3:Rice2004,c3:Wright1932}. These studies typically assume that the fitness landscape remains fixed throughout the system's temporal evolution until the limiting state is reached.

\begin{figure}[ht]
	\centering
	\includegraphics[width=0.65\textwidth]{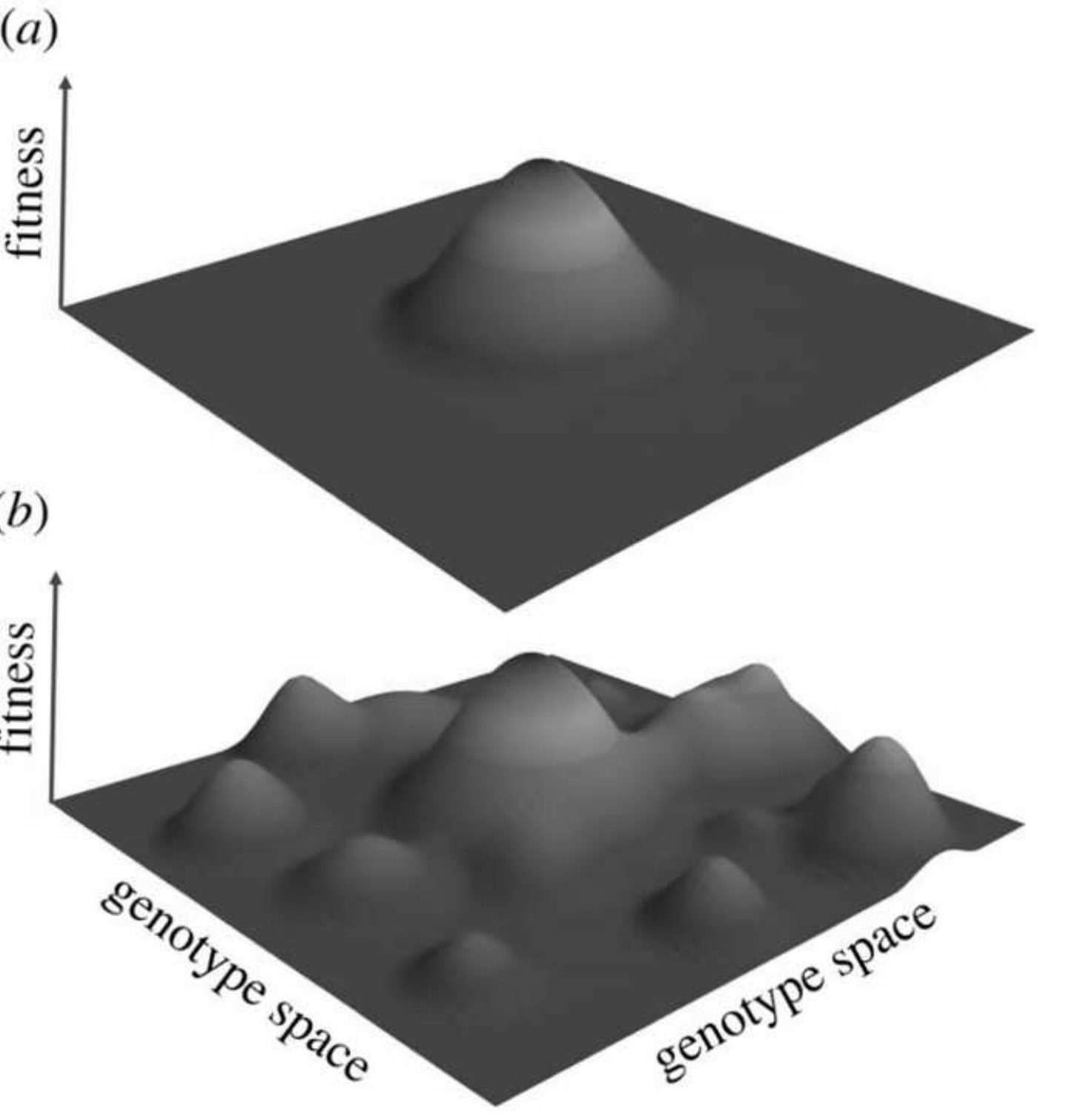}
	\caption{Schematic representation of the mean fitness landscape~\cite{c3:Pesce2016}.}
	\label{c3:fig3.1}
\end{figure}

The situation changes fundamentally when one allows the fitness landscape itself to undergo adaptive modification. This raises the natural question of the mechanism by which such a process might proceed. \textit{The central hypothesis of the proposed mathematical model is that the timescale on which the fitness landscape adapts may be many times slower than the timescale of the active dynamics of the system up to its approach to the stationary state.} In other words, we assume that the evolution of the system's fitness landscape unfolds on a distinct timescale that does not coincide with the timescale of the active dynamics.

This assumption implies that the process of evolutionary adaptation can be described not only through the dynamical equations but also through steady-state equations, all of whose elements depend on a parameter $\tau$, referred to as the \textit{evolutionary parameter} or \textit{evolutionary time}. It is then necessary to characterise the admissible set of fitness landscapes.

For a general replicator system of the form \eqref{c3:eq1.5}, the admissible set of fitness landscapes that the function $f({\bf u}(t))$ may realise is described by the set of non-degenerate matrices ${\bf A}(\tau)$ whose elements depend smoothly on the evolutionary parameter $\tau \geqslant 0$ and whose spherical norm remains bounded for all $\tau \geqslant 0$:
\begin{equation}
	\mathcal{M} = \left\{{\bf A}(\tau) = (a_{ij})_{i, j = 1}^{n}:\,\, \sum
	\limits_{i, j = 1}^{n}a_{ij}^{2}(\tau) \leqslant M = \const > 0\right\}\!.
	\label{c3:eq3.1}
\end{equation}    
The set \eqref{c3:eq3.1} of matrices ${\bf A}(\tau)$, describing the admissible fitness landscapes of the replicator system under variation of the evolutionary time, is convex.

Indeed, let ${\bf A}_1(\tau),\, {\bf A}_2(\tau) \in \mathcal{M}$. Consider the matrix
\begin{equation*}
	{\bf A}_{\lambda}(\tau) = \lambda{\bf A}_1(\tau)+(1-\lambda){\bf A}_2(\tau), \quad 0 < \lambda < 1.
\end{equation*}
We show that ${\bf A}_{\lambda}(\tau) \in \mathcal{M}$:
\begin{gather*}
	\sum\limits_{i,j=1}^n \left(\lambda a_{ij}^{(1)}+(1-\lambda) a_{ij}^{(2)}\right)^2 =
	\lambda^2 \sum\limits_{i,j=1}^n \left(a_{ij}^{(1)}\right)^2
	+ (1-\lambda)^2 \sum\limits_{i,j=1}^n \left(a_{ij}^{(2)}\right)^2
	+ 2(1-\lambda)\lambda \sum\limits_{i,j=1}^n a_{ij}^{(1)} a_{ij}^{(2)} \leqslant \\
	\leqslant \left(\lambda^2 + (1-\lambda)^2\right) M
	+ 2(1-\lambda)\lambda
	\left(\sum\limits_{i,j=1}^n \left(a_{ij}^{(1)}\right)^2\right)^{1/2}
	\cdot
	\left(\sum\limits_{i,j=1}^n \left(a_{ij}^{(2)}\right)^2\right)^{1/2}\leqslant \\
	= \left(\lambda^2 + 2\lambda(1-\lambda) + (1-\lambda)^2\right) M = M.
\end{gather*}

The stationary equilibrium (not necessarily stable) satisfies the linear system
\begin{equation}
	\begin{aligned}
		&{\bf A}(\tau){\bf \bar{w}}(\tau) = \bar{f}(\tau){\bf I}, \quad {\bf I} = (1,\, 1,\, \ldots,\, 1) \in \mathbb{R}^{n}, \\ &\bar{f}({\bf \bar{w}}) = \Big({\bf A}(\tau){\bf \bar{w}}(\tau), {\bf \bar{w}}(\tau)\Big), \quad {\bf \bar{w}} \in \inside S_{n},
	\end{aligned}
	\label{c3:eq3.2}
\end{equation} 
which governs the evolution of the equilibrium position as a function of the evolutionary time. 

From the mathematical standpoint, the hypothesis of a slowly varying fitness landscape means that the matrix elements $a_{ij}$ can be written as
$$
	a_{ij} = a_{ij}(\varepsilon t) = a_{ij}(\tau), \quad i, j = \overline{1, n},
$$
where $\varepsilon$ is a sufficiently small parameter.

The general equation governing the evolutionary dynamics of the replicator system takes the form
\begin{equation}
	\begin{aligned}
		&\displaystyle\frac{du_{i}(t,\tau)}{dt} = u_{i}(t,\tau)\left[\Big({\bf A}(\tau){\bf u}(t,\tau)\Big)_{i} - f({\bf u}(t,\tau))\right], \\
		&\displaystyle\frac{da_{ij}(\tau)}{d\tau} = v_{ij}, \quad |v_{ij}| \leqslant \delta, \quad i, j = \overline{1, n}, \\
		&{\bf u}(t,\tau) \in S_{n}.
	\end{aligned}
	\label{c3:eq3.3}
\end{equation}
Here $\delta$ is a positive number.

If for each fixed value of $\tau \geqslant 0$ the system \eqref{c3:eq3.3} remains non-degenerate (permanent), then the time-averaged values of the functions $u_i(t, \tau)$, $i = 1, 2, \ldots, n$ and $f({\bf u}(t, \tau))$ coincide with the equilibrium of system \eqref{c3:eq3.3}:
\begin{equation}
	\bar{u}_{i}(\tau) = \lim_{t \to +\infty}\frac{1}{t}\int_{0}^{t}u_{i}(t, \tau)\,dt, \qquad \bar{f}({\bf \bar{u}}(\tau)) = \lim_{t \to +\infty}\frac{1}{t}\int_{0}^{t}f({\bf u}(t, \tau))\,dt.
	\label{c3:eq3.4}
\end{equation}

Introducing the slow evolutionary time $\tau = \varepsilon t$, $0 \leqslant t \leqslant T$,
system~\eqref{c3:eq3.3} becomes
\begin{align*}
&\varepsilon\frac{du_i(\tau/\varepsilon, \tau)}{d\tau} =
u_i(\tau/\varepsilon, \tau)
\left[\Big({\bf A}(\tau){\bf u}(\tau/\varepsilon, \tau)\Big)_i
- f({\bf u}(\tau/\varepsilon, \tau))\right],
\\
&\frac{da_{ij}(\tau)}{d\tau} = v_{ij}, \quad i,j = \overline{1,n}.
\end{align*}
Letting $\varepsilon \to 0$ and using~\eqref{c3:eq3.4}, we obtain (in accordance with Tikhonov's theorem~\cite{c3:Tihonov1948}):
$$
	\begin{aligned}
		&{\bf A}(\tau){\bf \bar{u}}(\tau) = \bar{f}({\bf \bar{u}}(\tau)){\bf I}, \\
		&\frac{d{\bf A}(\tau)}{d\tau} = {\bf V}, \quad {\bf V} = (v_{ij})_{i, j = 1}^{n}.
	\end{aligned}
$$
The matrix ${\bf V}$ describes the rates of change of the fitness landscape elements.

The problem of finding the evolutionary change of the system over the admissible set of fitness landscapes \eqref{c3:eq3.1} reduces to finding the maximum value of the mean fitness $\bar{f}(\tau)$ over the set of solutions to equation \eqref{c3:eq3.2}. {\it In other words, the evolutionary adaptation problem for a replicator system reduces to applying Fisher's fundamental theorem of natural selection at the stationary equilibrium under variation of the evolutionary time.}

These conditions allow to reconstruct the active dynamics that may have occurred at any given stage of evolutionary adaptation. Suppose, for instance, that the fitness maximisation problem has been solved up to some evolutionary time $\tau^{*}$. This means that the fitness landscape matrix ${\bf A}(\tau^{*}) \in \mathcal{M}$ is known. The corresponding active dynamics at this stage is governed bys
$$
	\frac{du_{i}(t)}{dt} = u_{i}(t)\left[\Big({\bf A}(\tau^{*}){\bf u}(t)\Big)_{i} - \Big({\bf A}(\tau^{*}){\bf u}(t), {\bf u}(t)\Big)\right],
	\quad u_{i}(0) = u_{i}^{0}, \quad i = \overline{1, n}, \quad {\bf u}(0) \in S_{n}.
$$

The proposed method of adaptive fitness landscape modification will be applied below to the four systems introduced above: the hypercycle~\eqref{c3:eq1.8}, the bi-hypercycle (Section~\ref{c3:section:3.5}), the anthill system~\eqref{c3:eq1.21}, and the RNA molecule network~\eqref{c3:eq1.25} (see~Chapter~\ref{ch:1} for detailed analysis of each).

\clearpage

\section{Fitness Variance. Necessary and Sufficient Conditions for a Fitness Maximum}\label{c3:section:3.2}

Suppose that for some value of the evolutionary parameter $\tau \geqslant 0$, the vector ${\bf \bar{u}}(\tau) \in \inside S_{n}$ and the fitness landscape matrix ${\bf A}(\tau)$ satisfy equation \eqref{c3:eq3.2}. We introduce the following assumption.

\begin{assumpt}\label{c3:assumpt3.1}
	The elements of the non-degenerate matrix ${\bf A}(\tau)$ and the components of the vector ${\bf \bar{u}}(\tau)$ are twice continuously differentiable functions of the evolutionary parameter $\tau$.
\end{assumpt}

Under Assumption~\ref{c3:assumpt3.1}, for sufficiently small increments $\Delta \tau > 0$ the following Taylor expansions hold:
\begin{equation}
	\begin{aligned}	
		&\displaystyle{\bf A}(\tau + h) = {\bf A}(\tau) + \delta{\bf A}(\tau)\Delta\tau + \dfrac{1}{2}\delta^{2}{\bf A}(\tau)\Delta\tau^{2} + o(\Delta\tau^{2}), \\		
		&\displaystyle{\bf \bar{u}}(\tau + h) = {\bf \bar{u}}(\tau) + \delta {\bf \bar{u}}(\tau)\Delta\tau + \dfrac{1}{2}\delta^{2}{\bf \bar{u}}(\tau)\Delta\tau^{2} + o(\Delta\tau^{2}), \\		
		&\displaystyle\bar{f}(\tau + h) = \bar{f}(\tau) + \delta\bar{f}(\tau)\Delta\tau + \dfrac{1}{2}\delta^{2}\bar{f}(\tau)\Delta\tau^{2} + o(\Delta\tau^{2}).
	\end{aligned}
	\label{c3:eq3.5}
\end{equation}
Here $\delta{\bf A}(\tau)$ and $\delta^{2}{\bf A}(\tau)$ are matrices with entries $a_{ij}^{'}(\tau)$ and $a_{ij}^{''}(\tau)$, $i, j = \overline{1, n}$ respectively; $\delta{\bf \bar{u}}(\tau)$ and $\delta^{2}{\bf \bar{u}}(\tau)$ are vectors with components $u_{i}^{'}(\tau)$ and $u_{i}^{''}(\tau)$, $i = \overline{1, n}$; and $\delta^{k}\bar{f}(\tau) = \dfrac{d^{k}}{d\tau^{k}}\bar{f}(\tau)$ for $k = 1, 2$.

Consider the following linear equation for the vector ${\bf \bar{v}}(\tau) \in \mathbb{R}^{n}$:
\begin{equation}
	{\bf A}^{T}(\tau){\bf \bar{v}}(\tau) = {\bf I}.
	\label{c3:eq3.6}
\end{equation}
Taking the scalar product of equation \eqref{c3:eq3.2} with ${\bf \bar{v}}$ yields
$$
	\Big({\bf \bar{v}}(\tau), {\bf I}\Big) = \frac{1}{\bar{f}} \neq 0, \quad {\bf I} = (1,\, 1,\, \ldots,\, 1) \in \mathbb{R}^{n}.
$$

The following result holds.

\begin{theorem}\label{c3:theorem:3.1}	Let Assumption~\ref{c3:assumpt3.1} be satisfied. Then
	\begin{equation}
		\frac{d\bar{f}}{d\tau} = \delta\bar{f}(\tau) = \Big(\delta{\bf A}(\tau){\bf \bar{u}, \bar{v}}\Big)\bar{f}(\tau).
		\label{c3:eq3.7}
	\end{equation}
\end{theorem}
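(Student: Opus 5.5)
The plan is to differentiate the equilibrium identity \eqref{c3:eq3.2} in $\tau$ and then kill the unknown equilibrium derivative $\delta{\bf \bar{u}}$ with the adjoint vector ${\bf \bar{v}}$ from \eqref{c3:eq3.6}. This is the standard first-order perturbation argument, the same one used for \eqref{c1:eq1.43} in Chapter~\ref{ch:1}. By Assumption~\ref{c3:assumpt3.1}, substituting the expansions \eqref{c3:eq3.5} into ${\bf A}(\tau){\bf \bar{u}}(\tau)=\bar{f}(\tau){\bf I}$ and collecting the terms linear in $\Delta\tau$ gives
\[
\delta{\bf A}\,{\bf \bar{u}} + {\bf A}\,\delta{\bf \bar{u}} = \delta\bar{f}\,{\bf I}.
\]
Equivalently, one simply differentiates the identity, which is legitimate since everything is $C^2$.

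Next I take the scalar product of this relation with ${\bf \bar{v}}(\tau)$. The middle term becomes $({\bf A}\delta{\bf \bar{u}},{\bf \bar{v}}) = (\delta{\bf \bar{u}},{\bf A}^{T}{\bf \bar{v}}) = (\delta{\bf \bar{u}},{\bf I})$. This vanishes because ${\bf \bar{u}}(\tau)\in S_n$ for all $\tau$, so $({\bf \bar{u}}(\tau),{\bf I})\equiv 1$ and hence $(\delta{\bf \bar{u}},{\bf I})=0$. What remains is
\[
(\delta{\bf A}\,{\bf \bar{u}},{\bf \bar{v}}) = \delta\bar{f}\,({\bf I},{\bf \bar{v}}).
\]

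Finally, I use the identity already derived just before the theorem, $({\bf \bar{v}},{\bf I}) = 1/\bar{f}$. It follows from pairing ${\bf A}{\bf \bar{u}}=\bar{f}{\bf I}$ with ${\bf \bar{v}}$: $({\bf A}{\bf \bar{u}},{\bf \bar{v}}) = ({\bf \bar{u}},{\bf A}^T{\bf \bar{v}}) = ({\bf \bar{u}},{\bf I}) = 1$. Substituting gives $\delta\bar{f} = \bar{f}\,(\delta{\bf A}\,{\bf \bar{u}},{\bf \bar{v}})$, which is \eqref{c3:eq3.7}.

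The only genuine obstacle is making sure ${\bf \bar{v}}$ is well defined and that $\bar{f}\neq 0$. Solvability of \eqref{c3:eq3.6} requires ${\bf A}(\tau)$ to be non-singular, which is part of the standing assumption that ${\bf A}(\tau)$ is non-degenerate. Nonvanishing of $\bar{f}$ then follows, because $\bar{f}=0$ together with non-singularity would force ${\bf \bar{u}}=0$, contradicting ${\bf \bar{u}}\in S_n$. I would state both facts explicitly before carrying out the computation above.
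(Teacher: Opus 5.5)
Your proposal is correct and follows the paper's proof essentially step for step. You linearise \eqref{c3:eq3.2} to obtain \eqref{c3:eq3.8}, pair it with the adjoint vector ${\bf \bar{v}}$ from \eqref{c3:eq3.6} so that the $\delta{\bf \bar{u}}$ term drops out via $(\delta{\bf \bar{u}},{\bf I})=0$, and conclude using $({\bf \bar{v}},{\bf I})=1/\bar{f}$. Your explicit checks that ${\bf A}(\tau)$ is non-singular and that $\bar{f}\neq 0$ only make precise what the paper leaves implicit.
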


\begin{proof}
Substituting expansion \eqref{c3:eq3.5} into equation \eqref{c3:eq3.2} and collecting terms linear in $\Delta\tau$, gives
\begin{equation}
	\delta{\bf A}(\tau){\bf \bar{u}}(\tau) + {\bf A}(\tau)\delta{\bf \bar{u}}(\tau) = \delta \bar{f}(\tau){\bf I}, \quad {\bf \bar{u}}(\tau) \in S_{n}.
	\label{c3:eq3.8}
\end{equation}
Taking the scalar product with the adjoint vector ${\bf \bar{v}}$:
$$
	\Big(\delta{\bf A}(\tau){\bf \bar{u}}(\tau), {\bf \bar{v}}(\tau)\Big) + \Big({\bf A}(\tau)\delta{\bf \bar{u}}(\tau), {\bf \bar{v}}(\tau)\Big) = \delta\bar{f}(\tau)\Big({\bf \bar{v}}(\tau), {\bf I}\Big).
$$
Since ${\bf \bar{u}} \in S_{n}$, we have $\Big({\bf \bar{u}}, {\bf I}\Big) = 1$, hence $\Big(\delta{\bf \bar{u}}, {\bf I}\Big) = 0$. Using the adjoint equation \eqref{c3:eq3.6}:
$$
	\Big({\bf A}(\tau)\delta{\bf \bar{u}}(\tau), {\bf \bar{v}}(\tau)\Big) = \Big(\delta{\bf \bar{u}}(\tau), {\bf A}^{T}(\tau){\bf \bar{v}}(\tau)\Big) = \Big(\delta{\bf \bar{u}}(\tau), {\bf I}\Big) = 0. 
$$
\end{proof}

\begin{conseq}
A necessary condition for $\bar{f}$ to attain a local maximum on the admissible set \eqref{c3:eq3.1} at a point $\tau^{*}$ for which ${\bf \bar{u}}(\tau^{*}) \in \inside S_{n}$ is
\begin{equation}
	\bar{u}_{i}(\tau^{*})\bar{v}_{j}(\tau^{*}) = \mu a_{ij}(\tau^{*}), \quad \mu = \const.
	\label{c3:eq3.9}
\end{equation}
\end{conseq}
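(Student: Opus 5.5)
The plan is to treat the problem as a constrained maximisation in the matrix variable ${\bf A}$ alone, with $\bar f$ viewed as a smooth function of ${\bf A}$ through the equilibrium system \eqref{c3:eq3.2}. By Theorem~\ref{c3:theorem:3.1}, the first variation of $\bar f$ in an arbitrary admissible direction $\delta{\bf A}$ is
\[
\delta\bar f=\bar f\,\bigl(\delta{\bf A}\,\bar{\bf u},\bar{\bf v}\bigr)=\bar f\sum_{i,j=1}^{n}\delta a_{ij}\,\bar u_i\bar v_j .
\]
So the gradient of $\bar f$ with respect to the entries $a_{ij}$ is the rank-one matrix $\bar f\,\bar{\bf u}\bar{\bf v}^{T}$. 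This holds as long as $\bar{\bf u}(\tau^*)\in\inside S_n$, ${\bf A}(\tau^*)$ is non-degenerate (so that $\bar{\bf v}$ exists and $\bar f\neq 0$), and the smoothness of Assumption~\ref{c3:assumpt3.1} is in force.

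Next I would split into two cases according to where ${\bf A}(\tau^*)$ lies in $\mathcal M$ from \eqref{c3:eq3.1}.

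If ${\bf A}(\tau^*)$ is interior, i.e.\ $\sum a_{ij}^2<M$, every direction $\delta{\bf A}$ is admissible. A local maximum then forces the linear functional above to vanish identically, so $\bar u_i\bar v_j=0$ for all $i,j$. Since $\bar u_i>0$, this gives $\bar{\bf v}=0$, which contradicts ${\bf A}^T\bar{\bf v}={\bf I}$. Hence the maximum must lie on the boundary sphere $\sum a_{ij}^2=M$. Formally this is the degenerate case $\mu=0$ of \eqref{c3:eq3.9}, and it is excluded.

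On the boundary sphere I would apply the Lagrange multiplier rule to $\bar f({\bf A})$ subject to $g({\bf A})=\sum a_{ij}^2-M=0$. The constraint gradient $2{\bf A}$ is nonzero because ${\bf A}$ is non-singular, so the rule applies. It gives $\bar f\,\bar u_i\bar v_j=2\lambda a_{ij}$, and setting $\mu=2\lambda/\bar f$ yields exactly \eqref{c3:eq3.9}. Equivalently, one can argue without multipliers: every tangent direction $\delta{\bf A}$ with $\sum a_{ij}\delta a_{ij}=0$ must give $\sum\delta a_{ij}\bar u_i\bar v_j=0$. Otherwise one of $\pm\delta{\bf A}$, pulled back onto the sphere by a second-order correction, would increase $\bar f$. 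Hence $\bar{\bf u}\bar{\bf v}^T$ is orthogonal to the tangent space, i.e.\ it is parallel to ${\bf A}$.

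The main obstacle is justifying that $\bar f$ is a differentiable function of ${\bf A}$ near ${\bf A}(\tau^*)$. The paper only gives differentiability along curves ${\bf A}(\tau)$, so I need it for arbitrary perturbation directions. I would handle this with the implicit function theorem applied to the system ${\bf A}\bar{\bf u}=\bar f{\bf I}$, $(\bar{\bf u},{\bf I})=1$ in the unknowns $(\bar{\bf u},\bar f)$. Its Jacobian is the bordered matrix $\begin{pmatrix}{\bf A}&-{\bf I}\\ {\bf I}^T&0\end{pmatrix}$, which is invertible when ${\bf A}$ is non-singular and $({\bf A}^{-1}{\bf I},{\bf I})\neq0$; the latter holds because $({\bf A}^{-1}{\bf I},{\bf I})=(\bar{\bf v},{\bf I})=1/\bar f$. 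The positivity $\bar{\bf u}\in\inside S_n$ is then preserved for small perturbations. With this in hand, the curve-wise formula \eqref{c3:eq3.7} is the directional derivative, and the argument above goes through.
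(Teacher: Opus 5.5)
Your proof is correct and takes the same route as the paper. The first-variation formula \eqref{c3:eq3.7} makes $\bar f\,\bar{\bf u}\bar{\bf v}^{T}$ the gradient of $\bar f$ with respect to ${\bf A}$, and at a maximum on the sphere $\sum a_{ij}^{2}=M$ this gradient must annihilate every variation with $\sum a_{ij}'a_{ij}=0$, hence be parallel to ${\bf A}$. What the paper compresses into ``from which \eqref{c3:eq3.9} follows'' you make explicit: the exclusion of maxima with $\sum a_{ij}^{2}<M$, the Lagrange-multiplier step, and differentiability of $\bar f$ in all directions via the bordered Jacobian.

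You can push your observation $\bar{\bf u}\bar{\bf v}^{T}\neq 0$ one step further. It shows that \eqref{c3:eq3.9} forces ${\bf A}=\mu^{-1}\bar{\bf u}\bar{\bf v}^{T}$ to have rank one, so for $n\geqslant 2$ the condition cannot hold for a non-singular ${\bf A}$. Your argument therefore actually shows that no such local maximum exists, and the corollary holds vacuously.
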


\begin{proof}
Let $F(\tau) = \bar{f}({\bf \bar{u}}(\tau))$ attain a local maximum at $\tau^{*} > 0$ over the admissible set \eqref{c3:eq3.1}, so that $F(\tau^* + \Delta\tau) - F(\tau^*) \leqslant 0$ for all sufficiently small $\Delta\tau$. From \eqref{c3:eq3.7}:
\begin{equation}\label{c3:eq3.10}
	F(\tau^*)\sum\limits_{i,j=1}^n a_{ij}^{'} (\tau^*)\bar{u}_i(\tau^*)\bar{v}_j(\tau^*) \Delta \tau \leqslant 0.
\end{equation}
Since the fitness landscape matrix belongs to  $\mathcal{M}$:
\begin{equation}\label{c3:eq3.11}
	\sum\limits_{i,j=1}^n a_{ij}^{'} (\tau^*)a_{ij}(\tau^*) \Delta \tau \leqslant 0.
\end{equation}
Since \eqref{c3:eq3.10} and \eqref{c3:eq3.11} must hold for all sufficiently small $\Delta\tau$, it follows that
\begin{equation*}
	\begin{aligned}	
		&\sum\limits_{i,j=1}^n a_{ij}^{'}(\tau^*)\bar{u}_i(\tau^*)\bar{v}_j(\tau^*) = 0, \\		
		&\sum\limits_{i,j=1}^n a_{ij}^{'}(\tau^*)a_{ij}(\tau^*) = 0,
	\end{aligned}
\end{equation*}
from which condition \eqref{c3:eq3.9} follows. 
\end{proof}

To derive sufficient conditions for a maximum, we require the representation of $\delta^2\bar{f}(\tau)$ on the set of variations satisfying $\delta\bar{f}(\tau)=0$. Substituting the expansions \eqref{c3:eq3.5} into \eqref{c3:eq3.2} and collecting terms involving $\delta{\bf \bar{u}}(\tau)$:
\begin{equation}
	\Big(\delta^{2}{\bf A}\Big){\bf \bar{u}} + 2\Big(\delta{\bf A}\Big)\Big(\delta{\bf \bar{u}}\Big) + {\bf A}\Big(\delta^{2}{\bf \bar{u}}\Big) = \Big(\delta^{2}\bar{f}\Big){\bf I}.
\label{c3:eq3.12}
\end{equation}

When $\delta\bar{f}(\tau) = 0$, equation \eqref{c3:eq3.8} gives $\delta{\bf \bar{u}} = -{\bf A}^{-1}\Big(\delta{\bf A}\Big){\bf \bar{u}}$. Taking the scalar product of \eqref{c3:eq3.12} with ${\bf \bar{v}}$ and using $\Big(\delta^{2}{\bf \bar{u}, I}\Big) = 0$:
$$
	\delta^{2}\bar{f}\Big({\bf \bar{v}, I}\Big) = \Bigg(\Big(\delta^{2}{\bf A}\Big) {\bf \bar{u}, \bar{v}}\Bigg) - 2\Bigg(\Big(\delta{\bf A}\Big) {\bf A}^{-1}\Big(\delta{\bf A}\Big) {\bf \bar{u}, \bar{v}}\Bigg).
$$ 

Let $b_{ij}(\tau)$, $i, j=\overline{1,n}$ denote the entries of the matrix $B(\tau)=2(\delta A)A^{-1}(\delta A)$. The sufficient conditions for a maximum then take the form
\begin{equation*}
	\sum\limits_{i,j=1}^n \left(a_{ij}''(\tau) - b_{ij}(\tau)\right) \bar{u}_i(\tau)\bar{v}_j(\tau) \leqslant 0
\end{equation*}
subject to
\begin{equation*}
	\sum\limits_{i,j=1}^n a_{ij}'(\tau) \bar{u}_i(\tau)\bar{v}_j(\tau) = 0.
\end{equation*}

\clearpage

\section{Numerical Method for the Evolutionary Adaptation Process}\label{c3:section:3.3}
The fitness variation formulae derived above reduce the problem of maximising the mean fitness $\bar{f}(\tau)$ to a sequence of linear programming problems.

It is important to emphasise that the proposed adaptation method is valid only for non-degenerate replicator systems. Accordingly, the numerical procedure must verify the condition ${\bf \bar{u}}(\tau) > 0$ for all $\tau \geqslant 0$. This means that whenever one or more components of ${\bf \bar{u}}(\tau)$ approach the boundary of the simplex $S_{n}$, one must ensure that ${\bf \bar{u}}(\tau) + \delta{\bf \bar{u}}(\tau)\Delta\tau \in \inside S_{n}$. An explicit formula for $\delta{\bf \bar{u}}(\tau)$ is therefore required. From \eqref{c3:eq3.8}:
$$
	\delta{\bf \bar{u}}(\tau) = \Big(\delta\bar{f}(\tau)\Big){\bf A}^{-1}(\tau){\bf I} - {\bf A}^{-1}(\tau)\Big(\delta{\bf A}(\tau)\Big){\bf \bar{u}}(\tau).
$$ 
Using the identities
\begin{equation*}
	\begin{aligned}
		&{\bf \bar{u}}(\tau) = \bar{f}(\tau){\bf A}^{-1}(\tau){\bf I}, \quad \delta\bar{f}(\tau) = \Bigg(\Big(\delta{\bf A}(\tau)\Big){\bf \bar{u}}(\tau), {\bf \bar{v}}(\tau)\Bigg)\bar{f}(\tau), \\
		&{\bf \bar{v}}(\tau) = \Big({\bf A}^{-1}\Big)^{T}{\bf I},
	\end{aligned}
\end{equation*}
we obtain
\begin{equation}
	\delta{\bf \bar{u}}(\tau) = \Bigg(\Big(\delta{\bf A}(\tau)\Big){\bf \bar{u}}(\tau), \Big({\bf A}^{-1}\Big)^{T}{\bf I}\Bigg){\bf \bar{u}}(\tau) - {\bf A}^{-1}(\tau)\Big(\delta{\bf A}(\tau)\Big){\bf \bar{u}}(\tau).
	\label{c3:eq3.13}
\end{equation}
A direct check confirms that $\Big(\delta{\bf \bar{u}}(\tau), {\bf I}\Big) = 0$.

If, for instance, a component $\bar{u}_{i}(\tau) = \delta_{i}$ with $\delta_{i} \to 0$ is  sufficiently small, then the admissible variation $\delta\bar{u}_{i}(\tau)$ must satisfy $\delta_{i} + \delta\bar{u}_{i}(\tau) > 0$, which requires
$$
	\delta_{i}\Bigg(1 + \Bigg(\Big(\delta{\bf A}(\tau)\Big){\bf \bar{u}}(\tau), \Big({\bf A}^{-1}\Big)^{T}{\bf I}\Bigg)\Bigg) > \Bigg({\bf A}^{-1}(\tau)\Big(\delta{\bf A}(\tau)\Big){\bf \bar{u}}(\tau)\Bigg)_{i},
$$
where the subscript $i$ denotes the $i$-th component of the vector ${\bf A}^{-1}(\tau)\Big(\delta{\bf A}(\tau)\Big){\bf \bar{u}}(\tau)$.

Let $\varepsilon > 0$ be a fixed step size for the adaptation process in evolutionary time, so that $\Delta\tau = \varepsilon > 0$. We adopt the hypothesis of sufficiently slow variation of the fitness landscape matrix, namely that for all $\tau \geqslant 0$:
$$
	\begin{aligned}
		&\displaystyle|\delta a_{ij}(\tau)| = |a_{ij}^{'}(\tau)| \leqslant \alpha\varepsilon, \\
		&\displaystyle|\delta^{2} a_{ij}(\tau)| = |a_{ij}^{''}(\tau)| \leqslant \beta\varepsilon^{2}, \quad i, j = \overline{1, n},\\
	\end{aligned}
$$
where $\alpha$ and $\beta$ are positive constants. Under these bounds, each expansion in \eqref{c3:eq3.5} holds with error $O(\varepsilon^{3})$ when retaining only linear terms in $\Delta\tau = \varepsilon$, and with error $O(\varepsilon^{5})$ when retaining quadratic terms.

Expression \eqref{c3:eq3.7} for the fitness variation can be written as
$$
	\delta\bar{f}(\tau) = \bar{f}(\tau)\sum\limits_{i, j = 1}^{n}a_{ij}^{'}(\tau)\bar{u}_{i}(\tau)\bar{v}_{j}(\tau).
$$
The vectors ${\bf \bar{u}}(\tau)$ and ${\bf \bar{v}}(\tau)$ are determined from equations \eqref{c3:eq3.2} and \eqref{c3:eq3.6} respectively. The constraint \eqref{c3:eq3.1} requires that inequality \eqref{c3:eq3.10} hold for all $\tau \geqslant 0$.

Each step of the evolutionary process of length $\Delta\tau = \varepsilon$ thus reduces to the linear programming problem of choosing the matrix elements of ${\bf A} \in \mathcal{M}$ so as to maximise
$$
	\delta\bar{f}(\tau) = \sum\limits_{i, j = 1}^{n}a_{ij}^{'}(\tau)\bar{u}_{i}(\tau)\bar{v}_{j}(\tau) \to \max,
$$
subject to the constraints
$$
	\sum\limits_{i, j = 1}^{n}a_{ij}^{'}(\tau)a_{ij}(\tau) \leqslant 0, \quad |a_{ij}^{'}(\tau)| \leqslant \alpha\varepsilon, \quad i, j = \overline{1, n}.
$$
When one or more components of ${\bf \bar{u}}(\tau)$ are close to degeneracy, the additional constraints from formula \eqref{c3:eq3.13} must be incorporated.

Convexity of $\mathcal{M}$ and $S_n$ does not in general guarantee convexity of
$\bar{f}(\tau)$ or the attainment of a global maximum. Nevertheless, numerical
experiments show that the equilibrium ${\bf \bar{u}}(\tau) \in \inside S_n$
of~\eqref{c3:eq3.2} remains unchanged over a large number of iterations.

We now show that if the equilibrium ${\bf u}^*$ of~\eqref{c3:eq3.2} remains fixed
with ${\bf u}^* \in \inside S_n$ throughout $0 \leqslant \tau \leqslant \tau^*$,
then the functional
\begin{equation*}
	F({\bf A}) = \Big({\bf A}(\tau){\bf u}^*,\, {\bf u}^*\Big) = \bar{f}(\tau),
	\quad {\bf A} \in \mathcal{M}
\end{equation*}
attains its global maximum at some ${\bf A}^* \in \mathcal{M}$.

Linearity and convexity of $F({\bf A})$ are clear. Boundedness follows from
\begin{equation*}
	|F({\bf A})| \leqslant \|{\bf A}(\tau)\| \cdot \|{\bf u}^*\|^2.
\end{equation*}
Since $\|{\bf A}(\tau)\| \leqslant M$ and ${\bf u}^* \in \inside S_n$:
\begin{equation*}
	\|{\bf u}^*\|^2 = \sum\limits_{i=1}^n (u_i^*)^2
	\leqslant \left(\sum\limits_{i=1}^n u_i^*\right)^2 = 1,
\end{equation*}
hence $|F({\bf A})| \leqslant M$, and the global maximum is attained at some
${\bf A}^*(\tau) \in \mathcal{M}$.

The algorithm is applied to four replicator systems in
Sections~\ref{c3:section:3.4}--\ref{c3:section:3.7}.

\clearpage

\section{Hypercycle Evolution}\label{c3:section:3.4}
We consider the evolutionary adaptation of the hypercycle system, whose initial dynamics is governed by
\begin{equation}
	\frac{du_{i}}{dt} = u_{i}(t)\Big(u_{i - 1}(t) - f(t)\Big), \quad i = \overline{1, n},
	\label{c3:eq3.14}
\end{equation}
where 
$$
	u_{0} = u_{n},\quad \sum\limits_{i = 1}^{n}u_{i}(t) = 1, \quad f({\bf u}) = \sum\limits_{i = 1}^{n}u_{i}(t)u_{i - 1}(t),
$$ 
over the set $\mathcal{M}$ defined by \eqref{c3:eq3.1} with $M = n^{2}$.

In the numerical experiments, the step size $\varepsilon$ governing the rate of evolutionary change in the fitness landscape takes values from the set $\{1 \cdot 10^{-3},\, 5 \cdot 10^{-4},\, 1 \cdot 10^{-4}\}$. For a third-order hypercycle, the corresponding changes in the norm of ${\bf A}$ are $5 \cdot 10^{-3},\, 1 \cdot 10^{-3}$, and $5 \cdot 10^{-5}$, respectively.

Fig.~\ref{c3:fig3.2a} shows the evolution of the equilibrium coordinates of system \eqref{c3:eq3.14} in evolutionary time $\tau$ for $n = 9$. During the first $125$ steps, the coordinates remain practically unchanged ($u_{i} = 1/9$, $i = \overline{1, 9}$). Beginning from step $125$, the fixed-point components undergo splitting.

Fig.~\ref{c3:fig3.2b} shows the evolution of the mean fitness of system \eqref{c3:eq3.14} for $n = 9$. As seen from the graph, the fitness is a monotonically increasing function of the evolutionary time $\tau$.  

\begin{figure}[h!]
	\centering
	\includegraphics[width=0.8\linewidth]{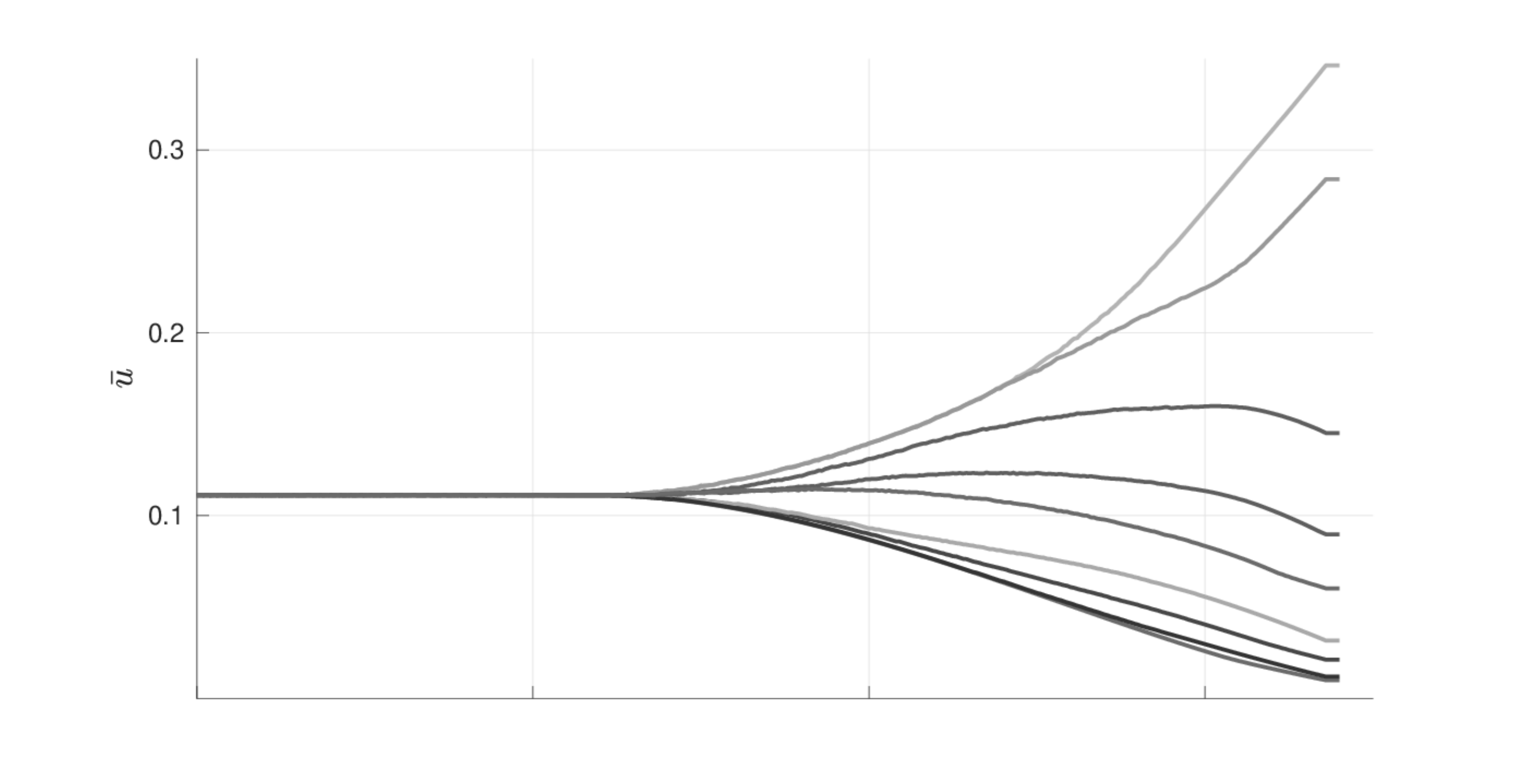}
	\caption{Equilibrium coordinates ${\bf \bar{u}}$ of system~\eqref{c3:eq3.14} as functions
		of evolutionary time $\tau$ for $n = 9$. Components remain equal ($u_i = 1/9$)
		for the first 125 steps, then split.}
	\label{c3:fig3.2a}
\end{figure}

\begin{figure}[h!]
	\centering
	\includegraphics[width=0.8\linewidth]{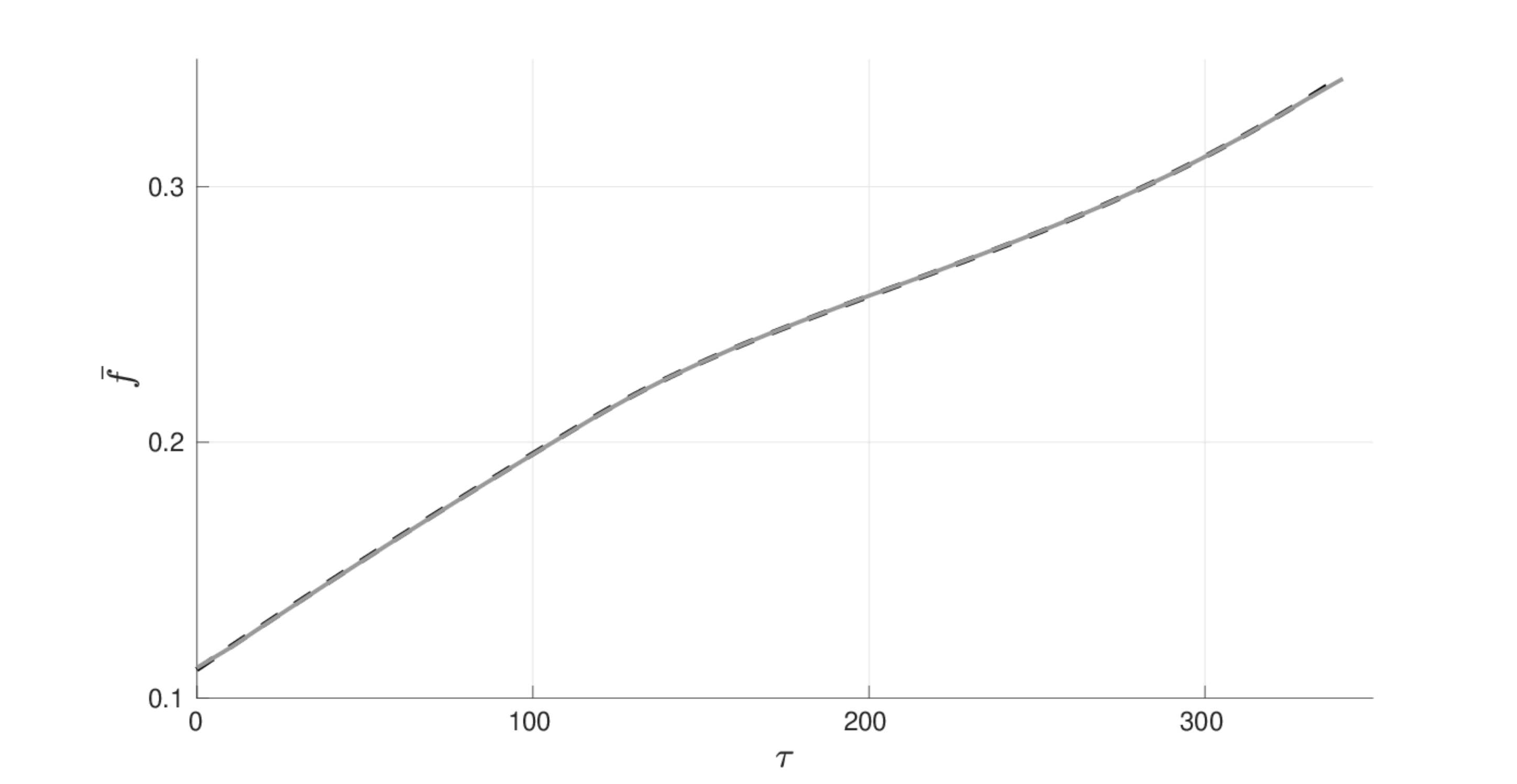}
	\caption{Mean fitness $\bar{f}(\tau)$ of system~\eqref{c3:eq3.14} for $n = 9$.
		The fitness increases monotonically throughout the evolutionary process.}
	\label{c3:fig3.2b}
\end{figure}

The effect of the evolutionary process on the species interaction graph is readily visualised. Fig.~\ref{c3:fig3.3a} shows the interaction graph of the original hypercycle \eqref{c3:eq3.14} for $n = 5$, while Fig.~\ref{c3:fig3.3b} shows the corresponding graph of the system obtained after $350$ evolutionary iterations. As seen from the evolved graph, the original hypercycle --- built on purely altruistic replication, in which each species catalyses only its neighbour in a closed cycle --- undergoes significant qualitative changes. One notable outcome is the emergence of autocatalytic replication within the system.

\begin{figure}[H]
	\centering
	\includegraphics[width=0.6\linewidth]{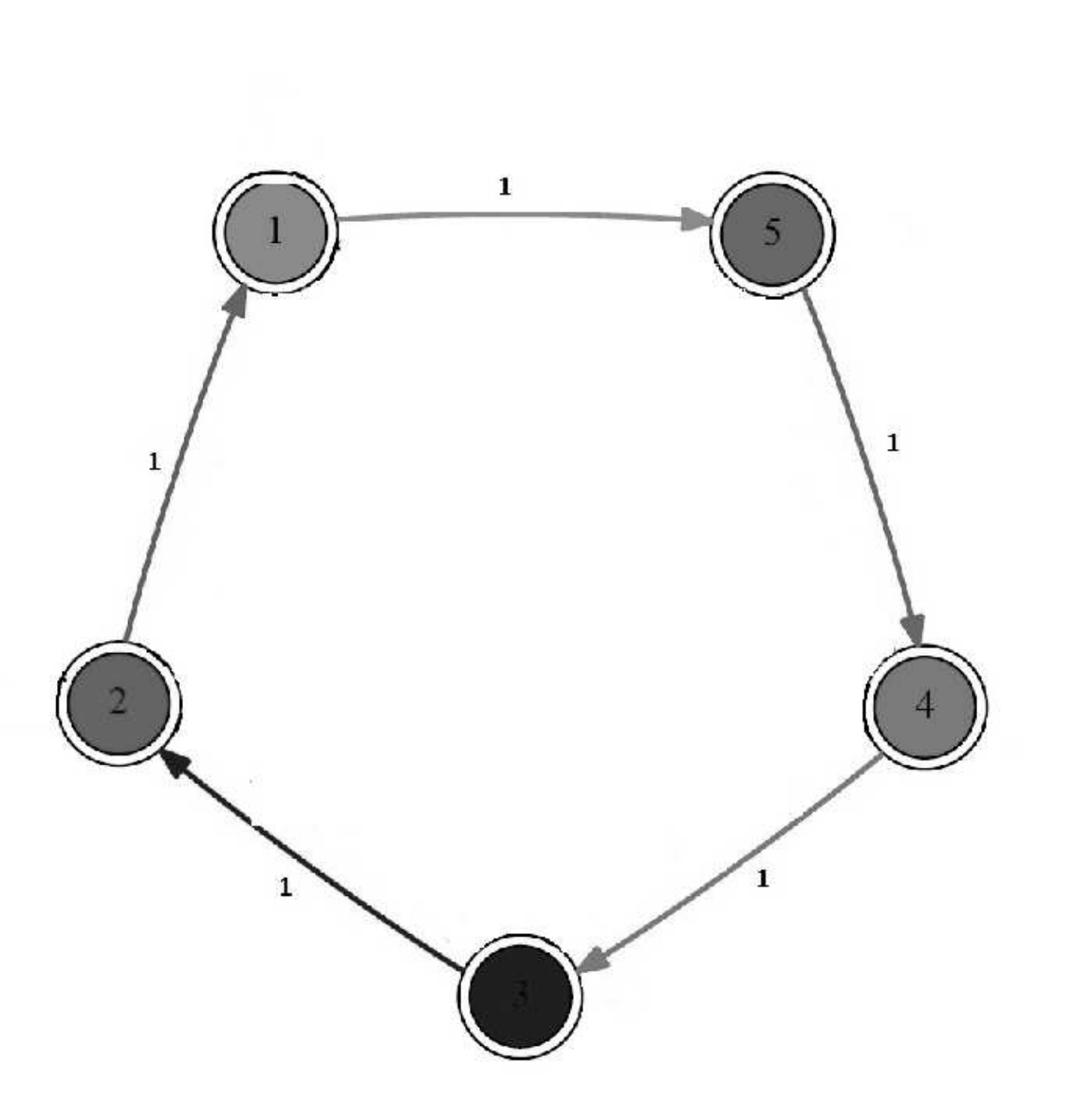}
	\caption{Interaction graph of the original fifth-order hypercycle~\eqref{c3:eq3.14}.
		Each species catalyses only its cyclic neighbour --- purely altruistic replication,
		no autocatalytic links.}
	\label{c3:fig3.3a}
\end{figure}

\begin{figure}[H]
	\centering
	\includegraphics[width=0.7\linewidth]{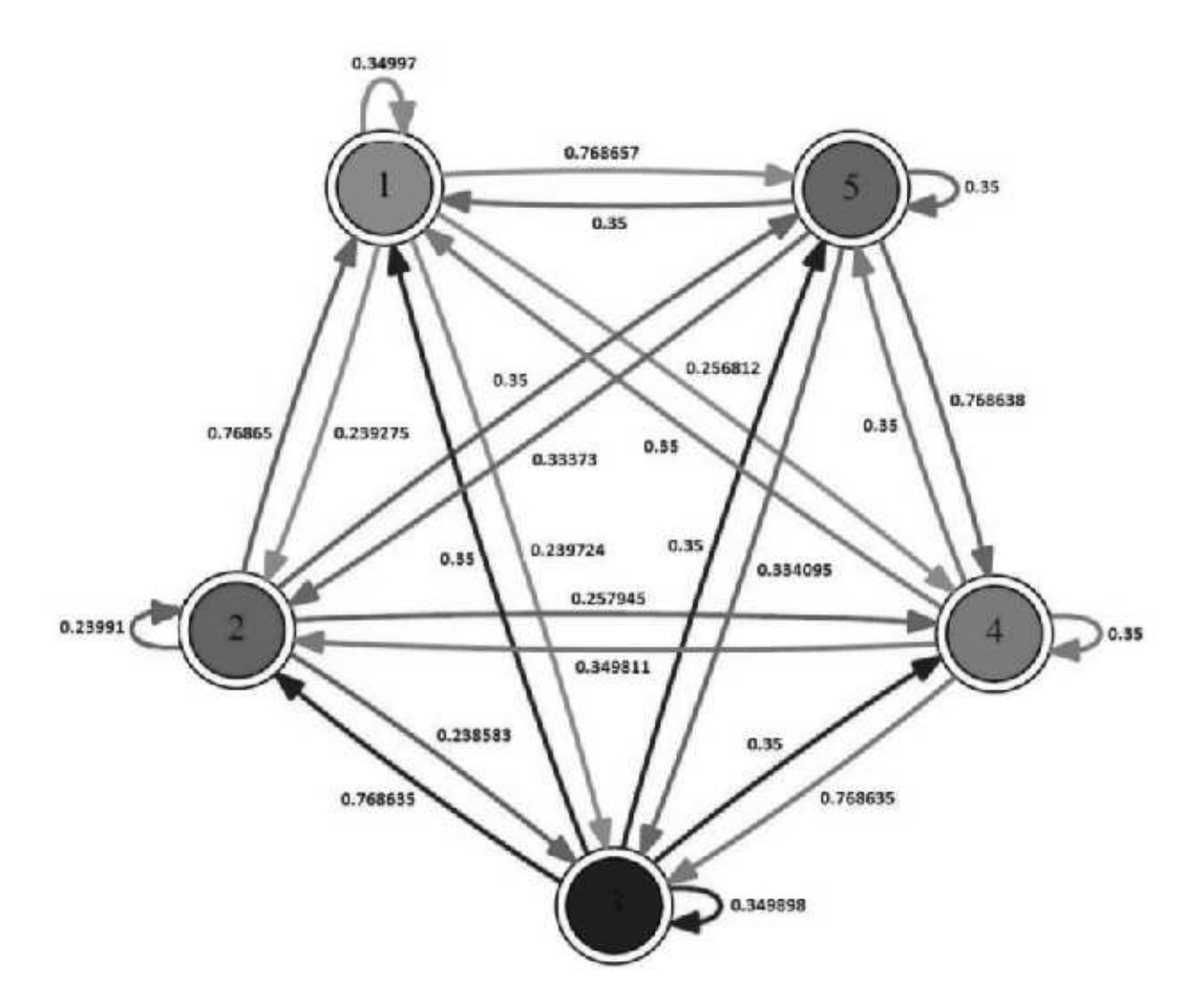}
	\caption{Interaction graph of the evolved system at iteration $350$.
		Reverse hypercyclic connections have appeared alongside the original ones,
		and each species has acquired autocatalytic self-replication links.}
	\label{c3:fig3.3b}
\end{figure}

In terms of hypercycle classification~\cite{c3:Eigen1982}, a hypercycle of connectivity degree six emerges. The increase in mean fitness is accompanied not only by greater complexity in the system's behaviour, but also by the acquisition of resistance to parasitic species.

We now examine the interaction of a fifth-order hypercycle with a parasite in more detail. System \eqref{c3:eq3.14} becomes:
\begin{equation}
	\begin{aligned}	
		&\displaystyle\dot{u}_{i} = u_{i}(u_{i - 1} - f({\bf u})), \quad i = \overline{1, 5}, \\
		&\displaystyle\dot{u}_{6} = u_{6}(1.7u_{5} - f({\bf u})),\\
	\end{aligned}
	\label{c3:eq3.15}
\end{equation}
$$
	f({\bf u}) = \sum\limits_{i = 1}^{5}u_{i}u_{i - 1} + 1.7u_{5}u_{6},
	\quad \sum\limits_{i = 1}^{6}u_{i} = 1, \quad u_{0} = u_{5}.
$$

In the classical setting, the original hypercycle collapses and only the parasite survives. However, after $200$ evolutionary steps, the evolved hypercycle becomes resistant to the parasite.

Fig.~\ref{c3:fig3.4} shows the active dynamics of the components $u_{i}$, $i = \overline{1, 5}$ of system \eqref{c3:eq3.15}: the original hypercycle collapses.

Fig.~\ref{c3:fig3.4}\,(b) shows the species frequency dynamics when the parasite (dashed line) is introduced to the evolved hypercycle: the evolved hypercycle survives while the parasite collapses.

\clearpage
\begin{figure}[H]
	\begin{minipage}[ht]{1.0\linewidth}
		\center{\includegraphics[width=0.92\linewidth]{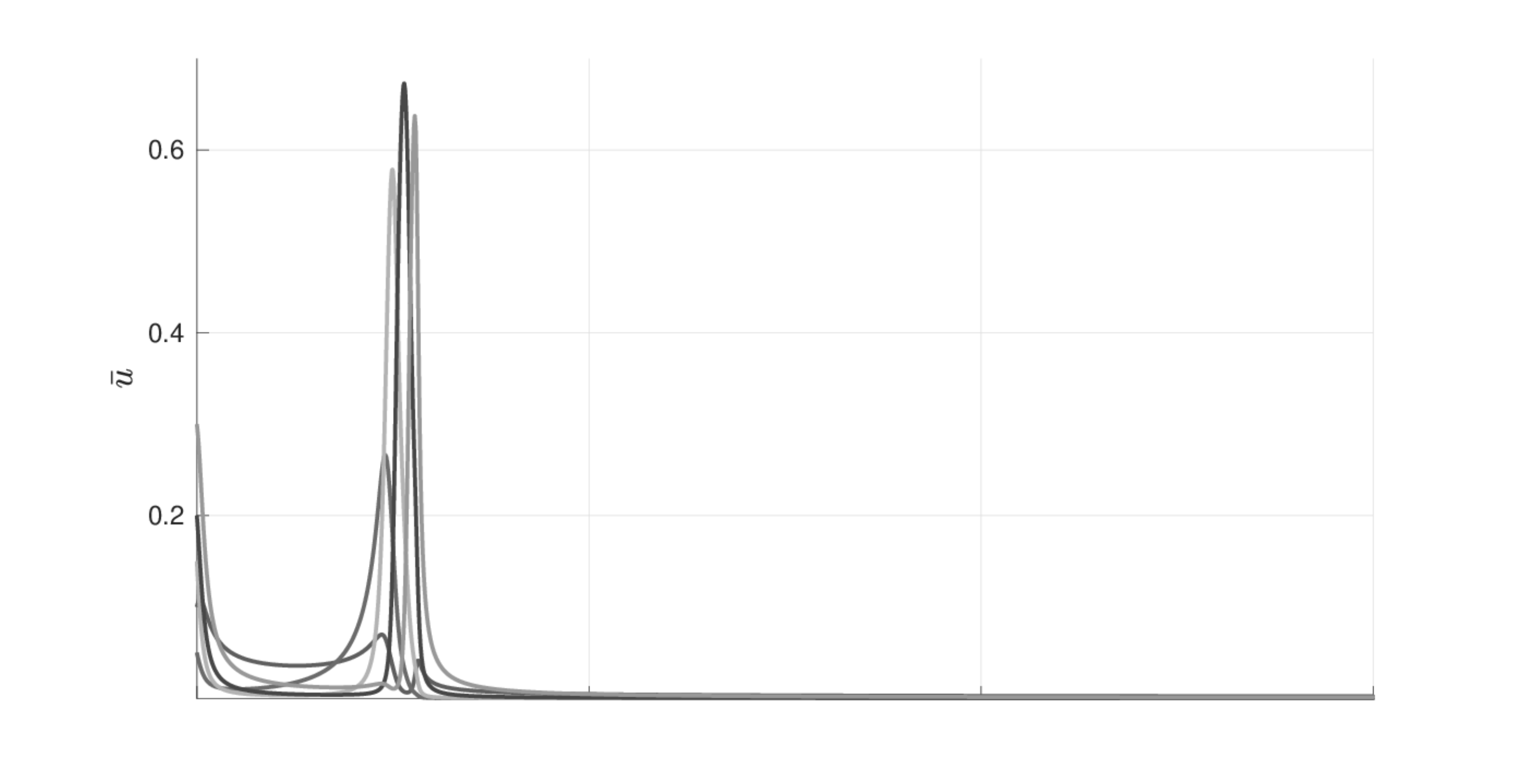} (a)}
	\end{minipage}
	\vfill
	\begin{minipage}[ht]{1.0\linewidth}
		\center{\includegraphics[width=0.92\linewidth]{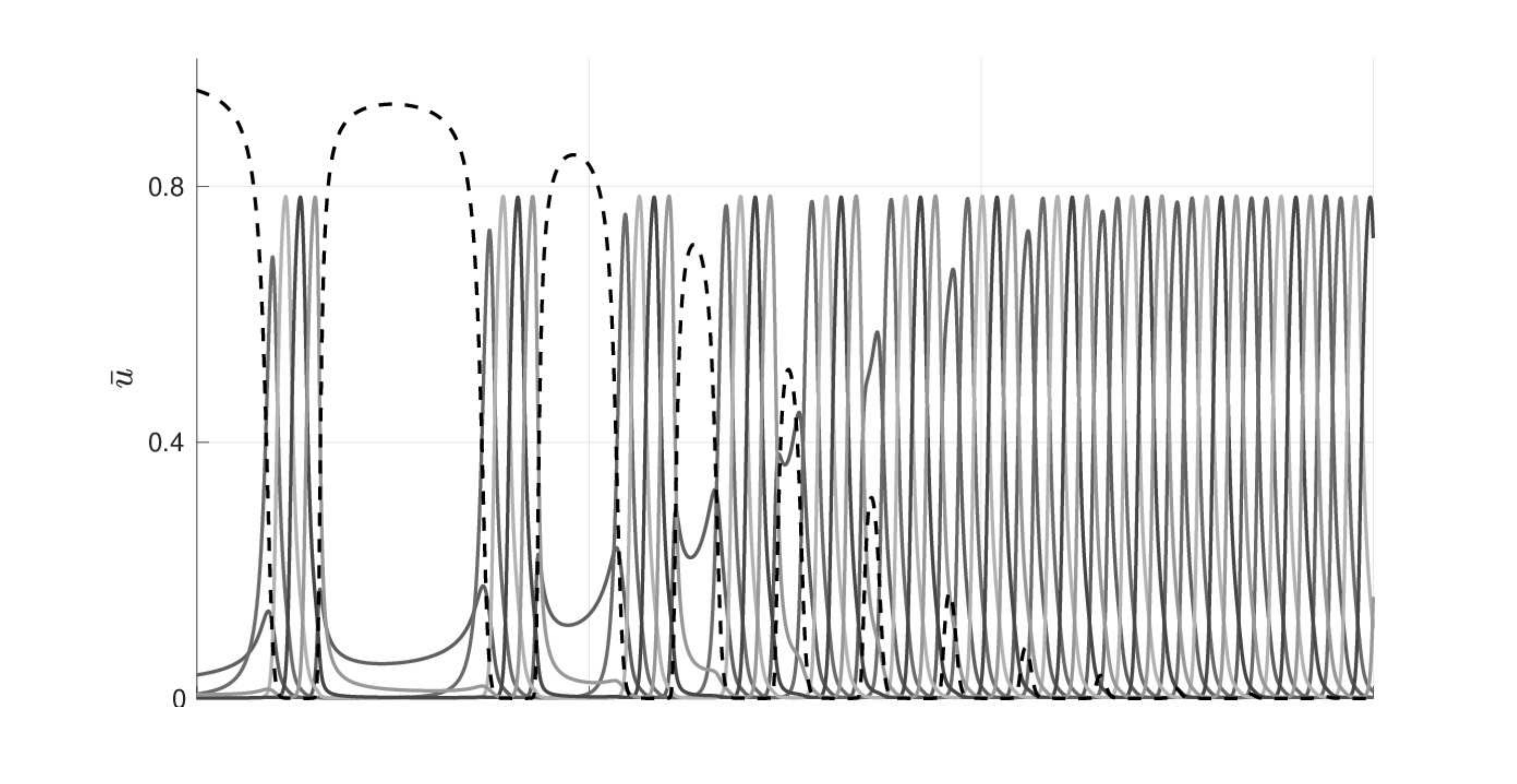} (b)}
	\end{minipage}
	\caption{(a) Frequency dynamics of the original fifth-order hypercycle interacting with a parasite. (b) Frequency dynamics when the parasite is introduced to the evolved fifth-order hypercycle at iteration $200$.}
	\label{c3:fig3.4}
\end{figure}     

The number of evolutionary steps required to produce a parasite-resistant hypercycle depends on the replication coefficients of the parasitic species. For example, if in system \eqref{c3:eq3.15} the coefficient is increased from $1.7$ to $1.8$, two hundred iterations no longer suffice (Fig.~\ref{c3:fig3.5}\,(a)); $250$ iterations are required in this case (Fig.~\ref{c3:fig3.5}\,(b)).

\begin{figure}[H]
	\begin{minipage}[ht]{1.0\linewidth}
		\center{\includegraphics[width=0.92\linewidth]{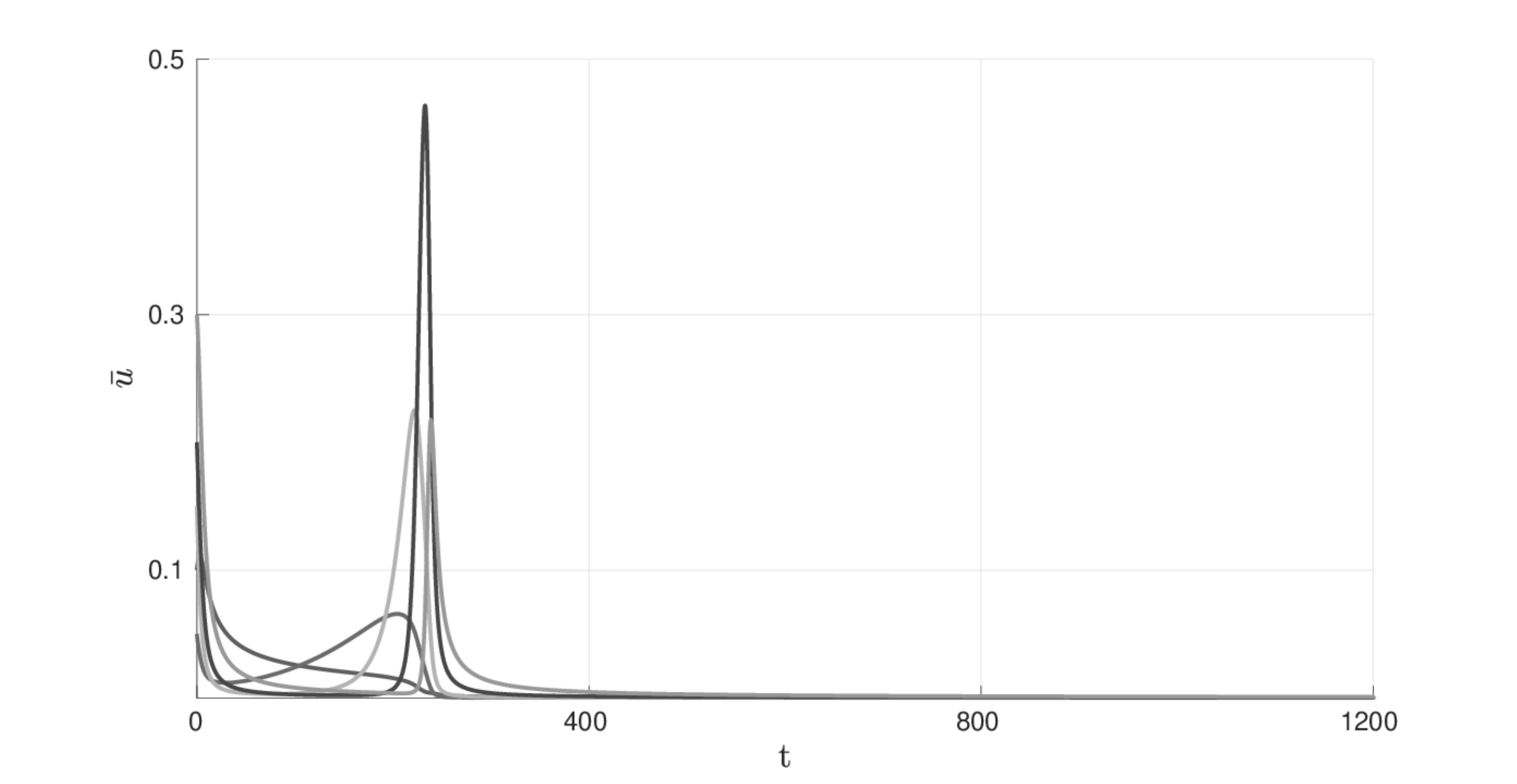} (a)}
	\end{minipage}
	\vfill
	\begin{minipage}[ht]{1.0\linewidth}
		\center{\includegraphics[width=0.92\linewidth]{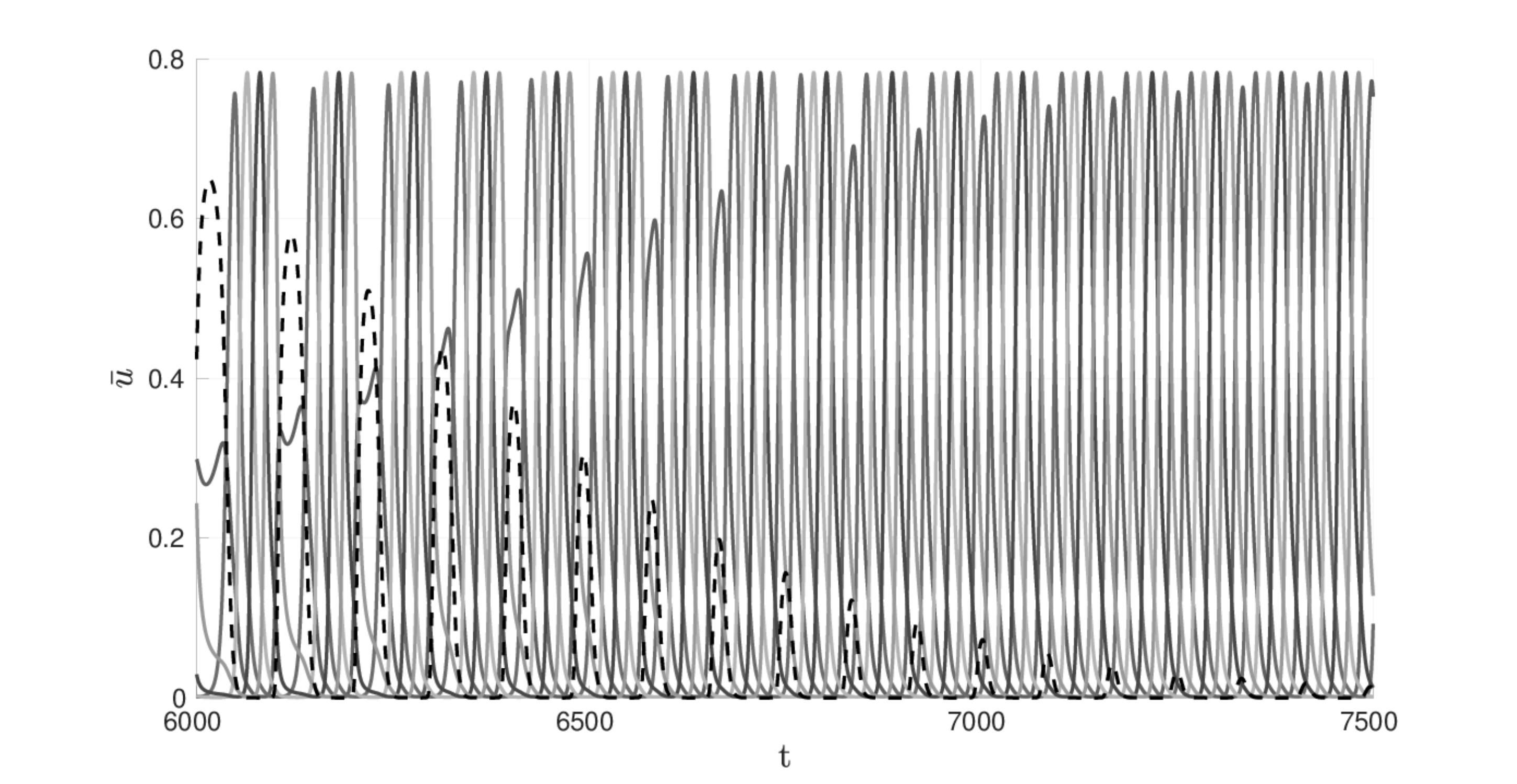} (b)}
	\end{minipage}
	\caption{(a) Frequency dynamics of the evolved fifth-order hypercycle (obtained at iteration $200$) interacting with a parasite. (b) Frequency dynamics when the parasite interacts with the evolved hypercycle at iteration $250$.}
	\label{c3:fig3.5}
\end{figure} 

\bigskip
Consider now the interaction of a ninth-order hypercycle with two parasites:
\begin{equation}
	\begin{aligned}
		&\displaystyle\dot{u}_{i} = u_{i}(u_{i - 1} - f({\bf u})), \quad i = \overline{1, 9}, \\
		&\displaystyle\dot{u}_{j} = u_{j}(0.95(u_{8} + u_{9}) - f({\bf u})), \quad j = 10,\, 11,\\
	\end{aligned}
	\label{c3:eq3.16}
\end{equation}
$$
	f({\bf u}) = \sum\limits_{i = 1}^{9}u_{i}u_{i - 1} + 0.95(u_{10} + u_{11})(u_{8} + u_{9}), \quad \sum\limits_{i = 1}^{11}u_{i} = 1, \quad u_{0} = u_{9}.
$$ 

\begin{figure}[H]
	\begin{minipage}[ht]{1.0\linewidth}
		\center{\includegraphics[width=0.8\linewidth]{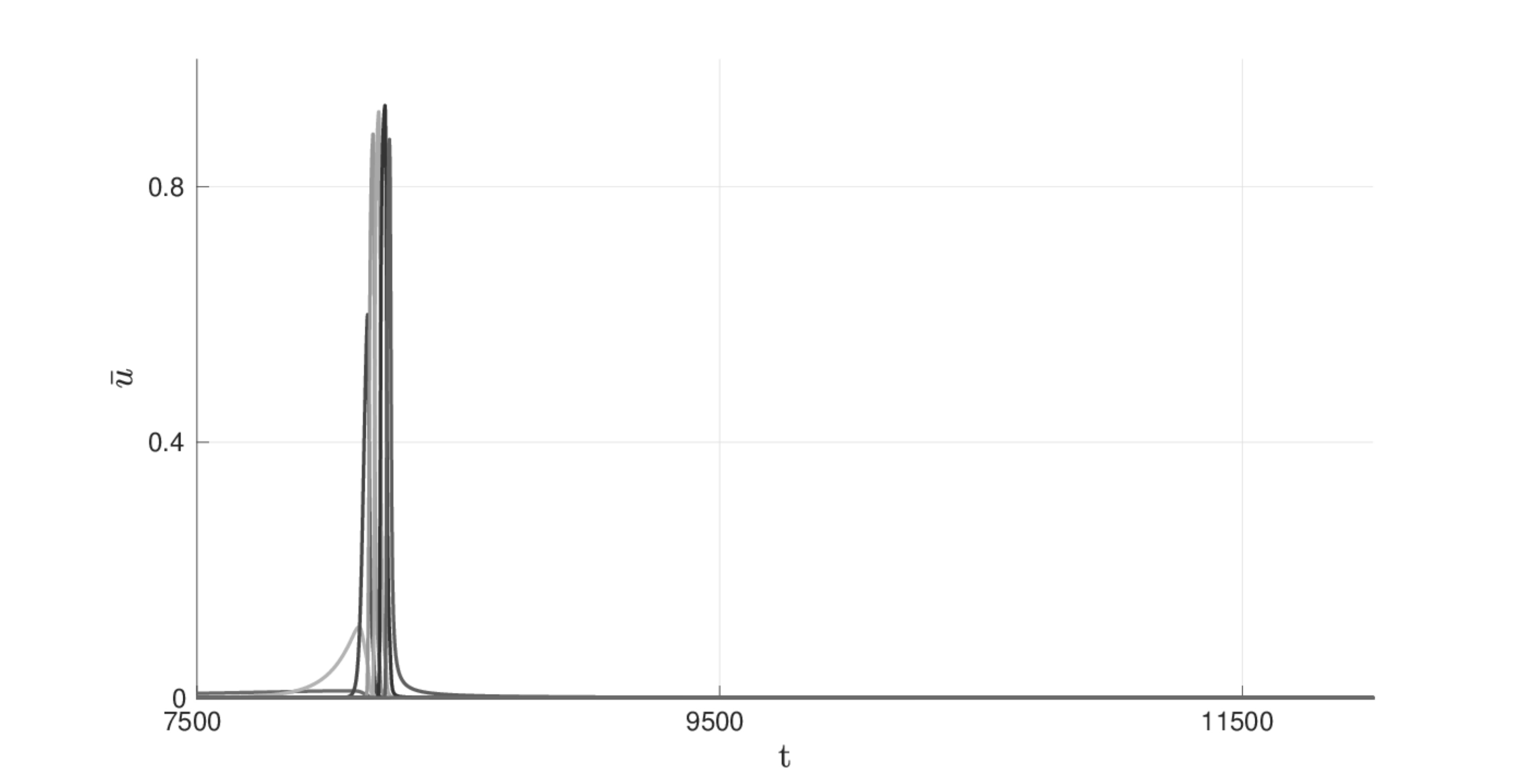} (a)}
	\end{minipage}
	\vfill
	\begin{minipage}[ht]{1.0\linewidth}
		\center{\includegraphics[width=0.8\linewidth]{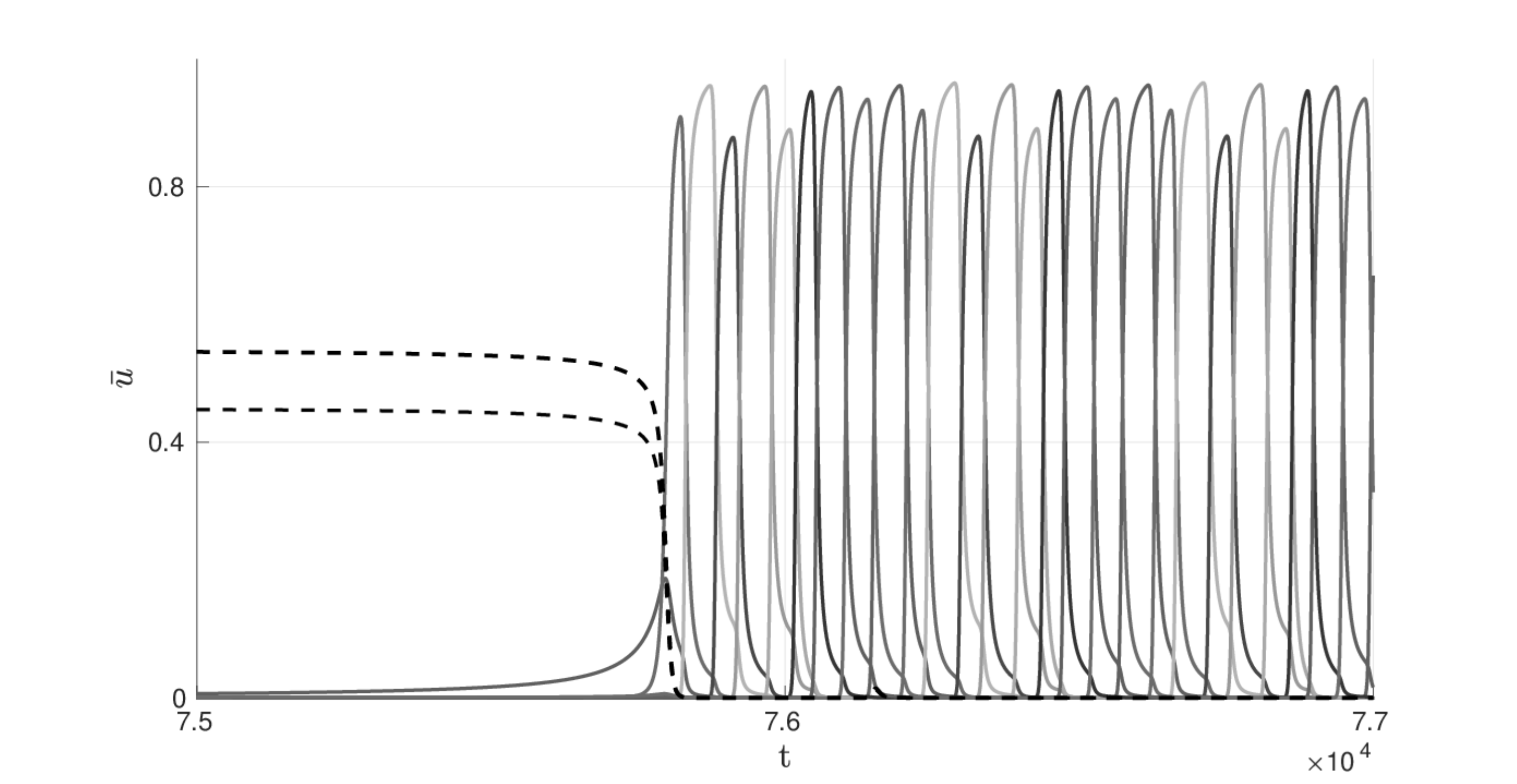} (b)}
	\end{minipage}
	\caption{(a) Frequency dynamics of the original ninth-order hypercycle interacting with two parasites. (b) Frequency dynamics when the two parasites are introduced to the evolved ninth-order hypercycle obtained at iteration $200$.}
	\label{c3:fig3.6}
\end{figure}

Fig.~\ref{c3:fig3.6}\,(a) shows the active dynamics of $u_{i}$, $i = \overline{1, 9}$ of system \eqref{c3:eq3.16}: the original hypercycle collapses. Fig.~\ref{c3:fig3.6}\,(b) shows the frequency dynamics when the two parasites (dashed lines) are introduced to the evolved hypercycle obtained at iteration $200$: the evolved hypercycle survives while the parasites collapse.

Thus, fitness landscape evolution substantially increases the system's resistance even when multiple parasites are present.

The evolutionary adaptation process has a natural bound: beyond a certain threshold the fitness landscape matrix ${\bf A}$ stabilises and its elements cease to change. If a penalty function $g = \prod\limits_{i = 1}^{n}u_{i}^{u_{i}}$ (related to the system entropy via $\ln{g} = \sum\limits_{i = 1}^{n}u_{i}\ln{u_{i}}$) is added to the constraints \eqref{c3:eq3.1}, the problem reduces to maximising
\begin{equation}
	F({\bf u}) = f({\bf u}) - \prod\limits_{i = 1}^{n}u_{i}^{u_{i}}, \quad \sum\limits_{i = 1}^{n}u_{i} = 1,
	\label{c3:eq3.17}
\end{equation}
over the set \eqref{c3:eq3.1}.

\begin{figure}[ht]
	\begin{minipage}[ht]{0.48\linewidth}
		\center{\includegraphics[width=\linewidth]{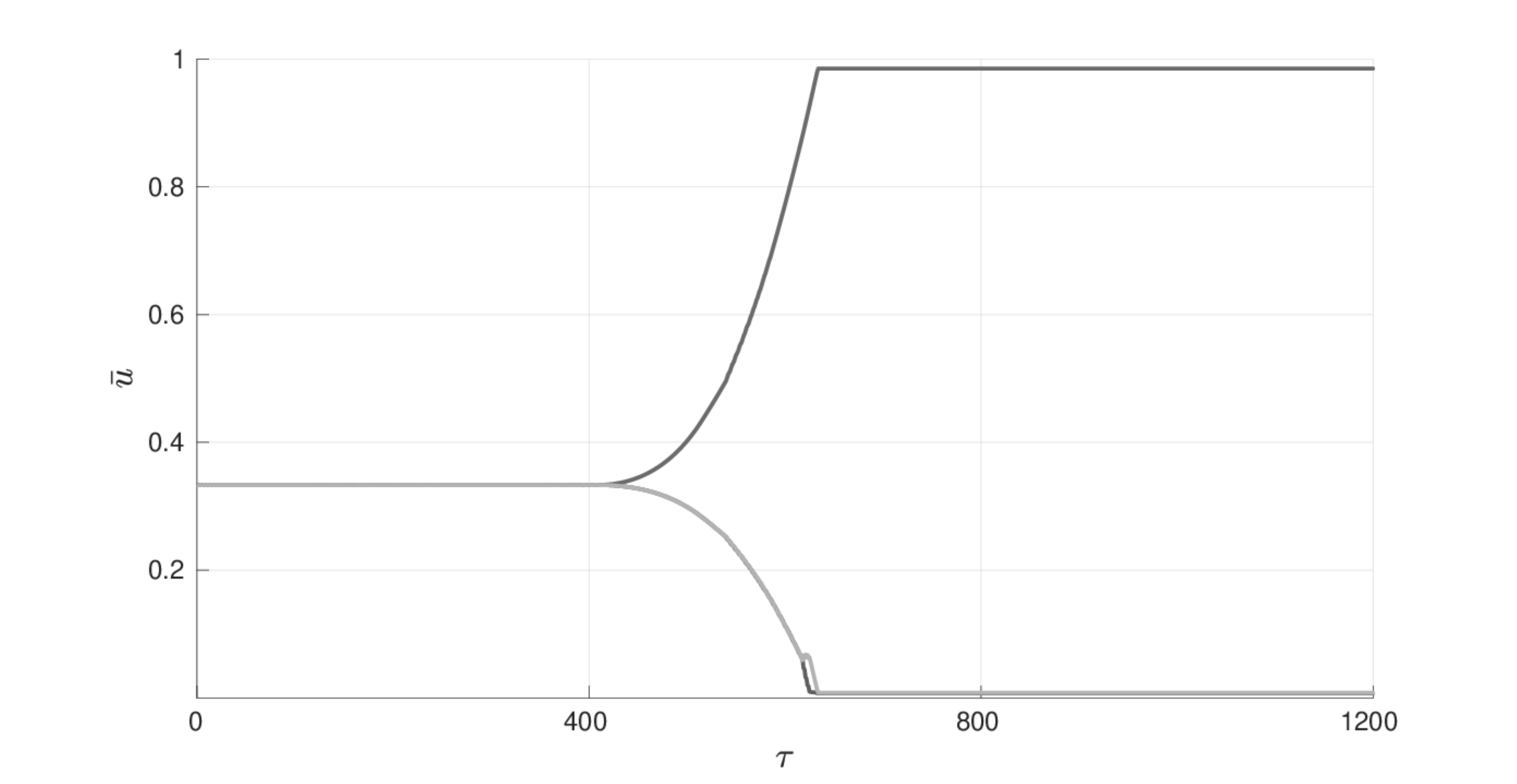} (a)}
	\end{minipage}
	\hfill
	\begin{minipage}[ht]{0.48\linewidth}
		\center{\includegraphics[width=\linewidth]{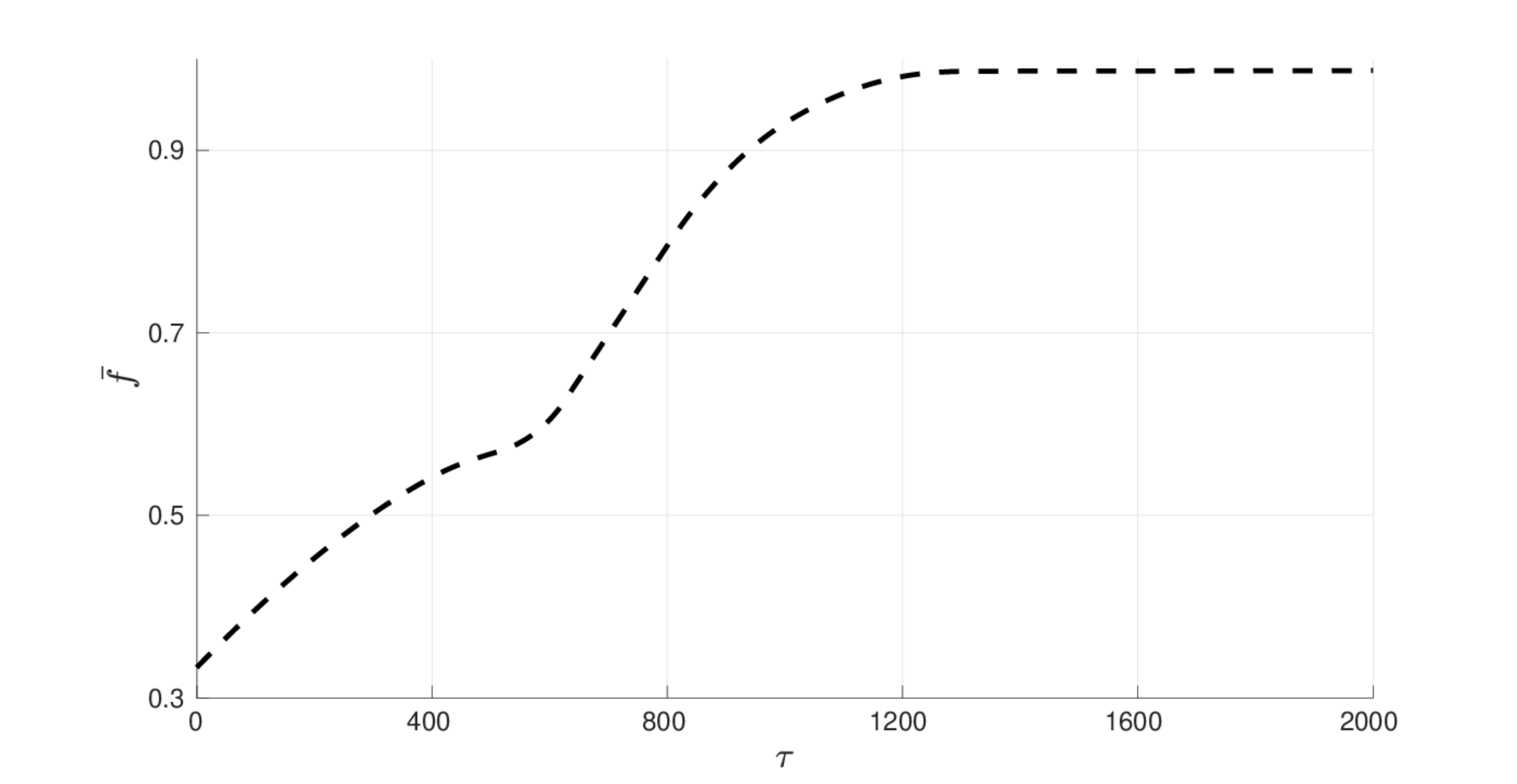} (b)}
	\end{minipage}
	\caption{Stabilisation for $n = 3$: (a) equilibrium coordinates and (b) mean fitness
		of hypercycle system~\eqref{c3:eq3.14} as functions of evolutionary time $\tau$.}
	\label{c3:fig3.7ab}
\end{figure}

\begin{figure}[ht]
	\begin{minipage}[ht]{0.48\linewidth}
		\center{\includegraphics[width=\linewidth]{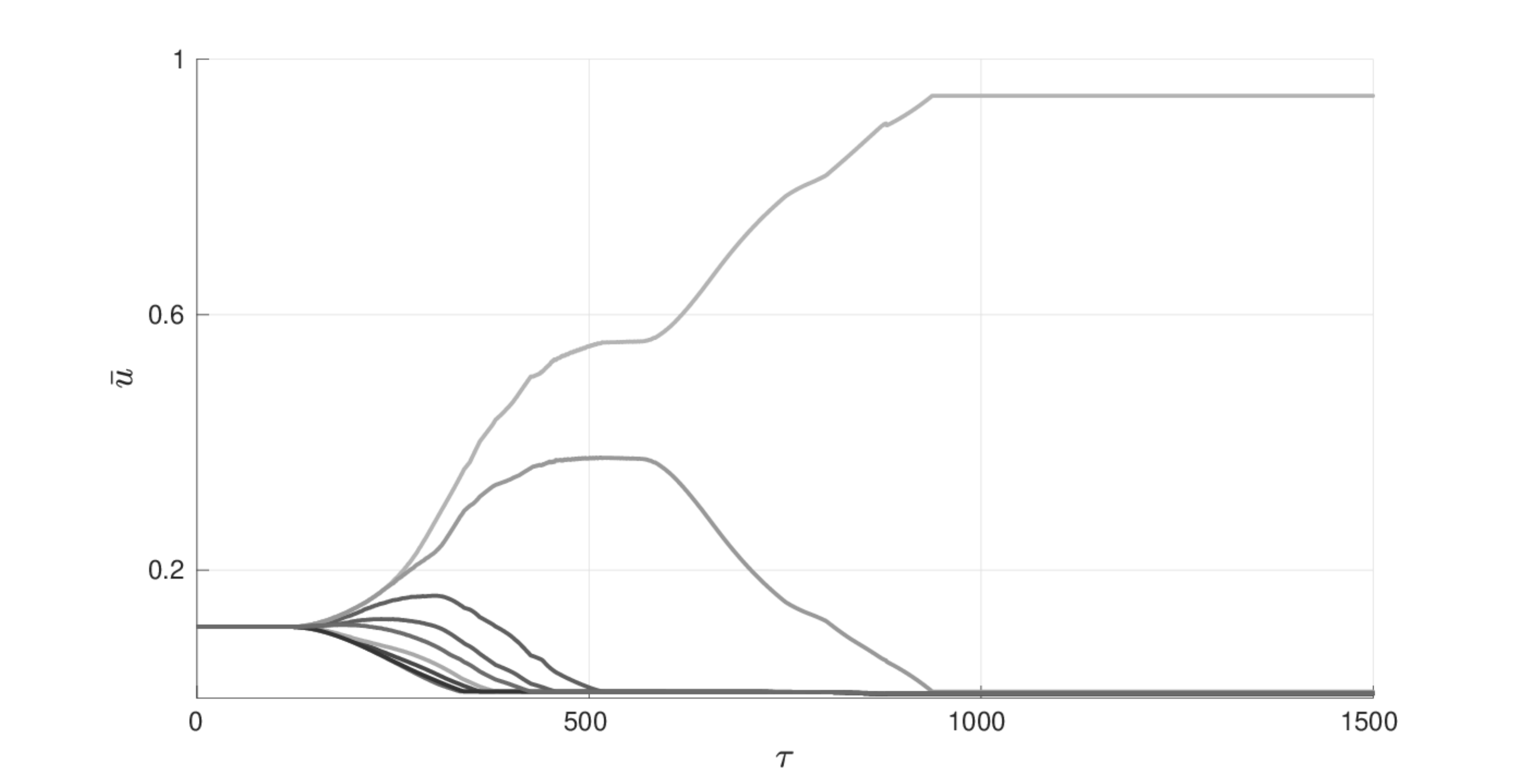} (c)}
	\end{minipage}
	\hfill
	\begin{minipage}[ht]{0.48\linewidth}
		\center{\includegraphics[width=\linewidth]{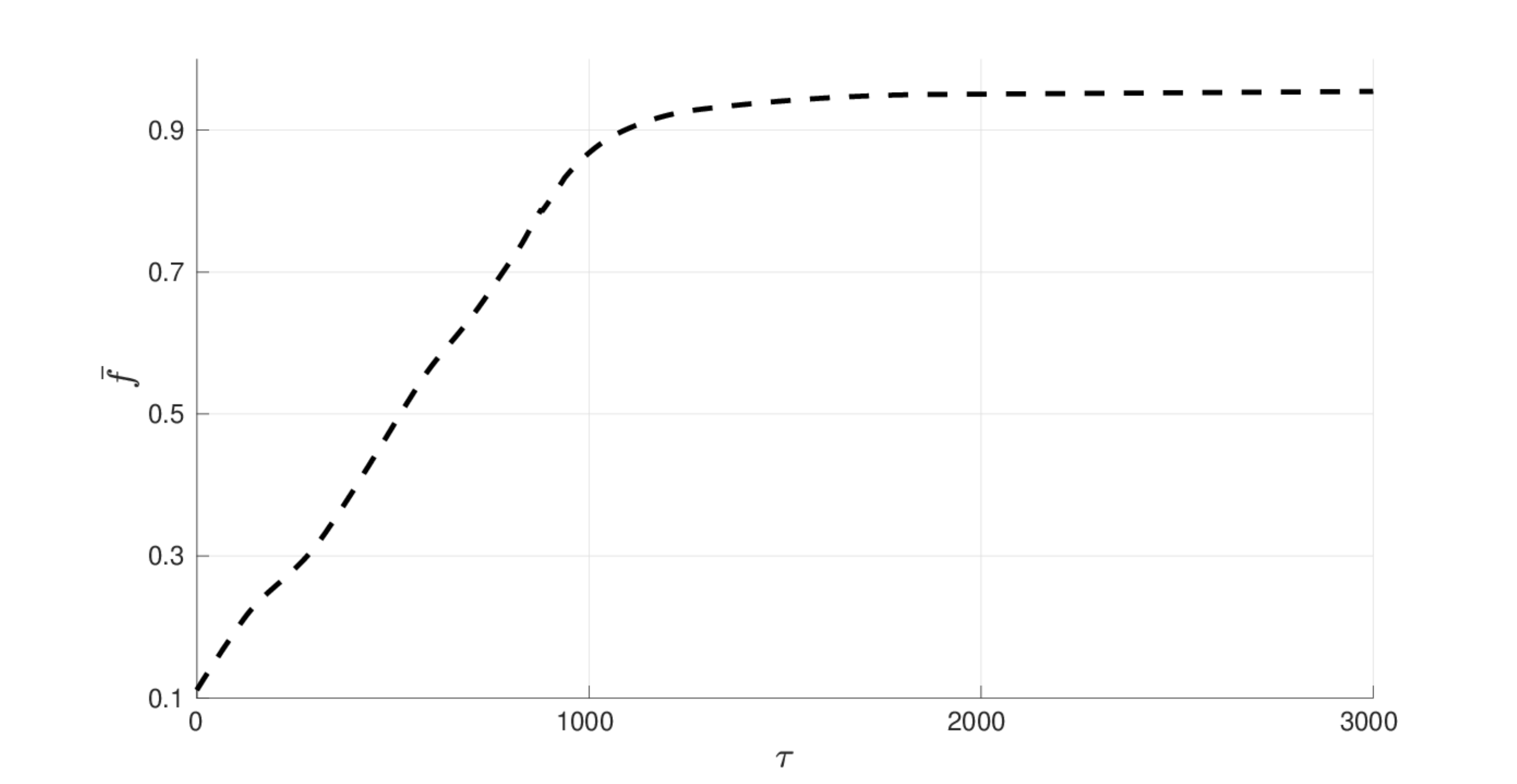} (d)}
	\end{minipage}
	\caption{Stabilisation for $n = 9$: (c) equilibrium coordinates and (d) mean fitness
		of hypercycle system~\eqref{c3:eq3.14} as functions of evolutionary time $\tau$.}
	\label{c3:fig3.7cd}
\end{figure}

Figs.~\ref{c3:fig3.7ab},  \ref{c3:fig3.7cd} show the evolution of the equilibrium coordinates and the mean fitness under the functional \eqref{c3:eq3.17}.

\clearpage

\section{Bi-Hypercycle Evolution}\label{c3:section:3.5}
Consider the system referred to hereafter as the \textit{bi-hypercycle}:
\begin{equation}
	\begin{aligned}	
		&\displaystyle\dot{u}_{i}(t) = u_{i}(t)\Big(a_{i}a_{i - 1}u_{i - 1}(t)u_{i - 2}(t) - f(t)\Big), \quad i = \overline{1, n}, \\
		&\displaystyle f(t) = \sum\limits_{i = 1}^{n}a_{i}a_{i - 1}u_{i}(t)u_{i - 1}(t)u_{i - 2}(t),\\
	\end{aligned}
	\label{c3:eq3.18}
\end{equation}
where $a_{i} > 0$, $a_{0} = a_{n}$, $u_{0}(t) = u_{n}(t)$, $u_{-1}(t) = u_{n - 1}(t)$.

In this type of hypercycle, the catalysis of species $i$ is carried out by the two preceding species $i - 1$ and $i - 2$. As in the simple hypercycle, the frequencies $u_{i}$ belong to the simplex $S_{n}$.

It is proved in~\cite{c3:Safro2013} that the bi-hypercycle is non-degenerate (permanent) for odd values $n \geqslant 5$. For odd degree $n \geqslant 5$, the bi-hypercycle has a unique equilibrium, which is stable for $n = 5$ and unstable for $n > 5$. Numerical experiments indicate that the qualitative behaviour of the phase trajectories of the bi-hypercycle is analogous to that of the simple hypercycle; however, a proof of the existence of a stable limit cycle in this case remains unknown to the authors.

To write system \eqref{c3:eq3.18} in matrix form, introduce
$$
	{\bf A} = 
	\begin{pmatrix}
		0 & 0 & \ldots & 0 & a_{1}\\
		a_{2} & 0 & \ldots & 0 & 0\\
		0 & a_{3} & \ldots & 0 & 0\\
		\ldots & \ldots & \ldots & \ldots & \ldots\\
		0 & 0 & \ldots & a_{n} & 0\\
	\end{pmatrix},
$$ 
${\bf U}(t) = \diag(u_{1}(t),\, u_{2}(t),\, \ldots,\, u_{n}(t))$, ${\bf u}(t) = (u_{1}(t),\, u_{2}(t),\, \ldots,\, u_{n}(t))$ and ${\bf I} = (1,\, 1,\, \ldots,\, 1)$. System \eqref{c3:eq3.18} then becomes
\begin{equation}
	\begin{aligned}	
		&\dot{u}_{i}(t) = u_{i}(t)\Bigg(\Big(\Big({\bf AU}(t)\Big)^{2}{\bf I}\Big)_{i} - f(t)\Bigg), \quad i = \overline{1, n}, \\
		&f(t) = \Bigg(\Big({\bf AU}(t)\Big)^{2}{\bf I},\, {\bf u}(t)\Bigg), \quad \Big({\bf U}(t){\bf I}, {\bf I}\Big) = 1.\\
	\end{aligned}
	\label{c3:eq3.19}
\end{equation}

The steady-state equilibrium at evolutionary time $\tau$ satisfies
\begin{equation}\label{c3:eq3.20}
	({\bf A}(\tau){\bf U}(\tau))^2{\bf I}=\bar{f}(\tau){\bf I},
\end{equation}
where
\begin{equation}\label{c3:eq3.21}
	\bar{f}(\tau) = \left(({\bf A}(\tau){\bf U}(\tau))^2{\bf I}, {\bf I}\right).
\end{equation}

To derive the fitness variance formula, introduce the adjoint equation to \eqref{c3:eq3.20} for the matrix $V(\tau) = \diag (v_1(\tau),\, v_2(\tau), \, \ldots, \, v_n(\tau))$:
\begin{equation}\label{c3:eq3.22}
	\Big({\bf A}^T(\tau)V(\tau){\bf A}^T(\tau)\Big)V(\tau){\bf I}={\bf I}.
\end{equation}

From \eqref{c3:eq3.22} it follows that
\begin{equation*}
	2{\bf A}(\tau){\bf U}(\tau)\left(\delta {\bf A}(\tau){\bf U}(\tau)+{\bf A}(\tau) \delta {\bf U}(\tau)\right){\bf I} = \delta \bar{f}(\tau){\bf I}.
\end{equation*}
Taking the scalar product with the adjoint vector $v = V(\tau){\bf I}$, where $V(\tau)$ solves \eqref{c3:eq3.22}:
\begin{align*}
	\Big({\bf A}(\tau){\bf U}(\tau){\bf A}(\tau)\,\delta{\bf U}(\tau){\bf I},\,
	V(\tau){\bf I}\Big)
	&= \Big(\delta{\bf U}(\tau){\bf I},\,
	\big({\bf A}^T(\tau){\bf U}(\tau){\bf A}^T(\tau)\big)V(\tau){\bf I}\Big) \\
	&= \Big(\delta{\bf U}(\tau){\bf I},\, {\bf I}\Big)
	= \sum_{i=1}^n \delta u_i = 0,
\end{align*}
since $\sum\limits_{i=1}^n \bar{u}_i(\tau)=1$ implies $\sum\limits_{i=1}^n \delta \bar{u}_i=0$. Consequently,
\begin{equation}\label{c3:eq3.23}
	\delta\bar{f}(\tau) = \dfrac{2\left(\delta {\bf A}(\tau){\bf U}(\tau){\bf I}, {\bf A}^TV(\tau){\bf I}\right)}{\sum\limits_{i=1}^n v_i }.
\end{equation}

Formula \eqref{c3:eq3.23} yields the necessary extremality condition for the fitness by the same argument as in Section~\ref{c3:section:3.2}; the iteration scheme carries over directly.

As in the simple hypercycle, numerical computations show that for
$0 \leqslant \tau \leqslant \tau^*$ the equilibrium
${\bf U}^* = \diag(u_1^*,\, \ldots,\, u_n^*)$ of~\eqref{c3:eq3.20} remains fixed.
The functional
\begin{equation*}
	F({\bf A}) = \Big(\big({\bf A}(\tau){\bf U}^*\big)^2{\bf I},\, {\bf I}\Big)
\end{equation*}
is continuous and bounded on the convex set $\mathcal{M}$. We show its convexity:
\begin{multline*}
	\dfrac{1}{2} F({\bf A}_{\lambda})
	= \dfrac{1}{2} F\big(\lambda {\bf A}_1(\tau) + (1-\lambda) {\bf A}_2(\tau)\big)
	\leqslant \lambda^2 F({\bf A}_1(\tau)) + (1-\lambda)^2 F({\bf A}_2(\tau)) \\
	\leqslant \lambda F({\bf A}_1(\tau)) + (1-\lambda) F({\bf A}_2(\tau)),
	\quad 0 < \lambda < 1.
\end{multline*}
Consequently, $F({\bf A})$ attains its maximum at some ${\bf A}^* \in \mathcal{M}$.

Consider now the fitness landscape evolution of the bi-hypercycle system \eqref{c3:eq3.19} with $n = 5$ and initial matrix
\begin{equation}
	{\bf A} = 
	\begin{pmatrix}
		0 & 0 & 0 & 0 & 1\\
		1 & 0 & 0 & 0 & 0\\
		0 & 1 & 0 & 0 & 0\\
		0 & 0 & 1 & 0 & 0\\
		0 & 0 & 0 & 1 & 0\\
	\end{pmatrix}.
	\label{c3:eq3.24}
\end{equation}

\begin{figure}[H]
	\begin{minipage}[ht]{1.0\linewidth}
		\center{\includegraphics[width=0.8\linewidth]{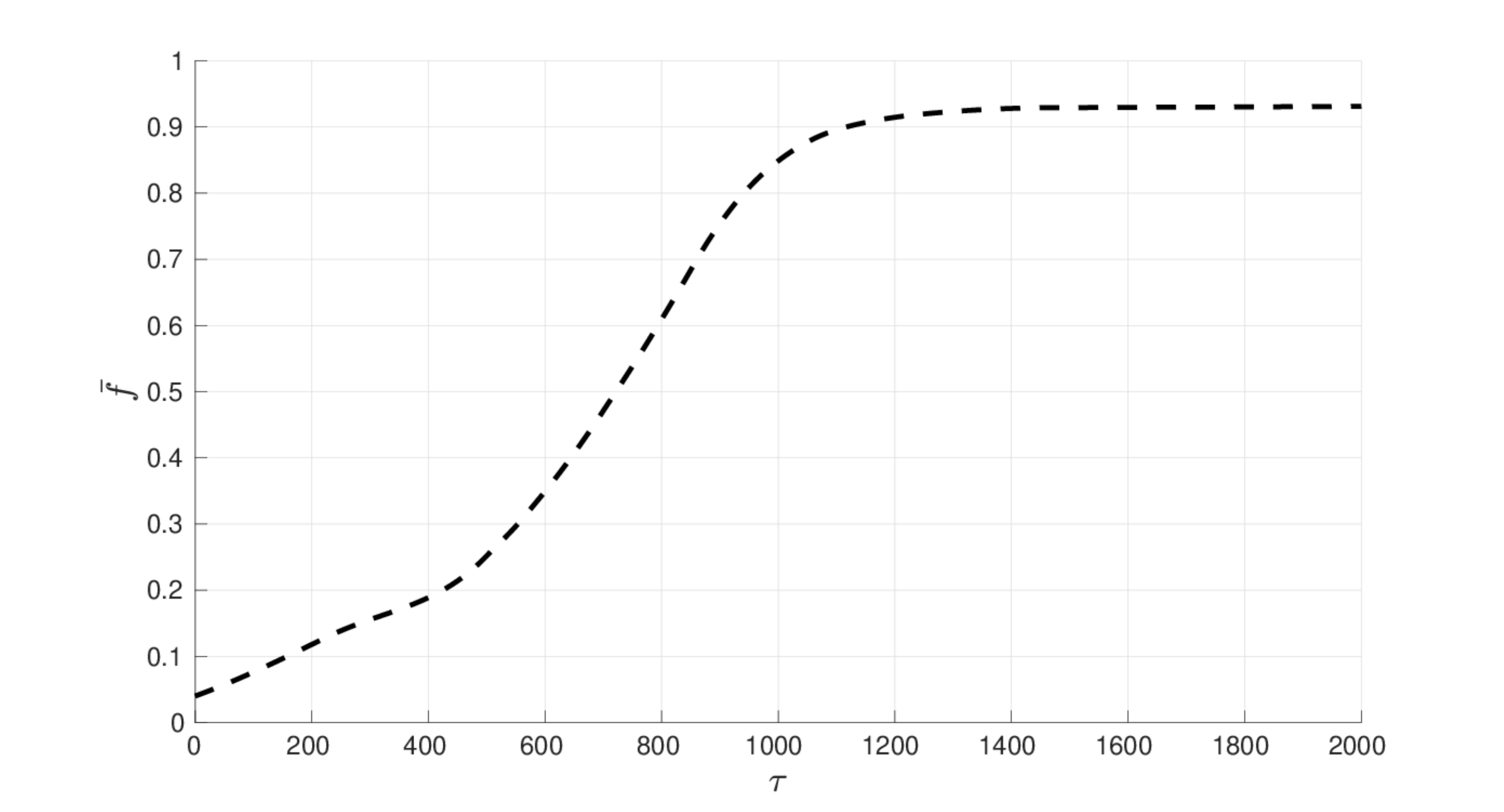} (a)}
	\end{minipage}
	\hfill
	\begin{minipage}[ht]{1,0\linewidth}
		\center{\includegraphics[width=0.8\linewidth]{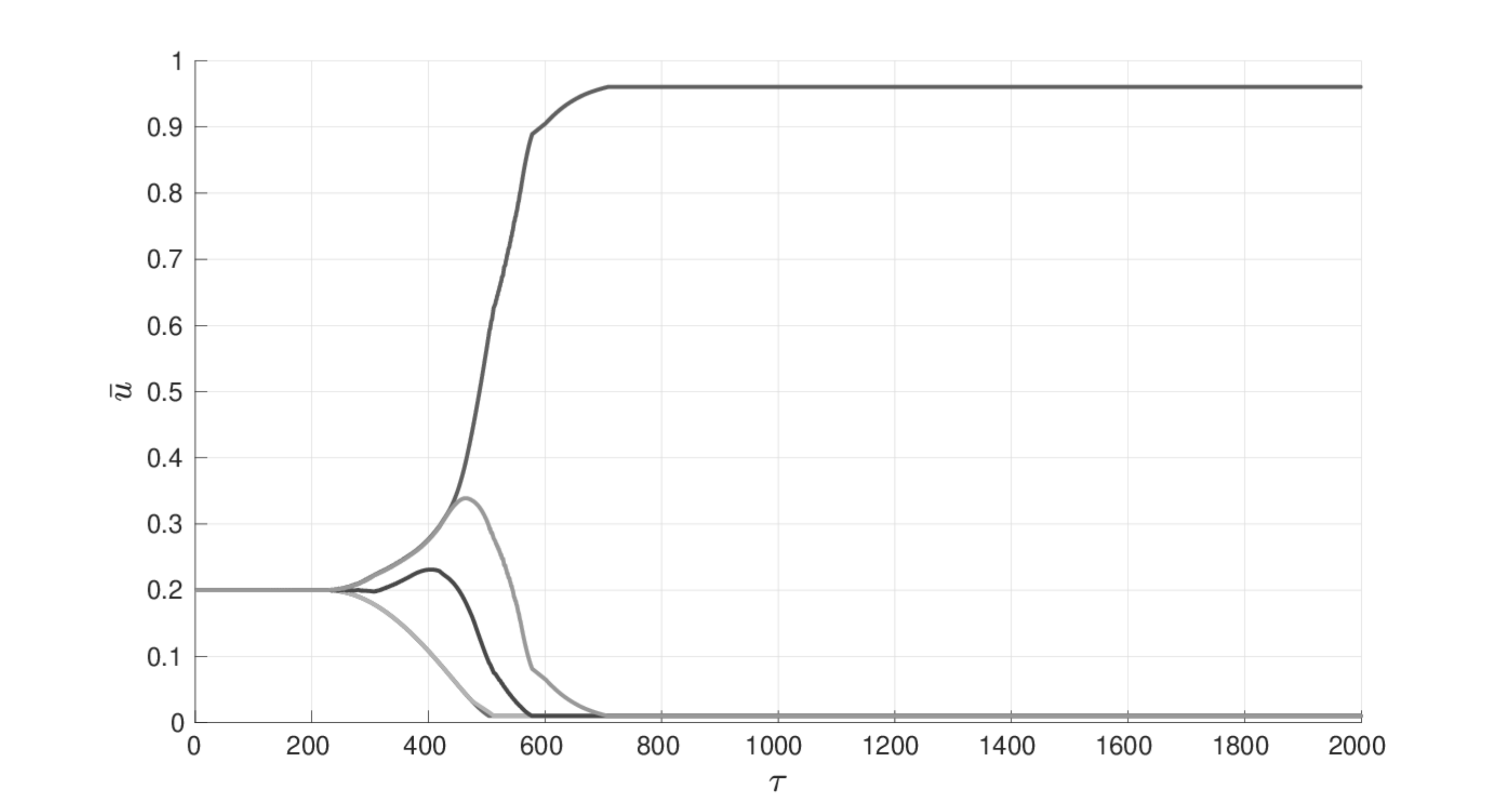} (b)}
	\end{minipage}
	\caption{Evolutionary dynamics of (a) the mean fitness and (b) the equilibrium coordinates of system \eqref{c3:eq3.19} with matrix ${\bf A}$ \eqref{c3:eq3.24}.}
	\label{c3:fig3.8}
\end{figure}

Fig.~\ref{c3:fig3.8}\,(a) shows the evolution of the mean fitness function $\bar{f}$. After a certain number of iterations, $\bar{f}$ converges to its maximum value. Fig.~\ref{c3:fig3.8}\,(b) shows the evolution of the equilibrium coordinates: they remain unchanged for approximately $300$ iterations, after which they split along separate trajectories, one of which approaches $1$ while the rest approach the boundary of the simplex.

\begin{figure}[H]
	\begin{minipage}[ht]{1.0\linewidth}
		\center{\includegraphics[width=0.75\linewidth]{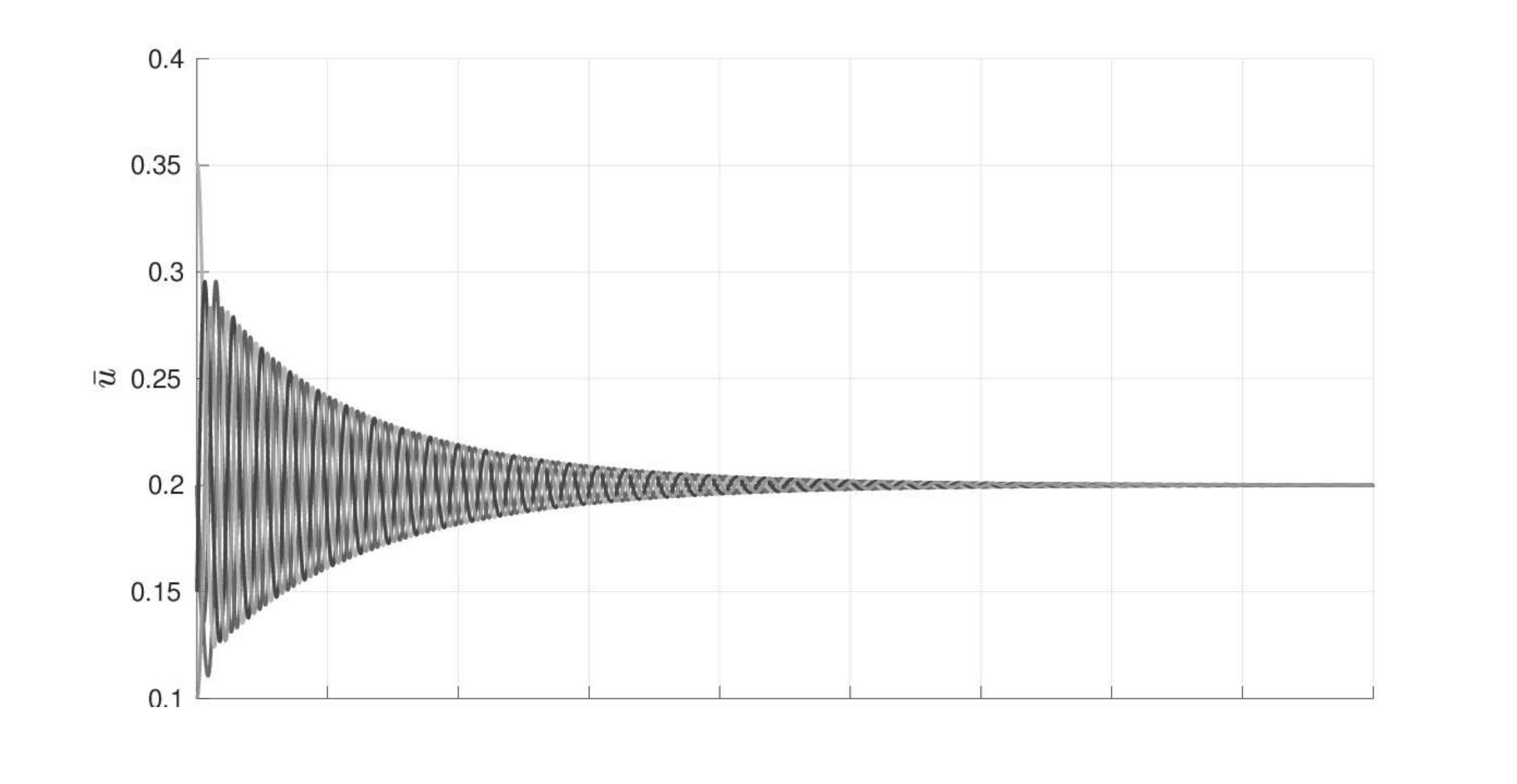} (a)}
	\end{minipage}
	\hfill
	\begin{minipage}[ht]{1.0\linewidth}
		\center{\includegraphics[width=0.75\linewidth]{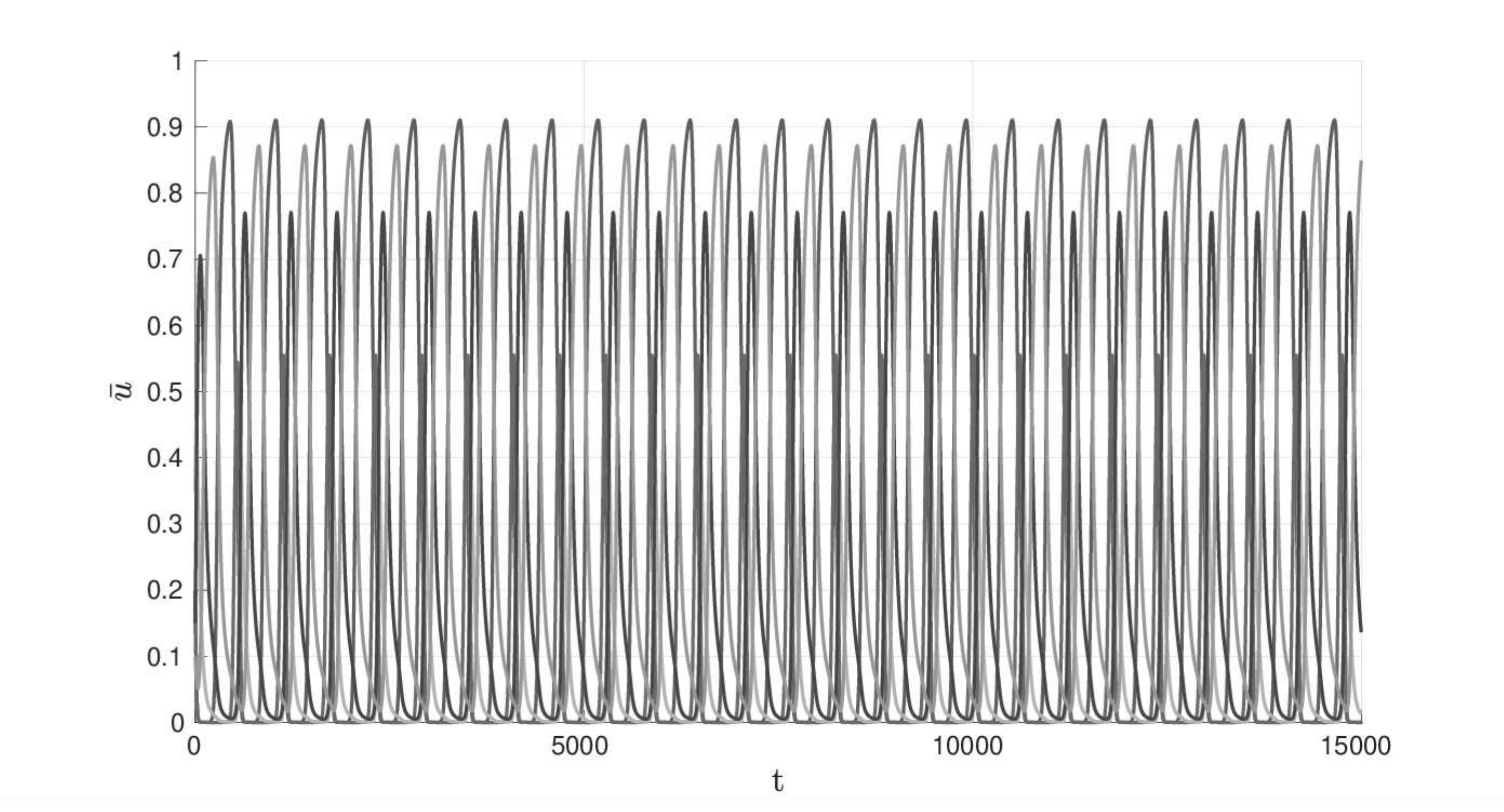} (b)}
	\end{minipage}
	\caption{Active dynamics of system \eqref{c3:eq3.19} with matrix ${\bf A}$ \eqref{c3:eq3.24} at (a) iteration $200$ and (b) iteration $450$.}
	\label{c3:fig3.9}
\end{figure}

Fig.~\ref{c3:fig3.9}\,(a) shows the active dynamics at iteration $200$: phase trajectories converge to the stable equilibrium $\bar{u}_{i} = 0.2$, $i = \overline{1, 5}$. At iteration $450$ (Fig.~\ref{c3:fig3.9}\,(b)), no stable equilibrium exists and the trajectories exhibit cyclic behaviour.

\begin{figure}[H]
	\begin{minipage}[ht]{0.48\linewidth}
		\center{\includegraphics[width=0.9\linewidth]{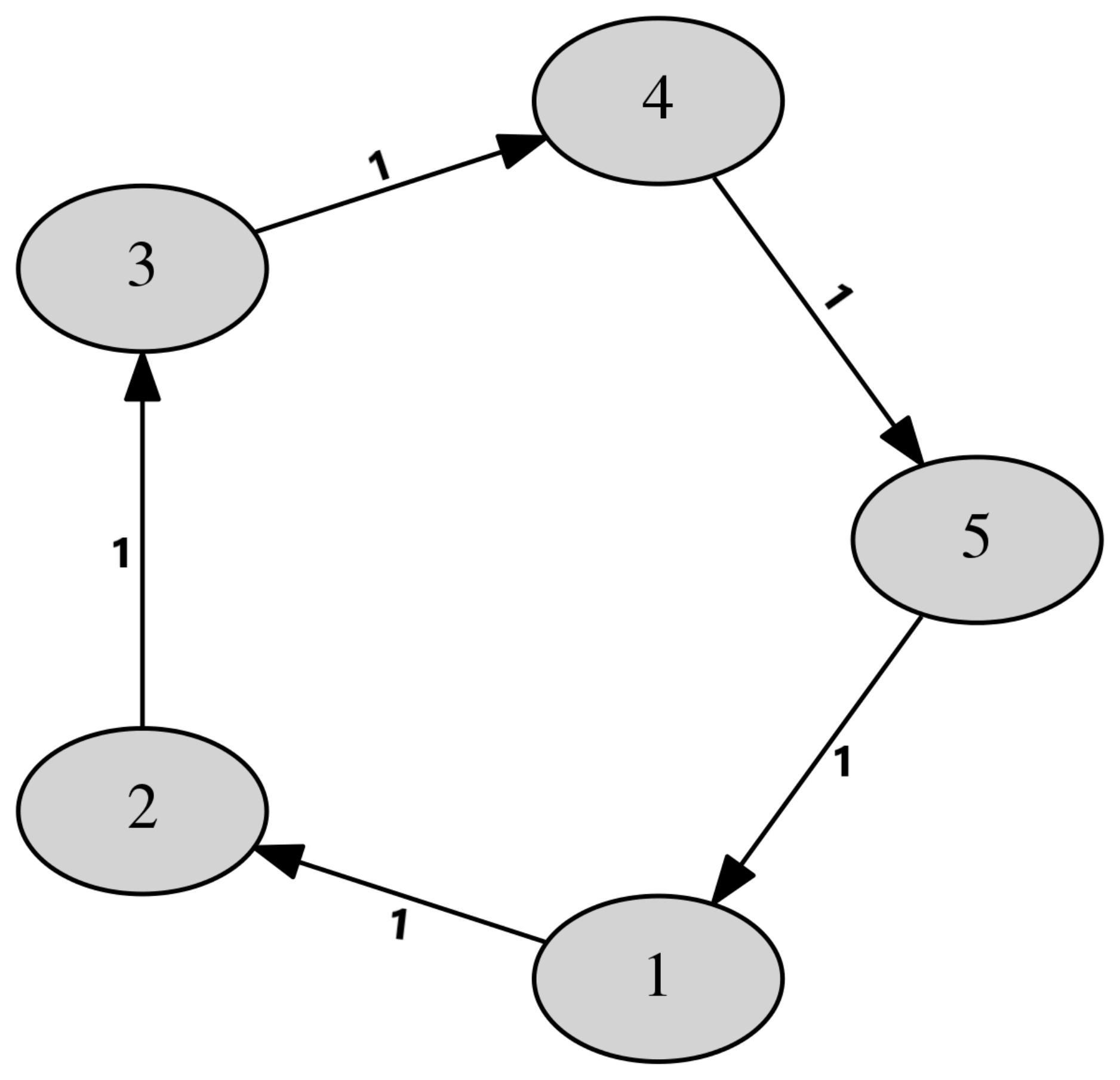}}
		\center{(a)}
	\end{minipage}
	\hfill
	\begin{minipage}[ht]{0.48\linewidth}
		\center{\includegraphics[width=0.9\linewidth]{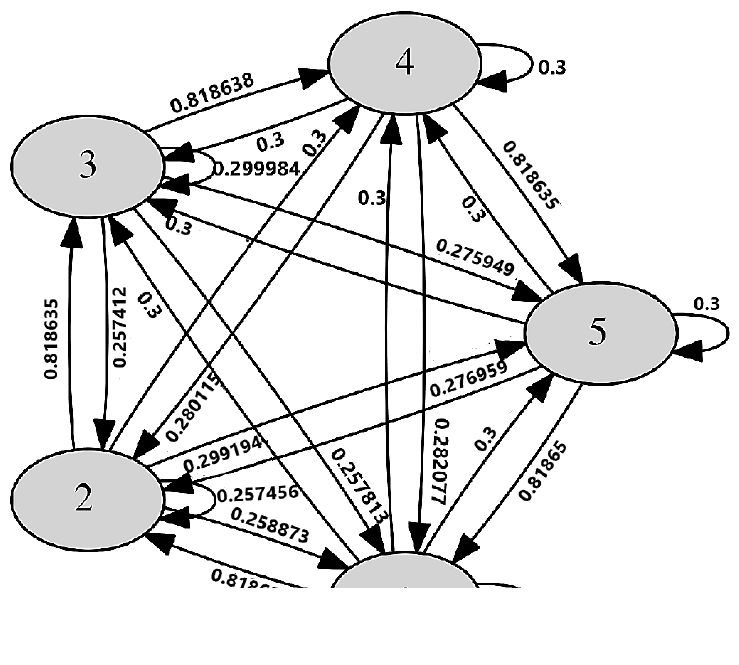}}
		\center{(b)}
	\end{minipage}
	\caption{Interaction graphs for (a) the original fifth-order bi-hypercycle
		and (b) the evolved system at iteration $300$.}
	\label{c3:fig3.10}
\end{figure}

Fig.~\ref{c3:fig3.10} shows the interaction graphs for the original and the evolved (iteration $300$) bi-hypercycle.

\bigskip
We now study the evolutionary adaptation of the {\bf fifth-order bi-hypercycle in the presence of two parasitic species.} Augmenting system \eqref{c3:eq3.24} with
\begin{equation}
	\begin{aligned}
		&\displaystyle\dfrac{du_{6}}{dt} = u_{6}(1.1u_{5}u_{4} - f({\bf u})), \\
		&\displaystyle\dfrac{du_{7}}{dt} = u_{7}(1.1u_{6}u_{5} - f({\bf u})), \\
		&\displaystyle f({\bf u}) = \sum\limits_{i = 1}^{5}u_{i}u_{i - 1}u_{i - 2} + 1.1(u_{4}u_{5}u_{6} + u_{7}u_{6}u_{5}),
	\end{aligned}
	\label{c3:eq3.25}
\end{equation}
where $u_{0} = u_{5}$, $u_{-1} = u_{4}$, the interaction matrix of the extended system is

\medskip
\begin{equation}
	{\bf \widetilde{A}} = 
	\begin{pmatrix}
		0 & 0 & 0 & 0 & 1 & 0 & 0\\
		1 & 0 & 0 & 0 & 0 & 0 & 0\\
		0 & 1 & 0 & 0 & 0 & 0 & 0\\
		0 & 0 & 1 & 0 & 0 & 0 & 0\\
		0 & 0 & 0 & 1 & 0 & 0 & 0\\
		0 & 0 & 0 & 1.1 & 0 & 0 & 0\\
		0 & 0 & 0 & 0 & 1.1 & 0 & 0\\
	\end{pmatrix}\!.
	\label{c3:eq3.26}
\end{equation}

\begin{figure}[ht]
	\begin{minipage}[ht]{0.5\linewidth}
		\center{\includegraphics[width=1.1\linewidth]{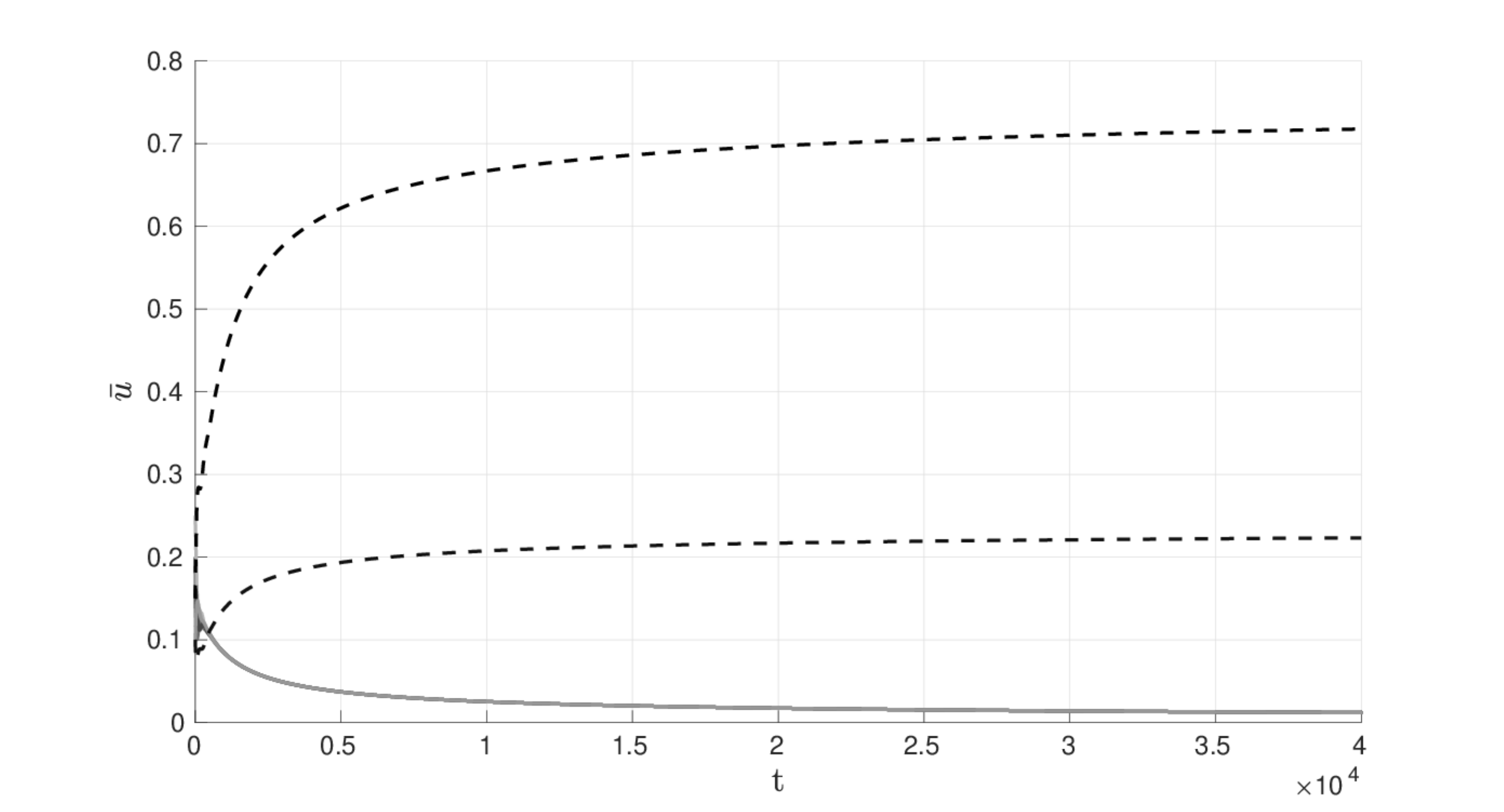} (a)}
	\end{minipage}
	\hfill
	\begin{minipage}[ht]{0.5\linewidth}
		\center{\includegraphics[width=1.1\linewidth]{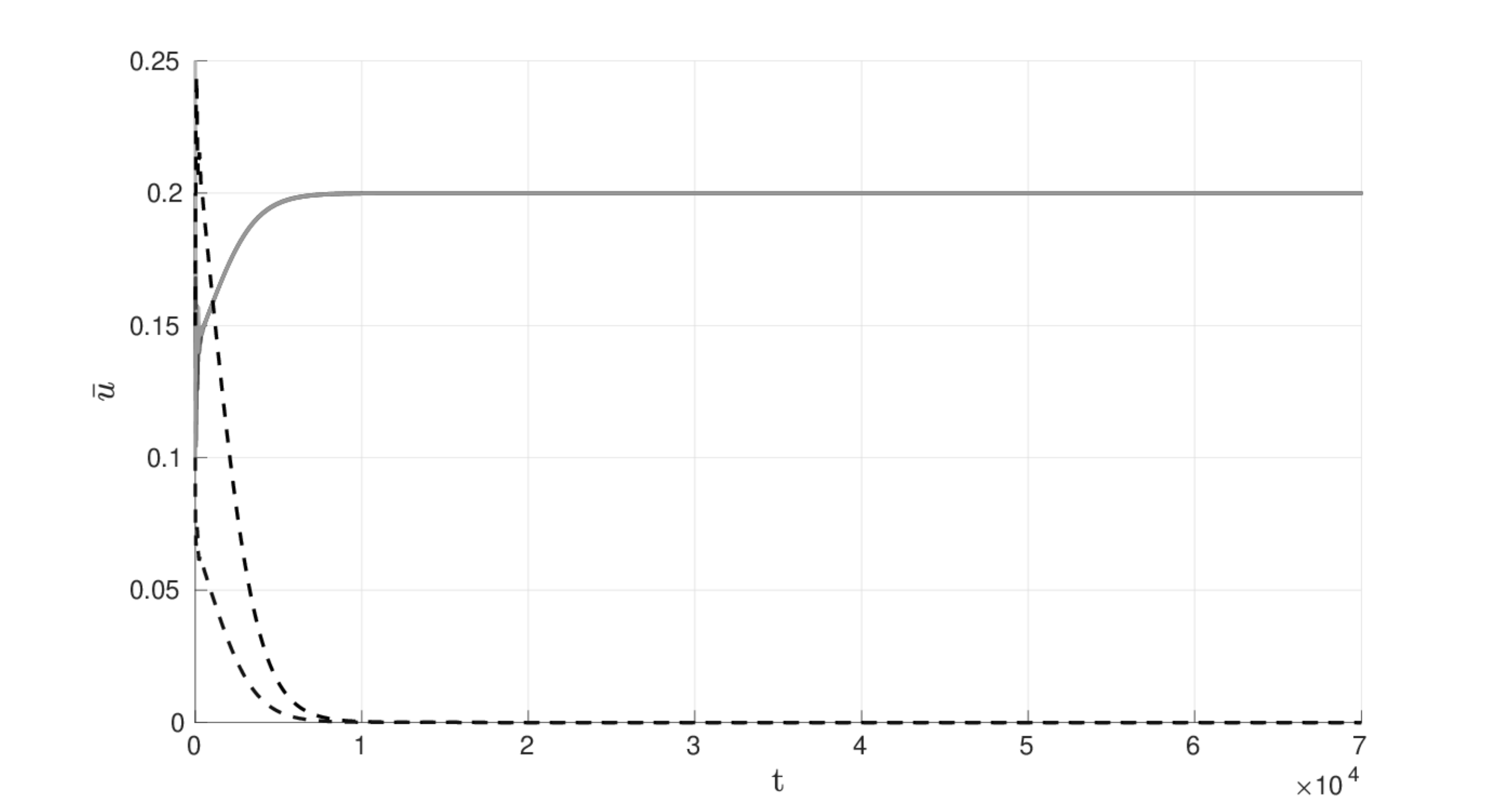} (b)}
	\end{minipage}
	\vfill
	\begin{minipage}[ht]{0.5\linewidth}
		\center{\includegraphics[width=1.1\linewidth]{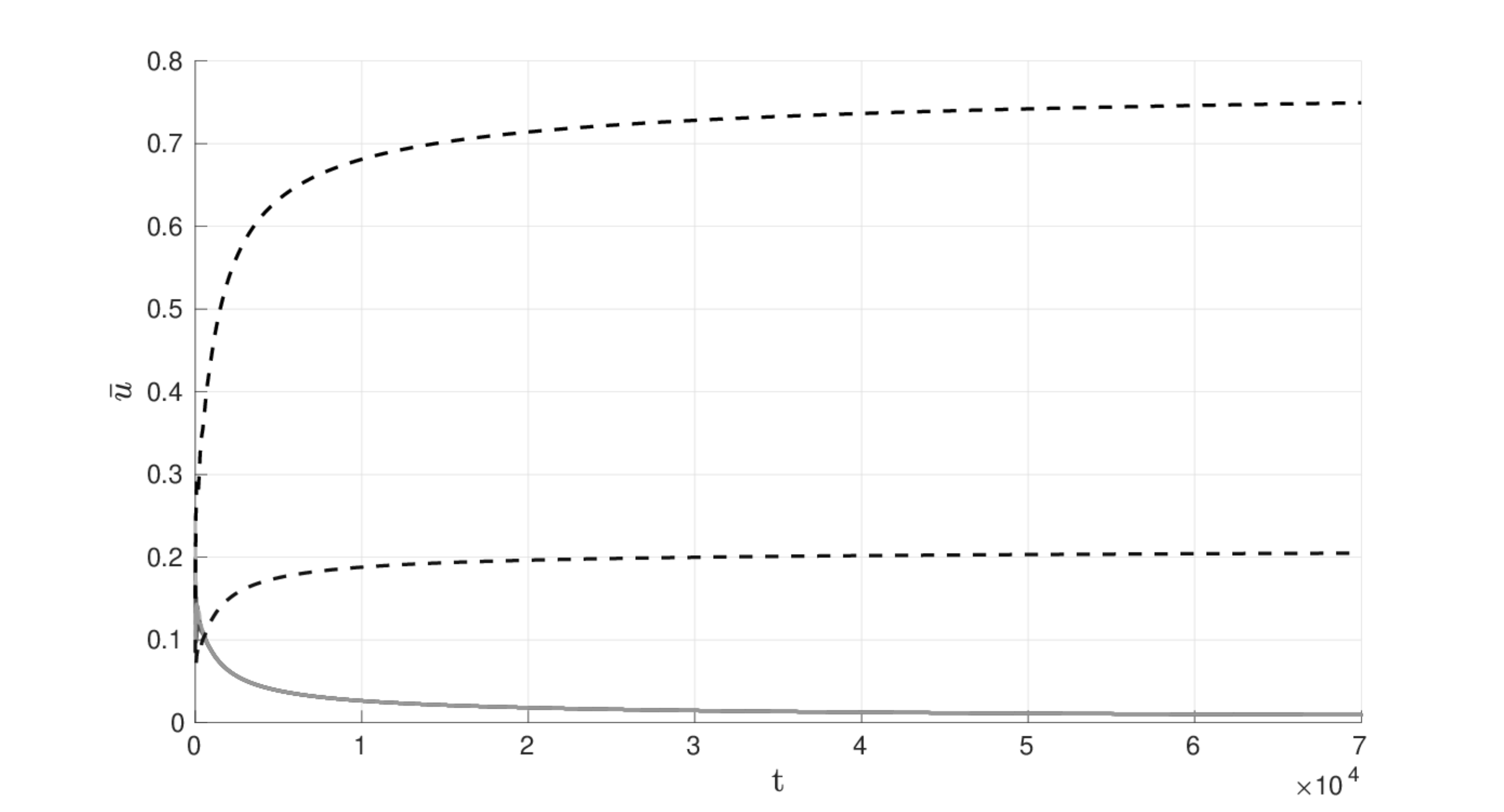} (c)}
	\end{minipage}
	\hfill
	\begin{minipage}[ht]{0.5\linewidth}
		\center{\includegraphics[width=1.1\linewidth]{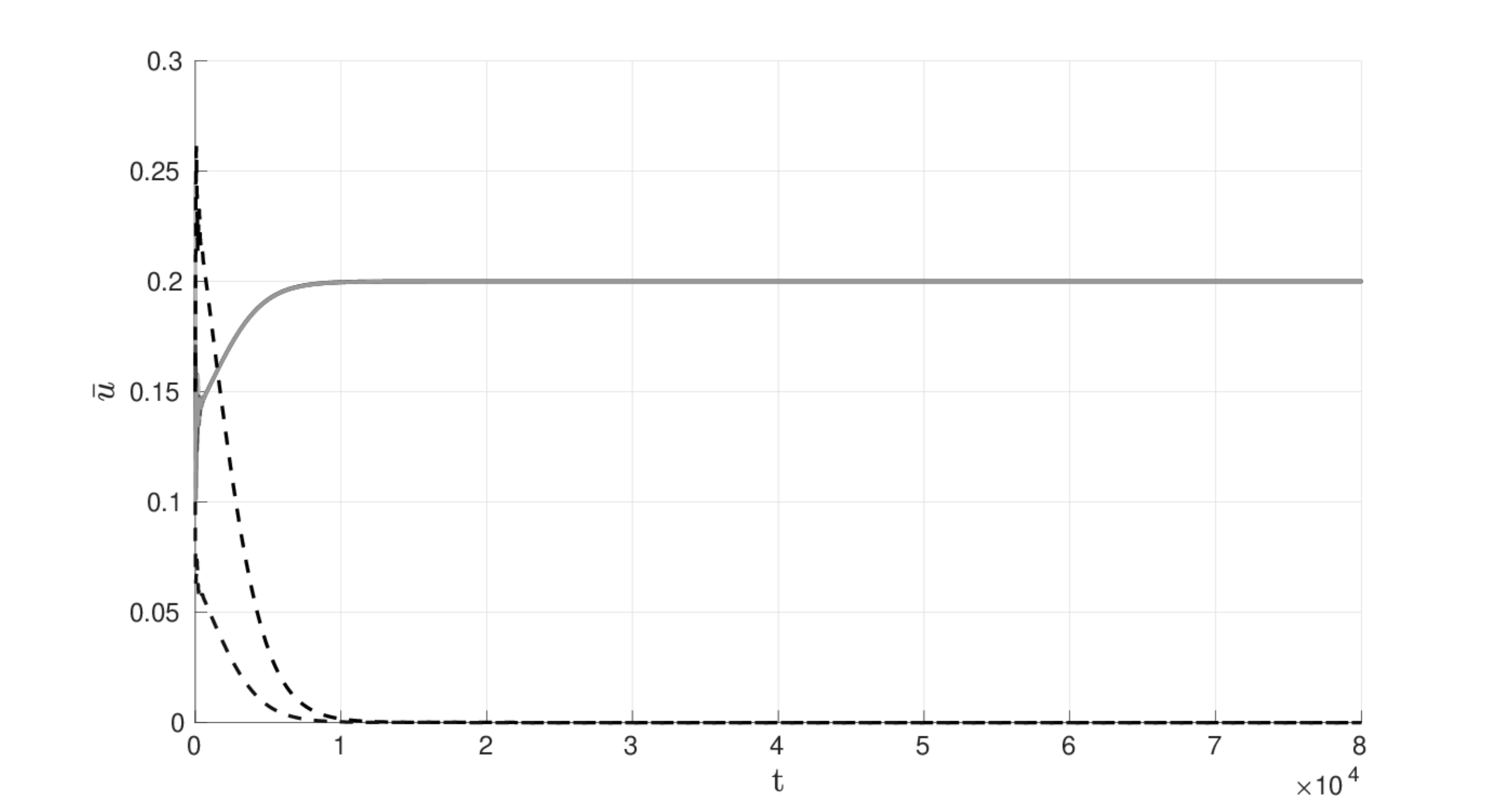} (d)}
	\end{minipage}
	\caption{(a) Frequency dynamics when the original fifth-order bi-hypercycle interacts with two parasites. (b) Frequency dynamics when the evolved fifth-order bi-hypercycle (iteration $30$) interacts with two parasites. (c) Frequency dynamics when the evolved bi-hypercycle (iteration $30$) interacts with two parasites after increasing their replication coefficients. (d) Frequency dynamics when the evolved bi-hypercycle (iteration $55$) interacts with two parasites with increased replication coefficients. Parasite frequencies are shown as dashed lines.}
	\label{c3:fig3.11}
\end{figure}

As established earlier, for simple hypercycles there exists an iteration beyond which the evolved system becomes resistant to parasites. The same property holds for bi-hypercycle systems.

Fig.~\ref{c3:fig3.11}\,(a): the original bi-hypercycle collapses when interacting with two parasites. Fig.~\ref{c3:fig3.11}\,(b): after $30$ evolutionary steps, the evolved bi-hypercycle is resistant to both parasites. If the parasite replication coefficients are increased from $1.1$ to $1.2$, $30$ iterations no longer suffice (Fig.~\ref{c3:fig3.11}\,(c)); at least $55$ iterations are required (Fig.~\ref{c3:fig3.11}\,(d)).

\medskip
The interaction matrix of the extended system obtained from \eqref{c3:eq3.26} after $30$ evolutionary adaptation steps is:
{\small
\begin{equation}
	{\bf \widetilde{A}}_{30} = 
	\begin{pmatrix}
		0.03 & 0.03 & 0.03 & 0.03 & 0.998096 & 0 & 0\\
		0.999063 & 0.03 & 0.03 & 0.03 & 0.03 & 0 & 0\\
		0.03 & 0.998857 & 0.03 & 0.03 & 0.03 & 0 & 0\\
		0.03 & 0.03 & 0.997774 & 0.03 & 0.03 & 0 & 0\\
		0.03 & 0.03 & 0.03 & 0.997564 & 0.03 & 0 & 0\\
		0 & 0 & 0 & 1.1 & 0 & 0 & 0\\
		0 & 0 & 0 & 0 & 1.1 & 0 & 0\\
	\end{pmatrix}\!.
	\label{c3:eq3.27}
\end{equation}
}

\clearpage

\section{``Anthill'' Replicator System}\label{c3:section:3.6}
We illustrate the evolutionary adaptation algorithm applied to the ``anthill'' system~\eqref{c3:eq1.21}, introduced in~Chapter~\ref{ch:1}. The interaction graph of this system --- in which species $0,\ldots,n-1$ form a hypercycle each additionally catalysed by the dominant ``queen'' species $n$ --- is described in~Chapter~\ref{ch:1}.

Setting $\alpha = 0.1$, $\beta_{i} = 0.8$, $k_{i} = 1$, $n = 5$ in system \eqref{c3:eq1.21}, the fitness landscape matrix takes the form
\begin{equation}
	{\bf A} = 
	\begin{pmatrix}
		0 & 0 & 0 & 1 & 0.1\\
		1 & 0 & 0 & 0 & 0.1\\
		0 & 1 & 0 & 0 & 0.1\\
		0 & 0 & 1 & 0 & 0.1\\
		0.8 & 0.8 & 0.8 & 0.8 & 0\\
	\end{pmatrix}\!.
	\label{c3:eq3.28}
\end{equation}

The evolutionary adaptation process, which maximises the mean fitness, leads to a decrease in the parameters $\beta_{i}$ governing the catalysis of the dominant species (the ``queen'') (Fig.~\ref{c3:fig3.12}). The equilibrium coordinates corresponding to the dominant species converge to zero at approximately iteration $310$.

Figs.~\ref{c3:fig3.13ab}, \ref{c3:fig3.13cd} show the phase trajectories of system \eqref{c3:eq1.21} with matrix ${\bf A}$ \eqref{c3:eq3.28} at iterations $50$, $175$, $250$, and $400$ of the mean fitness maximisation process. The frequency component corresponding to the dominant species is shown as a dashed line. Notably, the influence of the ``queen'' decreases with each iteration while the influence of the remaining species grows.

\begin{figure}[ht]
	\begin{minipage}[ht]{0.48\linewidth}
		\center{\includegraphics[width=\linewidth]{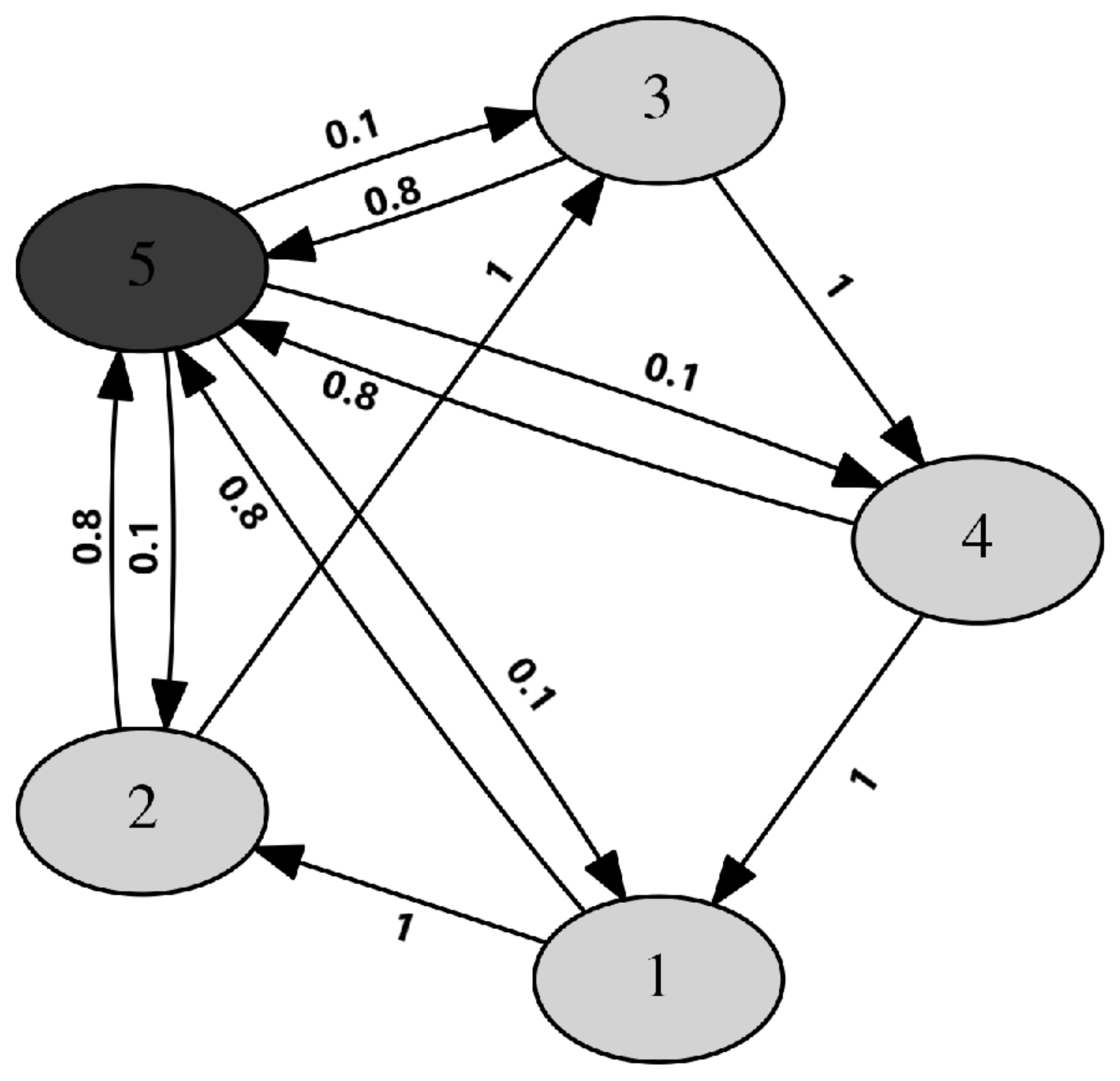}}
		\center{(a)}
	\end{minipage}
	\hfill
	\begin{minipage}[ht]{0.48\linewidth}
		\center{\includegraphics[width=\linewidth]{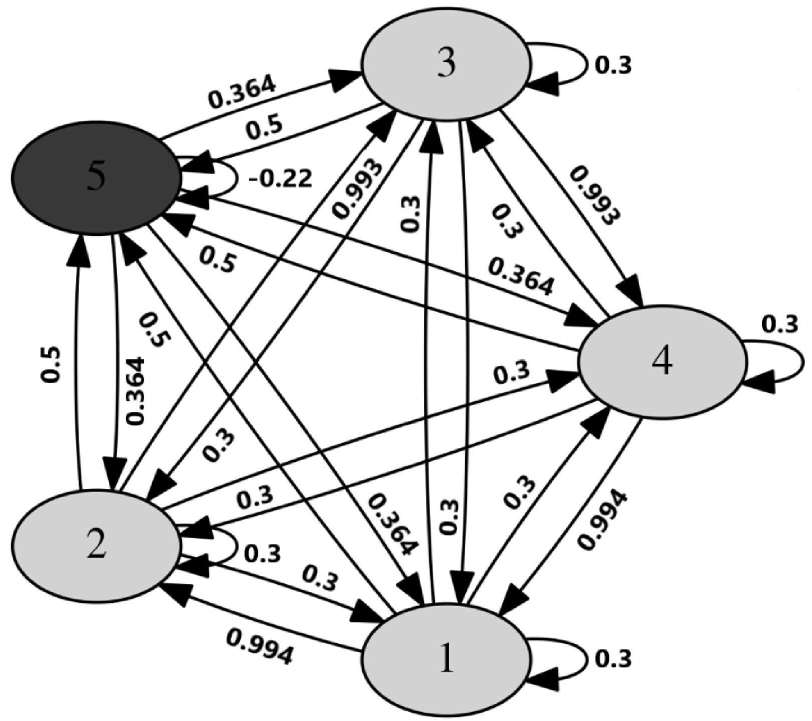}}
		\center{(b)}
	\end{minipage}
	\caption{Interaction graphs for the ``anthill'' system~\eqref{c3:eq1.21}:
		(a) original, with the dominant species catalysing all others;
		(b) evolved system at iteration $300$, where the queen's influence
		has diminished significantly.}
	\label{c3:fig3.12}
\end{figure}

\begin{figure}[ht]
	\begin{minipage}[ht]{0.48\linewidth}
		\center{\includegraphics[width=\linewidth]{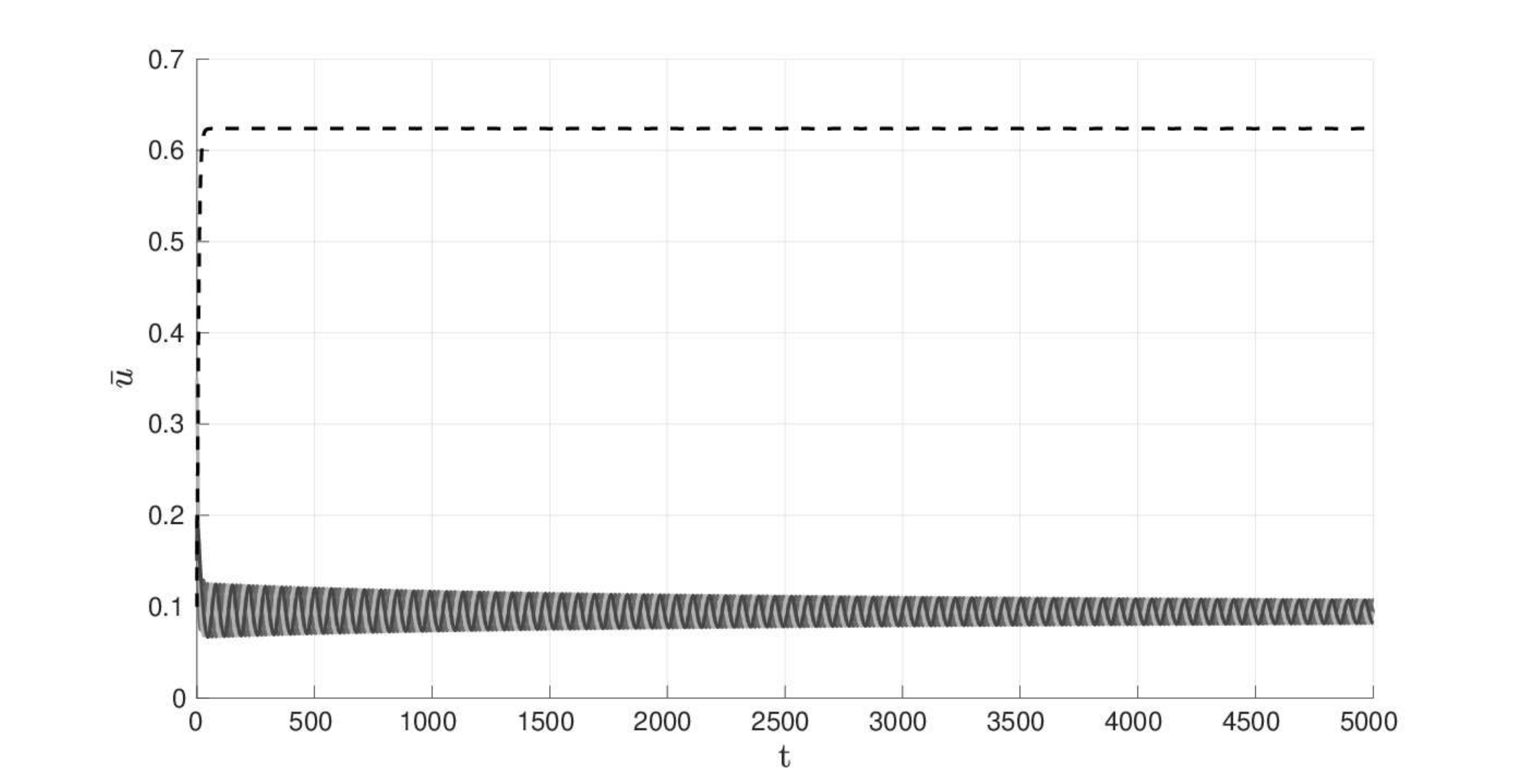}}
		\center{(a)}
	\end{minipage}
	\hfill
	\begin{minipage}[ht]{0.48\linewidth}
		\center{\includegraphics[width=\linewidth]{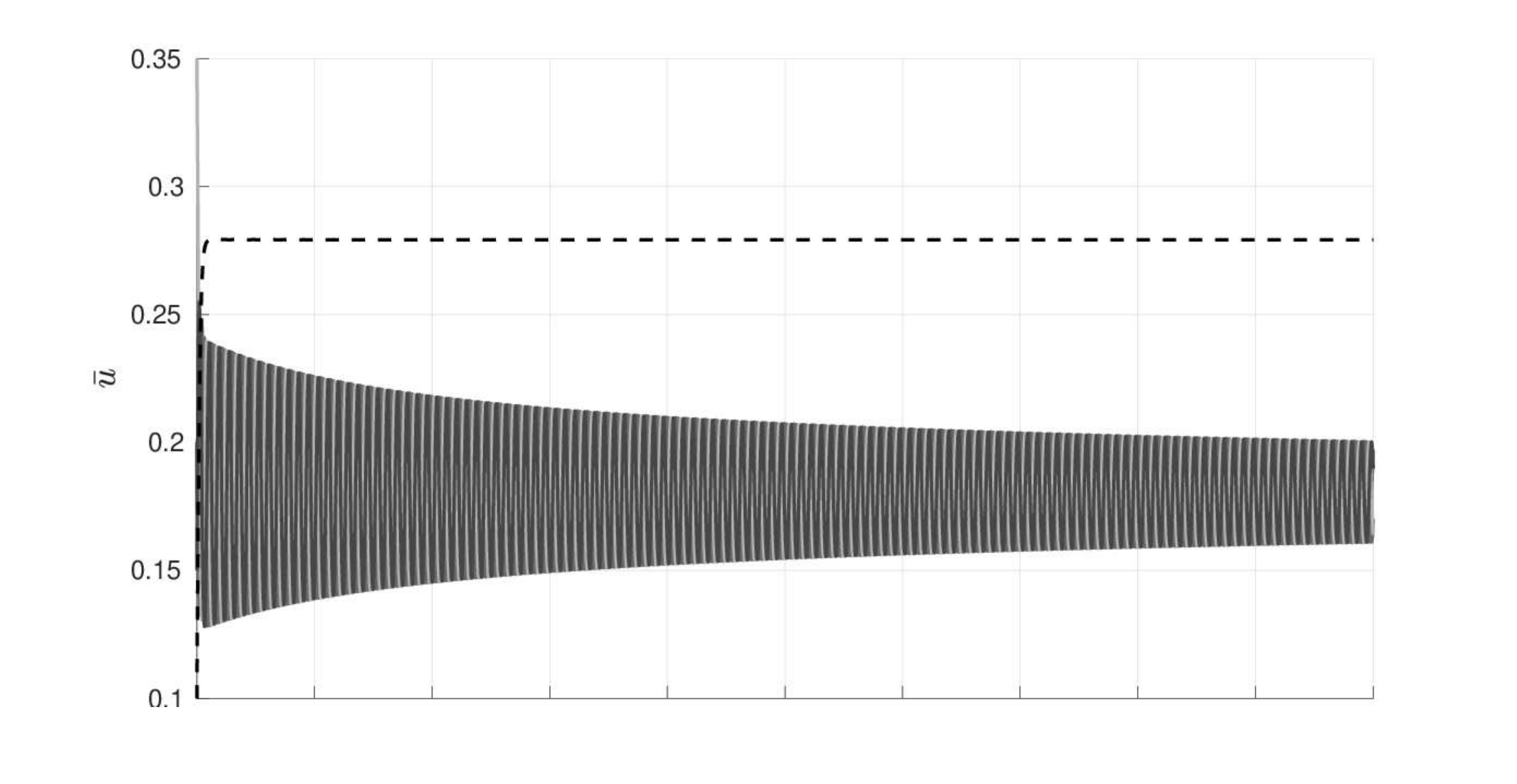}}
		\center{(b)}
	\end{minipage}
	\caption{Phase trajectories of system~\eqref{c3:eq1.21} with matrix ${\bf A}$~\eqref{c3:eq3.28}
		at evolutionary steps (a) $N = 50$ and (b) $N = 175$.
		The dominant species (dashed line) retains significant influence.}
	\label{c3:fig3.13ab}
\end{figure}

\begin{figure}[ht]
	\begin{minipage}[ht]{0.48\linewidth}
		\center{\includegraphics[width=\linewidth]{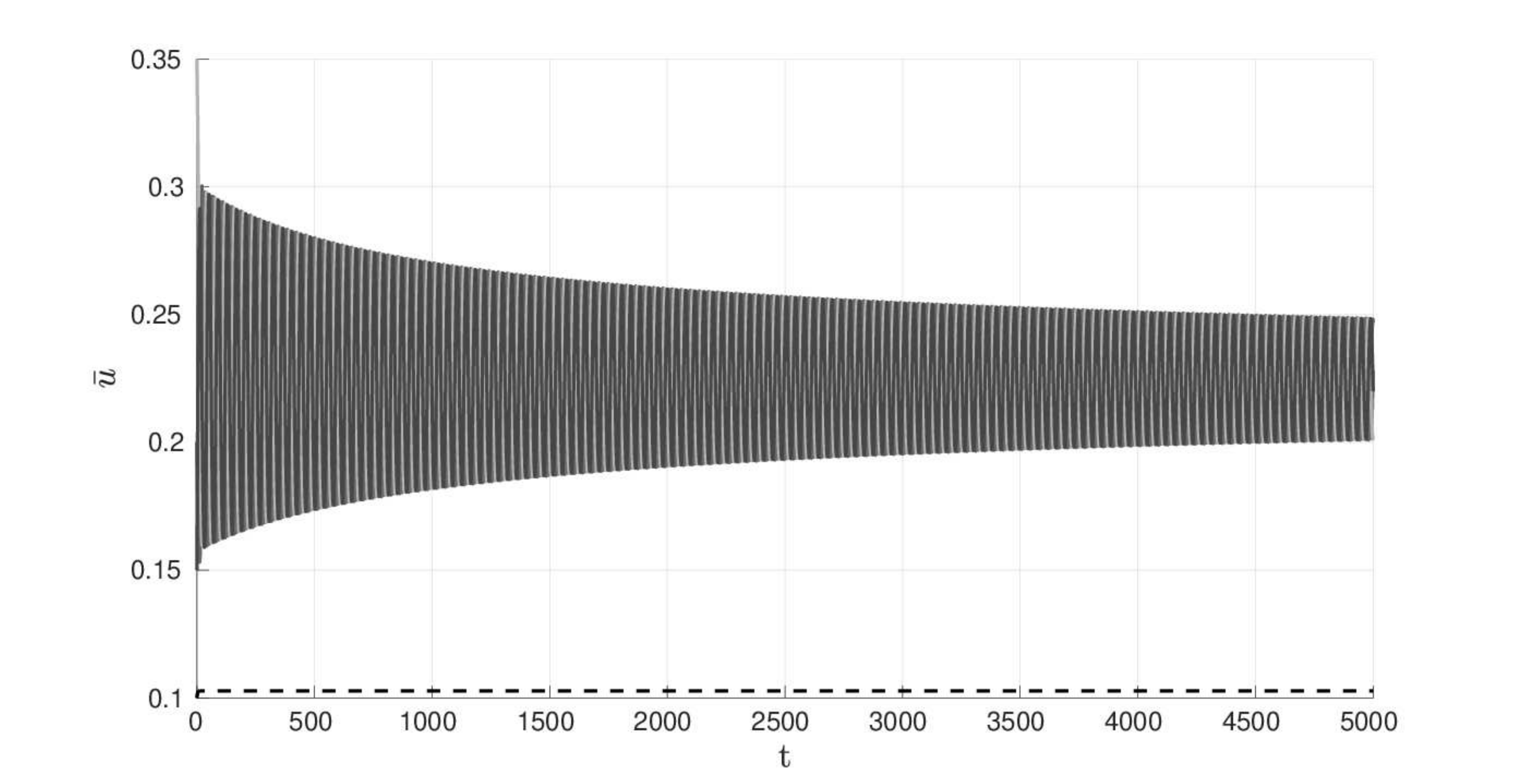}}
		\center{(c)}
	\end{minipage}
	\hfill
	\begin{minipage}[ht]{0.48\linewidth}
		\center{\includegraphics[width=\linewidth]{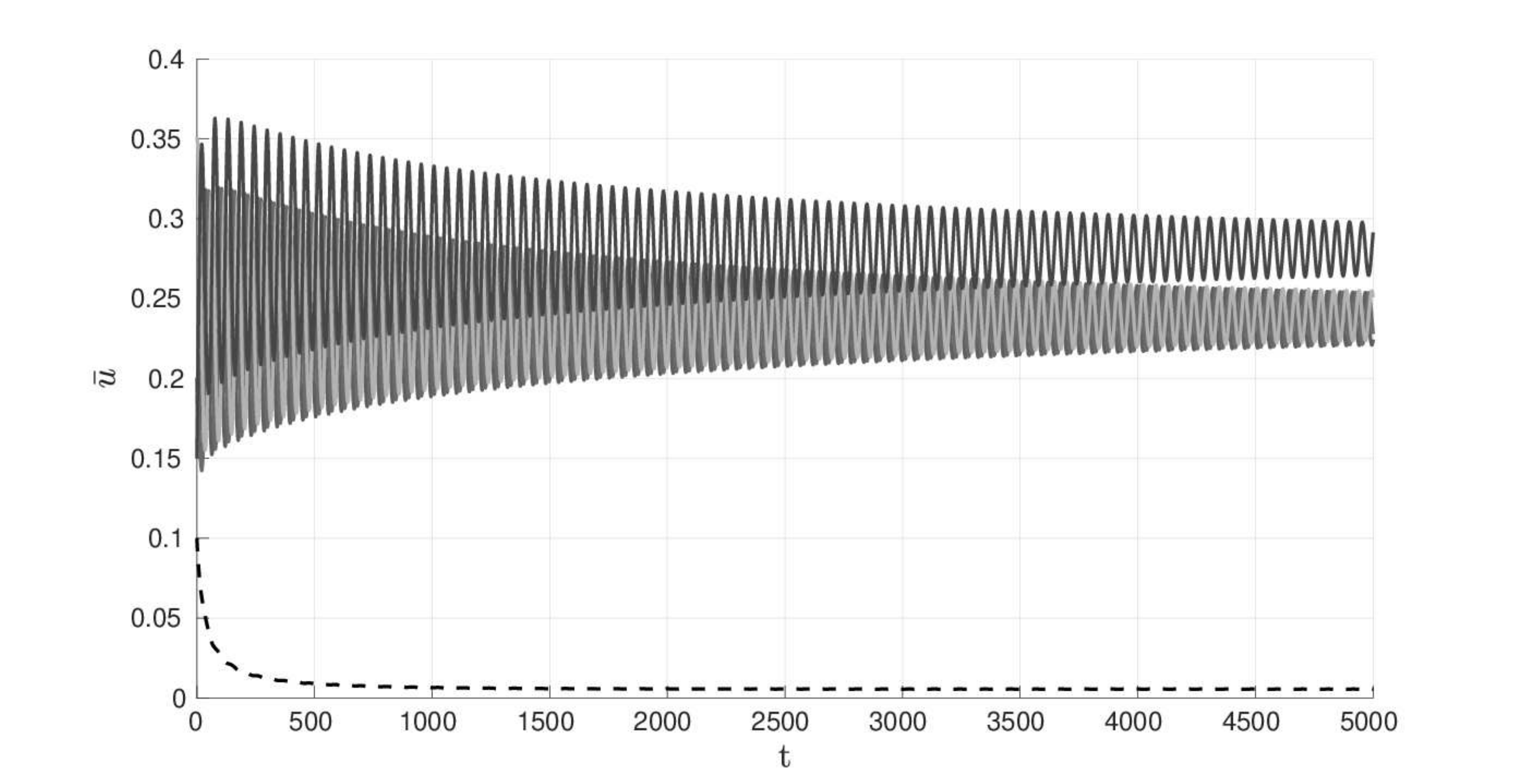}}
		\center{(d)}
	\end{minipage}
	\caption{Phase trajectories of system~\eqref{c3:eq1.21} with matrix ${\bf A}$~\eqref{c3:eq3.28}
		at evolutionary steps (c) $N = 250$ and (d) $N = 400$.
		The queen's frequency (dashed) approaches zero as the remaining species dominate.}
	\label{c3:fig3.13cd}
\end{figure}
\clearpage

\section{Evolution of the RNA Molecule Network}\label{c3:section:3.7}
Consider the RNA catalysis system~\eqref{c3:eq1.25}, introduced in~Chapter~\ref{ch:1}. The interaction graph of this system is described there.

The system describes the interaction of six distinct RNA macromolecules and was proposed on the basis of \textit{in vitro} experiments~\cite{c3:Vaidya2012}. The macromolecules fall into two groups: species $4$--$5$--$6$ form a hypercycle, while species $1$--$2$--$3$ participate in both hypercyclic and autocatalytic replication, thus exhibiting a dual nature.

Setting the parameters in \eqref{c3:eq1.25} as
$$
	\begin{aligned}
		&r_{1} = r_{2} = r_{3} = -0.3, \\
		&k_{1} = k_{2} = k_{3} = 0.4, \\
		&k_{4} = k_{5} = k_{6} = 0.1, \\
		&\bar{k}_{4} = \bar{k}_{5} = \bar{k}_{6} = 0.05,\\
	\end{aligned}
$$
the fitness landscape matrix at evolutionary step $250$ is
\begin{equation}
	{\bf A} = 
	\begin{pmatrix}
		-0.3 & 0 & 0 & 0.4 & 0 & 0\\
		0 & -0.3 & 0 & 0 & 0.4 & 0\\
		0 & 0 & -0.3 & 0 & 0 & 0.4\\
		0 & 0 & 0.1 & 0 & 0.05 & 0\\
		0.1 & 0 & 0 & 0 & 0 & 0.05\\
		0 & 0.1 & 0 & 0.05 & 0 & 0\\
	\end{pmatrix}\!.
	\label{c3:eq3.29}
\end{equation}

The interaction graph for the original system \eqref{c3:eq1.25} with matrix ${\bf A}$ \eqref{c3:eq3.29} is shown in Fig.~\ref{c3:fig3.14}\,(a); Fig.~\ref{c3:fig3.14}\,(b) shows the corresponding graph of the evolved system at iteration $200$.

\begin{figure}[H]
	\begin{minipage}[ht]{0.48\linewidth}
		\center{\includegraphics[width=\linewidth]{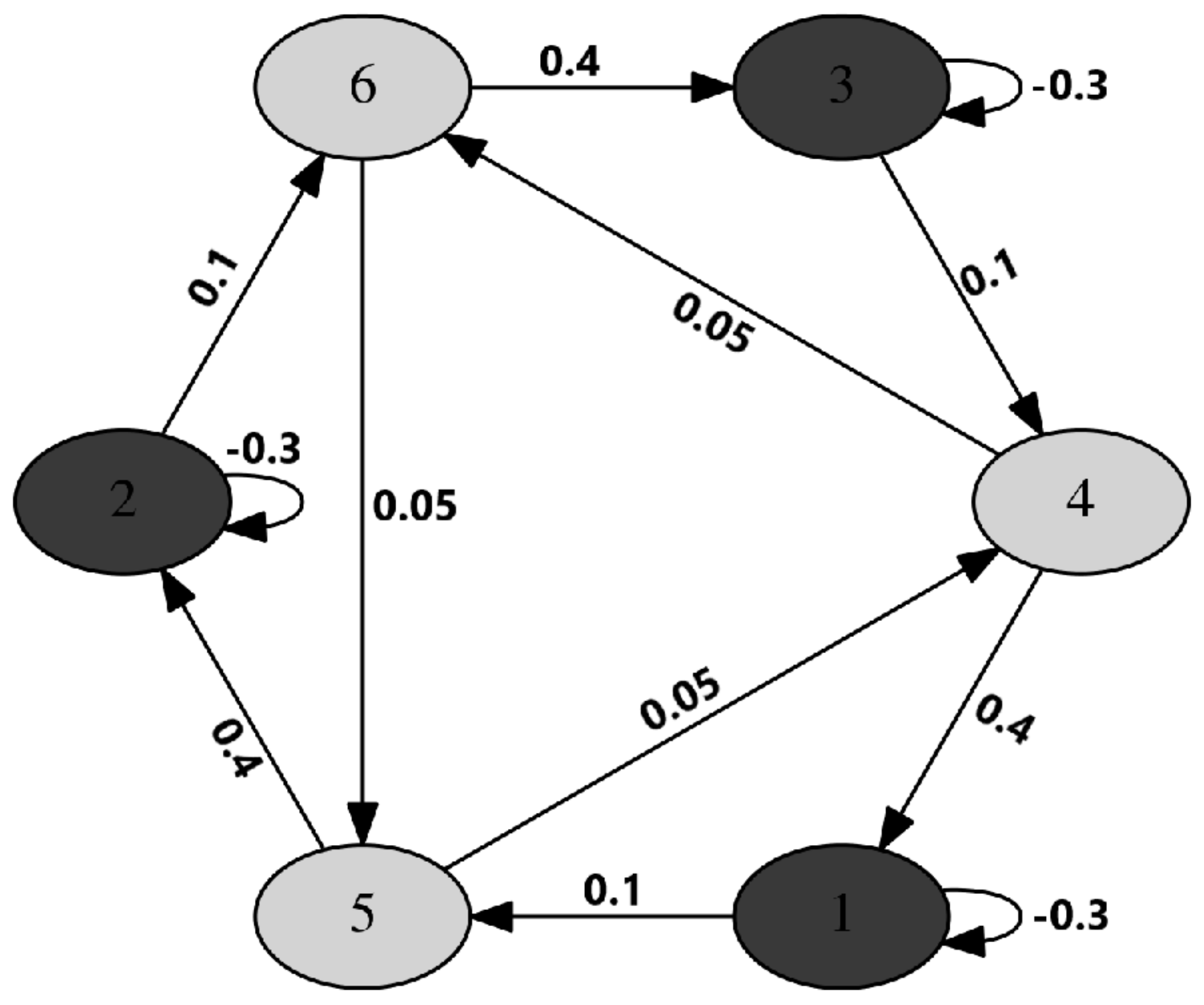}}
		\center{(a)}
	\end{minipage}
	\hfill
	\begin{minipage}[ht]{0.48\linewidth}
		\center{\includegraphics[width=\linewidth]{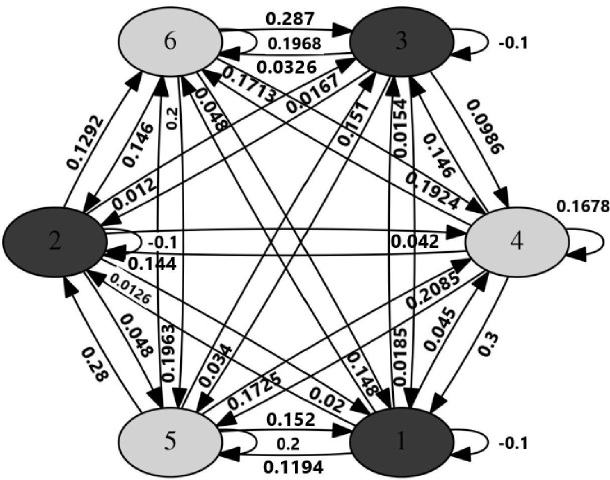}}
		\center{(b)}
	\end{minipage}
	\caption{Interaction graphs for the RNA network~\eqref{c3:eq1.25} with
		matrix~\eqref{c3:eq3.29}: (a) original system, with species $1$--$3$ exhibiting
		dual autocatalytic and hypercyclic links; (b) evolved system at step $200$.}
	\label{c3:fig3.14}
\end{figure}

Fig.~\ref{c3:fig3.15} shows the phase trajectories of system \eqref{c3:eq1.25} prior to the evolutionary process and at steps $125$, $175$, and $200$. Solid lines indicate the relative frequencies of the first group of macromolecules and dashed lines those of the second group.

\begin{figure}[ht]
	\begin{minipage}[ht]{0.5\linewidth}
		\center{\includegraphics[width=1.05\linewidth]{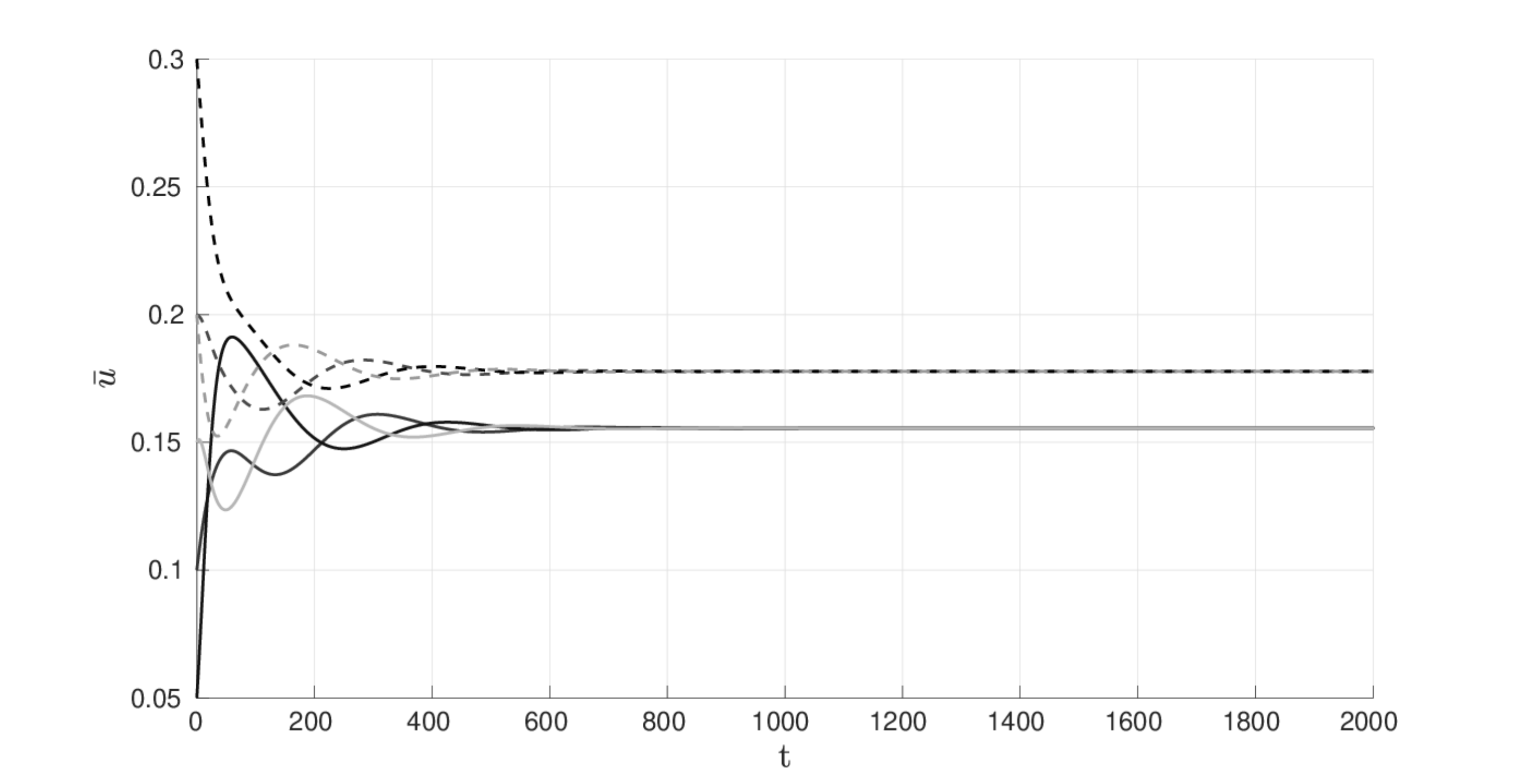} (a)}
	\end{minipage}
	\hfill
	\begin{minipage}[ht]{0.5\linewidth}
		\center{\includegraphics[width=1.05\linewidth]{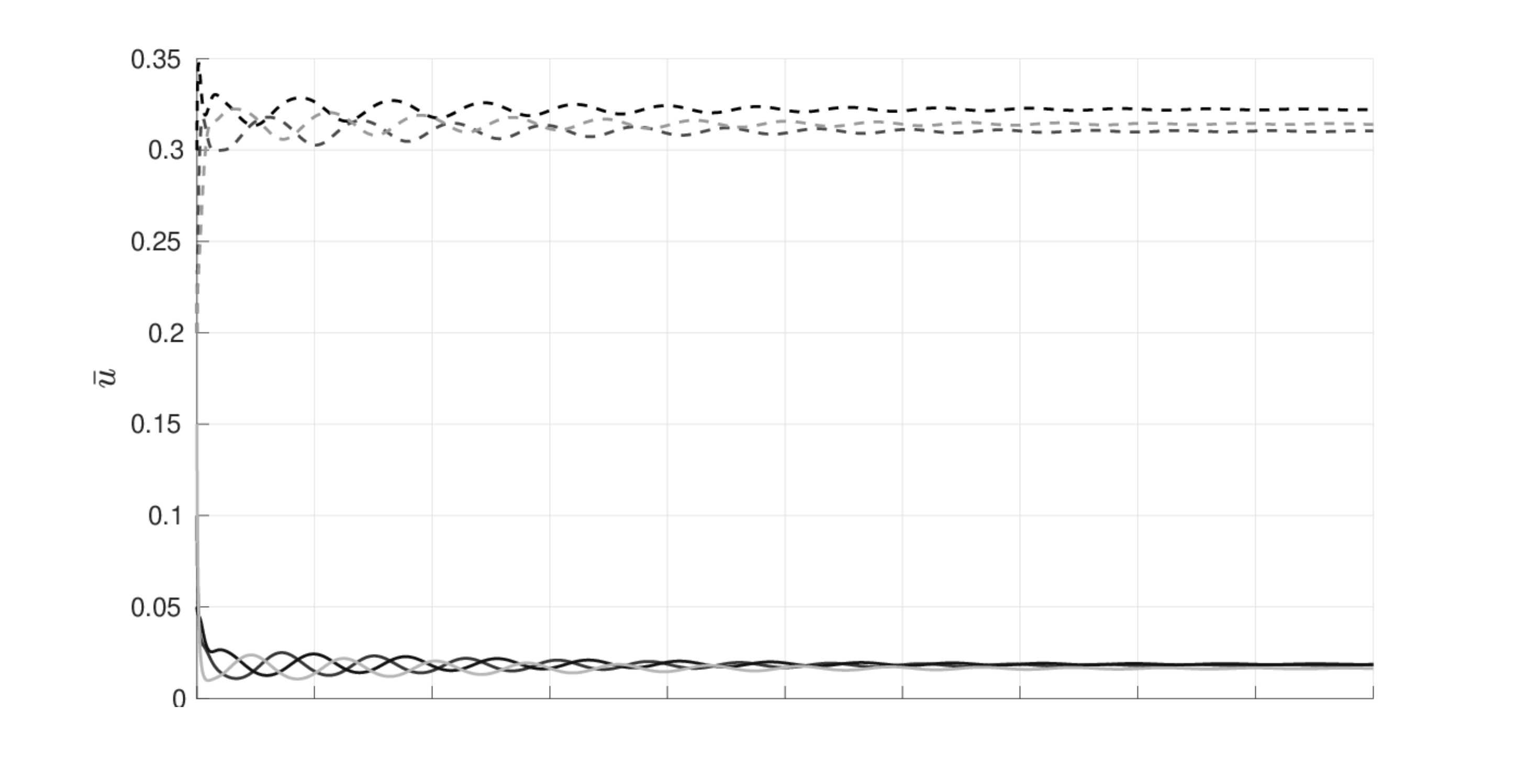} (b)}
	\end{minipage}
	\vfill
	\begin{minipage}[ht]{0.5\linewidth}
		\center{\includegraphics[width=1.05\linewidth]{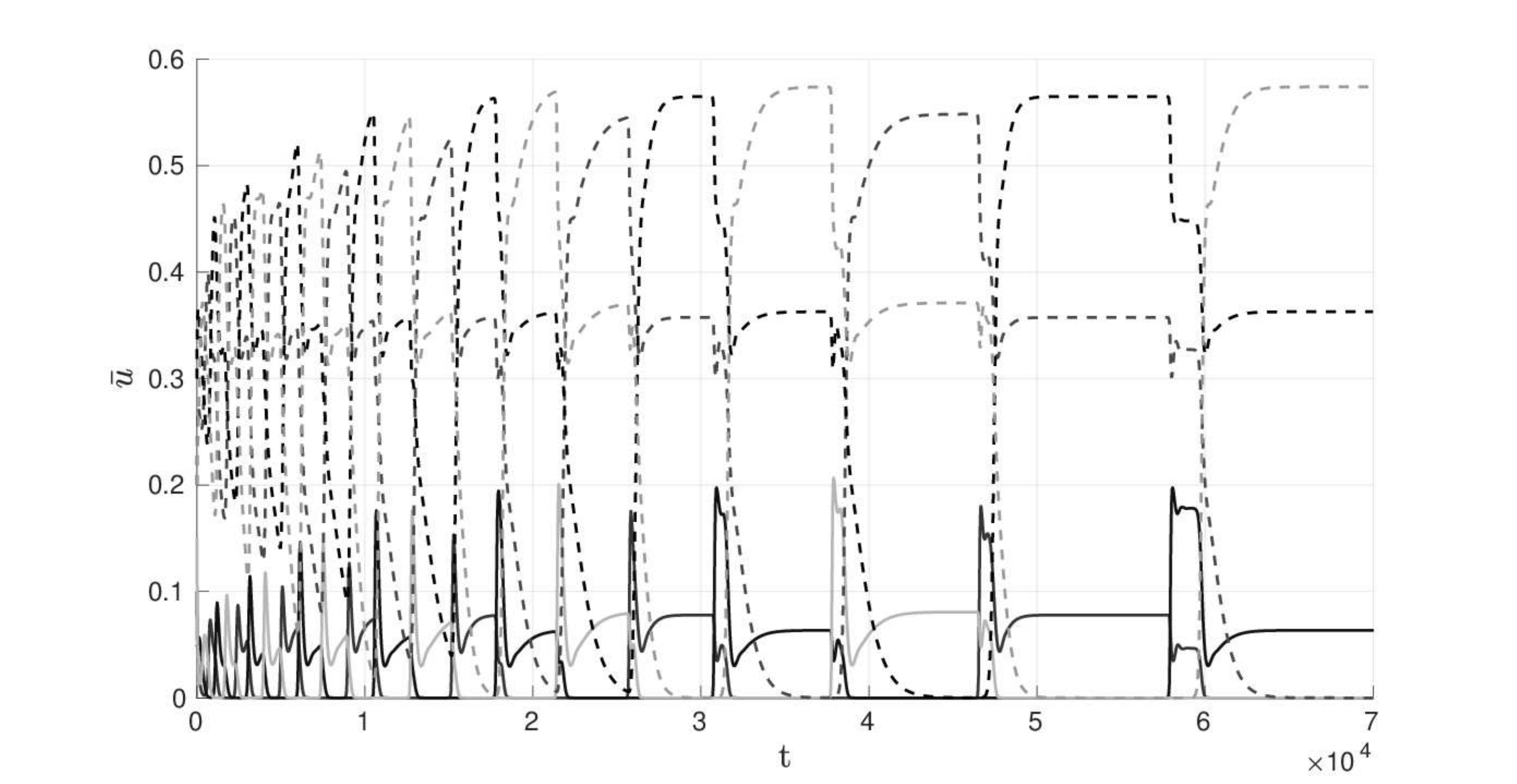} (c)}
	\end{minipage}
	\hfill
	\begin{minipage}[ht]{0.5\linewidth}
		\center{\includegraphics[width=1.05\linewidth]{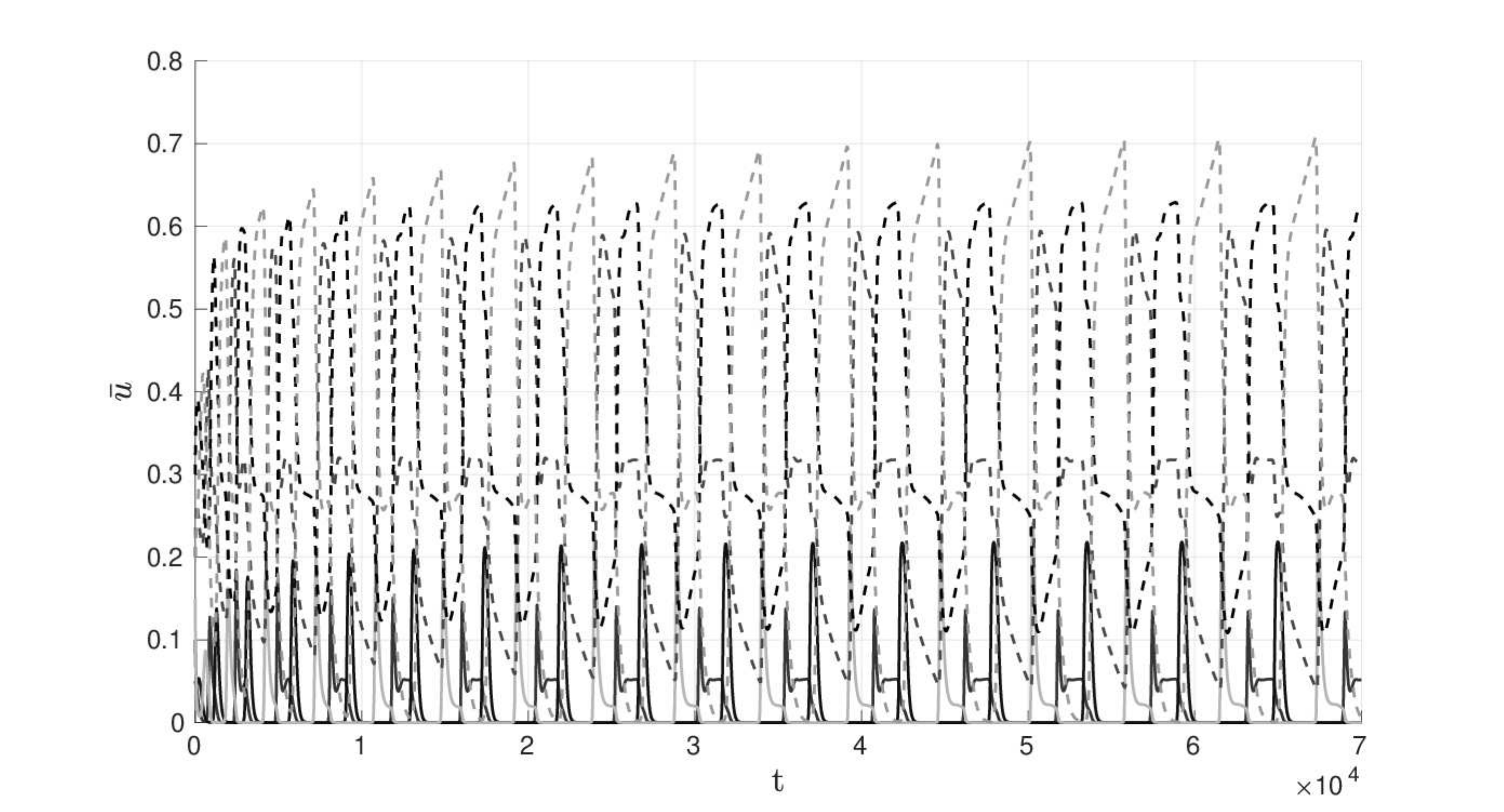} (d)}
	\end{minipage}
	\caption{Phase trajectories of system \eqref{c3:eq1.25} with matrix ${\bf A}$ \eqref{c3:eq3.29} at evolutionary step $N$: (a) $N = 0$, (b) $N = 125$, (c) $N = 175$, (d) $N = 200$.}
	\label{c3:fig3.15}
\end{figure}

\clearpage

\section{Overview of Replicator System Evolution Examples}\label{c3:section:3.8}
The numerical results show that the evolutionary increase in mean fitness (fitness landscape maximisation) of both the hypercycle and the bi-hypercycle follows a common pattern. 

\textit{Phase 1.} Over a considerable number of evolutionary parameter steps, the system equilibrium remains unchanged. During this phase, the mean fitness increases and the species interaction graph undergoes a fundamental reorganisation: in addition to the original unidirectional hypercyclic connections, reverse hypercyclic connections appear, and each species acquires bidirectional links to all other species. Moreover, the character of replication changes: the original purely altruistic mode --- in which each species catalyses only its cyclic neighbour --- gives way to autocatalytic replication, representing the emergence of selfish behaviour in which each species can replicate itself. All species thus acquire the capacity for both altruistic and selfish replication.

\textit{Phase 2.} The fixed-point coordinates of the system split: one coordinate begins to increase while the remaining equilibrium components decrease. The monotone increase in fitness continues, but the autocatalytic (selfish) coefficient of the species whose equilibrium coordinate is growing begins to increase, while those of the remaining species do not. Numerical experiments indicate that the index of the dominant species is a random variable, most likely dependent on rounding errors in the computation.

\textit{Phase 3.} Stabilisation occurs: both the mean fitness and the interaction graph effectively cease to change. The evolutionary adaptation process terminates. At this stage, the constraints on the signs of the equilibrium coordinates --- the non-degeneracy (permanence) conditions --- play an essential role in the numerical computation. As noted earlier, this phenomenon is analogous in nature to the ``error catastrophe'' threshold in the Crow--Kimura quasispecies model and signifies that, for a given resource bound \eqref{c3:eq3.1} on the matrix ${\bf A}$, the evolutionary fitness increase has exhausted its possibilities.

Numerical experiments show that the larger the resource $M > 0$ in \eqref{c3:eq3.1}, the later stabilisation occurs and the higher the limiting fitness value.

A key feature of the proposed evolutionary adaptation process is that the system acquires resistance (robustness) to parasitic species by which it would otherwise have been destroyed before the adaptation process began. The larger the resource $M$, the greater the system's potential resistance to parasitic species.

A natural question arises concerning the relationship between the proposed approach and the concept of evolutionarily stable state, widely used in the theory of evolutionary dynamics of replicator systems via game-theoretic methods.

Observe that an evolutionarily stable equilibrium is necessarily asymptotically stable, and moreover globally stable. Many replicator systems, however, possess no asymptotically stable equilibria (e.g., the hypercycle for $n \geqslant 5$ and the bi-hypercycle for $n \geqslant 6$). Despite this, the proposed evolutionary adaptation process is capable of increasing the mean fitness several-fold without altering the equilibrium coordinates.

On the other hand, biological systems are characterised precisely by the property of stable non-equilibrium states that enable effective responses to environmental change. In contrast to the selfish Nash equilibrium, the proposed evolutionary adaptation process employs the balancing approach of~\cite{c3:Berge1957}, reconciling altruistic and selfish tendencies to achieve maximum collective benefit.

For low-dimensional hypercycles ($n = 2, 3$), the equilibrium is asymptotically stable and evolutionarily stable. Numerical experiments confirm that at each step of the proposed process the equilibrium retains its evolutionarily stable state property in these cases.

In other words, the proposed process does not conflict with the game-theoretic approach and the notion of evolutionarily stable state when a unique asymptotically stable equilibrium exists.

Throughout the evolutionary adaptation process, the geometry of the fitness surface changes substantially, while the fitness landscape retains its hyperbolic character in the sense defined above.

Fig.~\ref{c3:fig3.16}\,(a) shows the fitness values at the simplex vertices
$$
	(1,\, 0,\, 0,\, 0,\, 0), \quad (0,\, 1,\, 0,\, 0,\, 0), \quad (0,\, 0,\, 1,\, 0,\, 0),
$$
$$
	(0,\, 0,\, 0,\, 1,\, 0), \quad (0,\, 0,\, 0,\, 0,\, 1),
$$
and at the simplex centre $\left(\dfrac{1}{5}, \dfrac{1}{5}, \dfrac{1}{5}, \dfrac{1}{5}, \dfrac{1}{5}\right)$.

\begin{figure}[H]
	\begin{minipage}[ht]{0.48\linewidth}
		\center{\includegraphics[width=\linewidth]{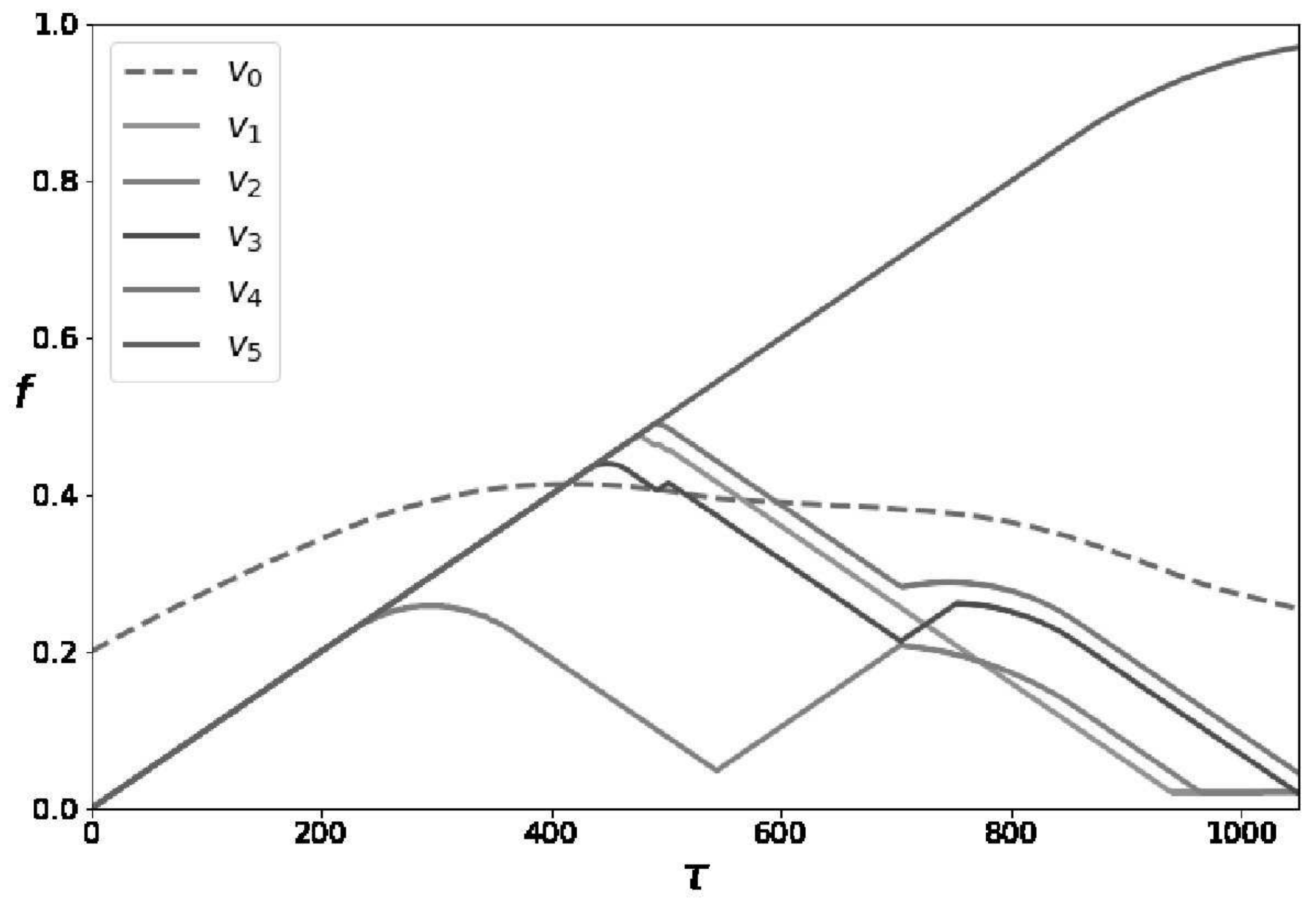}}
		\center{(a)}
	\end{minipage}
	\hfill
	\begin{minipage}[ht]{0.48\linewidth}
		\center{\includegraphics[width=\linewidth]{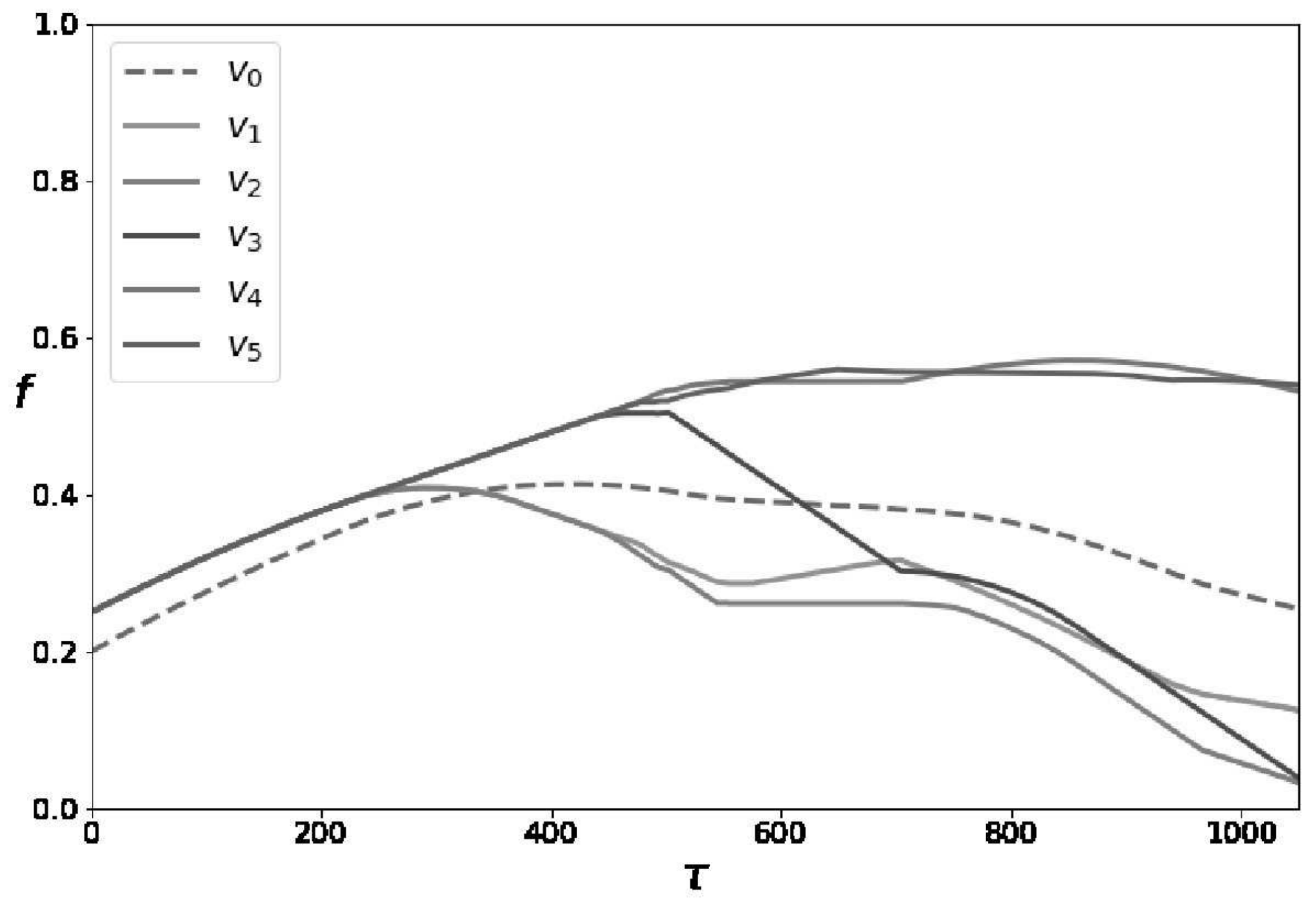}}
		\center{(b)}
	\end{minipage}
	\caption{Fitness values as functions of evolutionary time $\tau$ for the
		fifth-order hypercycle: (a) at the simplex vertices and centre;
		(b) at the midpoints of the simplex faces and centre (dashed).}
	\label{c3:fig3.16}
\end{figure}

Fig.~\ref{c3:fig3.16}\,(b) shows the fitness values at the midpoints of the simplex faces
\begin{multline*}
	\Bigg(\frac{1}{2},\, \frac{1}{2},\, 0,\, 0,\, 0\Bigg), \quad \Bigg(0,\, \frac{1}{2},\, \frac{1}{2},\, 0,\, 0\Bigg), \quad \Bigg(0,\, 0,\, \frac{1}{2},\, \frac{1}{2},\, 0\Bigg), \\ \Bigg(0,\, 0,\, 0,\, \frac{1}{2},\, \frac{1}{2}\Bigg), \quad \Bigg(\frac{1}{2},\, 0,\, 0,\, 0,\, \frac{1}{2}\Bigg),
\end{multline*}
and at the centre, as a function of the evolutionary parameter $\tau$ for the fifth-order hypercycle. The evolution of the centre is indicated by a dashed line.

Analysis of the phase trajectory behaviour reveals a transition from a stable limit cycle at the beginning of the process to a heteroclinic cycle at its end. A detailed study of the changes in fitness surface geometry during the maximisation process requires separate investigation and lies beyond the scope of this work.

Using the hypercycle classification of~\cite{c3:Eigen1982}, the hypercycles produced by the evolutionary adaptation process attain the maximum degree of functional connectivity.

A further natural question concerns the system behaviour when non-degeneracy constraints are removed. Numerical experiments show that in this case a sequential process of species annihilation occurs, ultimately leaving a single surviving species. This observation admits a mathematical justification.

Let ${\bf A}$ and ${\bf B}$ be matrices of dimensions $n \times n$ and $(n - 1) \times (n - 1)$ respectively, satisfying
$$
	\sum\limits_{i, j = 1}^{n}a_{ij}^{2} = \sum\limits_{i, j = 1}^{n - 1}b_{ij}^{2} = M, \quad M = \const > 0.
$$ 
Denoting the eigenvalues of ${\bf A}$ and ${\bf B}$ by $\lambda_{i}({\bf A})$, $i = \overline{1, n}$ and $\lambda_{j}({\bf B})$, $j = \overline{1, n - 1}$ respectively, we claim that
\begin{equation}
	\max\limits_{({\bf u},{\bf I}_{n}) = 1}\Big({\bf Au}, {\bf u}\Big) \geqslant \frac{M}{n^{2}}, \quad \max\limits_{({\bf v},{\bf I}_{n - 1}) = 1}\Big({\bf Bv}, {\bf v}\Big) \geqslant \frac{M}{(n - 1)^{2}}.
	\label{c3:eq3.30}
\end{equation}
Note that without loss of generality ${\bf A}$ and ${\bf B}$ may be taken to be symmetric, since the skew-symmetric part of a matrix does not contribute to the fitness.

Let $\lambda_{1}({\bf A})$ and $\lambda_{1}({\bf B})$ denote the eigenvalues of ${\bf A}$ and ${\bf B}$ largest in absolute value. Then
\begin{equation}
	\lambda_{1}({\bf A}) \geqslant \frac{M}{n}, \quad \lambda_{1}({\bf B}) \geqslant \frac{M}{n - 1}.
	\label{c3:eq3.31}
\end{equation} 
Let ${\bf \bar{u}}_{A}$ be the eigenvector of ${\bf A}$ corresponding to $\lambda_{1}({\bf A})$, normalised so that $({\bf \bar{u}}_{A}, {\bf I}_{n}) = 1$, and ${\bf \bar{u}}_{B}$ the eigenvector of ${\bf B}$ corresponding to $\lambda_{1}({\bf B})$, normalised by $({\bf \bar{u}}_{B}, {\bf I}_{n - 1}) = 1$. Then
\begin{equation*}
	\begin{aligned}
		&\max\limits_{({\bf u}_{A},{\bf I}_{n}) = 1}\Big({\bf A}{\bf u}_{A}, {\bf u}_{A}\Big) = \lambda_{1}(\bf A)||{\bf \bar{u}}_{A}||^{2}, \\ 
		&\max\limits_{({\bf u}_{B},{\bf I}_{n - 1}) = 1}\Big({\bf B}{\bf u}_{B}, {\bf u}_{B}\Big) = \lambda_{1}(\bf B)||{\bf \bar{u}}_{B}||^{2}.
	\end{aligned}
\end{equation*} 
Since
\begin{equation}
	\begin{aligned}
		&\displaystyle 1 = ({\bf u}_{A}, {\bf I}_{n}) \leqslant \Big(\sum\limits_{i = 1}^{n}({\bf u}_{A})_{i}^{2}\Big)^{1/n}\sqrt{n} = ||{\bf u}_{A}||\sqrt{n}, \\
		&\displaystyle 1 = ({\bf u}_{B}, {\bf I}_{n - 1}) \leqslant ||{\bf u}_{B}||\sqrt{n - 1}, \\
	\end{aligned}
	\label{c3:eq3.32}
\end{equation} 
inequality \eqref{c3:eq3.30} follows from \eqref{c3:eq3.31} and \eqref{c3:eq3.32}.

Inequality \eqref{c3:eq3.30} shows that the maximum fitness can only increase as the system dimension decreases, provided the spherical norm of the interaction matrix is fixed. This provides a mathematical explanation for the observed process of species annihilation when non-degeneracy constraints are removed.

\section*{Conclusion}\addcontentsline{toc}{section}{Conclusion}

We have proposed and analysed a mathematical model of evolutionary adaptation for
non-degenerate replicator systems. The key structural result (Theorem~\ref{c3:theorem:3.1})
expresses the rate of change of mean fitness in terms of a fitness variation
formula, and shows that the adaptation problem reduces, at each step of the
evolutionary time, to a linear programming problem over the admissible set
$\mathcal{M}$ of fitness landscape matrices. The algorithm was applied to four
canonical replicator systems, yielding a consistent picture across all examples.

The evolutionary adaptation process unfolds in three phases: an initial phase in
which the equilibrium is unchanged while the interaction graph undergoes a
fundamental restructuring from purely altruistic to mixed altruistic--selfish
replication; a second phase in which one dominant species emerges; and a
stabilisation phase in which fitness and graph topology effectively cease to
change. Throughout, the non-degeneracy constraints play a critical role:
they prevent species annihilation and are responsible for the error-catastrophe-like
threshold that terminates fitness growth.

A notable consequence is that all evolved systems acquire resistance to parasitic
species that would have destroyed the original system. The evolved hypercycles
attain maximum functional connectivity in the classification of~\cite{c3:Eigen1982},
and the proposed process does not conflict with the game-theoretic notion of
evolutionarily stable state when a unique asymptotically stable equilibrium
exists.

The present work extends the fitness optimisation results of~\cite{c3:Drozhzhin2021}
and provides a constructive algorithm for evolutionary adaptation that can be
applied to more general permanent replicator systems. The connection to the
geometry of the fitness surface studied in~Chapter~\ref{ch:2} --- in particular
the relationship between fitness maxima, evolutionarily stable states, and the
structure of the phase portrait --- suggests several directions for further
investigation, including the geometry of the fitness surface during the
adaptation process and adaptation under stochastic perturbations.


%% file: chapter4.tex
\chapter{Evolution of the Replicator System with New Species Adaptation at Random Time Moments}
\label{ch:4}

\begin{chapterabstract}
We extend the mathematical model of evolutionary adaptation for permanent
replicator systems developed in~Chapter~\ref{ch:3} to the setting in which
new species enter the system at random time moments governed by a Poisson
process. Four structural assumptions are imposed at each incorporation event,
relating the fitness matrix and equilibrium frequencies before and after
the arrival of a new species. Under these assumptions the entire extended
fitness matrix is determined by a single scalar parameter $0 < \alpha_k < 1$,
whose admissible range is characterised explicitly in terms of the minimum
and maximum equilibrium frequencies of the incumbent species. We prove that
whenever $\alpha_k$ falls below the upper bound
$1/(u_{\max}^k + 1)$, the evolutionary process continues to maximise mean
stationary fitness without loss of permanence. Numerical experiments on
hypercyclic systems of order $n = 3$ and $n = 5$ confirm these findings:
following each incorporation event, mean fitness resumes its monotone increase
and the newly added species settle into oscillation zones whose amplitudes are
strictly narrower than those of the original system. The problem of
establishing necessary and sufficient conditions for new-species integration
in the general case remains open.
\end{chapterabstract}

\bigskip\hrule\bigskip

\section*{Introduction}\addcontentsline{toc}{section}{Introduction}
This chapter is an edited English version of Chapter~4 of the monograph
\textit{Mathematical Models of Evolution and Replicator Systems
Dynamics}~\cite{c4:BratusBook2022}. It extends the evolutionary adaptation
model of Chapter~\ref{ch:3} and relies on the replicator framework of
Chapters~\ref{ch:1} and~\ref{ch:2}; the present chapter can nevertheless be
read independently, given the self-contained background below.

\subsection*{Replicator systems: a brief summary}

A \textit{replicator} is any entity capable of self-reproduction, heritable
variation, and competition for resources~\cite{c4:Eigen1971,c4:Eigen1979}. The
dynamics of $n$ competing species, represented by their relative frequencies
${\bf u}(t) = (u_1(t), \ldots, u_n(t)) \in S_n$, where
\[
    S_n = \left\{ {\bf u} \geqslant 0 : \sum_{i=1}^n u_i = 1 \right\}
\]
is the standard simplex, is governed by the \textit{replicator equation}
\begin{equation}
    \frac{du_i}{dt} = u_i\Big[\Big({\bf Au}\Big)_i - f({\bf u})\Big],
    \quad f({\bf u}) = \Big({\bf Au}, {\bf u}\Big), \quad i = \overline{1,n}.
    \label{c4:eq1.5}
\end{equation}
Here $\Big({\bf Au}\Big)_i = \sum_{j=1}^n a_{ij} u_j$ is the \textit{fitness}
of species $i$, $f({\bf u})$ is the \textit{mean fitness} of the population,
and ${\bf A} = (a_{ij})$ is the \textit{fitness landscape matrix}. The scalar
product $(\cdot,\cdot)$ denotes the standard inner product in $\mathbb{R}^n$.

A replicator system is called \textit{non-degenerate} (or \textit{permanent}) if
there exists a compact set $K \subset \inside S_n$ such that every trajectory
starting in $\inside S_n$ eventually enters and remains in $K$. Permanence
guarantees the long-run coexistence of all species; the hypercycle is the
paradigmatic permanent system (Chapter~\ref{ch:1}).

Three canonical replication regimes are analysed in~Chapter~\ref{ch:1}: independent
replication ($a_{ij} = k_i\delta_{ij}$), autocatalytic replication, and the
\textit{hypercycle}
\begin{equation}
    \frac{du_i}{dt} = u_i(u_{i-1} - f({\bf u})), \quad u_0 = u_n,
    \quad f({\bf u}) = \sum_{i=1}^n u_i u_{i-1},
    \label{c4:eq1.8}
\end{equation}
which is permanent for all $n \geqslant 2$ and admits a unique interior
equilibrium at $\bar{u}_i = k_{i+1}^{-1}/\sum_{j} k_{j+1}^{-1}$.
The geometry of the mean fitness surface and its relationship to evolutionarily
stable states were studied in~Chapter~\ref{ch:2}.

Variability --- the capacity for heritable change --- is one of Darwin's four
fundamental evolutionary postulates. As shown in~Chapter~\ref{ch:1}, the
hypercycle already encodes a rudimentary form of variability: if one species
is replaced by another with superior catalytic properties, selection retains
only one of the competing catalytic branches. This observation motivates the
central question of the present chapter.

\subsection*{The evolutionary adaptation framework and the present contribution}

Chapter~\ref{ch:3} developed a mathematical model of \textit{evolutionary
adaptation} for permanent replicator systems. The key hypothesis is a separation
of timescales: the active dynamics of system~\eqref{c4:eq1.5} operates on a fast
time $t$, while the fitness landscape matrix ${\bf A}$ evolves on a slow
evolutionary time $\tau = \varepsilon t$, $\varepsilon \ll 1$. In this limit,
justified by Tikhonov's theorem~\cite{c4:Tihonov1948}, the system is always near
a steady state ${\bf \bar{u}}(\tau)$, and the adaptation problem reduces to
maximising the mean fitness $\bar{f}(\tau) = f({\bf \bar{u}}(\tau))$ over an
admissible convex set $\mathcal{M}$ of fitness matrices. Each step of this
maximisation reduces to a linear programming problem (Chapter~\ref{ch:3}).
Numerical experiments in~Chapter~\ref{ch:3} demonstrate that the process
follows a universal three-phase pattern and that evolved systems acquire
resistance to parasitic species.

All prior work in this series assumed the set of competing species to be fixed
throughout the evolutionary process. The present chapter removes that assumption.
New species arise at random time moments, modelled by a Poisson process, and
must either integrate into the existing permanent system or be rejected. The
central result is a set of sufficient conditions --- expressed entirely in
terms of the incumbent equilibrium frequencies --- under which integration
preserves permanence and the mean fitness maximisation process continues
uninterrupted.

The chapter is organised as follows. Section~\ref{c4:section:4.1} states the four
structural assumptions, derives the single-parameter parametrisation of the
extended fitness matrix, and characterises the admissible range of $\alpha_k$.
Section~\ref{c4:section:4.2} presents numerical experiments on hypercyclic systems
of dimension $n = 3$ and $n = 5$, illustrating the dynamics of mean fitness
and species interaction graphs before and after each incorporation event.

\section{Sufficient Conditions for New Species Adoption}
\label{c4:section:4.1}

It was shown in~Chapter~\ref{ch:1} that the hypercycle possesses the capacity for
change: if one of its species is replaced by another with superior catalytic
properties, selection acts on the catalytic branches and retains only one.
Variability of this kind is one of Darwin's four fundamental evolutionary
postulates. It is therefore natural to study the evolutionary adaptation process
of~Chapter~\ref{ch:3} in the setting where new species arrive at random time
moments and attempt to integrate into an existing permanent replicator system.
As emphasised in~\cite{c4:Markov2014,c4:Kunin2011}, such dynamics, driven by the
episodic arrival of novel types rather than by gradual drift within a fixed species
pool, are central to understanding major evolutionary transitions.

Numerical experiments show that integration is far from automatic. In some cases
the newcomer acts as a parasite and destroys the system; in others, the system
eliminates it. The conditions below are sufficient to guarantee successful
integration.

Suppose a new species is added to the $n$-dimensional system~\eqref{c4:eq1.5} at
the random time $\tau_k = k\Delta\tau$. Let
${\bf A}^k \in \mathbb{R}^{(n+1)\times(n+1)}$ denote the fitness matrix
arising at the $k$-th incorporation event, and let
${\bf \bar{u}}_k = (u_1^k, u_2^k, \ldots, u_{n+1}^k)$ be the corresponding
equilibrium frequency vector.

We impose the following four assumptions at each incorporation event.

\begin{assumpt}[Fitness continuity]\label{c4:ass:1}
The mean fitness at equilibrium on the $k$-th iteration equals the value it
would have attained without the new species:
\begin{equation}
    \bar{f}(\tau_k) = \bar{f}_k.
    \label{c4:eq4.1}
\end{equation}
\end{assumpt}

\begin{assumpt}[Matrix inheritance]\label{c4:ass:2}
The first $n$ rows and $n$ columns of ${\bf A}^k$ coincide with the
corresponding entries of ${\bf A}^{k-1}$:
\begin{equation}
    a_{ij}^k = a_{ij}^{k-1}, \quad i, j = \overline{1,n}.
    \label{c4:eq4.2}
\end{equation}
\end{assumpt}

\begin{assumpt}[Proportional equilibrium shift]\label{c4:ass:3}
The first $n$ components of the equilibrium vector satisfy the proportionality
relations between iterations $k$ and $k-1$:
\begin{equation}
    \frac{u_i^k}{u_j^k} = \frac{u_i^{k-1}}{u_j^{k-1}}, \quad
    i, j = \overline{1,n}.
    \label{c4:eq4.3}
\end{equation}
\end{assumpt}

\begin{assumpt}[Equal incumbent contributions]\label{c4:ass:4}
The contribution of each incumbent species to the fitness of the new species
is equal:
\begin{equation}
    a_{n+1,\,j}^k\, u_j^k = a_{n+1,\,i}^k\, u_i^k, \quad
    i, j = \overline{1,n}.
    \label{c4:eq4.4}
\end{equation}
\end{assumpt}

Assumption~\ref{c4:ass:3} immediately implies the existence of a parameter
$0 < \alpha_k < 1$ such that
\begin{equation}
    u_{n+1}^k = 1 - \alpha_k, \qquad
    u_i^k = \alpha_k\, u_i^{k-1}, \quad i = \overline{1,n}.
    \label{c4:eq4.5}
\end{equation}
This is verified directly:
\[
    \sum_{i=1}^{n+1} u_i^k
    = \alpha_k \sum_{i=1}^{n} u_i^{k-1} + (1 - \alpha_k) = 1.
\]
Using Assumption~\ref{c4:ass:1} together with~\eqref{c4:eq4.5}, the mean fitness at
the new equilibrium satisfies
\[
    \bar{f}_k = \sum_{j=1}^{n+1} a_{ij}^k u_j^k
    = \alpha_k \bar{f}_{k-1} + (1-\alpha_k)\, a_{i,\,n+1}^k,
\]
from which all entries of the new column of ${\bf A}^k$ (except
$a_{n+1,\,n+1}^k$) are uniquely determined:
\begin{equation}
    a_{i,\,n+1}^k = \frac{\bar{f}_k - \alpha_k \bar{f}_{k-1}}{1 - \alpha_k},
    \quad i = \overline{1,n}.
    \label{c4:eq4.6}
\end{equation}
The entries of the new row follow from Assumption~\ref{c4:ass:4}:
\begin{equation}
    \left\{
    \begin{aligned}
        &a_{n+1,\,j}^k = \dfrac{\bar{f}_k}{\alpha_k\, u_j^{k-1}(n+1)},
            \quad j = \overline{1,n}, \\[6pt]
        &a_{n+1,\,n+1}^k = \dfrac{\bar{f}_k}{(1-\alpha_k)(n+1)}.
    \end{aligned}
    \right.
    \label{c4:eq4.7}
\end{equation}
Thus, under Assumptions~\ref{c4:ass:1}--\ref{c4:ass:4}, the extended system at
iteration $k$ is fully determined by specifying a single parameter
$0 < \alpha_k < 1$.

The admissible range of $\alpha_k$ is characterised as follows. Set
\[
    u_{\min}^k = \min(u_1^{k-1}, \ldots, u_n^{k-1}), \qquad
    u_{\max}^k = \max(u_1^{k-1}, \ldots, u_n^{k-1}).
\]

\begin{proposition}\label{c4:prop:alpha}
If $\alpha_k > 1/(u_{\min}^k + 1)$, the replicator system obtained by
adjoining the $(n+1)$-th species is no longer permanent. Conversely, if
\begin{equation}
    \alpha_k < \frac{1}{u_{\max}^k + 1},
    \label{c4:eq:alpha_bound}
\end{equation}
then the frequency of the new species at the new equilibrium exceeds that of
every incumbent species, and the fitness-maximisation process continues without
loss of permanence of the extended system.
\end{proposition}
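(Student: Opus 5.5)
The plan is to reduce both claims to an elementary comparison of equilibrium frequencies, using the parametrisation \eqref{c4:eq4.5}--\eqref{c4:eq4.7}, and then to settle permanence with the criterion of Theorem~\ref{c1:t1.1}. By \eqref{c4:eq4.5}, the new species has equilibrium frequency $u_{n+1}^k=1-\alpha_k$, while incumbent $i$ has $u_i^k=\alpha_k u_i^{k-1}$. The inequality $1-\alpha_k>\alpha_k u_i^{k-1}$ is equivalent to $\alpha_k<1/(u_i^{k-1}+1)$. Taking $i$ with $u_i^{k-1}=u_{\max}^k$ gives the first assertion of the converse part: under \eqref{c4:eq:alpha_bound} the newcomer's frequency exceeds every incumbent's. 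Symmetrically, $\alpha_k>1/(u_{\min}^k+1)$ means $u_{n+1}^k<u_i^k$ for all $i\leqslant n$, so the newcomer is the rarest species at the new equilibrium.

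For the permanence statements I will apply Theorem~\ref{c1:t1.1} to the extended matrix ${\bf A}^k$. The relevant boundary rest points are of two kinds. The first is the old equilibrium $(\bar{\bf u}_{k-1},0)$, which is a rest point because of Assumption~\ref{c4:ass:2}. The second is the vertex $e_{n+1}$ together with the rest points on faces containing it. At $(\bar{\bf u}_{k-1},0)$ the invasion rate of the newcomer is, by \eqref{c4:eq4.7},
\[
\sum_{j=1}^n a^k_{n+1,j}u_j^{k-1}-\bar f_{k-1}=\frac{n\,\bar f_k}{\alpha_k(n+1)}-\bar f_{k-1}.
\]
At $e_{n+1}$ the invasion rate of incumbent $i$ is $a^k_{i,n+1}-a^k_{n+1,n+1}$, which is computed from \eqref{c4:eq4.6}--\eqref{c4:eq4.7}. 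I will choose the test vector $\mathbf p$ as a convex combination of $(\bar{\bf u}_{k-1},0)$ and $e_{n+1}$, with weights adapted to $\alpha_k$, and verify the inequality $(\mathbf p,{\bf A}^k\bar{\bf u})>(\bar{\bf u},{\bf A}^k\bar{\bf u})$ at each boundary rest point. In this verification I will use that $\bar f_k\geqslant\bar f_{k-1}$, which holds because the fitness-maximisation step of Chapter~\ref{ch:3} (Theorem~\ref{c3:theorem:3.1}) is monotone. Once permanence of the extended system is known, the interior equilibrium of \eqref{c3:eq3.2} for ${\bf A}^k$ exists and lies in $\inside S_{n+1}$. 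The linear-programming iteration of Section~\ref{c3:section:3.3} then restarts from ${\bf A}^k$ with all the same formulas, and so the maximisation process continues.

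For the negative claim, I will take $\alpha_k>1/(u_{\min}^k+1)$ and exhibit a boundary rest point at which no $\mathbf p\in\inside S_{n+1}$ can satisfy the criterion of Theorem~\ref{c1:t1.1}, that is, a rest point that is saturated and attracts trajectories from the interior. The natural candidate is a rest point on a face where the incumbent with frequency $u_{\min}^k$ is absent. There the entries $a^k_{n+1,j}\propto 1/(\alpha_k u_j^{k-1})$ in \eqref{c4:eq4.7} are largest for the rarest incumbent, which suggests that the newcomer acts as a parasite on that species.

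I expect the main obstacle to be this translation of the sharp thresholds $1/(u_{\min}^k+1)$ and $1/(u_{\max}^k+1)$, which come from comparing frequencies, into sign conditions on invasion rates. The invasion rates involve $\bar f_k/\bar f_{k-1}$ and $n/(n+1)$, not only the frequencies. If a clean inequality is not available, I will first try to prove the result by tracking the sign of $\bigl(\delta\bar{\bf u}\bigr)_i$ from \eqref{c3:eq3.13}. This would show that the component $\alpha_k u_{\min}^{k-1}$ is driven to zero by the next adaptation step exactly in the regime $\alpha_k>1/(u_{\min}^k+1)$, and stays positive under \eqref{c4:eq:alpha_bound}.
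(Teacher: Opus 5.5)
Your first step, comparing $1-\alpha_k$ with $\alpha_k u_i^{k-1}$, is correct. It is also the only part of the proposal that goes through. The paper itself gives no analytic proof: it says the bound is ``established numerically'' and that its proof is open. The obstacle you anticipate in your last paragraph is fatal for the first assertion, which is false in the regime $\bar f_k>\bar f_{k-1}$ that you adopt. By \eqref{c4:eq4.6}--\eqref{c4:eq4.7}, the newcomer's invasion rate at $(\bar{\bf u}_{k-1},0)$ and each incumbent's invasion rate at $e_{n+1}$ are
\[
\frac{n\bar f_k}{\alpha_k(n+1)}-\bar f_{k-1}
\quad\text{and}\quad
\frac{1}{1-\alpha_k}\Bigl(\frac{n}{n+1}\bar f_k-\alpha_k\bar f_{k-1}\Bigr).
\]
Both are positive exactly when $\alpha_k<\alpha^*:=\frac{\bar f_k}{\bar f_{k-1}}\cdot\frac{n}{n+1}$. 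The factors $1/u_j^{k-1}$ cancel, so $u^k_{\min}$ and $u^k_{\max}$ never enter; since $u^k_{\min}\leqslant 1/n\leqslant u^k_{\max}$, the paper's two thresholds merely bracket $n/(n+1)$. For a counterexample take $n=2$, $\bar{\bf u}_{k-1}=(\frac12,\frac12)$, $\bar f_{k-1}=\frac12$, $\bar f_k=\frac35$ and $\alpha_k=\frac{7}{10}>\frac23=\frac{1}{u^k_{\min}+1}$:
\[
{\bf A}^{k-1}=\begin{pmatrix}0&1\\ 1&0\end{pmatrix},\qquad
{\bf A}^k=\begin{pmatrix}0&1&5/6\\ 1&0&5/6\\ 4/7&4/7&2/3\end{pmatrix}.
\]
The boundary rest points are $e_1,e_2,e_3$, $(\frac12,\frac12,0)$, $(\frac{7}{31},0,\frac{24}{31})$ and $(0,\frac{7}{31},\frac{24}{31})$. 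At each of them every absent species has a strictly positive invasion rate, so Jansen's condition holds with $p=(\frac13,\frac13,\frac13)$ and the extended system is permanent. Any $\bar f_k>\bar f_{k-1}$ with a uniform old equilibrium, which is exactly Phase~1 of Chapter~\ref{ch:3}, opens such a window $(\frac{n}{n+1},\alpha^*)$. The first assertion holds only when $\bar f_k\leqslant\bar f_{k-1}$. In that case the obstruction is the vertex $e_{n+1}$, which is a strict Nash equilibrium because all incumbents share the same negative rate $a^k_{i,n+1}-a^k_{n+1,n+1}$; it is not a face missing the rarest incumbent. Your fallback via $(\delta\bar{\bf u})_i$ cannot decide either direction. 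Permanence is a property of the flow for the fixed matrix ${\bf A}^k$, and $\bar{\bf u}_k$ is interior for every $\alpha_k\in(0,1)$.

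For the permanence half, the Jansen strategy can be rescued, but not with $p$ built from $(\bar{\bf u}_{k-1},0)$ and $e_{n+1}$, and not without two ingredients missing from your plan.

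The first is a structural fact. Because the new column is constant ($a^k_{i,n+1}=c$ for all $i\leqslant n$), take any boundary rest point $x=(y,x_{n+1})$ with $x_{n+1}>0$, $y\neq0$, and set $s=\sum_{i\leqslant n}y_i$. Then $\hat y=y/s$ is a rest point of the old system on a proper face, and
\[
({\bf A}^kx)_i-(x,{\bf A}^kx)=s\bigl(({\bf A}^{k-1}\hat y)_i-(\hat y,{\bf A}^{k-1}\hat y)\bigr),\qquad i\leqslant n.
\]
So every face through $e_{n+1}$ inherits the incumbents' invasion structure from the old system. Without this identity, the rest points on faces containing $e_{n+1}$ depend on the unknown ${\bf A}^{k-1}$ and cannot be checked.

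The second is a Jansen vector $p^{\rm old}$ for the old system. $\bar{\bf u}_{k-1}$ is not one in general. Permanence alone does not supply one either: Jansen's criterion is in general only sufficient, and the ``only if'' in Theorem~\ref{c1:t1.1} overstates the cited result.

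Given $p^{\rm old}$, take $p=(\beta p^{\rm old},1-\beta)$ with $\beta$ close to $1$. The inequality then holds at the old boundary rest points for $\beta$ near $1$, and at the mixed ones by the identity above. At $(\bar{\bf u}_{k-1},0)$ and $e_{n+1}$ it holds exactly when $\alpha_k<\alpha^*$. When $\bar f_k\geqslant\bar f_{k-1}$, this is implied by $\alpha_k<1/(u^k_{\max}+1)$. That proves the second assertion for old systems satisfying Jansen's condition, for example the hypercycle, where any interior $p$ works. This is the most your approach can deliver.
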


The bound~\eqref{c4:eq:alpha_bound} is established numerically in
Section~\ref{c4:section:4.2}; its analytic proof in full generality remains an open
problem.

In the numerical algorithm, the time moments of incorporation are generated by a
Poisson process. At each such moment, $\alpha_k$ is drawn by first sampling
$\beta_k \sim \mathcal{U}(0,1)$ and then setting
\begin{equation}
    \alpha_k = \beta_k \cdot \frac{1}{u_{\max}^k + 1}.
    \label{c4:eq4.8}
\end{equation}
This construction guarantees that~\eqref{c4:eq:alpha_bound} is satisfied at every
incorporation event.

\section{Numerical Results}\label{c4:section:4.2}

\subsection*{Hypercyclic system of order $n = 5$}

We illustrate the algorithm on a hypercyclic replicator system of order $n = 5$.
Two incorporation events were generated by the Poisson process, occurring at
iterations $682$ and $1240$ of the evolutionary adaptation process, with
parameters $\beta_{682} = 0.829$ and $\beta_{1240} = 0.557$.

Figure~\ref{c4:fig4.1} shows the dynamics of (a) mean fitness and (b) the
equilibrium fixed point. After each incorporation event the mean fitness
resumes its monotone increase; the evolutionary optimisation trajectory suffers
no disruption.

\begin{figure}[H]
    \begin{minipage}[ht]{1.0\linewidth}
        \center{\includegraphics[width=0.8\linewidth]{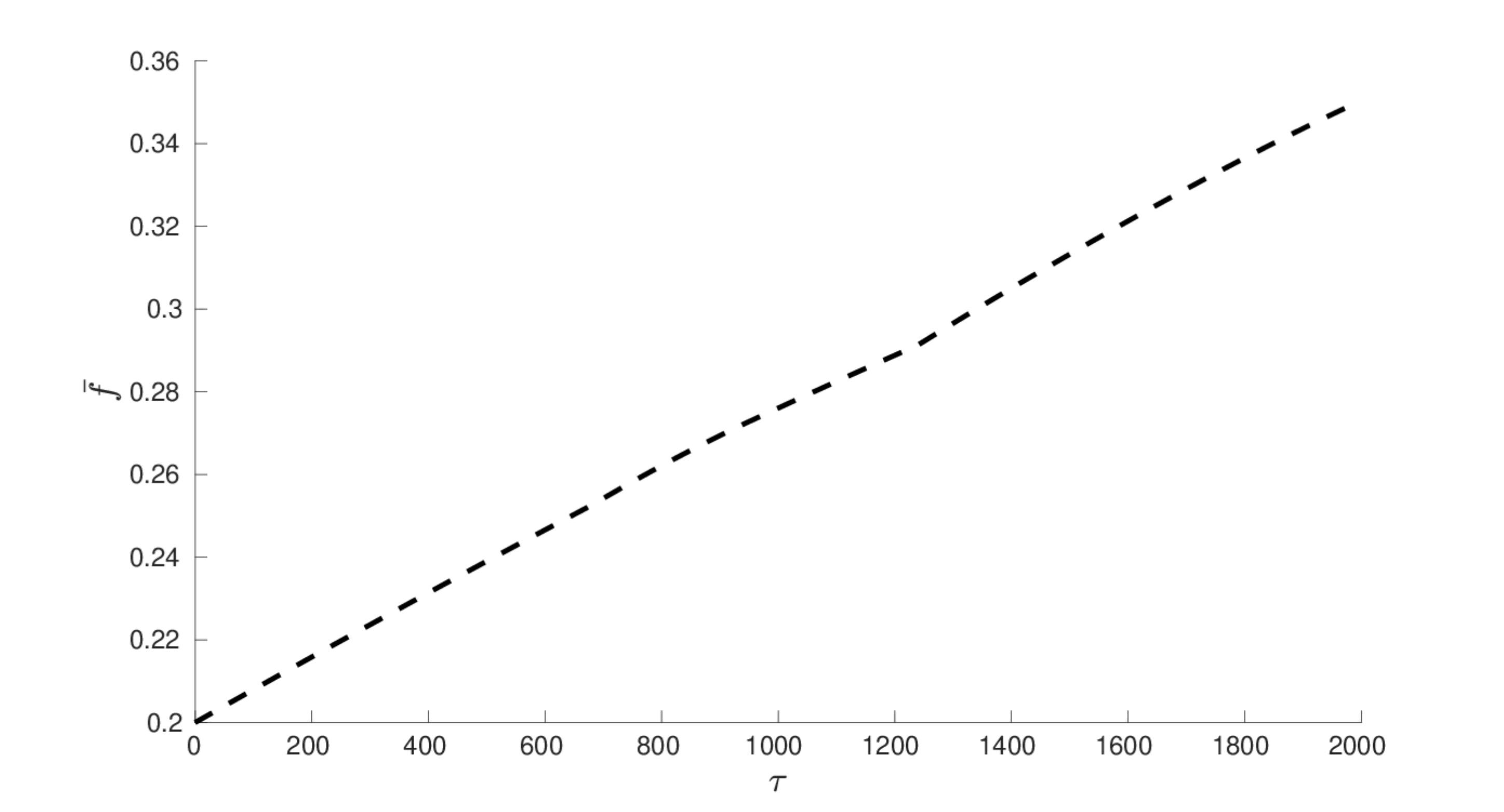} (a)}
    \end{minipage}
    \vfill
    \begin{minipage}[ht]{1.0\linewidth}
        \center{\includegraphics[width=0.8\linewidth]{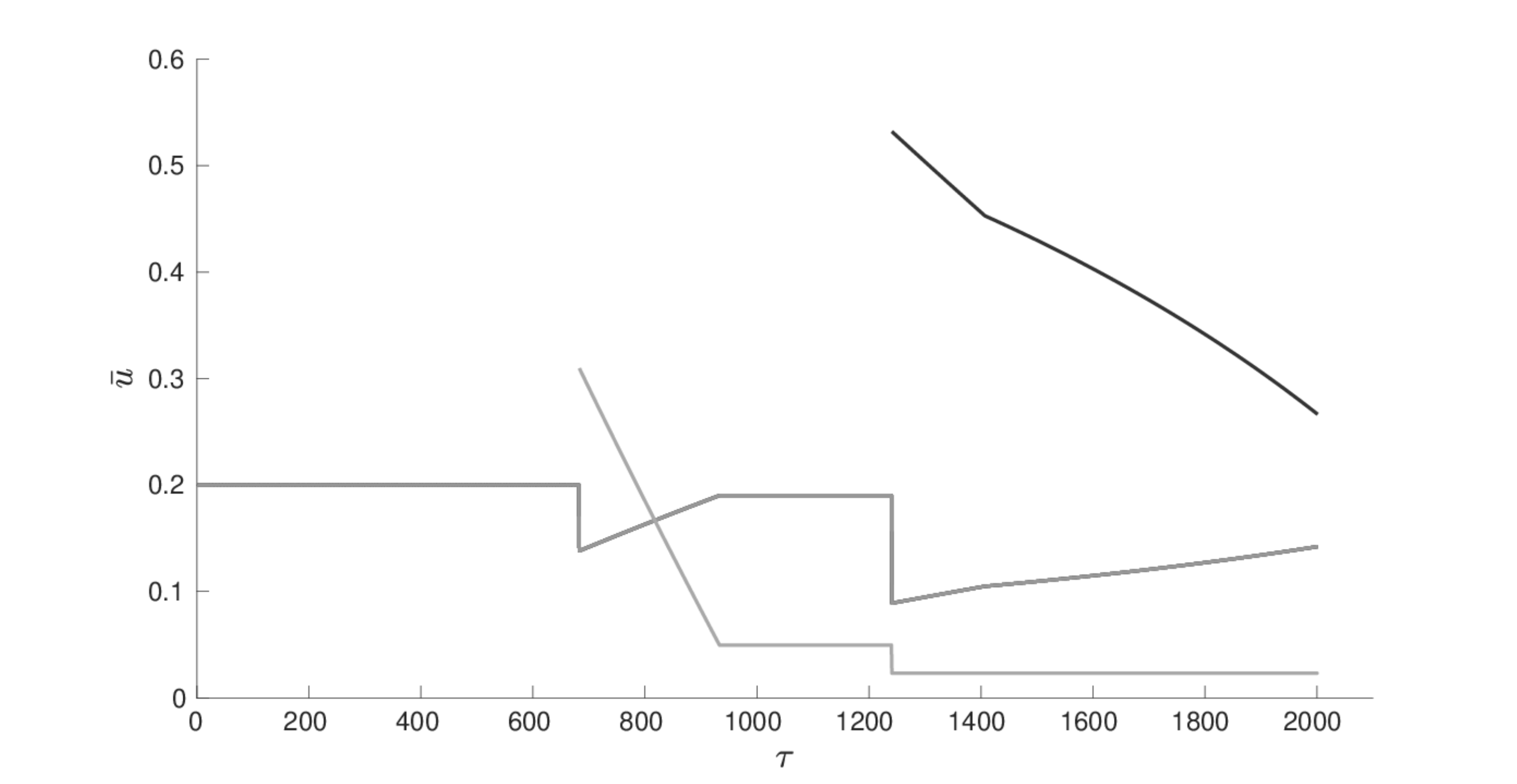} (b)}
    \end{minipage}
    \caption{Evolutionary dynamics of (a) mean fitness and (b) the equilibrium
    fixed point of a hypercyclic replicator system of dimension $n = 5$,
    following the addition of two new species. Vertical dashed lines mark the
    incorporation events at iterations $682$ and $1240$.}
    \label{c4:fig4.1}
\end{figure}

Figures~\ref{c4:fig4.2}--\ref{c4:fig4.4} show the active dynamics at iterations
$681$ (immediately before the first incorporation), $1239$ (immediately before
the second), and $1735$ (well after both events).

\begin{figure}[ht]
    \centering
    \includegraphics[width=1.0\textwidth]{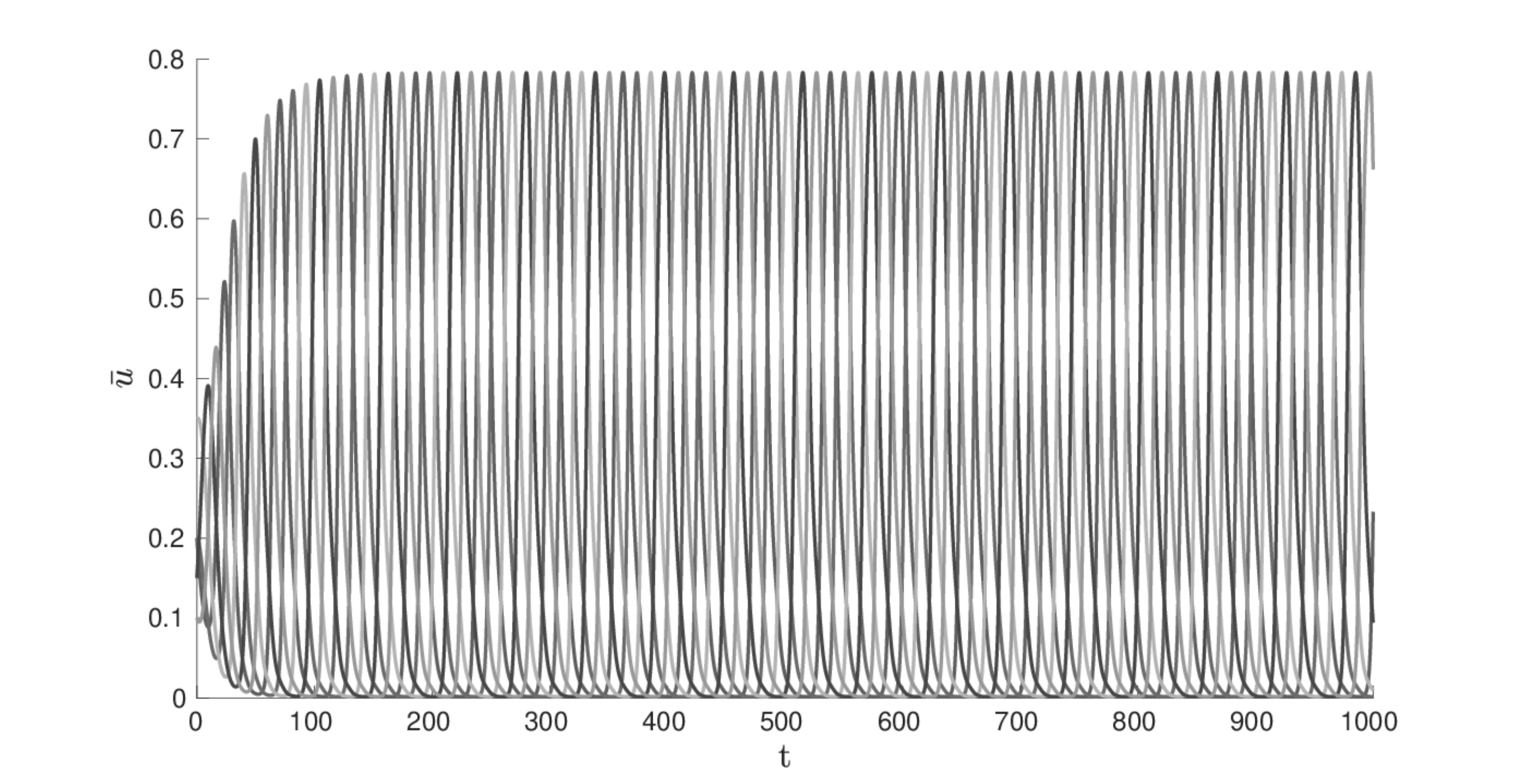}
    \caption{Active dynamics of the system at iteration $681$, dimension
    $n = 5$, immediately before the first new species is added.}
    \label{c4:fig4.2}
\end{figure}

\begin{figure}[ht]
    \centering
    \includegraphics[width=1.0\textwidth]{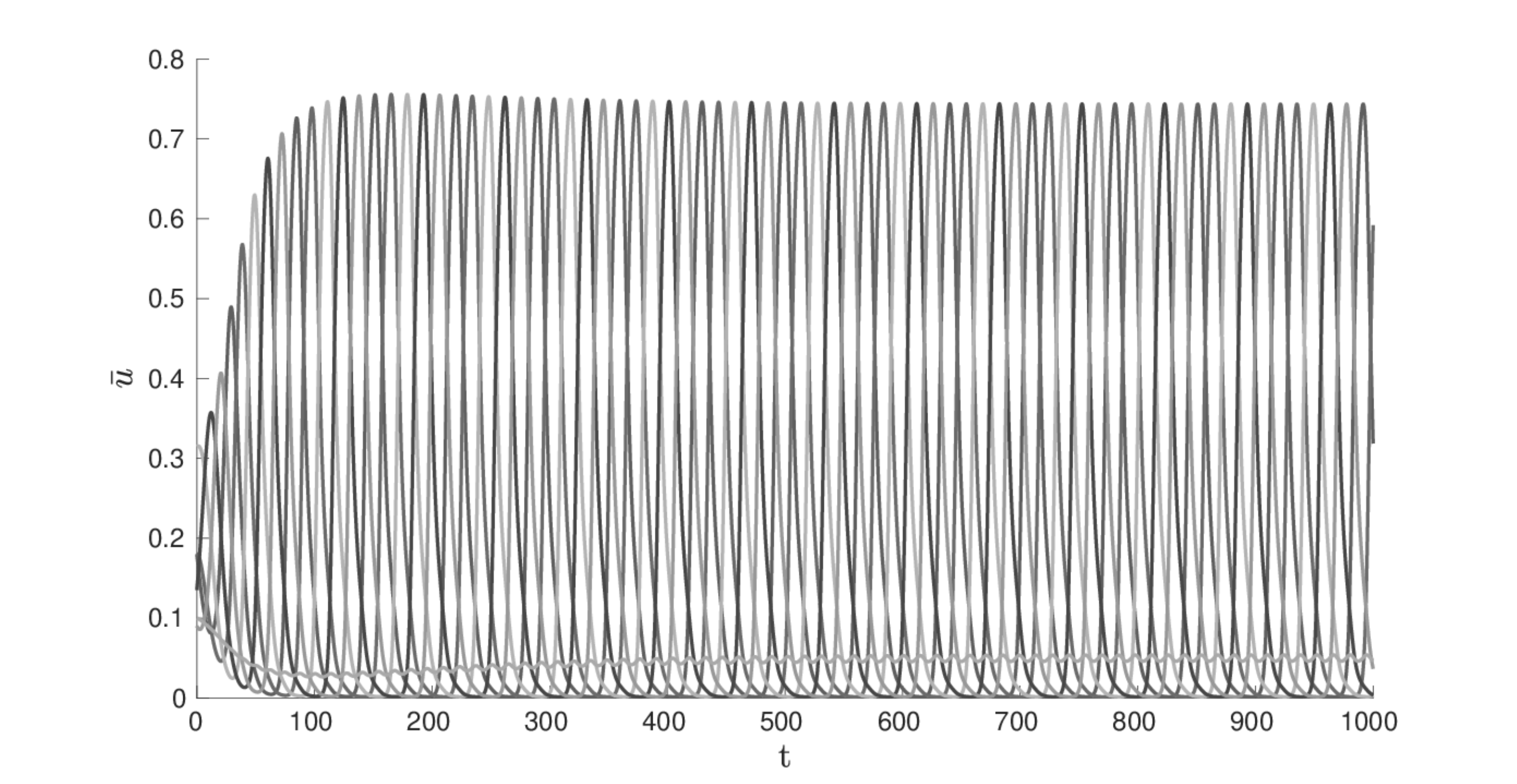}
    \caption{Active dynamics of the system at iteration $1239$, dimension
    $n = 5$, immediately before the second new species is added.
    The newcomer is indicated by the wavy line at the bottom.}
    \label{c4:fig4.3}
\end{figure}

\begin{figure}[ht]
    \centering
    \includegraphics[width=1.0\textwidth]{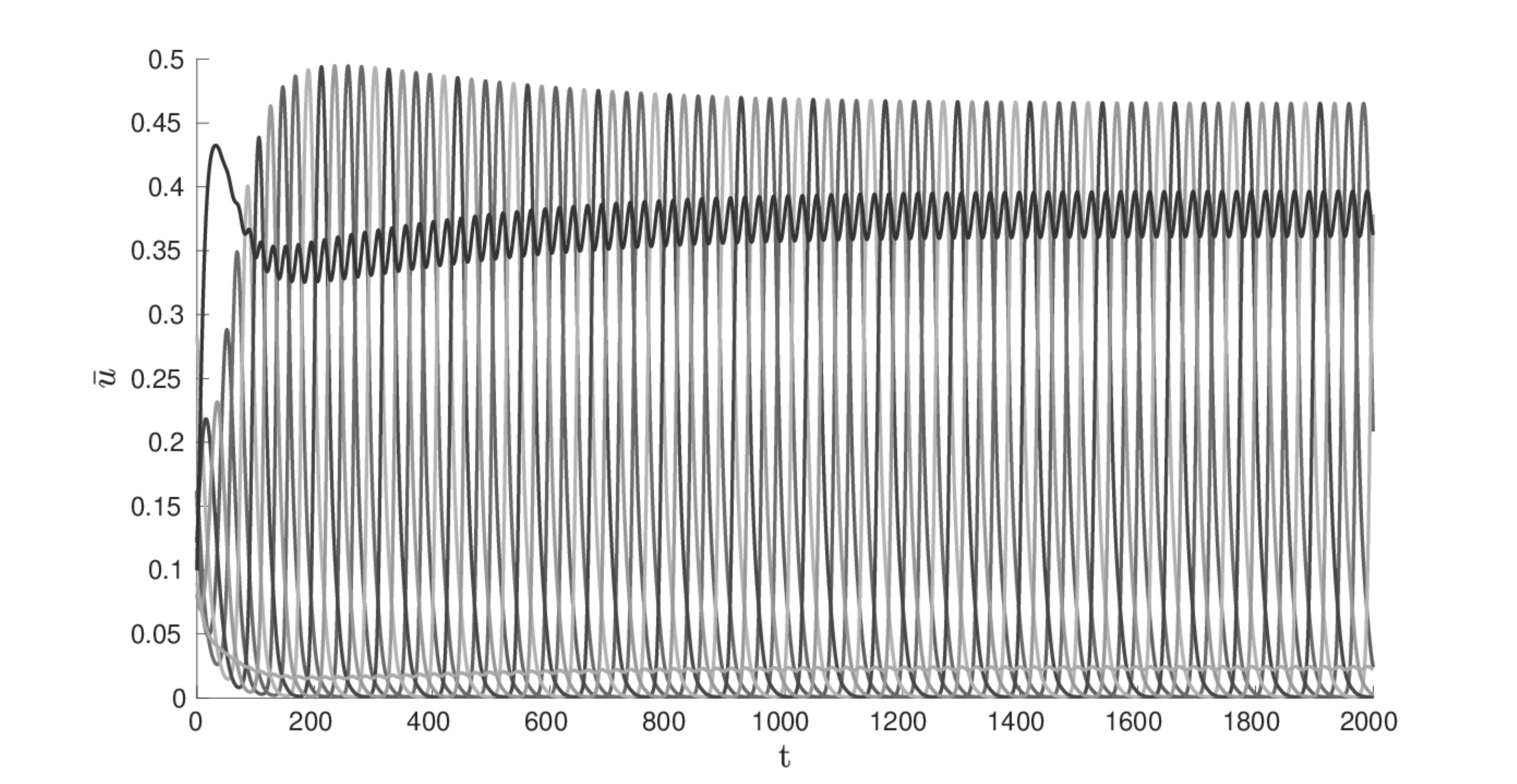}
    \caption{Active dynamics of the system at iteration $1735$, dimension
    $n = 5$. The most recently incorporated species is shown by the dark line.}
    \label{c4:fig4.4}
\end{figure}

\FloatBarrier
\subsection*{Hypercyclic system of order $n = 3$}

Figures~\ref{c4:fig4.5} and~\ref{c4:fig4.6} display the species interaction graphs
for a system that starts with $n = 3$ species, into which two new species were
added at iterations $68$ and $529$. In each figure, panel~(a) shows the
interaction graph immediately before the incorporation event and panel~(b)
shows it immediately after.

\begin{figure}[ht]
    \begin{minipage}[ht]{0.5\linewidth}
        \center{\includegraphics[width=1.0\linewidth]{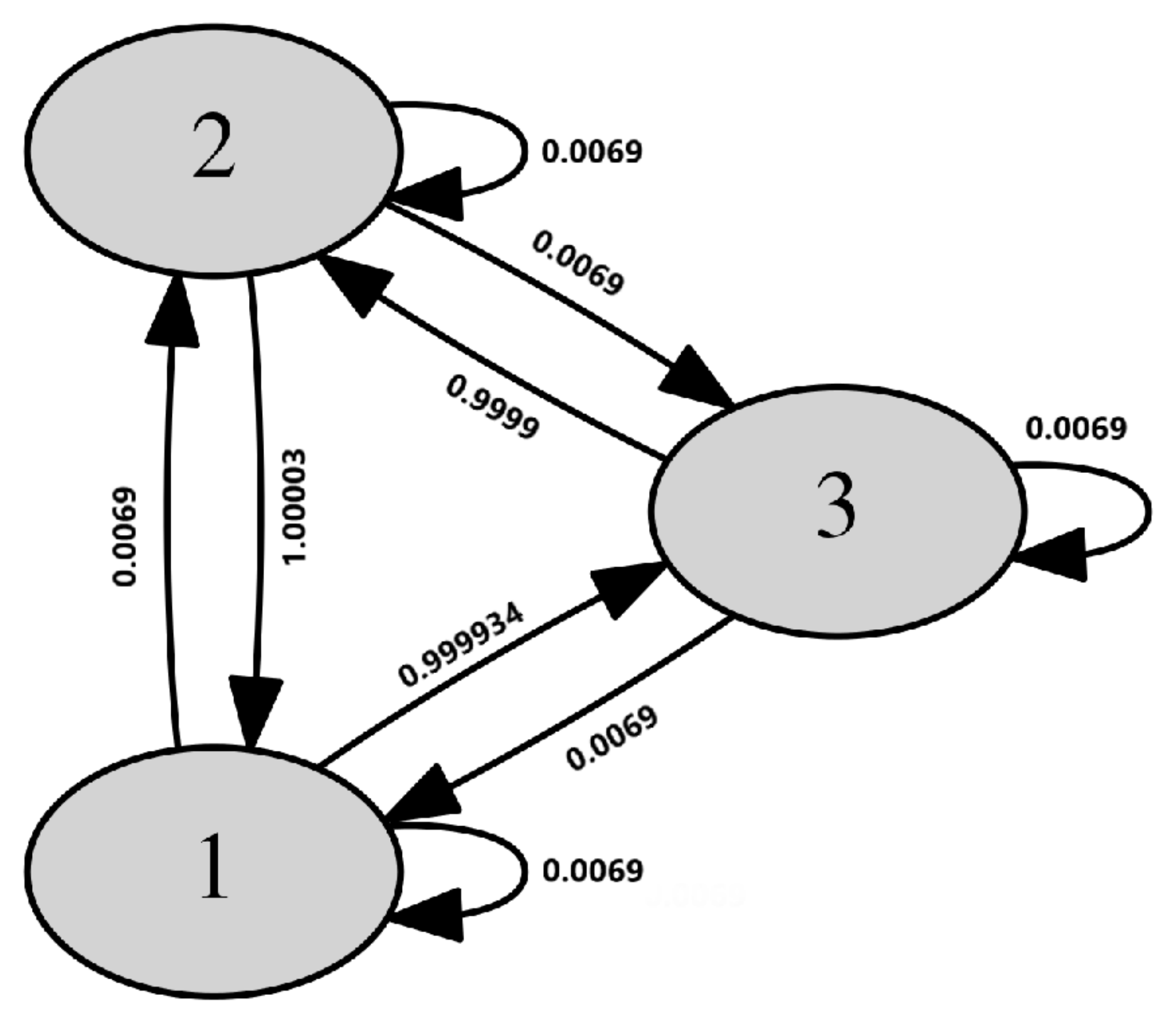} (a)}
    \end{minipage}
    \hfill
    \begin{minipage}[ht]{0.5\linewidth}
        \center{\includegraphics[width=1.0\linewidth]{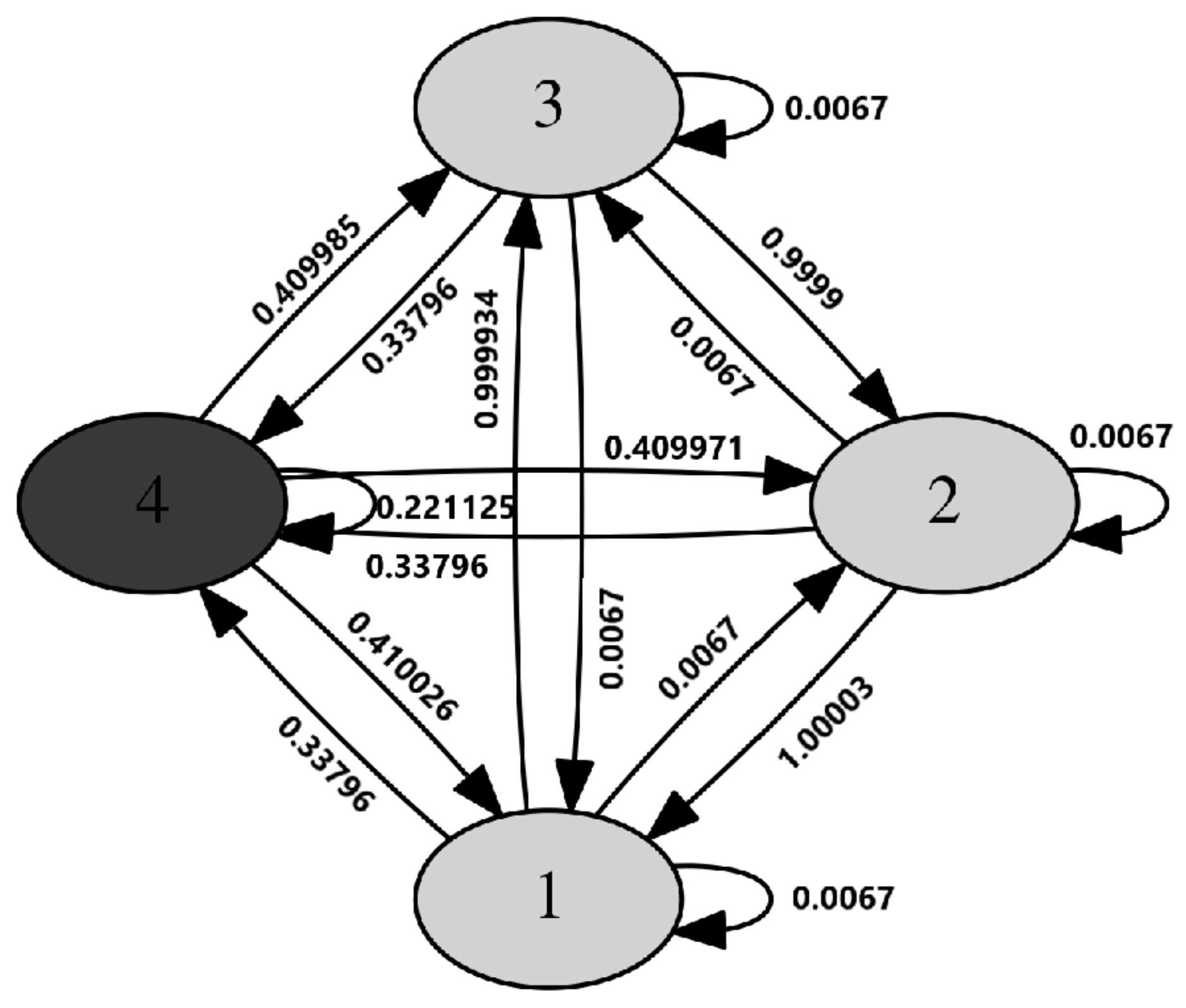} (b)}
    \end{minipage}
    \caption{Species interaction graphs at iterations $67$~(a) and $68$~(b),
    dimension $n = 3$.}
    \label{c4:fig4.5}
\end{figure}

\begin{figure}[H]
    \begin{minipage}[ht]{0.5\linewidth}
        \center{\includegraphics[width=1.0\linewidth]{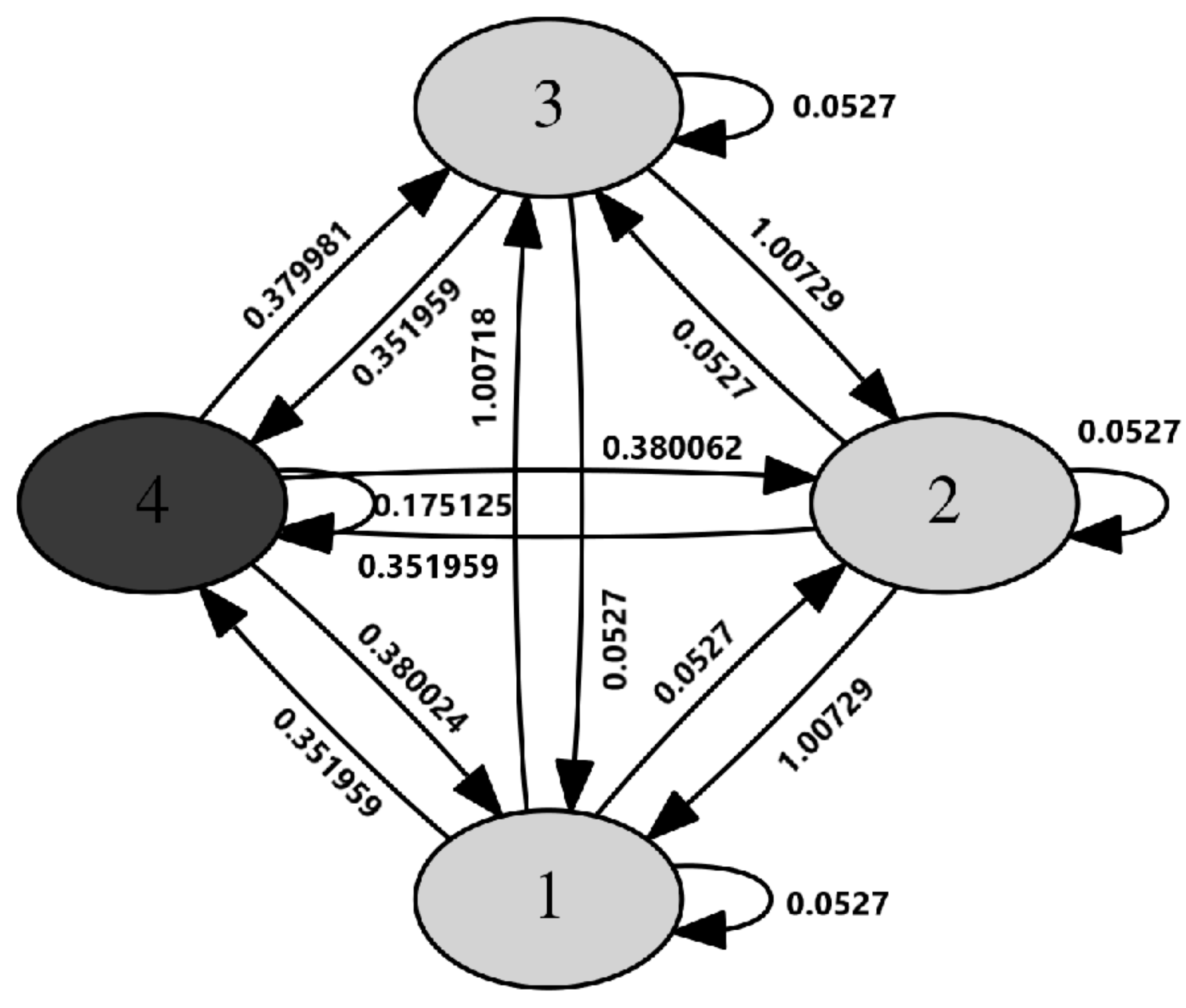} (a)}
    \end{minipage}
    \hfill
    \begin{minipage}[ht]{0.5\linewidth}
        \center{\includegraphics[width=1.0\linewidth]{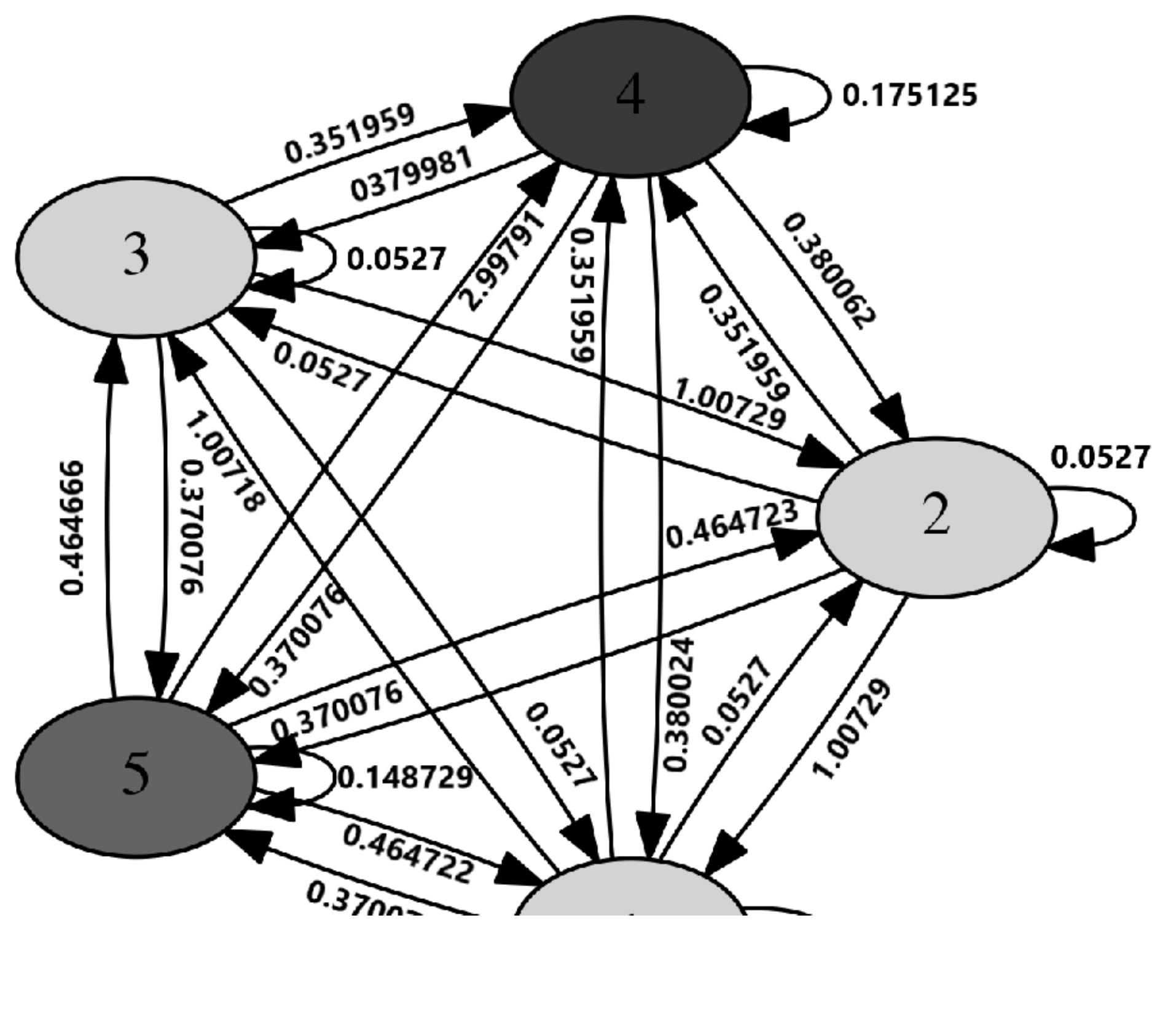} (b)}
    \end{minipage}
    \caption{Species interaction graphs at iterations $528$~(a) and $529$~(b),
    dimension $n = 3$.}
    \label{c4:fig4.6}
\end{figure}

\FloatBarrier
\subsection*{Summary of numerical findings}

The experiments across both system dimensions lead to three consistent
observations. First, each incorporation event may cause an abrupt jump in mean
fitness, after which the monotone increase resumes. Second, the frequencies of
newly added species settle into oscillation zones whose amplitudes are
noticeably narrower than those of the incumbent species in the original system;
this localisation becomes more pronounced with successive incorporation events.
Third, the species interaction graphs retain their structural connectivity after
each event, consistent with the preservation of permanence guaranteed by
condition~\eqref{c4:eq:alpha_bound}.

\clearpage
\section*{Conclusion}\addcontentsline{toc}{section}{Conclusion}

We have extended the evolutionary adaptation framework of~Chapter~\ref{ch:3}
to the case in which new species enter a permanent replicator system at random
time moments. Under four structural assumptions on the fitness matrix and
equilibrium frequencies, the problem of incorporating a new species reduces to
the choice of a single scalar parameter $\alpha_k \in (0, 1)$. The sufficient
condition $\alpha_k < 1/(u_{\max}^k + 1)$ guarantees that permanence is
preserved and mean fitness continues to increase after each incorporation event.

The conditions derived here identify a comparatively restricted class of
perturbations. Finding necessary and sufficient conditions for new-species
integration in the general case --- without the proportionality constraint of
Assumption~\ref{c4:ass:3} or the equal-contribution constraint of
Assumption~\ref{c4:ass:4} --- is the natural next step. A further direction is the
connection to the resistance-to-parasites phenomenon documented
in~Chapter~\ref{ch:3}: the present results suggest that systems evolved under
the adaptation algorithm of Chapter~3 may be more hospitable to new species
than non-evolved systems, but this conjecture has not been proved.
